\def\FullBox{\hbox{\vrule width 8pt height 8pt depth 0pt}}
\newcommand{\QED}{\;\;\;\FullBox}
\renewenvironment{proof}{\noindent{{\textbf{Proof:}~}}} {\hfill\QED}
\providecommand{\email}[1]{\href{mailto:#1}{\nolinkurl{#1}\xspace}}
\newenvironment{proofof}[1]{\noindent{\bf Proof of {#1}:~}}{\hfill\(\QED\)}
\newcommand{\eqdef}{\stackrel{\rm def}{=}}
\def\FullBox{\hbox{\vrule width 8pt height 8pt depth 0pt}}
\newcommand{\indi}{\boldsymbol{1}}
\def\cI{\mathcal{I}}
\def\cbI{\boldsymbol{\mathcal{I}}}
\def\CC{\mathbf{C}}
\def\D{\mathbf{D}}
\def\Y{\mathbf{Y}}
\def\X{\mathbf{X}}
\def\E{\mathbf{E}}
\def\S{\mathbf{S}}
\def\ZZ{\mathbf{Z}}
\newcommand{\ComputeEMD}{\textsc{ComputeEMD}}
\newcommand{\QuadTree}{\textsc{QuadTree}}
\newcommand{\MST}{\mathsf{MST}}
\definecolor{b2}{RGB}{51,153,255}
\definecolor{mygreen}{RGB}{80,180,0}
\newcommand{\cost}{\textbf{\textsc{Cost}}}
\newcommand{\pay}{\textbf{\textsc{Pay}}}
\newcommand{\EMD}{\mathsf{EMD}}
\newcommand{\pr}[1]{\text{\bf Pr}\normalfont\lbrack #1 \rbrack} 
\newcommand{\ex}[1]{\mathbf{E}\normalfont\lbrack #1 \rbrack}
\newcommand{\ignore}[1]{}
\title{New Streaming Algorithms for High Dimensional EMD and MST}
\author {
  Xi Chen\thanks{Columbia University. \email{xichen@cs.columbia.edu}. 
  Supported by NSF IIS-1838154 and NSF CCF-1703925.}
  \and 
  Rajesh Jayaram\thanks{Google Research. \email{rkjayaram@google.com}. }
  \and
  Amit Levi\thanks{Part of this work was carried out while the author was a PhD student at the University of Waterloo. \email{amit.levi@uwaterloo.ca}. }
  \and
  Erik Waingarten\thanks{Stanford University. \email{eaw@cs.columbia.edu}. This material is based upon work supported by the National Science Foundation under Award No. 2002201.}
}
\begin{document}
\maketitle

\begin{abstract}


We study streaming algorithms for two fundamental geometric problems: computing the cost of a Minimum Spanning Tree (MST) of an $n$-point set $X \subset \{1,2,\dots,\Delta\}^d$, and~computing the Earth Mover Distance (EMD) between two multi-sets $A,B \subset \{1,2,\dots,\Delta\}^d$ of size $n$. We~consider the turnstile model, where points can be added and removed. We give a one-pass streaming algorithm for MST and a two-pass streaming algorithm~for EMD, both achieving an approximation factor of $\tilde{O}(\log n)$ and using $\polylog(n,d,\Delta)$-space only. Furthermore, our algorithm for EMD can be compressed to a single pass with a small additive error. Previously, the best known sublinear-space streaming algorithms for either problem achieved an approximation of $O(\min\{ \log n , \log (\Delta d)\} \log n)$ \cite{AIK08, BDIRW20}. For MST, we also prove that any constant space streaming algorithm can only achieve an approximation of $\Omega(\log n)$, analogous to the $\Omega(\log n)$ lower bound for EMD of \cite{AIK08}.



 Our algorithms are based on an improved analysis of a recursive space partitioning method known generically as the Quadtree. Specifically, we show that the Quadtree achieves an  $\tilde{O}(\log n)$ approximation for both EMD and MST, improving on the  $O(\min\{ \log n , \log (\Delta d)\} \log n)$ approximation of \cite{AIK08, BDIRW20}. 
\end{abstract}
\thispagestyle{empty}
\newpage
\begin{spacing}{0.75}
\tableofcontents
\end{spacing}
\thispagestyle{empty}

\newpage

\pagenumbering{arabic}
\setcounter{page}{1}

\section{Introduction}\label{sec:intro}

We study two fundamental geometric problems in high-dimensional spaces: the Earth Mover's distance and minimum spanning tree. 
Let $(\calX, d_{\calX})$ be a metric space. Given two (multi-)sets $A,B \subset \calX$ of size $|A| = |B| = n$, the Earth Mover's distance ($\EMD$) between $A$ and $B$ is
\[ \EMD_{\calX}(A, B) = \min_{\substack{\text{matching}\\ M \subset A \times B}} \hspace{0.2cm} \sum_{(a,b) \in M} d_{\calX}(a, b). \]
Given a single multi-set $X \subset \calX$ of size $n$, the minimum spanning tree ($\MST$) of $X$ is
\[ \MST_{\calX}(A) = \min_{\substack{\text{tree $T$} \\ \text{spanning}\\ X}} \sum_{(a,b) \in T} d_{\calX}(a,b). \]
Computational aspects of $\EMD$ and $\MST$ consistently arise in multiple areas of computer science \cite{RTG00, HTF01, PC19}, such as in computer vision \cite{bonneel2011displacement, solomon2015convolutional}, image retrieval \cite{rubner2000earth}, biology \cite{needleman1970general}, document similarity \cite{KSKW15}, machine learning \cite{arjovsky2017wasserstein,mueller2015principal,flamary2018wasserstein}, among other areas.
Their centrality in both theory and practice has motivated the theoretical study of approximate and sublinear algorithms 
\cite{C02, IT03, I04b, FIS05, AIK08, ABIW09, HIM12,  SA12, mcgregor2013sketching, AS14, BI14, ANOY14,YO14, AKR15, BBDHKLM17, S17, YV18, KNP19, BDIRW20} in both low- and high-dimensional settings. 
   
   As an illustrative example, an important application for high-dimensional EMD comes from natural language processing, particularly document retrieval and classification. A document can be represented as a collection of vectors in Euclidean space by applying \textit{word embeddings} \cite{MSCCD13, PSM14} to each of its words; these embeddings have the property that semantically similar words map to geometrically close vectors. In this context, computing the EMD between the embeddings of two documents yields a natural measure of similarity, aptly termed the \emph{Word Mover's Distance} \cite{KSKW15}.
   

   In this paper, we study streaming and sketching algorithms for computing $\EMD$ and $\MST$.
   Specifically, we consider the turnstile \textit{geometric streaming model}, introduced by \cite{I04b}, where the algorithm receives the input set $X \subset \calX$ 
   via an arbitrarily ordered sequence of insertions and deletions of points $p \in \calX$. The goal is for the algorithm to approximate a fixed function of the implicit set of points $X$ in small space, without storing $X$; ideally, one would hope for space \textit{polylogarithmic} in the number of points in $|X|$. We focus on the high-dimensional Euclidean space, where $\calX = \{1,2,\dots,\Delta\}^d$, and the distance between points is given by an $\ell_p$ norm for $p \in [1,2]$.  One can always reduce from the case of $\R^d$ to $\{1,\dots,\Delta\}^{d'}$ via standard embeddings (see Appendix \ref{app:ell_p}).
   

\paragraph{Prior Work on Sketching and Streaming $\EMD$ and $\MST$.} 
We briefly survey what is known for streaming and sketching $\EMD$ and $\MST$. We emphasize that many aspects of the sketchability and streamability of $\EMD$ and $\MST$ remain open, and obtaining tight bounds for these tasks, as well as related geometric graph problems, still remains elusive.\footnote{See Open Problems $7$ and $49$ for sketching $\EMD$ in \url{https://sublinear.info/}} 

Indyk~\cite{I04b}, building on work of \cite{C02}, was the first to formulate dynamic geometric streams and give algorithms for $\EMD$ and $\MST$ which achieved an $O(d\log \Delta)$-approximation. The result for $\MST$ was improved to a $(1+\eps)$-approximation in \cite{FIS05}, however, the resulting space complexity is exponential in the dimension, making the algorithm suitable only in low-dimensional spaces. For $\EMD$ on the plane, \cite{ABIW09} gave a $O(1/\eps)$ approximation at the cost of a $\Delta^{\eps}$ dependence in the space complexity. The best lower bound on sketching $\EMD$ on the plane is due to \cite{AKR15}, where they show that one cannot have both a constant bit and constant approximation sketch. If the sketch proceeds by an embedding into $\ell_1$, \cite{NS07} show the approximation must be $\Omega(\sqrt{\log \Delta})$. Parametrizing the approximation in terms of $n$, \cite{BI14} gave embeddings of $\EMD$ on the plane into $\ell_1$ with distortion $O(\log n)$.

For the high-dimensional regime, Andoni, Indyk, and Krauthgamer \cite{AIK08} gave an algorithm for $\EMD$ (in fact, an embedding into $\ell_1$) with approximation $O(\log n \log(d\Delta))$. Furthermore, building on an $\ell_1$-embedding lower bound of \cite{KN06}, they show that any $s$-bit sketch with approximation $\alpha > 1$  must have $s \alpha = \Omega(\log n)$. For sketching, the approximation of \cite{AIK08} may be improved to $O(\log n \min \{ \log n, \log(d\Delta)\})$ by the techniques in \cite{BI14, BDIRW20}. $\MST$ has not been formally considered in the high-dimensional regime, although we note that an $O(\log n \min\{ \log n, \log(d\Delta)\})$-approximate streaming algorithm readily applies here as well. For lower bounds on streaming high-dimensional $\MST$, nothing was known, and (prior to this work) a constant-bit stream achieving a constant approximation was possible.  

\ignore{For the low-dimensional regime, \cite{I04b} and \cite{FIS05}, give $O(d\log \Delta)$- and $(1+\eps)$-approximations for minimum spanning tree, respectively. Importantly, the $(1+\eps)$-approximation of \cite{FIS05} is suitable only in low-dimensional spaces, since the space complexity has exponential dependence on the dimension. For the high-dimensional regime,  Andoni, Indyk, and Krauthgamer \cite{AIK08} demonstrated a constant-space turnstile streaming algorithm with approximation $O(\log n \log (d\Delta) )$. The paramaterization of the approximation on $n$ is unavoidable, and was carried out in \cite{AIK08} as a way to circumvent polynomial-in-dimension lower bounds.

    Minor modifications to the arguments in \cite{AIK08}, along with observations in \cite{BDIRW20}, yield a slightly improved approximation of $O(\log n \min\{\log n , \log (d\Delta)\})$. Similarly, a natural application of their techniques result in an $O(\log n \log (d\Delta))$ approximation for $\MST$ as well.    
   To date, however, the above is the best known approximation of any sublinear space streaming algorithm for high-dimensional $\EMD$ and $\MST$. 
   On the other hand, the best known lower bound for $\EMD$ is due to \cite{AIK08}, who showed that any constant-space streaming algorithm must have approximation $\Omega(\log n)$. The situation is even sparser for $\MST$, where not even a $\Omega(1)$-lower bound on the approximation is known. The goal of this work is to make progress towards resolving the complexity of these tasks, by introducing a fresh perspective on streaming algorithms for these problems. }
   


\subsection{Our Results}
In this work, we develop new algorithms and lower bounds for approximating $\EMD$ and $\MST$ in a stream. Specifically, we show that the approximation factor for these problems can be improved from $O(\log n \cdot \min\{\log n, \log(\Delta d)\})$ to $\tilde{O}(\log n)$. 
 We now state the main results of this paper. In the theorem statements which follow, we consider a fixed setting of $n, d$ and $\Delta$. The metric space consists of points in $[\Delta]^d = \{1, \dots, \Delta\}^d$ with $\ell_p$ distance for any fixed $p \in [1,2]$. We state the theorems in the \emph{random-oracle model}, i.e., any random bits stored by the algorithm do not factor into the space complexity --- we show that storing the random bits would incur at most an additive $d \cdot \polylog(n,\Delta)$ bits of space (see Section~\ref{sec:linearsketching}, where we discuss removing the random oracle assumption).\footnote{Also note that to even store a single update $p \in [\Delta]^d$, one requires $\Omega(d \log \Delta)$ bits of space. }


\begin{theorem}[$\MST$ Streaming Algorithm]\label{thm:mst-main}
There exists a turnstile streaming algorithm using at most $\polylog(n,d,\Delta)$ bits of space which, given a set $X \subset [\Delta]^d$ of size $n$, outputs $\hat{\boldeta} \in \R$ satisfying
\[ \MST_{\ell_p}(X) \leq \hat{\boldeta} \leq \tilde{O}(\log n) \cdot \MST_{\ell_p}(X) \]
with high probability.
\end{theorem}

For $\EMD$, our algorithm achieving an $\tilde{O}(\log n)$-approximation requires two passes over the data. This arises from a technical issue in the approach for $\EMD$ which is not present in $\MST$. We state the theorem in terms of two-pass streaming algorithms, and then show how to compress the two passes into one, at the cost of an additive error in the approximation. 

\begin{theorem}[$\EMD$ Two-Pass Streaming Algorithm]\label{thm:emd-main-twopass}
	Given two multi-sets $A, B \subset [\Delta]^d$ of size $n$ there exists a two-pass turnstile streaming algorithm using $\polylog(n,d,\Delta)$ bits of space which outputs $\hat{\boldeta} \in \R$ satisfying
	\[ \EMD_{\ell_p}(A,B) \leq \hat{\boldeta} \leq \tilde{O}(\log n) \cdot \EMD_{\ell_p}(A, B) \]
	with high probability.
\end{theorem}

\begin{theorem}[$\EMD$ One-Pass Streaming Algorithm]\label{thm:emd-main}
Given two multi-sets $A, B \subset [\Delta]^d$ of size $n$ and any $\eps > 0$, 
there exists a turnstile streaming algorithm using $O(1 /\epsilon)\cdot \polylog(n,d,\Delta)$ bits of space which outputs $\hat{\boldeta} \in \R$ satisfying
\[ \EMD_{\ell_p}(A,B) \leq \hat{\boldeta} \leq \tilde{O}(\log n) \cdot \EMD_{\ell_p}(A, B)  + \eps d \Delta n. \]
with high probability.
\end{theorem}

We encourage the reader to think of instances where $A$ and $B$ are size-$n$ subsets of the hypercube $\{0,1\}^d$ with $\ell_1$ distance (i.e., $\Delta = 2$ and $p = 1$). This setting captures all the complexity encountered in this work. For $\Delta > 2$ and $p \in (1,2]$, the algorithm first applies an embedding into $\{0, 1\}^d$ with $\ell_1$ (see Appendix~\ref{app:ell_p}). 

Regarding the additive error in Theorem~\ref{thm:emd-main}, while an appropriate setting of $\eps$ may absorb the additive error into relative error, we leave as an open problem whether this additive error may be removed completely in one-pass algorithms. For instance, if the points do not overlap almost always, i.e., when $|A \cap B| / |A \cup B| \leq 1 - \eps_0$, then $\EMD_{\ell_p}(A, B) \geq \eps_0 n$, and $\eps$ may be set to $\eps_0 / d \Delta$ in order to absorb the additive error into the relative error by increasing the space by a factor of $d \Delta$, and keeping a poly-logarithmic dependence on $n$. From a practical perspective, the fact that points do not overlap may be a reasonable assumption to make.


\ignore{The requirement that $|A \cap B| / |A \cup B| \leq 1-\epsilon$ in Theorem~\ref{thm:emd-main} should be interpreted as a ``non-overlapping'' condition; i.e., it should not be the case that almost all points of $A$ and $B$ are \emph{exactly} the same. In particular, Theorem~\ref{thm:emd-main} is a consequence of a more general result, which for any $\epsilon_0 > 0$, gives a $O(1/\eps_0)\cdot \poly(\log n, \log(d\Delta))$-space streaming algorithm approximating $\EMD_{\ell_p}(A, B)$ up to a multiplicative factor $\tilde{O}(\log n)$ and additive factor $\eps_0 d \Delta n$ (see Theorem~\ref{thm:one-pass}). The condition that $|A \cap B| / |A \cup B| \leq 1-\epsilon$ ensures that $\EMD_{\ell_p}(A,B) \geq \eps n$, which allows us to set $\eps_0 = \eps / (d \Delta)$, and absorb the additive $\eps_0 d \Delta n$ within the multiplicative error. While a mild assumption, we show that it can be removed 
by allowing for \textit{two-passes} over the data stream. }

All of our streaming algorithms are \emph{linear sketches}, meaning that they  store only the matrix-vector product $\bS f$ for some randomized $\bS \in \R^{k \times n}$, where $f = f_X \in \R^{\Delta^d}$ is the indicator vector (with multiplicity) of $X$ for the case of $\MST$, and $f= f_{A,B} \in \R^{2 \cdot \Delta^d}$ is the indicator vector (with multiplicity) of $A,B$ for $\EMD$. Linear sketches are an important class of turnstile streaming algorithms, and have many well-known and studied advantages. For instance, such sketches directly resulted in algorithms for distributed computation such as the MPC model, as well as algorithms for multi-party communication. Our results, therefore, can be applied in a natural way to these models as well. 



\paragraph{Improved Analysis of the Quadtree.} 
The prior sketching algorithms are based on a hierarchical partitioning method known as the \textit{Quadtree}.\footnote{The name \textit{Quadtree} is an artifact of the study of the algorithm originally in the planar (two-dimensional) case, in which the algorithm recursively partitions the plane into quadrants. Our Quadtrees, being in high dimensions, will partition space into more than $4$ parts at a time. However, since they are the natural generalization of the planar case, it is common to refer to the generic method as Quadtree regardless of dimension. } Here, we refer to quadtrees as a generic class of methods that embed points from $\calX$ into a randomized tree by recursively partitioning the space. 
%
At a high level, the Quadtree algorithm recursively and randomly partitions the space $\calX$, which results in a rooted (randomized) tree. Each point in the set $X$ for the case of $\MST$, or $A \cup B$ for $\EMD$, is sent down to a leaf of the tree. From there, a spanning tree or a matching, is constructed in a bottom-up fashion. Each point ``walks up the tree'' and is greedily connected (in the case of $\MST$), or matched (in the case of $\EMD$) as it encounters other points. This results in a very efficient \textit{offline} (non-sketching) algorithm. The recent work of \cite{BDIRW20} study the quadtree algorithm explicitly, where they call it ``Flowtree,'' and showed it has favorable practical properties. From a theoretical point-of-view, the approximation incurred by these methods were the bottleneck in prior works for sketching and streaming $\EMD$ and $\MST$, here, we improve this analysis of \cite{AIK08, BDIRW20} from $O(\log n \min \{ \log n, \log(d \Delta) \})$ to $\tilde{O}(\log n)$. 
\begin{theorem}[Quadtree Methods (Informal)]
Given two multi-sets $A, B \subset [\Delta]^d$ of size $n$, the ``Flowtree'' algorithm of \cite{BDIRW20} outputs an $\tilde{O}(\log n)$-approximation to $\EMD_{\ell_1}(A,B)$ with probability at least $0.9$. Similarly, given a multi-set $X \subset [\Delta]^d$ of size $n$, the greedy, bottom-up spanning tree is an $\tilde{O}(\log n)$-approximation to $\MST_{\ell_1}(X)$ with probability at least $0.9$.
\end{theorem}

\ignore{

The connection between Quadtree and streaming algorithms stems from \cite{I04b},\footnote{We note that in many works (\cite{C02, IT03, I04b, AIK08, ABIW09, BI14}), the Quadtree algorithm is implicit in the analysis. The recent work of \cite{BDIRW20} studied the Quadtree algorithm, where they call it ``Flowtree,'' in order to speed up nearest neighbor search. There, the algorithm and its analysis is made explicit.} where Indyk showed that one can assign weights to edges of the Quadtree so that the resulting cost can be \textit{embedded} into the $\ell_1$ metric. At this point, one can apply efficient algorithms for estimating $\ell_1$ in a stream \cite{indyk2006stable}. For high-dimensional spaces, \cite{AIK08} showed the same edge-weighted Quadtree achieves an $O(\log n \log d)$-approximation, and  \cite{BI14, BDIRW20} improved it to $O(\log n \min \{ \log d, \log n\})$, and their streaming algorithms are similarly a result of the embedding into $\ell_1$.

In this paper, we introduce a fresh perspective on the Quadtree, by developing a new analysis 
via \textit{data-dependent}  tree embeddings. We use this new perspective to show that Quadtree obtains an improved approximation factor for both $\EMD$ and $\MST$ in an offline setting. In Section \ref{sec:quadtree-cost}, we present an improved analysis of the Quadtree, demonstrating that the approximation obtained is $\tilde{\Theta}(\log n)$ for both problems. This generic technique has straightforward implications for improving the approximation of other natural geometric problems, such as (non-bipartite) min cost matching, and variants of clustering.

\paragraph{Improved Streaming Algorithms from Quadtree.}

Unfortunately, our improved analysis of the Quadtree does not immediately result in streaming algorithms as it did for the aforementioned works. The reason for this is that data-dependent notion of tree embedding cannot be captured by an $\ell_1$-computation. Thus, the $\ell_1$ embedding used in earlier works to obtain streaming algorithms cannot be applied to approximate the improved cost. 
Moreover, we cannot hope to faithfully store or simulate the entire execution of the Quadtree (as this would require too much space). Our main algorithmic contribution is demonstrating how the improved analysis of the Quadtree may be leveraged in low-space streaming algorithms. Specifically, we introduce a highly-general two-step template for transforming data-dependent costs in the Quadtree into streaming algorithms. Our implementation of these steps required the design of a variety of new sketching techniques, including introducing a generalization of the well-known $\ell_p$ sampling problem \cite{monemizadeh20101,andoni2010streaming,Jowhari:2011,JW18} which we call \textit{sampling with meta-data} (see Section \ref{sec:techniques}). By designing improved one-pass algorithms for this problem, we show that both steps of the high-level template can be accomplished in a single pass over the stream. 
}

\ignore{

The variant of the Quadtree used in \cite{AIK08} to develop streaming algorithms for $\EMD$ assigned fixed weights to the edges of the tree, and used these weights to evaluate the cost of the matching produced by the Quadtree. The advatage of this approach is that the resulting cost can be embedded into $\ell_1$, and therefore maintained in a stream in small space used sketching algorithms for $\ell_1$ \cite{indyk2006stable}. On the other hand, in an offline setting one could just as easily evaluate the cost of the matching in the original metric, which was the approach taken by \cite{BDIRW20}. It is this latter variant for which our improved analysis applies (in fact, we demonstrate in Section \ref{} that the former approach cannot yield better than a $O(\log n \min\{\log n,\log d\})$-approximation for $\EMD$). Unfortunately, the cost in the original metric is not easily maintained in a stream; therefore, our improved analysis alone does not result in streaming algorithms. Part of our main technical contribution is demonstrating how this cost can indeed be sketched in small space. We do so by developing a variety of new sketching techniques, including introducing a generalization of the well-known $\ell_p$ sampling problem \cite{monemizadeh20101,andoni2010streaming,Jowhari:2011,JW18}, known as the problem of \textit{sampling with meta-data} (see Section \ref{sec:techniques}).}

\paragraph{Lower bounds for MST.}
For lower bounds, \cite{AIK08} shows that any randomized $\ell$-bit streaming algorithm distinguishing $\EMD_{\ell_1}(A,B) \geq r$ and $\EMD_{\ell_1}(A,B) \leq r/\alpha$ with probability at least $2/3$ must satisfy $\alpha \ell = \Omega(d)$, where the instances used have $d = \log n$. For a qualitative comparison, estimating $\ell_1$ norm does admit such $O(1)$-approximation, $O(1)$-bit space streaming algorithms (with public randomness), implying that $\EMD$ is a harder problem. We show an analogous lower bound for $\MST$ in the streaming model.

\begin{theorem}\label{thm:mst-lb-main}
Any randomized $\ell$-bit streaming algorithm which can distinguish whether a size-$n$ set $X \subset \{0,1\}^d$ has $\MST_{\ell_1}(X) \geq nd/3$ or $\MST_{\ell_1}(X) \leq nd / \alpha$ with probability at least $2/3$ must satisfy
$\ell + \log \alpha = \Omega(\log n / \alpha).$
Moreover, this holds even in the insertion-only model, where points are only added to $X$ in the stream.
\end{theorem}
We emphasize that, prior to Theorem \ref{thm:mst-lb-main}, there were no lower bounds known for streaming $\MST$ --- not even an $\Omega(1)$ lower bound was known on the approximation of a constant-bit algorithm. We note that \cite{AIK08} actually considers the (stronger) two-party communication setting for $\EMD$, where each player receives one of the sets. The two-party communication game for $\MST$ where each player receives half of the set $X$ is insufficient, as there is simple $O(1)$-approximation, constant-bit protocol.\footnote{Intuitively, the players may compute the cost of their $\MST$ locally, and compute the distance between two arbitrary points. The sum of these quantities is a $3$-approximation to the $\MST$ of the entire set.} Therefore, out theorem will crucially involve the streaming nature of the algorithm.

\ignore{Our work is motivated by results of \cite{BDIRW20},
  which is 

Theorem~\ref{thm:quad-tree-alg} 
  confirms the observation of \cite{BDIRW20},
  showing that 
  the cost of the quadtree matching under the original metric indeed achieves
  an $\tilde{O}(\log s)$-approximation of EMD.
We further show that the analysis behind Theorem \ref{thm:quad-tree-alg} is tight up to $\poly(\log \log s)$ factors (see Appendix~\ref{app:tightness}).
Conceptually, our analysis goes beyond the method of randomized tree embeddings, and results in the best linear-time approximation algorithm for EMD on the hypercube.\footnote{\color{red}I don't know other results but maybe we need to add a sentence about what is known? Maybe say our algorithm is much simpler than other nearly linear time algorithms.}  
}
\ignore{
\noindent{\bf Improved Sketching for EMD.}
\ignore{As alluded earlier, after assigning edges weights to the quadtree, the cost of the bottom-up matching can be embedded into $\ell_1$~\cite{AIK08} which results in 
efficient~linear sketching and streaming algorithms for high-dimensional EMD \cite{AIK08}. 
 However,~it~is~no longer clear if the cost of the bottom-up matching in the \textit{original metric}, for which our improved analysis in
   Theorem \ref{thm:quad-tree-alg} applies, still admits efficient sketches. It turns out that a new approach and a different set of techniques will be needed for sketching EMD to achieve the improved approximation of Theorem \ref{thm:quad-tree-alg}.}
We recall the notion of linear sketches, so that we may discuss one- and two-round sketches.
Given an input vector $f \in \R^n$ and a function $\varphi(f):\R^n \to \R$ that we would like to approximate, a (one-round) linear sketch generates a random matrix $\S_1 \in \R^{k \times n}$ and outputs an approximation $\tilde{\varphi} = \tilde{\varphi}(\S_1,\S_1 f)$ to $\varphi(f)$ based only on the matrix $\S_1$ and vector $\S_1 f$. A two-round linear sketch is allowed to generate a second random matrix $\S_2$, from a distribution depending on $(\S_1,\S_1 f)$, and output $\tilde{\varphi} = \tilde{\varphi}(\S_1,\S_1 f,\S_2,\S_2 f)$. The space of a linear sketch (resp. two-round linear sketch) is the number of bits needed to store $\S_1 f$ (resp. $\S_1 f$ and $\S_2 f$).\footnote{For now, consider the random oracle model, where the algorithm is not charged for the space required to store the random matrices $\S_1,\S_2$. Using standard techniques and an application of Nisan's pseudo-random generator, we show that (in our application) this random oracle assumption can be dropped with a small additive $\tilde{O}(d)$ in the space complexity (see Corollaries~\ref{cor:twopass} and~\ref{cor:onepass}). The $\tilde{O}(d)$ is a minor overhead given that each update of $A\cup B$ in the streaming model
requires $d+1$ bits to store. }

For sketching EMD over $\{0,1\}^d$, we encode two size-$s$ multi-sets $A, B \subset\{0,1\}^d$ as an input vector $\smash{f = f_{A,B} \in \R^{2 \cdot 2^d}}$, where for $x \in \{0,1\}^d$, $f_{x}$ and $f_{x + 2^d}$ contains the number of occurrences of $x \in A$ and $x \in B$, respectively.  We let $\varphi(f_{A,B}) = \EMD(A,B)$. One- and two-round linear sketches immediately result in one- and two-pass \textit{streaming} algorithms in the \emph{turnstile} model (where insertions and deletions of points are allowed), as well as one- and two-round two-party communication protocols, and one- and two-round MPC algorithms. (See Section~\ref{sec:communication} and Appendix~\ref{app:formal-models} for more detail.) 

 Our first result for sketching EMD shows that 
  it is possible to losslessly capture the approximation of Theorem \ref{thm:quad-tree-alg} with a \textit{two-round} linear sketch.

 


\ignore{where in \cite{AIK08}, the algorithm is used to construct an embedding into $\ell_1$ (and hence sketching and streaming), and in \cite{BDIRW20}, the algorithm is used for similarity search.These prior works proceeded via the method of randomized tree embeddings \cite{B96,B98, CCGGP98, FRT04}, and gave an approximation of $O(\min \{ \log s , \log d \} \log s)$; the corresponding applications to sketching and streaming resulted in similar losses.}

\ignore{
The method of randomized tree embeddings is the de-facto technique for analyzing these randomized tree-based algorithms for EMD \cite{C02, IT03, I07, AIK08, ABIW09, BI14, BDIRW20}. An execution of $\ComputeEMD(A, B)$ naturally generates a rooted binary tree and an (randomized) assignment of points from $A \cup B$ to the leaves of the tree. The nodes correspond to recursive calls of the algorithm (the tree is binary since Line~\ref{ln:2} generates two recursive calls), and a point $x \in A \cup B$ lies in that subtree if it is involved in that particular recursive call. Each edge of the tree is assigned a weight, such that the tree, weights, and assignment of $A \cup B$ to leaves is a tree embedding of the original metric ($A \cup B$ with Hamming distance). The tree distance between $x, y \in A \cup B$ is the sum of the weights along the path joining $x, y$ in the tree. The analysis proceeds as follows: 1) the matching output by the algorithm is optimal when measuring distances with respect to the tree, and 2) transferring tree distances to original distances losses a factor equal to the worst-case distortion of the embedding.\footnote{In order to instantiating the above (loose) analysis for $\ComputeEMD(A,B)$, set edge weights at depth $i$ to $\frac{d}{i^2}$, resulting in a $O(\min\{ \log s, \log d\} \log s)$ distortion embedding with probability $0.9$. See Appendix~\ref{app:full-tree} for a formal proof.}
}

\begin{theorem}\label{thm:2-round-ls}
There is a two-round linear sketch using $\poly(\log s,\log d)$ bits of space which, given a pair of size-$s$ multi-sets $A, B \subset \{0,1\}^d$, outputs $\hat{\boldeta} \in \R$ satisfying
\begin{equation}\label{mainthmeq}
 \EMD(A, B) \leq \hat{\boldeta} \leq \tilde{O}(\log s) \cdot \EMD(A, B)
\end{equation}
with probability at least $2/3$.
\end{theorem}
Theorem \ref{thm:2-round-ls} implies a two-pass streaming algorithm as well as a two-round communication protocol for approximating EMD  with the same space and approximation factors. To the best of our knowledge, the prior best sublinear space linear sketch (or streaming algorithm) for \textit{any} number of rounds (or passes) utilizes the $\ell_1$-embedding \cite{AIK08} and achieves approximation $O(\min \{ \log s, \log d \} \log s)$.  

Next we show that the 
  two-round linear sketch of Theorem \ref{thm:2-round-ls} can 
  be further compressed into a single round, albeit at the cost of a small additive error.

\begin{theorem}\label{thm:1-round-ls}
Given $\eps\in (0,1)$, 
 there is a (one-round) linear sketch using $O(1/\eps) \cdot \poly(\log s, \log d)$ 
 bits of space which, given a pair of size-$s$ multi-sets $A, B \subset \{0,1\}^d$, outputs $\hat{\boldeta} \in \R$ satisfying
\[ \EMD(A, B) \leq \hat{\boldeta} \leq \tilde{O}(\log s)\cdot \EMD(A, B) + \eps sd\]
with probability at least $2/3$.
\end{theorem}

Notice that $\EMD(A,B)\ge s$ when $A \cap B = \emptyset$, in which case Theorem~\ref{thm:1-round-ls} yields an $\tilde{O}(\log s)$ approximation in $\tilde{O}(d)$ space. More generally, if the \emph{Jaccard Index} of $A$ and $B$ is bounded away from $1$, we have the following corollary.

\begin{corollary}
Given $\eps\in (0,1)$ there is a (one-round) linear sketch using $O(d/\eps) \cdot \poly(\log s, \log d)$ space which, given size-$s$ $A, B \subset \{0,1\}^d$ such that $ |A \cap B|/|A \cup B| \le 1-\eps,$ outputs $\hat{\boldeta} \in \R$ satisfying
\[ \EMD(A, B) \leq \hat{\boldeta} \leq \tilde{O}(\log s)\cdot \EMD(A, B)\]
with probability at least $2/3$.
\end{corollary}

Theorem \ref{thm:1-round-ls} implies a one-pass streaming and a one-round communication protocol using the same space and achieving the same approximation. 
The proof of Theorem \ref{thm:1-round-ls} involves the design of~several new sketching primitives, 
  which may be of broader interest. 
Specifically, our linear sketch needs to 
  address the problem of \textit{sampling with meta-data} (see Section \ref{sec:techniques}), which can be used to approximate data-dependent weighted $\ell_1$-distances. Its analysis extended and generalized error analysis of the precision sampling framework~\cite{andoni2010streaming,Jowhari:2011,JW18} to multivariate sampling, and provides new insights into the power of this framework.



}

\def\vv{\mathbf{v}}

\subsection{Technical Overview}
\label{sec:techniques}

\subsubsection{The Main Idea: Tree Embeddings with Data-dependent Edge Weights} 

In \cite{I04b}, Indyk described an approach for streaming a variety of graph problems (including $\MST$ and $\EMD$) in discrete geometric spaces, leading to $O(d\log \Delta)$-approximations for these problems in the metric space $[\Delta]^d$ with $\ell_1$ distance. This approach, later refined in \cite{AIK08}, forms the basis of our work, so we give a very high level overview in order to highlight the new ideas. For simplicity, we describe it for $\EMD$, as the high-level picture for $\MST$ is similar.

A streaming algorithm for $\EMD$ with sets $A, B \subset [\Delta]^d$ of size $n$ may proceed in the following way:
\begin{enumerate}
\item\label{en:step1} Sample a recursive random partition of the space, broadly referred to as a quadtree, which specifies an embedding of the original space $[\Delta]^d$ into a rooted tree. For example, when $d = 2$, one may sample $\log_2 \Delta$ randomly shifted, nested square grids of side length $\Delta/2, \Delta /4 , \dots, 1$ and arrange them into a rooted tree of depth $\log_2 \Delta + 1$. Each node corresponds to a region of the space, where the root contains the entire space, and the children of a node have regions which partition the region of the parent. The points in $A$ and $B$ are assigned to leaves of this tree, according the regions where points fall, and the quadtree implicitly defines a matching $M$ between $A$ and $B$ given by the natural bottom-up greedy procedure. Having implicitly specified a matching $M$, the goal of the streaming algorithm will be to approximate the cost of $M$. 
\item\label{en:step2} In order to do so, \cite{I04b, AIK08} maintains a high-dimensional vector which implicitly encodes the matching $M$. Specifically, the vector has a coordinate for each edge of the quadtree, and the entry in each coordinate is the number of points from $A$ falling within the region of the child minus the number of points in $B$ falling within the region of the child. Furthermore, the $\ell_1$-norm of the vector, where each coordinate of an edge is scaled by some edge weight (for example, by the size of the parent region) gives an approximation of the cost of $M$. Thus, this gives an $\ell_1$-embedding for $\EMD$ over $[\Delta]^d$, and known algorithms for streaming the $\ell_1$-norm can be applied.
\end{enumerate}
With the above approach in mind, there are two steps involved in showing the approximation guarantee: (i) showing the matching $M$ in Step~\ref{en:step1} has approximately optimal cost, and (ii) showing that the appropriate scalings of coordinates reduce approximating the cost of the matching $M$ to an $\ell_1$-computation. We note that even though the above presentation is a two-step procedure, \cite{I04b, AIK08} do not present it this way. In fact, de-coupling the matching $M$ from the method to approximate the cost of $M$ is an important conceptual contribution which is made explicit in~\cite{BDIRW20}, which led us to revisit the $\EMD$ problem. 

Prior to our work, (i) proceeded by the method of tree embeddings. One assigns the edge weights to the quadtree and interprets it as a tree embedding of the metric $([\Delta]^d, \ell_1)$. By studying the distortion of this embedding, one bounds the cost of $M$. The edge weights chosen in \cite{I04b} (building on work of \cite{C02}) embed $([\Delta]^d, \ell_1)$ with distortion $O(d \log \Delta)$, which will become the approximation. Refining the approach, \cite{AIK08} show that another choice of edge weights (better suited for high-dimensional spaces) embeds subsets of $([\Delta]^d, \ell_1)$ with bounded average distortion which suffices for an $O(\log n\log(d\Delta))$ bound on the cost of $M$. 
Given the bound on $M$ with respect to a fixed tree metric, (ii) is straight-forward: since the fixed tree metric specifies the scalings of the vector, and approximating the cost of $M$ amounts to an $\ell_1$-norm computation. 

Our main contribution is two-fold. First, we show how to go beyond the distortion argument in (i) to show that the cost of $M$ is a $\tilde{\Theta}(\log n)$-approximation to $\EMD$ with probability $0.9$. To do so, we study a \emph{data-dependent} notion: instead of fixing the edge weights as in \cite{I04b, AIK08}, we allow the edge weights to depend on the input $A \cup B$. The use of data-dependent edge weights implies $M$ is actually a better quality matching than what the method of tree embeddings specified. The data-dependent edge weights are (relatively) simple: the weight of an edge $(u, v)$, where $u$ is the parent of $v$, is the average distance between a randomly sampled point of $A \cup B$ within the region of $u$ and a randomly sampled point of $A \cup B$ within the region of $v$. However, the fact these data-dependent edge weights yield an improved upper bound on the cost of $M$ constitutes the bulk of the work in Sections~\ref{sec:quadtree-cost} and~\ref{maintechlemma}.

Unfortunately, the introduction of data-dependent edge weights breaks Step~\ref{en:step2}. Now, approximating the cost of $M$ with the data-dependent weights is no longer as simple as an $\ell_1$-computation. The coordinates of the vector remain the same, however, the scaling of each coordinate depends on additional structure of the points. Importantly, data-dependent edge weights do not result in an $\ell_1$-embedding, and we cannot use known $\ell_1$-sketching algorithns. This takes us to our algorithmic contribution, where we design the sketching algorithms for Step~\ref{en:step2} with data-dependent edge weights. More generally, we introduce a two-step template for transforming data-dependent costs in the Quadtree into streaming algorithms. Conceptually, the approach generalizes the well-known $\ell_p$ sampling problem \cite{monemizadeh20101,andoni2010streaming,Jowhari:2011,JW18} to \textit{$\ell_p$-sampling with meta-data}. For $\EMD$, the high-level idea is the following: first, sample a coordinate of the vector proportional to the $\ell_1$-distribution (i.e., the $\ell_1$-sampling problem), and second, estimate the data-dependent edge weight for the coordinate sampled (the meta-data), so that we can scale the contribution of that coordinate appropriately.


\def\avgd{\text{avg}} \def\cc{\mathbf{c}} \def\bvalue{\textbf{Value}}
\def\bsv{\boldsymbol{\mathsf{v}}}  \def\sv{\mathsf{v}}

\subsubsection{Implementing Step 1: Quadtree Matching with Data-dependent Edge Weights}
We begin by describing our improved analysis of the randomized space partitioning algorithm, \textit{Quadtree}. 
For the sake of simplicity, we focus on its analysis in the context of approximating $\EMD$; the same ideas work similarly for $\MST$. We begin by more formally introducing the Quadtree in high-dimensional spaces. In what follows, we focus on the case when the metric space is the hypercube with the Hamming distance, i.e. $A,B \subset\{0,1\}^d$ and $d(p,q) = \|p-q\|_1$ for $p,q \in \{0,1\}^d$. For the approximation, this is without loss of generality. One may embed $(\R^d, \ell_p)$ into $\{0,1\}^d$ by increasing the dimension (see Appendix~\ref{app:ell_p}), which is irrelevant since the approximation we will show is dimension-independent.

{\bf Quadtree. }The Quadtree algorithm creates a randomized tree $\bT$ with depth  $h:=\log_2 2d$ by recursively sub-dividing the hypercube $\{0,1\}^d$. Therefore, each node $u$ in $\bT$ will be associated with a subcube $S_u \subseteq \{0,1\}^d$, where the root $r$ has $S_r = \{0,1\}^d$. To create these subcubes,  each internal node $u$ of $\bT$ at depth $j<h$ is labeled with an ordered tuple 
of $2^j$ coordinates $(i_1,\ldots,i_{2^j})\in [d]$ (which are not necessarily distinct),
and has $\smash{2^{2^j}}$ children. Each of the $2^{2^j}$ children of $u$ will \textit{uniquely} correspond to one of the $2^{2^j}$ fixings of the coordinates $(i_1,\ldots,i_{2^j}) \in \{0,1\}^{2^j}$. Specifically, each child $v$ of $u$ is assigned a unique bit-string $(b_1,\dots,b_{2^j}) \in \{0,1\}^{2^j}$. The child $v$ then corresponds to the subcube $S_v\subseteq S_u$ obtained by fixing the $i_{t}$-th coordinate to $b_t$, for each $t=1, \dots,2^j$. We now describe the procedure for generating a random Quadtree $\bT $:

	\begin{flushleft}\begin{enumerate}
		\item Uniformly sampling a tuple $(i_1, \dots,i_{2^j}) \in [d]^{2^j}$ of $2^j$ coordinates independently for each node $u$ at depth $j \in \{0,1,\dots,h-2\}$ to use as its label.
		\item Setting $(1,\ldots,d)$ as the label of every node at depth $h-1$.
	\end{enumerate}\end{flushleft}

A Quadtree $\bT$ defines a map $\varphi$ from $\{0,1\}^d$ to leaves of $\bT$: $\varphi(p)=v$ if $p\in S_v$.
Given $A$ and $B$, we write $A_v$ and $B_v$ to denote $A\cap S_v$ and $B\cap S_v$ for each node $v$ in $\bT$.



{\bf Depth-greedy Matching from Quadtree.} Given a random Quadtree $\bT$, one obtains a 
  natural depth-greedy matching as follows: 
We first map all points in $C=A\cup B$ to leaves of $\bT$ using $\varphi$.
Then, we greedily match  points between $A$ and $B$ in a bottom up fashion, by walking
  each point up the tree level-by-level, and at each node one arbitrarily matches as many of 
  the unmatched points from $A$ and $B$ as possible.
Let $\bM$ be any depth-greedy matching obtained from $\bT$ in this fashion.
The goal of our improved analysis of the Quadtree for $\EMD$ is to show that 
$$
\EMD(A,B)\le \cost(\bM)\eqdef \sum_{(a,b)\in \bM} \|a-b\|_1\le \tilde{O}(\log n)\cdot \EMD(A,B)
$$
with high probability (over the randomness of $\bT$). Note that the first inequality is trivial.


\ignore{
{\bf Compressed Quadtree and Worst-Case Analysis.}
The execution of \ComputeEMD\ induces a complete binary tree $T_0$ of depth $d$ which we refer to 
as a \emph{Quadtree}\footnote{Note that the definition given here is slightly different from its
	formal definition at the beginning of Section \ref{sec:quadtree-cost}, where a Quadtree will have depth $2d$.
	This difference is not important for the overview.}.
Each internal node is labelled by a coordinate $\bi$ sampled from $[d]$; its two children correspond
to further subdividing $\{0,1\}^d$ in half by fixing coordinate $\bi$ to be either $0$ or $1$.
We use $S_u\subset \{0,1\}^d$ for each node $u$ to denote the {subcube} associated with it.
After sampling a Quadtree $\bT_0$, each point of $A$ and $B$ is assigned to the leaf that contains it and the 
matching $\bM$ is obtained in the bottom-up fashion as in Figure~\ref{fig:compute-emd}.

The first step is a simple compression of 
the Quadtree $\bT_0$.
To this end, we only keep the root of $\bT_0$ (at depth $0$) and its nodes at depths that are powers of $2$;
we also keep subcubes $S_u$ associated with them.
All other nodes are deleted and their adjacent edges are contracted
(see Figure \ref{fig:compression}).
The resulting ``compressed Quadtree'' $\bT$ has depth $h=O(\log d)$,
where each node $u$ at depth $i$ has $2^{2^{i-1}}$ children and 
is associated with a subcube $S_u\subseteq \{0,1\}^d$. 
An execution of \ComputeEMD\ can now be viewed as first drawing
a random compressed Quadtree $\bT$, assigning points to its leaves, and then building $\bM$ from bottom up.
\ignore{The main property~we~will use about $\bT$ is the following.
Fixing an $x\in \{0,1\}^d$,
if we only care about subcubes $\bS_0,\bS_1,\ldots,\bS_{h}$ along its root-to-leaf path in $\bT$,
they can be drawn equivalently as follows:
Draw $\bI_0,\bI_1,\ldots,\bI_{h-1}$,~where $\smash{\bI_i\sim [d]^{2^{i-1}}}$ independently,
and then set $\bS_i$ for each $i=1,\ldots,h$ to be the subcube consistent with~$x$ on all coordinates in $\bI_0,\ldots,\bI_{i-1}$
(note that $\bS_0$ is trivially $\{0,1\}^d$).}}

{\bf Analysis of Quadtree via Tree Embeddings.}
Before presenting an overview of our new techniques,
  it will be helpful  to begin with a~recap of the analysis of \cite{AIK08}
  which can be used to show that $\cost(\bM)\le O(\log n\log d)\cdot \EMD(A,B)$.
The analysis of \cite{AIK08} starts by assigning a weight of $d/2^i$ to each edge from a node at depth $i$ to 
a node at depth $i+1$ in $\bT$. This defines a
metric embedding $\varphi:A\cup B \to \bT$ by mapping each point to a leaf of $\bT$. 
The choice of edge weights is motivated by the observation that two points $x,y\in \{0,1\}^d$
with $\|x-y\|_1=d/2^i$ are expected to have their paths diverge for the first time at depth $i$.
If this is indeed the case then 
$d_{\bT}(\varphi(x),\varphi(y))$ would capture $\|x-y\|_1$ up to a constant.


To upperbound $\cost(\bM)$, one studies the distortion of this embedding. Firstly, for any $\lambda >1$ and $x,y \in \{0,1\}^d$, it is easy to verify that distances in the tree metric do not contract much:
\[ \Prx_{\bT}\left[ d_\bT(\varphi(x),\varphi(y))<  \frac{1}{\lambda}\cdot  \|x-y\|_1 \right] \leq \left( 1 - \frac{\|x-y\|_1}{d} \right)^{1 + 2 + \dots + 2^{\left\lfloor \log_2 \left(\frac{\lambda d}{\|x-y\|_1}\right) \right\rfloor}} \leq 2^{-\Omega(\lambda)}. \] 
Thus by a union bound, for all $x,y \in A \cup B$ we have
\begin{equation}\label{worsteq}
\|x-y\|_1\le O(\log n)\cdot d_\bT(\varphi(x) ,\varphi(y))   
\end{equation}
with probability at least $1-1/\poly(n)$,
which essentially means that we can assume (\ref{worsteq}) in the worst case. 
As a result, we have 
$$
\cost(\bM)=\sum_{(x,y)\in \bM} \|x-y\|_1
\le O(\log n) \sum_{(x,y)\in \bM} d_\bT(\varphi(x),\varphi(y))
\le O(\log n) \sum_{(x,y)\in M^*} d_\bT(\varphi(x),\varphi(y)),
$$
where the last inequality holds for any matching $M^*$ between $A$ and $B$ given that 
  the depth-greedy matching is optimal under the tree metric.
Setting $M^*$ to be the optimal matching between $A$ and $B$ under the original $\ell_1$ metric,
  we finish the proof by upperbounding $d_\bT(\varphi(x),\varphi(y))$ using~$O(\log d)$ $\|x-y\|_1$.
To see this, when $\|x - y\|_1 = \Theta(d/2^j)$, the probability that paths of $x,y$ diverge at level $j-k$ is $\Theta(2^{-k})$ for each $k$, and when it does, $d_{\bT}(\varphi(x), \varphi(y)) = \|x-y\|_1 \cdot \Theta(2^k)$. Since $j \leq h = O(\log d)$,
\begin{align} 
\E\big[{d_\bT(\varphi(x) ,\varphi(y))}\big] \leq \|x-y\|_1 + \sum_{k=0}^{j} \Theta(2^{-k}) \cdot \|x-y\|_1 \cdot \Theta(2^k) = O(\log d)\cdot \|x-y\|_1. \label{eq:grow-dist}
\end{align}
Together they yield the aforementioned $O(\log n \log d)\cdot \EMD(A,B)$ upper bound for $\cost(\bM)$.\footnote{\label{usefulfootnote}The reason that this analysis can achieve approximation $O(\min\{\log n, \log d\} \log n)$, as opposed to $O(\log n \log d)$ is that with probability $1 - 1/n$, every $x,y \in A \cup B$ with $\|x-y\|_1 = \Theta(d/2^j)$ diverges at depth after $j - O(\log n)$.}

\def\bc{\mathbf{c}}  \def\bv{\mathbf{v}}

\ignore{At a high level, the key insights are twofold.   Firstly, instead of considering the worst-case diameter of $A_v \cup B_v$
in the original metric, we consider an \textit{average-case} notion of radius.
Given an $x\in A\cup B$~and a depth $i$, letting $\bv_i$ be the node at depth $i$ along the root-to-leaf path of $x$
in $\bT$, the quantity we use in the analysis is $\|x-\bc_i\|_1$, where
$\bc_i$ is the center-of-mass (i.e. the average) of points in $A_{\bv_i}\cup B_{\bv_i}$.   
We show that the expected cost of the matching in the original metric can be upperbounded using 
the sum of expectations of $\|x-\bc_i\|_1$ over all points $x\in A\cup B$ and all depths $i$ (with a crucial
twist which we discuss below).
So it suffices now to upperbound the sum of expectations of~$\|x-\bc_i\|_1$~by $\tilde{O}(\log n)\cdot \EMD(A,B)$;
ideally one would hope to achieve this by bounding $\E[\|x-\bc_i\|_1]$ term by term.
Unfortunately, each $\E[\|x-\bc_i\|_1]$ can still be as large as $O(\log n)\cdot (d/2^i)$ so 
these terms individually do not lead to any advantage over (\ref{worsteq}).
Thus, our second key contribution is to show that this cannot happen too frequently: namely, for any $x \in A\cup B$, 
the number of depth $i$ such that $\E[\|x-\bc_i\|_1]$ has distortion as large as the worst case is no more than
$\poly( \log \log n)$. (Naturally we need to bound the number of $i$ such that $\E[\|x-\bc_i\|_1]$ has
distortion $2^k$ for the full spectrum of $k$.)  
These insights lead to our ultimate analysis of the $\tilde{O}( \log n)$ approximation. 
}


{\bf Tree Embeddings with Data-dependent Edge Weights.}
We show how to go beyond the distortion arguments of \cite{AIK08} by studying
  a tree embeddings with data-dependent edge weights. In what follows, for any vertex $u \in \bT$, let $C_u = A_u \cup B_u$ be the set of all points which map through $u$ (recall $C= A\cup B$).
The weight we assign to each edge $(u,v)$\footnote{We always use $u$ in $(u,v)$ to denote the parent
  and $v$ to denote the child.} of $\bT$ will no longer be a fixed number $d/2^i$ but
$$
\avgd_{u,v}\eqdef \Ex_{\substack{\cc\sim C_u\\ \cc'\sim C_v}} \big[\|\cc-\cc'\|_1\big],
$$ 
i.e., the average distance between a point drawn randomly from $C_u$ and a point drawn randomly from $C_v$;
when $C_v=\emptyset$ we define $\avgd_{u,v}=0$ by default.
Let $d^*_\bT$ denote the tree metric under this new set of weights.
Again, the depth-greedy matching $\bM$ we are interested in is optimal and 
  the cost of $\bM$ under the new tree embedding can be expressed as
$$
\bvalue_\bT(A,B)\eqdef \sum_{(u,v)\in E_T} \big||A_v|-|B_v|\big|\cdot \avgd_{u,v}.
$$
where $E_T$ is the set of edges of $T$.\footnote{We remark that one can define an analogous quantity for the case of $\MST$,  where given a single set $X \subset [\Delta]^d$, we set $\bvalue_{\bT}(X) = \sum_{(u,v)\in E_T} \mathbf{1}(|X_v|)\cdot \avgd_{u,v}$, where $\mathbf{1}: \R \to \{0,1\}$ is the indicator function (i.e., $\mathbf{1}(x) = 0$ if and only if $x=0$). It is this quantity that we will analyze in our results for $\MST$.}
On the one hand, $\bvalue_\bT(A,B)$ is at least $\cost(\bM)$ given that for any $x,y\in C$, we always have
  $\|x-y\|_1 \le d_\bT^*(\varphi(x),\varphi(y))$ by triangle inequality. 
On the other hand, $\bvalue_\bT(A,B)$ is at most $\sum_{(a,b)\in M^*} d_\bT^*(\varphi(a),\varphi(b))$
  for any matching $M^*$ and in particular, the optimal matching $M^*$ under the $\ell_1$ metric.
As a result, it suffices to upperbound the cost of $M^*$ under the data-dependent~tree embedding
  by $\tilde{O}(\log n)\cdot \cost(M^*)$ given that $\cost(M^*)=\EMD(A,B)$.
To this end it suffices to show that the expectation of $d_\bT^*(\varphi(a),\varphi(b))$ 
   for any $a,b\in C$ can be bounded from above by $\tilde{O}(\log n)\cdot \|a-b\|_1$.

{\bf Inspector Payment.} Fix $a,b\in C$. We introduce the following quantity as the \emph{inspector payment}
  of $(a,b)$ with respect to the Quadtree $\bT$.
(We imagine the process as first drawing the Quadtree and then an ``inspector'' who examines the tree to 
  track down $a$ and $b$, making payments accordingly.)
Formally we let $\left( \sv_{0}(x), \sv_{1}(x), \dots, \sv_{h}(x)\right)$
denote the root-to-leaf path of $x$ in a Quadtree $\bT$. Then 
\begin{align}
\pay_{\bT}(a,b)  \eqdef \sum_{i\in [h]} \ind\big\{\sv_{i }(a) \neq \sv_{i }(b) \big\} 
\cdot \left(\Ex_{\cc\sim C_{\sv_{i-1}(a)}}\big[\|a-\cc\|_1\big]+
\Ex_{\cc\sim C_{\sv_{i-1}(b)}}\big[\|b-\cc\|_1\big]\right).
 \label{eq:inspecttechniques}
\end{align}
Intuitively, this payment scheme corresponds to an inspector who tracks down $a$ and $b$ from the root of $T$, and whenever $a$ and $b$ first diverge in the tree at node $u$, pays for $a$ the average distance between $a$ and a random point drawn from
  $C_v$ for every node along the $u$-to-leaf path (including $u$); the inspector pays for $b$ similarly. 
It again follows from triangle inequality that $2\cdot \pay_\bT(a,b)$ is at least 
  $d_\bT^*(\varphi(a),\varphi(b))$. So it suffices to bound the expectation of 
  $\pay_\bT(a,b)$ by $\tilde{O}(\log n)\cdot \|a-b\|_1$.
  

\ignore{once the paths of $a$ and $b$ in $T$ diverge at some depth $j$ (i.e. $v_j(a)=v_j(b)$ but $v_{j+1}(a)\ne
v_{j+1}(b)$), the inspector 
starts paying for twice the distance between $a$ and $c_{v_i(a)}$ at every node $v_i(a)$ with $i \geq j$. (The
twist we hinted earlier is that the
inspector does not need to pay at all depths but only at and below depth $j$ where the two paths diverge.)}
Before giving a sketch of this proof, which is the most challenging part of our Quadtree analysis,
we note that
  the inspector payment (\ref{eq:inspecttechniques}) depends on the data $A$ and $B$, as well as the Quadtree $\bT$ in two ways. The first is the depth when $a$ and $b$ first diverge, captured by the indicator $\ind\{ \sv_i(a) \neq \sv_i(b)\}$. The second is the average distance between $a$ and $C_v$, which not only depends on $a$, but also on global  properties of $C=A \cup B$. At a high level, incorporating this second aspect is the main novelty, since the average distance between $a$ and $C_v$ is an \emph{average} notion of radii at $v$. 
 Therefore, if the inspector pays a large amount, then an average point in $C_v$ is far from $a$ (as opposed to the farthest point implied by worst-case radii).


{\bf Bounding Inspector Payments.}  
Consider fixed $a, b\in C$ at distance $\|a-b\|_1 = \Theta(d/2^j)$, and we  
  give some intuition behind our upper bound on the expectation of the $a$-part of the payment:
$$ \sum_{i\in [h]} \ind\big\{\sv_{i }(a) \neq \sv_{i }(b) \big\} 
\cdot \avgd_{a,i-1},\quad\text{where}\quad
\avgd_{a,i-1}\eqdef\Ex_{\cc\sim C_{\sv_{i-1}(a)}}\big[\|a-\cc\|_1\big].
$$
We will ignore the indicator random variable $\ind\big\{\sv_{i }(a) \neq \sv_{i }(b) \big\}$
  and use linearity of expectation to focus on $\E_\bT[\hspace{0.03cm}\avgd_{a,i}\hspace{0.03cm}]$.
(With the indicator random variable, we need to consider the expectation of $\avgd_{a,i}$
  conditioning on the event that $a,b$ have diverged. The conditioning will
 not heavily influence the geometric intuition, so we will ignore this for the rest of this overview). 
 
 Let $\sv_i = \sv_i(a)$. 
Similar to worst-case bounds on radii,  $\E_\bT[\hspace{0.03cm}\avgd_{a,i}\hspace{0.03cm}]$ can still be $d/2^i \cdot \Omega(\log n)$.  
As~an example, let $i_1$ be a relatively large depth and for some small $\eps \approx 10^{-6}$, consider a set $P_1$ of $n^{\epsilon}$~many points at distance $\epsilon \log n \cdot d / 2^{i_1}$ around $a$. Then, at depth $i_1$ of a random Quadtree $\bT$, a point in $P_1$ traverses down to node $\sv_{i_1}$ with non-negligible probability, roughly $1/n^{-\eps}$. If no other points lie closer to $a$ than those in $P_1$, then $\E_\bT[{\hspace{0.03cm}\avgd_{a,i_1}\hspace{0.03cm}} ]= d/2^{i_1} \cdot\Omega(\eps \log n)$, since it is likely that some points of $P_1$ make it to $\sv_{i_1}$ and significantly increase
  the average distance between $a$ and $C_{\sv_{i_1}}$. If~this happened on $a$ for every depth $i$, the inspector would be in trouble, as there are $O(\log d)$ levels and a similar argument to that of worst-case radii would mean a payment of 
  $O(\log d \log n)\cdot \|a-b\|_1$. 

However, we claim if the arrangement of $P_1$ resulted in $\E_\bT[\hspace{0.03cm}\avgd_{a,i_1}\hspace{0.03cm}]
 = d/2^{i_1} \cdot \Omega(\eps \log n)$, the same situation will be a lot more difficult to orchestrate for depth $i_2 \leq i_1 - O(\log \log n)$. In particular, at depth $i_2$, in order to have $\E_\bT[\hspace{0.03cm}\avgd_{a,i_2}\hspace{0.03cm}] = d/2^{i_2} \cdot \Omega(\eps \log n)$, there must be a set of points $P_2$ at distance $d / 2^{i_2} \cdot \Omega(\eps \log n)$ which cause $\avgd_{a,i_2}$ to be large. However, it is no longer enough to have $|P_2| = n^{\eps}$. The reason is that points of $P_1$ in $\sv_{i_2}$ will help bring down the average distance. Since points in $P_1$ are at distance $\eps \log n \cdot d/2^{i_1} \ll d/2^{i_2}$ from $a$, there will oftentimes be $\Omega(n^{\eps})$ points from $P_1$ in $\sv_{i_2}$. In order to significantly increase the average distance, $\sv_{i_2}$ must oftentimes have at least $n^{\eps}/\polylog(n)$ points from $P_2$; otherwise, $\avgd_{a,i_2}$ will be mostly the average distance between $a$ and points in $P_1$. Since any given point from $P_2$ traverses down to $\sv_{i_2}$ with probability roughly $1/n^{\eps}$, we must have $|P_2| \geq n^{2\eps} / \polylog(n)$. This argument can only proceed for at most $O(1/\eps)$ depths before $|P_{O(1/\eps)}| > 2n$, in which case we obtain a contradiction, since all points are in $A \cup B$.

\ignore{\begin{figure}
\centering
\begin{picture}(300, 150)
\put(0,0){\includegraphics[width=0.7\linewidth]{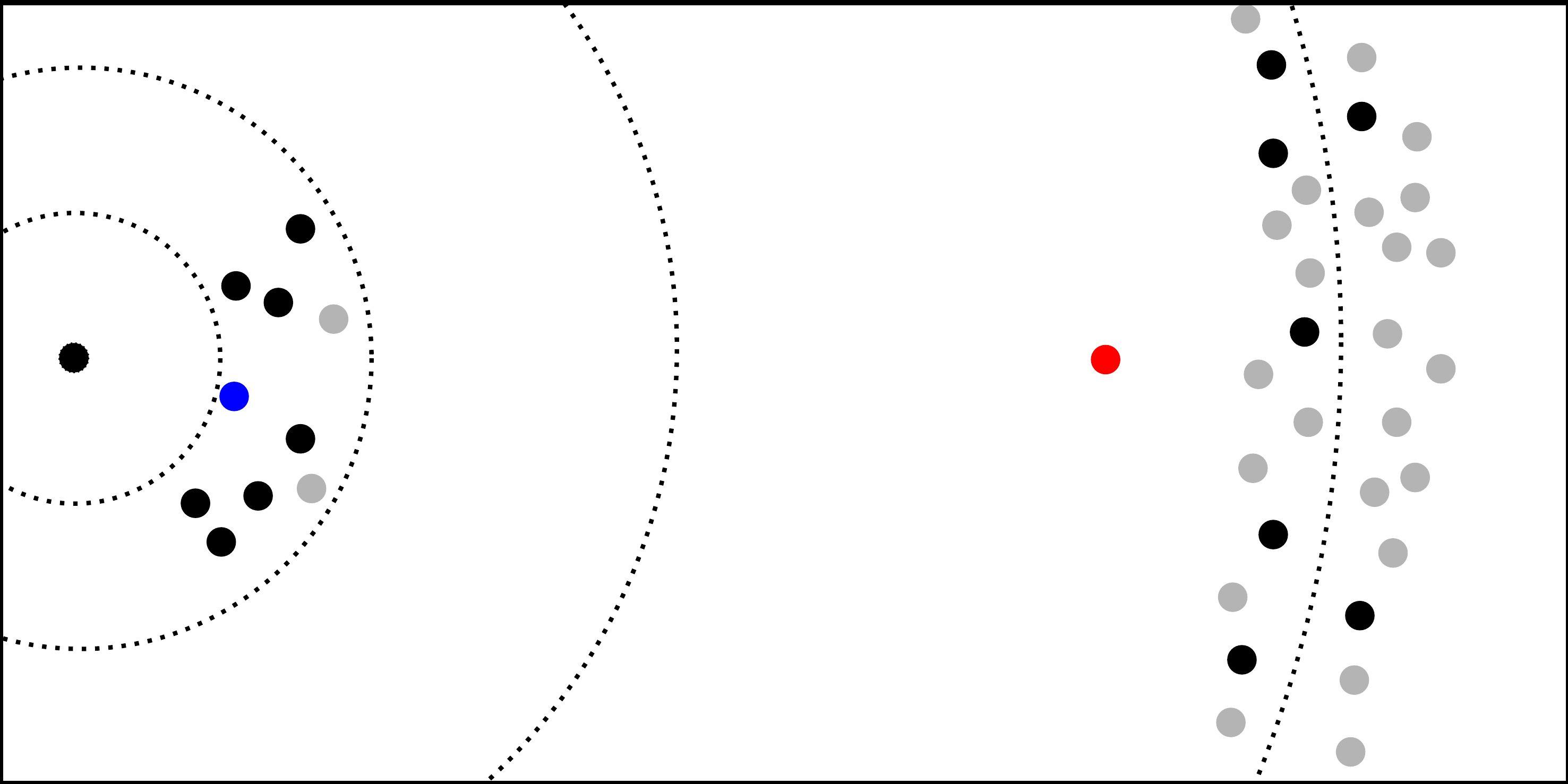}}
\end{picture}
\end{figure}}

\ignore{ corresponding to the distance $\log n$, the vertex $\bv_{\ell_1}$ will now contain nearly all the $n^\epsilon$ points from $P_1$.  Now suppose that the expectation at level $\ell_1$ is still large, say $\ex{\|x - \bc_{\ell_1}\|_1} = \Omega(\eps \log^2 s)$. Then because 
 $\bv_{\ell_1}$ now contains at least $n^\epsilon$ points which are at distance at most $\epsilon \log n$ from $a$, for the \textit{center of mass} $\bc_{\ell_1}$ to be far, there must be at least $n^\epsilon$ \textit{new} points at distance $\eps \log^2 s$ from $a$ which landed in $\bv_{\ell_1}$.
 But any such ``far point'' only lands in $\bv_{\ell_1}$ with probability $1/n^{-\epsilon}$. So for this to happen, we would have needed an even larger set $P_2$ of at least $n^{2 \epsilon}$ points to be contained in a ball of radius $\eps \log^2 s$ around $x$. 
 Then the vertex $v_{\ell_2}$, where $\ell_2 = h - 2 \log \log n$, would contain at least $n^{2\epsilon}$ points, and so on from here.}
 
Generally, in order to increase the average distance between $a$ and $C_{\sv_i}$ \emph{multiple times} as the depth $i$ goes down, the number of points around $a$ at increasing distances must grow very rapidly. More specifically, we show that if a depth $i$ is ``bad,''
 meaning that $\E_\bT[\hspace{0.03cm}\avgd_{a,i}\hspace{0.03cm}] \geq \alpha\cdot  d/2^i$
 for some $\alpha =\omega(\log \log n)$, then the number of points within a ball of radius $d/(2^i \log n)$ around $a$ and within a larger ball of radius $O(\log n\cdot  d/2^i)$ around $a$ must have increased by a factor of $\exp(\Omega(\alpha))$; this means the number of such depths $i$ is at most $((\log n) /\alpha) \cdot \poly(\log\log n)$. Combining this analysis and the fact that $a$ and $b$ must diverge in order to incur payment from the inspector, we obtain
  our upper bound that the expectation of $\pay_\bT(a,b)$ is at most $\tilde{O}(\log n)\cdot \|a-b\|_1$.
 

\subsubsection{Implementing Step $2$: From Quadtree to Sketching Algorithms}
\ignore{In this section, we give an overview of how the improved analysis of the Quadtree matching can be compressed into small space \textit{sketches}, which allow for efficient communication protocols and streaming algorithms for estimating $\EMD(A,B)$ given two multisets $A,B \in \{0,1\}^d$ of size $s$.

Recall a linear sketch is just a linear mapping from the input to a smaller dimensional space. For EMD, the multi-sets $A,B \subset \{0,1\}^d$ can each be represented by their frequency vectors $f_A,f_B \in \R^{2^d}$, where $(f_A)_x$ is equal to the number of occurrences of $x$ in $A$ (where $x \in \{0,1\}^d$ indexes into $[2^d]$ via its binary expansion). The entire input is then $(f_A,f_B)$, which can be stacked into a single vector $\smash{f_{A,B} = [f_A \; ; \; f_B ] \in \R^{2^{d+1}}}$. In a one-round linear sketch, the algorithm generates a random matrix $\S_1$ from some fixed distribution $\mathcal{D}_1$, and must approximate $\EMD(A,B)$ via only the sketch $\S_1 f_{A,B}$.\footnote{We assume the algorithm is given access to public randomness, so that the recovery procedure can also depend on the matrix $\S_1$, which is not counted against the space of the algorithm.} In a two-round linear sketch, given $\S_1 f_{A,B}$ the algorithm can then generate $\S_2$ from a distribution $\mathcal{D}_2 = \mathcal{D}_2(\S_1,\S_1 f_{A,B})$, and then must  approximate $\EMD(A,B)$ via the sketches $\S_1 f_{A,B},\S_2 f_{A,B}$.\footnote{More generally, $k$-round protocols for any $k \geq 1$ are defined analogously, and by a standard reduction yield $k$-pass streaming algorithms and $k$-round protocols for two-party communication complexity.}
}

By the prior discussion, after sampling a Quadtree $\bT$, we know that the quantity $\bvalue_\bT(A,B)$ is a $\tilde{O}(\log n)$ approximation of the true cost $\EMD(A,B)$. Specifically, we have:

\begin{equation}\label{introeq}
	\EMD(A,B) \le \bvalue_\bT(A,B)\le \tilde{O}(\log n)\cdot  \EMD(A,B)
\end{equation}
 Thus, the approach of our sketching algorithm is simply to approximate $\bvalue_\bT(A,B)$. We will decompose $\bvalue_\bT(A,B)$ based on its level: $\bvalue_\bT(A,B) = \sum_{i=1}^h \bvalue_{\bT,i}(A,B)$, where 
\[\bvalue_{\bT,i}(A,B)\eqdef \sum_{\substack{(u,v)\in E_T \\ \textbf{depth}(u,v) = i }} \big||A_v|-|B_v|\big|\cdot \avgd_{u,v}\] where \textbf{depth}$(e)$ for an edge $e \in \bT$ is the depth of the child vertex in $e$. We will attempt to estimate each $\bvalue_{\bT,i}(A,B)$ independently for each $i$, so in what follows we now fix any level $i \in [h]$.

We start with some notation. For any (non-root) vertex $v \in \bT$, let $\pi(v)$ be the parent of $v$ in $\bT$. We then define the \textit{discrepancy} vector for level $i$, denoted $\Delta^i$, by $\Delta^i_v = |A_v|-|B_v|$ for every vertex $v$ at depth $i$ of the tree (i.e., $\Delta^i$ has a coordinate $\Delta_v^i$ for each vertex $v$ at depth $i$). Next, for any vector $x \in \R^N$ and any $p\geq  0$, we define the $\ell_p$ distribution $\calD_p(x)$ over the coordinates of $x$ via $\calD_p(x) = \left(\frac{|x_1|^p}{\|x\|_p^p},\frac{|x_2|^p}{\|x\|_p^p},\dots,\frac{|x_N|^p}{\|x\|_p^p}\right)$ for $p>0$, and for $p=0$ we define $\calD_0(x)$ to be the uniform distribution over the support of $x$. Now observe:\footnote{We remark that for the case of $\MST$, the relevant quantity $\bvalue_{\bT,i}(X)$ below can be written as $ \|\Delta^i\|_0 \cdot \Ex_{ v\sim  \calD_0(\Delta^i)} \left[	\avgd_{\pi(v),v} \right]$. Namely, we simply replace the $\ell_1$ norm in both the scaling and the distribution by the $\ell_0$ norm. Thus, the high-level approach to sketching $\MST$ will be similar. However, due to using the $\ell_0$ instead of the $\ell_1$ norm, an entirely different set of techniques will be required to implement each of the steps. }

\[\bvalue_{\bT,i}(A,B) =  \|\Delta^i\|_1 \cdot \Ex_{ \vv\sim  \calD_1(\Delta^i)} \left[	\avgd_{\pi(\vv),\vv} \right]\]
Thus, we can write $\bvalue_{\bT,i}(A,B)$ as the $\ell_1$ norm of $\Delta^i$, multiplied by the expected value of $\avgd_{\pi(\vv),\vv}$ taken over drawing a vertex $\vv$ in level $i$ with probability proportional to $| \Delta_{\vv}| = \big||A_\vv|-|B_\vv|\big|$. Note that the norm $\|\Delta^i\|_1$ can be easily estimated using the $\ell_1$ sketches of Indyk
\cite{indyk2006stable}. Thus, this simple manipulation motivates the following approach: \textbf{(1)} sample a vertex $\vv$ from level $i$  from the distribution $\calD_1(\Delta^i)$, \textbf{(2)} recover the value $\avgd_{\pi(\vv),\vv} $, \textbf{(3)} repeat enough times so that the empirical mean of the variables $\avgd_{\pi(\vv),\vv} $ is a good approximation of the expectation $\Ex_{ \vv\sim  \calD_1(\Delta^i)} \left[	\avgd_{\pi(\vv),\vv} \right]$. 

For the last step, we note that it will be straightforward to bound the standard deviation of the variable $	\avgd_{\pi(\vv),\vv} $ by $O(d \log n/2^i)$, which is within a $O(\log n)$ factor of the error to which we will need to estimate the expectation. Thus, if we can carry out steps \textbf{(1)} and \textbf{(2)} which sample $	\avgd_{\pi(\vv),\vv}$ from the correct distribution, we need only repeat them $\polylog(n)$ times to estimate $\bvalue_{\bT,i}(A,B) $ to sufficiently small error. 

\ignore{
Consider $\Delta_i>0$ for all $i\in [h]$ in this overview and let $\calV_i$ denote
  the distribution supported on nodes of $\bT$ at depth $i$ where each node $v$ is sampled with probability 
  $|\Delta_v|/\Delta_i$.  
Given a node $v$, we write $\pi(v)$ to denote its parent node.
Let $\alpha_i=2^i/(d\log^2 n)$. Let $\calS_i$ denote the distribution over subsets 
  of $[d]$ that includes each coordinate independently with probability $\alpha_i$.
Given $S \subset [d]$, $\chi_S \colon \{0, 1\}^d \to \{-1,1\}$ is $\chi_S(x) = (-1)^{\sum_{k \in S} x_k}$.
  
After some further transformations on $\bvalue_\bT(A,B)$, we get that with high probability on $\bT$:
\begin{equation}\label{introeq}
\EMD(A,B) \le \sum_{i\in [h]} \cbI_i\le \tilde{O}(\log n)\cdot  \EMD(A,B)
\end{equation}
where $\cbI_i$ for each $i\in [h]$ is defined as 
$$
\cbI_i\eqdef\frac{2\Delta_i}{\alpha_i}\cdot \Ex_{\vv\sim \calV_i,\bS\sim\calS_i}
\big[p_{\pi(\vv),\vv,\bS}\big] 
$$
and
  we define for each edge $(u,v)$ in $T$ and $S\subseteq [d]$ the following quantity $p_{u,v,S}$:
$$
p_{u,v,S}\eqdef  \Prx_{\cc_u\sim C_u,\cc_v\sim C_v}\big[\chi_S(\cc_u)\ne \chi_S(\cc_v)\big]\in [0,1].

$$  
}


{\bf Two-Pass Streaming Algorithms.} We first describe how the above two steps can be carried out in \textit{two-passes} over the data-stream. Perhaps unsurprisingly, our approach will be to carry out \textbf{(1)} on the first pass, obtaining a set of vertices $v$ sampled from the correct distribution $\calD_1(\Delta^i)$, and carry out \textbf{(2)} on the second pass, where we recover the actual value of $\avgd_{\pi(v),v}$ for the vertices $v$ that were sampled. 

More formally, our two-pass streaming algorithm proceeds as follows. First we draw a Quadtree $\bT$ (for which we may assume
  (\ref{introeq}) holds) and then for each $i\in [h]$,
  we estimate $\bvalue_{\bT,i}(A,B)$ as follows.
In the first pass we can estimate $\|\Delta^i\|_1$ to error $(1 \pm 1/2)$ with an $\ell_1$-sketch \cite{I06}, and we also can sample  $\vv \sim \calD_1(\Delta^i)$ via known algorithms for $\ell_1$-sampling~\cite{andoni2010streaming, Jowhari:2011, JW18}. Furthermore, once a vertex $v$ is
  fixed, we may estimate $\avgd_{\pi(\vv),\vv} $ in the second round by a point in $C_{\pi(v)}$ and in $C_{v}$ (via standard sub-sampling techniques) and approximating their distance using $\ell_1$ sketches. 
  By concurrently repeating this process $\polylog (n)$ times, we obtain our desired approximation

The remaining challenge, however, is to produce $ \bv \sim \calD_1(\Delta^i)$ \textit{and} an estimate of $\avgd_{\pi(\vv),\vv} $ \emph{simultaneously} in a single pass over the data. This task is a special case of a problem we call \emph{sampling with meta-data}, since the 
quantity $\avgd_{\pi(\vv),\vv} $
will be the \emph{meta-data} of the sample $ \bv \sim \calD_1(\Delta^i)$ needed to estimate  $\bvalue_{\bT,i}(A,B)$.

\ignore{ a fixed setting of $(u,v)$, we may sketch $w_{\bu, \bv}$  is always bounded between $d/2^i$ and $d \log s / 2^i$, the quantity $\|\Delta^i\|_1$ may be sketched with an $\ell_1$-sketch \cite{I06}, and a sample from $\calD_i$ may be produced via $\ell_1$-sampling \cite{andoni2010streaming,Jowhari:2011, JW18}.
\begin{equation}\label{rightform}
\cbI_i=\sum_{e\in E_i} \big| A_e^i -  B_e^i \big| \cdot w_{ e},
\end{equation}
where $A^i,B^i\in \mathbb{R}^{E_i}$ are
  two vectors such that $A_e^i=|A_v|$ and  $B_e^i=|B_v|$ 
  when $e=(u, v)\in E_i$. We also write $w_e$ to denote $w_{u,v}$ when $e=(u, v)$.



Notice that $\cbI_i$ in (\ref{rightform}) can be viewed as a \textit{weighted} $\ell_1$ norm of the vector $A^{i} - B^i$.
Since the weights $w_{e}$ are known up to a factor of $\log s$, one could just use a standard $\ell_1$ sketch to estimate $\|A^i - B^i\|_1$ to a factor of $2$~\cite{indyk2006stable}. However, this would only be able to recover the $O(\log ^2 s)$-approximation obtained from the worst-case distortion analysis of Quadtrees \cite{AIK08}. 

To consider the challenge behind sketching quantities like $\cbI_i$ in (\ref{rightform}), note that one could have~$A^i_e = B^i_e > 0$ for the vast majority of non-empty edges $e\in E_i$. In this case, only a small fraction $E' \subset E_i$ of the non-empty edges $e$, where $\smash{A^i_e\ne  B^i_e }$, would contribute to $\cbI_i$. Moreover, $w_{e}$ could be set in a way so that either $ w_{e} = d /2^i$  or  $ w_{e} = (\log s) d /2^i$ for all  $e\in E'$.  Therefore, any algorithm must first identify this ``meaningful'' set $E'$ of edges, and \textit{then} determine the value of $ w_{e}$  for $e \in E'$. 

To solve this task, we introduce a simple two-step process. First, to identify the meaningful edges, we \textit{sample} a edge $\be $ from $E_i$ with probability proportional to the discrepancy $|A^i_e - B^i_e|$. Once such a sample $\be$ is obtained, we compute its weight $w_{\be}$ and output the random variable  
\begin{equation}\label{randomvar}
\hat{\cbI}_i = \left( \sum_{e\in E_i} \big|A_{e}^i -  B_{e}^i\big|  \right)\cdot  w_{\be}.
\end{equation}
Notice by definition $\ex{\hat{\cbI}_i}= \cbI_i$, where the randomness is taken over the weighted sampling of the edge $\be$. By the upper and lower bounds on weights $w_{e}$ in (\ref{lowerupperbounds}), we also have Var$(\hat{\cbI}_i) \leq O( \log^2 s\cdot \cbI_i)$, so taking the average of $O(\log^2 s)$ repetitions of $\hat{\cbI}_i$ yields a constant-factor approximation of $\cbI_i$.

\def\bu{\mathbf{u}}

{\bf Two-Round Linear Sketching Algorithm.}
Our two-round sketching algorithm will  carry out the two steps above on subsequent rounds of the sketch. In the first round, to sample an~edge~$\be$~from $E_i$ with probability proportional to $|A_e^i - B_e^i|$, we can use any known $\ell_1$-sampling algorithm \cite{andoni2010streaming,Jowhari:2011,JW18}, all of which are linear sketches. 
Once the sample $\be$ is obtained,~we~compute \textit{approxi\-mately}  $ w_{\be}$ in the second round, 
by storing all information needed to approximate  $\| c_\bv - c_\bu\|_1$ where edge $\be=(\bu,\bv)$. More specifically, we store the two sums $\sum_{p \in A_\bv \cup B_\bv} p $ and $\sum_{p \in A_\bu \cup B_\bu} p$,\footnote{We store these approximately in $O(\log s)$ space via an $\ell_1$ sketch~\cite{indyk2006stable}.} and the size of both sets $|A_\bv \cup B_\bv|,|A_\bu \cup B_\bu|$. From this  one can construct an estimate $\hat{ w}_{\be} = (1 \pm 1/2)\cdot \| c_\bv - c_\bu\|_1$.  Finally we use an estimate $\hat{\Delta}_i = (1 \pm 1/2) \sum_{e\in E_{i}} |A_e^i - B_e^i|$ via an $\ell_1$ sketch \cite{indyk2006stable}. These estimates allow us to obtain $\smash{\hat{\cbI}_i}$ in (\ref{randomvar}) up to a constant factor approximation. 
}

{\bf Sampling with Meta-Data and One-Pass Streaming}
\ignore{Our main technical contribution for sketching is to compress the above two-round protocol into a single linear sketch.
Specifically, our goal will be to construct the \textit{same} estimator $\smash{\hat{\cbI}_i}$ as used in the two-round sketch, where we first sample an edge $\be$ from $E_i$ with probability proportional to $|A_e^i - B_e^i|$, and then obtain an approximation $\hat{ w}_{\be}$ for $w_{\be}$. Importantly, we want to do both tasks simultaneously. Our approach involves several novel technical ideas, which may be generally applicable to sketching weighted $\ell_1$ distances.}
%
%
The key task 
of sampling with meta-data is the following: for $n, k \in \N$, 
we are given a vector $x \in \R^n$ and collection of \textit{meta-data} vectors $\lambda_1,\lambda_2,\dots,\lambda_n \in \R^{k}$, and the goal is to sample $i \in [n]$ with probability $|x_i|/\|x\|_1$ (or more generally, $|x_i|^p /\|x\|_p^p$), and output both $i$ and an 
approximation $\hat{\lambda}_i \in \R^k$ of the vector $\lambda_i$. The challenge is to solve this problem with a small-space linear sketches of $x$ and the meta-data vectors $\lambda_1, \dots, \lambda_n$. It is not hard to see that sampling with meta-data is exactly the problem we seek to solve for linear sketching of EMD.\footnote{Namely, $x$ is the vector $\Delta^i$, and the meta-data vectors $\lambda_{v}$ are $k = \polylog(n)$-dimensional $\ell_1$ \textit{sketches} of the values of $\avgd_{\pi(v),v}$. In the following discussion, for simplicity we omit the details on the $\ell_1$ sketches for $\avgd_{\pi(v),v}$, since they proceed via somewhat standard techniques, and instead assume that the meta-data is exactly given by the scalars $\lambda_\vv \approx \avgd_{\pi(v),v}$.} \ignore{Using relatively straightforward extensions of arguments in previous $\ell_p$ sampling sketches \cite{andoni2010streaming,Jowhari:2011,JW18} , one can obtain generic bounds for this problem in terms of the sums of norms of the meta-data vectors $\lambda_i$. Unfortunately, these bounds will be too weak for our purposes, and our algorithm will require significant additional techniques. }

Our algorithm builds on a powerful sketching technique known as \textit{precision sampling} \cite{andoni2010streaming,Jowhari:2011,JW18} for sampling an index $i \in [n]$ proportional to $|x_i| / \|x\|_1$ for a vector $x \in\R^n$ (or more generally, for $|x_i|^p / \|x\|_p^p$, but we focus on $p =1$). \ignore{Given a vector $x \in \R^n$, precision sampling addresses the problem of sampling a coordinate $i \in [n]$ with probability approximately $|x_i|/\|x\|_1$ (or more generally $|x_i|^p/\|x\|_p^p$, but we focus on the case of $p=1$). In precision sampling, } The idea is to produce, for each $i \in [n]$ an independent exponential random variable $\bt_i \sim \mathrm{Exp}(1)$, and 
construct a ``scaled vector'' $\bz \in \R^n$ with coordinates $\bz_i = x_i / \bt_i$. One then attempts to return the index $i_{\max} = \argmax_{i \in [n]} \bz_i$, since
\begin{align*}
\Prx_{\bt_1, \dots, \bt_n \sim \mathrm{Exp}(1)}\left[ \argmax_{i' \in [n]} \frac{|x_{i'}|}{\bt_{i'}} = i \right] &= \dfrac{|x_i|}{\|x\|_1}. 
\end{align*} 
To find the the index $i_{\max}$ with a linear sketch, we can use a ``heavy-hitters'' algorithm, such as the Count-Sketch of \cite{charikar2002finding}.\footnote{We do not explicitly use count-sketch in our one-pass algorithms, and instead apply a sketching procedure closely inspired by Count-Sketch.} 
Specifically, Count-Sketch with error $\eps\in (0,1)$ allows us to recover an estimate $\tilde{\bz}$ to $\bz$ satisfying (roughly) $\|\tilde{\bz} - \bz\|_{\infty} \leq \eps \|\bz\|_2$. Then one can show that $\argmax_{i' \in [n]} |\tilde{\bz}_{i'}|$ is close to being distributed as $|x_i| /\|x\|_1$.

In order to sample with meta-data, our sketch similarly samples independent exponential $\bt_1, \dots, \bt_n \sim \mathrm{Exp}(1)$ and applies a Count-Sketch data structure on $\bz \in \R^n$, where $\bz_i = x_i / \bt_i$, and obtains an estimate $\tilde{\bz}$ of $\bz$. In addition, we apply a Count-Sketch data structure with error $\eps$ for the vector $\bw$ with coordinates given by the values $\lambda_i / \bt_i$, namely $\bw_i = \lambda_i/ \bt_i$ (recall that we are assuming that the meta-data $\lambda_i$ are scalars for this discussion). From this we obtain an estimate $\tilde{\bw}$ of $\bw$.
The insight is the following: suppose the sample produced is $i^* \in [n]$, which means it satisfies $\tilde{\bz}_{i^*} \approx \max_{i \in [n]} |x_{i}| / \bt_{i}$. Then the value $\bt_{i^*}$ should be relatively small: in particular, one can show that we expect $\bt_{i^*}$ to be $\Theta(|x_{i^*}| / \|x\|_1)$, so that $\bz_{i^*} \approx \Theta(\|x\|_1) = \Theta(\|\Delta^i\|_1)$. When this occurs, for each $\ell \in [k]$, the guarantees of Count-Sketch 
imply that the estimate $\bt_{i^*} \cdot \tilde{\bw}_{i^*}^{\ell}$ satisfies
\begin{align*}
\left| \bt_{i^*} \cdot \tilde{\bw}_{i^*} - \lambda_{i^*}\right| = \bt_{i^*}\left| \tilde{\bw}_{i^*} - \bw_{i^*} \right| \leq \eps \bt_{i^*} \| \bw\|_2 \left( = O\left(\eps |x_{i^*}| \cdot \frac{\|\lambda\|_1}{\|x\|_1}\right) \text{ in expectation} \right) 
\end{align*}
where $\lambda \in \R^n$ is the vector with coordinates given by the meta-data $\lambda_1, \dots, \lambda_n$. In other words, if the size of $\lambda_{i^*}$ is comparable to $|x_{i^*}|$, and if the ratio $\|\lambda\|_1/\|x\|_1$ of the meta-data norms to the norm of $x$ is bounded, then $\bt_{i^*} \tilde{\bw}^{\ell}_{i^*}$ is a relatively good approximation to $\lambda_{i^*}$. 

Unfortunately, in our application, the above will not always be the case. In particular, the norm of the meta-data  $\|\lambda\|_1 $ may be much, even poly$(n)$, larger than $\|x\|_1 = \|\Delta^i\|_1$. Intuitively, the issue is that each coordinate $\lambda_v$ is a sketch of $ \avgd_{\pi(v),v}$, which is a function both of the points in $C_{\pi(v)}$ and $C_v$. Thus, the size of the sketch of $ \avgd_{\pi(v),v}$ depends on all the points in $C_{\pi(v)}$. Moreover, for \textit{every} other sibling $v'$ of $v$ (meaning that $\pi(v') = \pi(v))$, the sketch of $ \avgd_{\pi(v'),v'}$ will also have to take into account the same information from $C_{\pi(v)}$. Thus, this information is \textit{duplicated} in the sketches of the meta-data, by a number of times equal to the number of children of $\pi(v)$. This duplication, or repetition of the same information in the sketch, results in a blow-up of the norm of $\lambda$ so that $\|\lambda\|_1 = \Omega(\kappa \cdot \|\Delta^i\|_1)$, where $\kappa$ is the maximum number of non-empty children of any parent in level $i-1$. Since $\kappa$ can be $\poly(n)$, this is an non-trivial challenge. 

Our solution to this, at a high level, is to develop a \textit{two-step }precision sampling with meta-data algorithm to avoid duplication of meta-data. Instead of sampling the vertex $\bv^*$ directly, we first sample a parent $\bu^*$ from level $i-1$ with probability proportional to the $\ell_1$-norm of $\Delta^i$ restricted to coordinates corresponding to the children of $\bu^*$; namely, we sample $\bu^*$ with probability proportional to $\sum_{v : \pi(v) = \bu^*} |\Delta^i_{v}|$. Then, we use the precision sampling sketch which recovered $\bu^*$ to recover a sketch of the a randomly selected point in $C_{\bu^*}$. 
Next, once we have $\bu^*$, we apply precision sampling with meta-data once more, to sample a child $\bv^*$ of $\bu^*$ proportional to $|\Delta^i_{\bv^*}|$, and then recover a sketch of  a randomly selected point in $C_{\bv^*}$. One can then put the two sketches from $C_{\bu^*},C_{\bv^*}$ together to estimate $\avgd_{\bu^*,\bv^*}$. 

To accomplish this two-part precision sampling scheme, we must generate a second set of exponentials $\{\bt_v\}_{v}$, one for each child node $v$ at depth $i$. In order to ensure that the sample produced by the second sketch actually returns a child $\bv^*$ of $\bu^*$, and not a child of some other node, we crucially must scale the vector $\Delta^i$ by \textit{both} the child exponentials $\{\bt_v\}_{v}$ as well as the parent exponentials $\{\bt_u\}_{u}$ from the first sketch. Thus, in the second sketch we analyze the twice-scale vector $\bz$ with coordinates $\bz_v = \Delta^i_{v}/ (\bt_{\pi(v)} \bt_v)$, and attempt to find the largest coordinate of $\bz$.  Importantly, notice that this makes the scaling factors in $\bz_v$ no longer independent: two children of the same parent share one of their scaling factors. Thus, executing this plan requires a careful analysis of the behavior of norms of vectors scaled by several non-independent variables with heavy-tailed distributions. 

The advantage of this two-part scheme is that now there is no duplication of meta-data, since in the first step there is only one $\lambda_u$ for each parent $u$, and in the second step, by conditioning on the parent exponential $\bt_{\bu^*}$ being sufficiently small, we ensure that the only meta-data that contributes non-trivially to the error of the sketch are the $\lambda_v$ for children $v$ of $\bu^*$. This allows us, ultimately, to obtain our guarantees for one-pass streaming algorithms for $\EMD$. The case of $\MST$ is similar at a high-level, however implementing the two-part precision sampling scheme requires an entirely different set of sketching tools, resulting from the fact that we now need to sample a vertex $v$ from the $\ell_0$ distribution $\calD_0(\Delta^i)$.

\ignore{
Moreover, the Cauchy random variables used in the $\ell_1$-linear sketches must also be the same for children with the same parent. 
Executing this plan requires a careful analysis of the behavior of norms of vectors scaled by several non-independent variables, as well as a nested Count-Sketch to support the two-stage sampling procedure. 

In our application, the vector $x \in \R^n$ is given by $\Delta^i$, and the meta-data are the $\ell_1$-linear sketches of $\avgd_{\pi(v),v}$ for each vertex $v$ at depth $i$(the construction of these sketches is relatively straightforward, so we omit them here).  At a high level, applying the above algorithm results in an $\ell_1$-sample $\bv \sim \calD_1(\Delta^i)$, and an approximation to the $\ell_1$-linear sketches of (\ref{eq:meta-data}) and approximations to the counts $|A_u|, |A_v|, |B_u|,$ and $|B_v|$, where the error guarantee is additive and linearly related to the ratio between the sum of magnitudes of the coordinates we seek to recover and $\|\Delta^i\|_1$ (this plan will runs into a technical issue, which we will soon expand on). 

In order to see how this may be implemented, consider the problem of recovering the $\ell_1$-linear sketch of $\chi_{A, v}$. The $\ell_1$-sketch of \cite{I06} has coordinates given by inner products of (\ref{eq:meta-data}) with vectors of independent Cauchy random variables, which means that the $\ell$-th coordinate in the $\ell_1$-linear sketch of $\chi_{A, v}$ is given by
\[ \frac{1}{\median(|\calC|)}\sum_{j=1}^d \bC_j \left( \sum_{a \in A_v} a_j\right),\]
where $\bC_1, \dots, \bC_d$ are independent Cauchy random variables $\calC$ and $\median(|\calC|)$ is the median of the distribution.\footnote{Namely, $\median(|\calC|) = \sup\{ t \in \R : \Prx_{\bC \sim \calC}[|\bC| \leq t ]\leq 1/2$.} From the fact that $\bC_1, \dots, \bC_d$ are Cauchy random variables, the above sum is expected to have magnitude $O(\log d) \cdot O(d)\cdot|A_v|$. Since the error guarantee depends on the sum of magnitudes across the $\ell$-th coordinate of all meta-data vectors (one for each edge in $E_i$), if we sample $(\bu, \bv) \sim \calD_i$, the additive error on the $\ell$-th coordinate of the $\ell_1$-sketch of $\chi_{A, \bv}$ we recover becomes 
\begin{align} 
\eps \cdot \frac{|\Delta^i_{\bu,\bv}|}{\|\Delta^i\|_1} \cdot O(\log d) \cdot O(d) \cdot \sum_{v} |A_v| = O\left(\eps \cdot sd \log d \cdot \frac{|\Delta^i_{\bu,\bv}|}{\|\Delta^i\|_1}\right), \label{eq:error-term}
\end{align}
where we used the fact that $\{ A_v \}_v$ partitions $A$ in order to say $\sum_{v} |A_v| = s$. Furthermore, if $\|\Delta^i\|_1 \ll \eps_0 s 2^i / (\log d\log s)$, then from (\ref{eq:lalala}) and the fact $w_{u,v} \leq d\log s/2^i$, we can already conclude that $\calbI_i \ll \eps_0 sd / \log d$, which is a negligible (since we allow $\eps_0 s d$ additive error, and there are at most $O(\log d)$ depths), so that we may assume $\|\Delta^i\|_1 = \tilde{\Omega}(\eps s 2^i)$. For these depths where $\|\Delta^i\|_1$ is sufficiently large, the additive error in (\ref{eq:error-term}) incurred will be smaller than a typical $\ell$-th coordinate of the sketch of $\chi_{A, \bv}$, giving us a coordinate-wise relative error of the sketch.
%

The above is sufficient for recovering a relative error approximation to the $\ell_1$-linear sketch of $\chi_{A, v}$ and $\chi_{B, v}$ for the vertices $v$ at depth $i$, by setting $\eps$ to be a small enough $1/\polylog(s,d)$.
 However, the same argument loses additional $O(s)$-factors when recovering approximations to $\ell_1$-linear sketches of $\chi_{A, u}$ and $\chi_{B, u}$ for the vertices $u$ at depth $i-1$. The reason is that the size of $\|(\lambda_{\cdot})_{\ell}\|_1$  becomes $O(\log d) \cdot O(d) \cdot \sum_{v} |A_{\pi(v)}|$, where $\pi(v)$ is the parent of $v$; in some cases, this is $\Omega(s^2 d\log d)$. Intuitively, the problem is that while the the multi-sets $\{ A_v \}_v$ and $\{ B_v\}_v$ partition $A$ and $B$, the multi-sets $\{ A_{\pi(v)}\}_{v}$ and $\{ B_{\pi(v)}\}_{v}$ may duplicate points $s$ times, causing the magnitudes of coordinates in the $\ell_1$-linear sketches to much larger. This additional factor of $O(s)$ would require us to make $\eps = 1/(s \cdot \polylog(s,d))$, increasing the space complexity of the sketch to $\tilde{O}(s)$, effectively rendering it trivial.\footnote{In particular, a $\tilde{O}(s/\eps^2)$-size sketch may proceed by taking $\ell_1$-sketches of the $s$ vectors in $A$ and $B$ such that all pairwise distances are preserved, giving a $(1+\eps)$-approximation to $\EMD(A,B)$.}

To get around this issue, we utilize a two-step precision sampling with meta-data algorithm. We first sample $\bu^*$ with probability proportional to the $\ell_1$-norm of $\Delta^i$ restricted to coordinates corresponding to the children of $\bu^*$; namely, we sample $\bu^*$ with probability proportional to $\sum_{v : \pi(v) = \bu^*} |\Delta^i_{\bu^*,v}|$.\footnote{This value must be approximated via a Cauchy sketch, since the $\ell_1$ norm of the children is norm of $\bu^*$ is not a linear function.} Since the multi-sets $\{ A_u \}_u$ and $\{ B_u \}_u$ partition $A$ and $B$, and we can recover $\ell_1$-linear sketches for $\chi_{A, \bu^*}$ and $\chi_{B, \bu^*}$ up to relative error, as well as approximations to the counts for $|A_{\bu^*}|$ and $|B_{\bu^*}|$. Once we have $\bu^*$, we apply precision sampling with meta-data once more, to sample a child $\bv^*$ from $\bu^*$ proportional to $|\Delta^i_{\bu^*, \bv^*}|$. Specifically, we generate a second set of exponentials $\{\bt_v\}_{v}$, one for each child node $v$. In order to ensure that the sample produced by the second sketch actually returns a child $\bv^*$ of $\bu^*$, and not a child of some other node, we crucially must scale the vector $\Delta^i$ by \textit{both} the child exponentials $\{\bt_v\}_{v}$ as well as the parent exponentials $\{\bt_u\}_{u}$ from the first sketch. Thus, we analyze the twice-scale vector $\bz$ with coordinates $\bz_v = \Delta^i_{u,v}/ (\bt_u \bt_v)$, and attempt to find the largest coordinate of $\bz$.  Importantly, notice that this makes the scaling factors in $\bz_v$ no longer independent: two children of the same parent share one of their scaling factors. Moreover, the Cauchy random variables used in the $\ell_1$-linear sketches must also be the same for children with the same parent. 
Executing this plan requires a careful analysis of the behavior of norms of vectors scaled by several non-independent variables, as well as a nested Count-Sketch to support the two-stage sampling procedure. 

}


\ignore{


Given  $x \in \R^{|E_i|}$ defined by $x_e = A^i_e - B^i_e$, and meta-data $\{\lambda_e\}_{e \in E_i}$, our approach is to incorporate the $\lambda_e$'s into the precision sampling sketch. 
Specifically, we construct a matrix $\X \in \R^{|E_i| \times (k+1)}$ by \textit{appending} the meta-data $\lambda_e$ to the coordinate $x_e$: namely the $e$-th row $\X_e$ is given by $(x_e, \lambda_e)$. Then we set $\ZZ_e = \X_e / t_e$ to construct the scaled matrix $\ZZ$. We apply the Count-Sketch to each column of $\ZZ$ to recover an entry-wise approximation $\hat{\ZZ}$ to $\ZZ$. The precision sampling arguments imply that $\hat{e} = \arg \max_e\hat{\ZZ}_{e,1}$ is drawn from the correct distribution. However, the main challenge is to bound the error $\|\hat{\ZZ} - \ZZ\|_\infty$ sufficiently so as to recover a good approximation of $\lambda_{\hat{e}}$ after sampling $\hat{e}$. 

The hope is, because $\hat{\ZZ}_{\hat{e},1} \approx \arg \max_e  (x_e/t_e)$, the scaling $1/t_{\hat{e}}$ must be large enough to make the entire row $\ZZ_{\hat{e}}$ stand out when compared to the rest of $\ZZ$, so that $\ZZ_{\hat{e}}$ can be approximated by a heavy-hitters sketch. Specifically, the hope is that $\|\ZZ_{\hat{e}}\|_1$ should be comparable to the $\ell_2$ norm of the entire matrix $\ZZ$, since the Count-Sketch error depends on the latter. Unfortunately, this will not be the case for our application. The reason is that 
for a vertex $u$ at depth $i-1$ in the compressed Quadtree $\bT$, by construction each point $p \in A_u \cup B_u$ must be added to $\lambda_e$ for \textit{every} edge $e = (u,v) \in E_i$ incident to $u$, of which there may be $\Omega(2^d)$ (even if empty edges could be removed, there could still be  $\Omega(s)$ remaining). If $|A_u \cup B_u| = \Omega(s)$, each point would be duplicated at least $s$ times, and the total error would be $O(\eps) \sum_{e} \|\lambda_e\|_1 = \Omega( \eps s^2 d)$, whereas we require a $O(\eps s d)$  approximation.

The above challenge requires a \textit{multi-variate} extension of precision sampling, where we first sample a parent $u$ at depth $i-1$ with probability proportional to $\sum_{v : (u,v) \in E_i}| x_{(u,v)}|$, and then sample a child $v$ of $u$ with probability proportional to $|x_{(u,v)}|$. This will require two separate precision sampling setups, with two separate collections of scaling exponentials $\{t_i\}$. For the parent sketch, we append \textit{only} the meta-data  $\sum_{p \in A_u \cup B_u} p$ and $|A_u \cup B_u|$ relating to the parent nodes $u$, and for the child sketch we append only the meta-data $\sum_{p \in A_v \cup B_v} p$ and $|A_v \cup B_v|$ relating to the children $v$. This prevents duplication of the points across various meta-data $\lambda_e$, and results in two separate matrices $\X,\Y$, where the row $\X_u$ first stores $\sum_{v : (u,v) \in E_i}| x_{(u,v)}|$ and then the meta-data for the parent $u$, and $\Y_v$ first stores $x_{(u,v)}$ and the meta-data for the children $v$. Once we sample a parent $u$ from the precision sampling sketch with $\X$, to ensure that we can sample a child $v$ of $u$, and not another parent's child, we need to ensure that the children of $u$ can be easily detected in the child sketch. Thus, each row $\Y_v$ of $\Y$ must necessarily be scaled by \textit{both} the child exponential $1/t_v$, as well as the parent exponential $1/t_u$ used in the $\X$, where $u$ is the parent of $v$ (hence, multi-variate).

The technique challenge is to control the errors resulting from handle the two scaling factor. For instance,  $\sum_{v : (u,v) \in E_i}| x_{(u,v)}|$ cannot be stored exactly in a linear sketch, and must be approximated with a Cauchy-sketch, and the dimension of the meta-data $\lambda_u,\lambda_v$ must be reduced with a separate Cauchy sketch. This creates an interplay of multiple sets of random (not all independent) scaling factors for the rows of $\X,\Y$: the child scalings $1/t_v$,  the parent scalings $1/t_u$, and the Cauchy variables. Our analysis proceeds by developing technical lemmas that allow one to control how norms of vectors behave after they are point-wise scaled by non-independent random variables satisfying certain tail inequalities. This allows us to ultimately relate the norms of the scaled matrices to the norms of the original matrix. We remark that this technique is not limited to a two-step sampling scheme (first sample parents, and then children), but can be generalized into arbitrary depth $k$-sampling schemes, where the coordinates of a vector $x$ form the leaves of a depth $k$ tree, at the cost of a $O(\log^k s)$-increase in the space.}


\def\bC{\mathbf{C}} \def\bv{\mathbf{v}}
\newcommand{\Compress}{\textsc{Compress}}
\def\sv{\mathsf{v}} \def\su{\mathsf{u}}
\def\avgd{\text{avg}} \def\cc{\mathbf{c}}
\def\bvalue{\textbf{Value}}

\ignore{
\medskip
Given two multi-sets $A,B\subseteq\zo^d$, we consider the execution of $\ComputeEMD(A,B)$ by studying the (possibly infinite) recursion tree of the algorithm. The root of the tree corresponds to the initial call to $\ComputeEMD(A,B)$, and each of its two children corresponds to an execution of $\ComputeEMD(A_1,B_1)$ and $\ComputeEMD(A_0,B_0)$. It will be useful for us to think of each node $v$ in the tree as holding two multi-sets $A_v,B_v\subset\zo^d$, where $A_v$ holds all the points in $A$ which follows the path to $v$ from the root node according to the assignment to the sampled variable ($B_v$ is defined analogously). 

Next, we define a \emph{compression} on a tree. 
\begin{definition} Let $T$ be a tree of depth $h$. A \emph{compression} $\calC(T)$ of $T$ is a tree obtained from $T$ by contracting\footnote{A contraction of an edge $(u,v)$ is the operation that eliminates the edge $(u,v)$ and ``merge" $u$ and $v$ to the same vertex.} all the edges between levels $2^i$ to $2^{i+1}-1$ for every $i\in \{0,\ldots,\log (h-1)\}$.  
\end{definition}

Let $T$ be the recursion tree corresponding to the execution of $\ComputeEMD(A,B)$. Let $v$ be any node in level $2^i$ of $T$, and consider the subtree $T_v$ rooted at $v$. Note that if we apply compression on $T$, the node $v$ in $\calC(T)$ will have $2^{2^i}$ children which correspond to all the nodes in the subtree of $T_v$ which are at level $2^{i+1}$. Therefore, one can think of the node $v$ is $\calC(T)$, as sampling a \emph{multi-set} of coordinates which correspond to the variables picked by the execution of the algorithm corresponding to the nodes in $T_v$ at level $2^{i+1}-1$.

Next, we show that the matching produced by the execution of the algorithm in Figure~\ref{fig:compute-emd}, is a member in a class of possible matchings in the compressed tree. We start by defining the class of matchings we will be interested in.

\begin{definition}[Depth Greedy matchings] Let $T$ be a tree of depth $h$ and let $L(T)$ denote the set of leaves in $T$. We define the class of \emph{depth greedy matchings} $\calM_T$ as the set of all matchings $M\subseteq L(T)\times L(T)$ obtained by the following process. The process maintains a matching $M$ initialized to $\emptyset$. Starting from depth $j=h-1$ to $0$, for each node  $v$ in depth $j$, find a maximal partial matching $M'$ between the leaves in $v$ subtree which are not in $M$, and update $M=M\cup M'$. We define the cost of $\calM_T$ as $\mathrm{Cost}(\calM_T)=\max_{M\in \calM_T}\sum_{(a,b) \in M}\|a-b\|_1$.
\end{definition}

\begin{observation} Let $M$ be the matching produced by running the algorithm described in Figure~\ref{fig:compute-emd}, and let $T$ be the tree that correspond to the execution of the algorithm. Then, $M$ is a depth greedy matching in the compressed $\calC(T)$.
\end{observation}

Thus, we will consider a slight variant of Algorithm~\ref{thm:quad-tree-alg} where in each level, a set of coordinates is sampled, and analyze the cost of the worst depth-greedy matching. The algorithm is described in Figure~\ref{fig:compute-emd'}.
\begin{figure}[ht!]
	\begin{framed}
		\noindent Algorithm $\QuadTree \hspace{0.04cm} (A, B,j)$
		
		\begin{flushleft}
			\noindent {\bf Input:} Two multi-sets $A, B \subset \{0,1\}^d$ and $j\in \N$.
			
			\noindent {\bf Output:} A tuple $(M, A', B')$, where $M$ is a maximal matching between $A$ and $B$, and $A' \subset A$ and $B' \subset B$ are points not participating in the matching $M$. 
			
			\begin{enumerate}
				\item If $|A| = 0$ or $|B| = 0$, or there is a single point $x \in \{0,1\}^d$ where $A = \{ x, \dots, x \}$ and $B = \{ x, \dots, x\}$, return $(A, B, M)$, where $M$ is a maximal partial matching between $A$ and $B$.
				\item\label{ln:2'} Otherwise, sample a multi-set of  coordinates $\bS_j\subseteq[d]$ of size $2^j$ uniformly at random. Let $\bar{\bS}_j$ denote the set induced by $\bS_j$. For each assignment $w\in\zo ^{\bar\bS_j}$ to the variables in $\bar\bS_j$ let 
				\begin{align*}
					A_w &= \left\{ a \in A : a|_{\bar \bS_j} = w \right\} \qquad\qquad B_{w} = \left\{ b \in B : b|_{\bar\bS_j} = w \right\}. \\
				\end{align*} 
				For each $w\in\zo ^{\bar\bS_j}$, recursively solve $\ComputeEMD'(A_w, B_w, j+1)$ to obtain $(M_w, A_w', B_w')$ .
				\item Find an arbitrary maximal matching $M_L$ between $\bigcup_w A_w'$ and $\bigcup_wB_w' $, and return the triple $(\bigcup_wM_w \cup M_L, A', B')$,
				where $A'$ and $B'$ are the unmatched points of $A$ and $B$, respectively.
			\end{enumerate}
		\end{flushleft}\vskip -0.14in
	\end{framed}\vspace{-0.2cm}
	\caption{The $\QuadTree$ algorithm.}\label{fig:compute-emd'}
\end{figure}

Therefore, it suffices to prove the following theorem.

\begin{theorem}
Fix $s,d\in \N$, and let $A,B\subset \zo^d$ be any two multi-sets of size $s$. Then, with probability at least $0.9$, $\QuadTree(A,B,0)$ outputs a tuple $(\bM,\emptyset,\emptyset)$, where $\bM$ is a depth greedy matching of $A$ and $B$ with respect to the execution tree of` $\QuadTree$, satisfying
\[  \EMD(A,B)\le \sum_{(a, b) \in \bM}\|a-b\|_1 \le \tilde{O}(\log n)\cdot \EMD(A,B).\]
\end{theorem}


\subsection{Quadtree}

{\color{red} Describe here how the above algorithm generates the quadtree. what it means for a point to be mapped to a node.}

The above algorithm naturally induces a rooted binary tree, as well as a mapping from points $A \cup B$ to the leaves of the tree. Each node $v$ of the tree corresponds to a recursive call of the above algorithm, stores the random coordinate $\bi \in [d]$ that it sampled, and has a child for each recursive call of points (one where $\bi$ is set to $1$ and one where $\bi$ is set to $0$). A point $x \in \{0,1\}^d$ follows a root-to-leaf path down the tree by querying the coordinate contained at each node, and following the edge corresponding to the value of the coordinate at $x$. 

\begin{definition}
Let $d, t \in \N$, and consider a string $\pi \in [d]^t$. Then, $\pi$ defines a complete, depth-$t$ binary tree $T$ where a node at depth $j$ queries coordinate $\pi_j$. We let $\phi_j \colon \{0,1\}^d \to T$ be the map which sends each point of $\{0,1\}^d$ to the depth-$j$ node of $T$ given by the root-to-leaf path which queries coordinates of the point; we denote $\phi = \phi_t$ be the map of points to its corresponding leaf. Then, given two multi-sets $A, B \subset \{0,1\}^d$, if $v$ is a node of the tree at depth $j$,
\[ A_v = \{ a \in A : \phi_j(a) = v \} \qquad\text{and}\qquad B_v = \{ b \in B : \phi_j(b) = v \}. \]
\end{definition}

The above algorithm may be analyzed via randomized tree embeddings. We define a weight function on the edges of a tree, which gives a distribution over tree metrics. Then, the quality of the matching produced by the algorithm is evaluated with costs given by the tree metric. With a bound on the distortion of the embedding from $(\{0,1\}^d, \ell_1)$ to the tree metric, one may translate the costs back to $(\{0,1\}^d, \ell_1)$, albeit at a cost proportional to the distortion. The argument is summarized in the following two lemmas.
\begin{lemma}
Let $d, s \in \N$. Consider the distribution over trees $\calT$ given by sampling a uniform permutation $\bpi \sim S_d$, and defining $\bT$ according to $\bpi$. Let $w \colon E_{\bT} \to \R^{\geq 0}$ be the weight function on edges given by setting $(u,v)$, where $v$ is at depth $j$  
\[ w(u,v) \eqdef \frac{d}{j^2}, \]
and let $(\bT, d_{\bT})$ be the corresponding tree metric. 
Then, for any multi-sets $A, B \subset \{0,1\}^d$ of size $s$, 
\begin{itemize}
\item With high probability over the draw of $\bT \sim \calT$, every $a \in A$ and $b \in B$ satisfies 
\[ d_{\bT}(\bphi(a), \bphi(b)) \geq \|a - b\|_1 / \log(s). \]
\item For any $a \in A$ and $b \in B$, 
\[ \Ex_{\bT \sim \calT}\left[ d_{\bT}(\bphi(a), \bphi(b)) \right] \leq O(\log d) \cdot \|a - b\|_1. \]
\end{itemize}
\end{lemma}

Quadtree works in the following way. Quadtree can be defined by a random injective mapping $\phi:\{0,1\}^d \to \mathcal{T}$, where $\mathcal{T}$ is a rooted tree. To evaluate Quadtree, the points $A,B$ are mapped by $\phi$ into $\mathcal{T}$, and then one can compute $\EMD_\mathcal{T}(\phi(A),\phi(B)) = \min_{\pi: A \to B} \sum_{a \in A} d_{\mathcal{T}}(\phi(a),\phi(b))$, and $ d_{\mathcal{T}}(\phi(a),\phi(b))$ is the (possibly-weighted) shorted path distance in $\mathcal{T}$. The advantage is embedding into a tree metric is that the minimum cost matching can be easily computed: in fact, the greedy matching is optimal for trees. 

\begin{proposition}[Greedy is optimal for trees]
Let $\mathcal{T}$ be a rooted tree with non-negative edge weights $w : E(\mathcal{T}) \to \R^{\geq 0}$, and let $A,B \subseteq V(\mathcal{T})$ be subsets of vertices of equal size. Let $\pi:A \to B$ be any ``greedy'' matching, obtained by greedily matching to points $a \in A, b \in B$ which are closest (in the \textit{unweighted} shortest path metric) among all un-matched pairs. Then 
\[   \sum_{a \in A} \|a - \pi(a)\|_1 = \EMD_{\mathcal{T}}(A,B) \]  

\end{proposition}
\begin{proof}
	content...
\end{proof}

We now describe the mapping $\phi$.
computes the minimum cost matching between For $i = 1,2,\dots,\log d -1$, we sample a multiset $s_i \subset [d]$ of size $|s_i| = 2^i$. Specifically, $s_i = (s_{i,1},s_{i,2},\dots,s_{i,2^i})$ where each $s_{i,j} \sim [d]$ is a uniformly chosen coordinate. We then set $s_{\log d} = \{1,2,\dots ,d\}$ deterministically. The multi-sets $\{s_i\}$ define $\mathcal{T}$. The tree $\mathcal{T}$ is a rooted tree with root $r$ and depth $\log d$ (we take the convention that the depth of $r$ is $0$). Each vertex $v$ at depth $i$ will have $2^{2^i}$ children, each labeled by a point in the hypercube $p \in \{0,1\}^{2^i}$. 

We now describe how to map a point $a \in \{0,1\}^d$ to a leaf of $\mathcal{T}$. We begin at the root, and follow a path down the tree to a leaf. Let $v$ be the current vertex on this path (beginning with $v=r$), and let $i$ be the depth of $v$. We pick the child $v'$ of $v$ such that the edge to $v'$ is labeled by $p = (p_1,\dots,p_{2^i})$ such that $p_j = a_{s_{i,j}}$, and recursively apply the procedure to $v'$. This results in $\phi(a) = \ell$ where $\ell$ is a leaf of the tree.  For any $i \in \{0,1,\dots, \log d\}$, we can define the mapping $\phi_i: \{0,1\}^d \to \mathcal{T}$ which sends a point $x \in \{0,1\}^d$ to the vertex $v \in \mathcal{T}$ at depth $i$ on the path from the root $r$ to the leaf $\phi(x)$. Thus $\phi_{\log d}(x) = \phi(x)$ for all $x \in \{0,1\}^d$.

The following theorem can be found in \cite{ }

\begin{theorem}
	Let  $A,B \subseteq \{0,1\}^d$ be subsets of size $|A| = |B| = s$, and consider a random quadtree mapping $\phi: \{0,1\}^d \to \mathcal{T}$, and suppose we set the edge weights from the $i$-th to $i+1$'st level to be $d/2^i$. Then, 
	$\ex{\EMD_\mathcal{T}(\phi(A),\phi(B))} \leq O(\log d)\EMD(A,B)$, and with probability at least $1-1/s$, we have $\EMD_\mathcal{T}(\phi(A),\phi(B)) \geq \Omega(\frac{1}{\log n})\EMD(A,B)$.
\end{theorem}
}

\section{Preliminaries}

Given $n\ge 1$ we write $[n]$ to denote $\{1,\ldots,n\}$. Given a vector $x \in \R^n$ and a real number $t \geq 0$, we define 
$x_{-t} \in \R^n$ to be the vector obtained by setting the largest $\lfloor t \rfloor$ coordinates of $x$ in magnitude equal to $0$ (breaking ties by using coordinates with smaller indices). For $a,b \in \R$ and $\eps \in (0,1)$, we use the notation $a = (1 \pm \eps) b$ to denote the containment of $a \in [(1-\eps)b , (1+\eps)b]$.

For convenience, we will assume without loss of generality that $d$ is always a power
  of $2$ and write $h:=\log_2 2d=\log_2 d+1$. Given a node $v$ in a rooted tree $T$, when $v$ is not the root we use $\pi(v)$ to denote the parent node of $v$ in $T$. 

Next we give a formal definition of Quadtrees used in this paper:

\begin{definition}[Quadtrees]
Fix $d\in \N$.
A quadtree is a rooted tree $T$ of depth $h:=\log_2 2d$. We say a node $v$ of $T$ is at depth
  $j$ if there are $j+1$ nodes on the root-to-$v$ path in $T$
  (so the root is~at depth $0$ and its leaves are at depth $h$). 
Each internal node $v$ of $T$ at depth $j<h$ is labelled with an ordered tuple 
  of $2^j$ coordinates $i_1,\ldots,i_{2^j}\in [d]$ (which are not necessarily distinct),
  and has $\smash{2^{2^j}}$ children, each of which we refer to as the
  $(b_1,\ldots,b_{2^j})$-child of $v$ with $b_1,\ldots,b_{2^j}\in \{0,1\}$.
Every node at depth $h-1$ is labelled with $(1,\ldots,d)$.
We write $E_T$ to denote the edge set of $T$.
Whenever we refer to an edge $(u,v)\in E_T$, $u$ is always the parent and $v$ is the child.
A random quadtree $\bT $ is drawn by (1) sampling a tuple of $2^j$ coordinates
  uniformly and 
  independently from $[d]$ for each node at depth $j<h-1$ as its label; and (2)
  use $(1,2,\ldots,d)$ as the label of every node at depth $h-1$. %
We use $\calT$ to denote this distribution of random quadtrees.
\end{definition} 

Given a quadtree $T$, each point $x\in \{0,1\}^d$ 
  induces a root-to-leaf path by starting at the root
  and repeatedly going down the tree as follows:
If the current node $v$ is at depth $j<h$ and is labelled with $(i_1,\ldots,i_{2^j})$,
  then we go down to the $\smash{(x_{i_1},\ldots,x_{i_{2^j}})}$-child of $v$.
We write
$$
\sv_{0,T}(x),\sv_{1,T}(x),\ldots,\sv_{h,T}(x)
$$
to denote this root-to-leaf path, where each $\sv_{j,T}$ is a map from
  $\{0,1\}^d$ to nodes of $T$ at depth $j$.
We usually drop $T$ from the subscript when it is clear from the context.

Alternatively we define a subcube $S_{v,T}\subseteq \{0,1\}^d$ for each $v$:
  The set of the root is $\{0,1\}^d$; If $(u,v)$ is an edge, $u$ is at depth $j$
  and is labelled with $i_1,\ldots,i_{2^j}$, and $v$ is the $(b_1,\ldots,b_{2^j})$-child
  of $u$, then 
$$
S_{v,T}=\big\{x\in S_{u,T}: (x_{i_1},\ldots,x_{i_{2^j}})=(b_1,\ldots,b_{2^j})\big\}.
$$
Note that $S_{v,T}$'s of nodes $v$ at the same depth form a partition of $\{0,1\}^n$.
The root-to-leaf path for $x\in \{0,1\}^d$ can be
  equivalently defined as the sequence of nodes $v$  
  that have $x\in S_{v,T}$.

  \begin{remark}\label{rem:aik-bdirw}
  	Both works of \emph{\cite{AIK08, BDIRW20}} use a tree structure that is very similar to the Quadtree used in this paper. In particular,
  	They consider 
  	a slightly different algorithm which~at depth $i$, samples $2^i$ coordinates from $[d]$
  	and divides into $2^{2^i}$ branches according to settings of $\{0,1\}$ to these $2^i$ coordinates (instead of each vertex independently sampling $2^i$ coordinates). For the sake of the analysis in Section \ref{sec:quadtree-cost}, there will be no difference between independently sampling coordinates for each vertex in a level, and using the same sampled coordinates for each level. Thus, our analysis apply to trees of \emph{\cite{AIK08, BDIRW20}} as well as the Quadtrees defined here.
  	
  \end{remark}

\section{Analysis of Quadtrees for $\EMD$ and $\MST$}\label{sec:quadtree-cost}

Our goal in this section is to obtain expressions based on quadtrees that
  are good approximations of $\EMD$ and $\MST$.
They will serve as the starting point of our sketches for $\EMD$ and $\MST$ later.

\subsection{Approximation of $\EMD$ using Quadtrees}\label{sec:quadtreeemd}

\def\depth{\text{depth}}

Fix $n,d\in \N$ and let $T$ be a quadtree of depth $h=\log_2 2d$.
Let $A$ and $B$ be two multisets of points from $\{0,1\}^d$ of size $n$ each.
For each node $v$ in $T$, we define
\[ A_{v, T} \eqdef \{ a \in A : \sv_{i, T}(a) = v \} \quad\text{and}\quad B_{v, T} \eqdef \{ b \in B : \sv_{i, T}(b) = v \}.\]
Equivalently we have $A_{v,T}=A\cap S_{v,T}$ and $B_{v,T}=B\cap S_{v,T}$.
Let $C_{v,T}=A_{v,T}\cup B_{v,T}$.
We give the definition of depth-greedy matchings. 
\begin{definition}\label{def222}
Let $T$ be a quadtree.
For any $a \in A$ and $b \in B$, let 
\[ \depth_{T}(a, b) \eqdef \text{depth of the least-common ancestor of leaves of $a, b$ in $T$}.\]
The class of \emph{depth-greedy matchings}, denoted by $\calM_T(A, B)$, is the set of all matchings $M \subseteq A \times B$ which maximize the sum of $\depth_T(a, b)$ over all pairs $(a, b) \in M$.
We write $$\cost(M)=\sum_{(a,b)\in M}
  \|a-b\|_1$$ to denote the cost of a matching $M$ between $A$ and $B$. Recall that $ \EMD(A,B)
  $ is defined as the minimum of $\cost(M)$ over all matchings between $A$ and $B$. 
\end{definition}

For each edge $(u,v)\in E_T$, we use $\avgd_{u,v,T}$ to denote
  the average distance between points of $C_{u,T}$ and $C_{v,T}$:
$$
\avgd_{u,v,T}\eqdef\Ex_{\substack{\cc\sim C_{u,T}\\ \cc'\sim C_{v,T}}}\big[\|\cc-\cc'\|_1\big],
$$
where both $\cc$ and $\cc'$ are drawn uniformly at random; 
  we set $\avgd_{u,v,T}$ to be $0$ by default when $C_{v,T}$ is empty.
For notational simplicity, we will suppress $T$ from the subscript 
  when it is clear from the context.
We are now ready to define the \emph{value} of $(A,B)$ 
  in a quadtree $T$:

\begin{definition}\label{def:cost}
Let $T$ be a quadtree. 
The value of $(A,B)$ in $T$ is defined as
\begin{equation}\label{eq:inspect}
{\emph{\bvalue}}_T(A,B)\eqdef \sum_{(u,v) \in E_T} \big| |A_v| - |B_v| \big| \cdot 
\emph{\avgd}_{u,v}.
\end{equation}
\end{definition}
We note that the right-hand side of (\ref{eq:inspect}) is data-dependent in two respects: 
  the discrepancy between $|A_v|$ and $|B_v|$ and 
  the average distance $\avgd_{u,v}$ between points in $C_u$ and $C_v$.

Our main lemma for $\EMD$ shows that the value of $(A,B)$ in a randomly chosen quadtree $\bT \sim \calT$ and the cost of any depth-greedy matching are all $\tilde{O}(\log n)$-approximations to $\EMD(A,B)$.

\begin{lemma}[Quadtree lemma for $\EMD$]\label{quadtreelemmaforemd}
Let $(A,B)$ be a pair of multisets of points from $\{0,1\}^d$~of size $n$ each.
Let $\bT\sim \calT$. Then with probability at least $0.99$, every $M \in \calM_{\bT}(A, B)$ satisfies we have
\begin{equation}\label{maininequalityemd}
\EMD(A,B)\le \cost(M) \leq \emph{\bvalue}_{\bT}(A,B)\le \tilde{O}(\log n)\cdot \EMD(A,B).
\end{equation}
\end{lemma}

\ignore{Let $A,B$ be multi-sets of $\{0,1\}^d$ of size $s$, and let 
$M^* \subset A \times B$ be an optimal matching so that 
\[ \EMD(A, B) = \sum_{(a,b) \in M^*} \|a - b\|_1.\]}

\ignore{
We prove Theorem \ref{thm:quad-tree-alg} in this section. 
For the time complexity of \ComputeEMD$(A,B,d)$,
  we note that the running time of each call with $h>1$
  (excluding running time from its recursive calls) is linear in $|A|+|B|$, the total size
  of its two input sets.
On the other hand, the running time of each call with $h=0$
  can be bounded by $O((|A|+|B|)d)$.
(Recall that we need to find a maximal partial matching $M$ that matches 
  as many identical pairs of points
   of $A$ and $B$ as possible.
This~can be done by first sorting $A\cup B$ and noting that given there are only
  $2^d$ points in the space, we only need to pay $O(d)$ for each insertion instead
  of $O(\log (|A|+|B|))$ which could be larger than $d$.)
Equivalently, one can charge $O(1)$ running time to each point in the two input sets
  of an internal node and $O(d)$ to each point at each leaf of the recursion tree.
Therefore, each point pays at most $O(d)$ given that \ComputeEMD~has depth at most $d$.
It follows that its overall running time is~$O(sd)$.
Given the cost of any matching
  is trivially at least $\EMD(A,B)$,
  it suffices to  
  upperbound the cost of the matching returned by \ComputeEMD\ by $\tilde{O}(\log n)\cdot \EMD(A,B)$ with probability at least $0.9$,
  which we focus on in the rest of the section.}

  
We start with the first inequality in (\ref{maininequalityemd}).  
Indeed we will show    
  that $ \EMD(A,B)\le \bvalue_T(A,B)$  for any quadtree $T$ (Lemma \ref{lem:upper-bound}).
To this end we prove that $\cost(M)\le \bvalue_T(A,B)$ for any
  depth-greedy matching between $A$ and $B$ obtained from $T$;
  the latter by definition is at least $\EMD(A,B)$.

\ignore{Fixing a compressed Quadtree $T$, every matching $\ComputeEMD$ may return 
  when running with $T$
  belongs to $\calM_T(A,B)$ 
(indeed  $\calM_T(A,B)$ can contain strictly more matchings in general). 
Therefore the goal is now to show that with probability at least $0.9$ over $\bT$,
  $\cost(M)$ of every $M\in \calM_\bT(A,B)$ is upperbounded by $\tilde{O}(\log n)\cdot \EMD(A,B)$.

Our plan consists of three steps.
First 
  we introduce the \emph{cost} of a compressed Quadtree $T$ (Definition \ref{def:cost} below), denoted by $\cost(T)$,
  and show that $\cost(M)$ of any matching $M\in \calM_T(A,B)$ can be bounded from above by $\cost(T)$ (Lemma \ref{lem:upper-bound}).
(Looking ahead, $\cost(T)$ will be the quantity our linear sketches estimate in Sections 
\ref{sec:communication} and~\ref{sec:sketch}; see Remark \ref{rem:sketching-value}.)
Next we 
  define payments $\pay_T(a)$ and $\pay_T(b)$ of the inspector as mentioned in the overview earlier, and
show that $\cost(T)$ is bounded from above by the total inspector payment (Lemma \ref{lem:split-criteria}).
The final and  most challenging step is to bound the expected total inspector payment.
The key technical lemma, Lemma \ref{lem:main-lemma}, will be proved in Section \ref{maintechlemma}. 

We prove the following lemma for the first step of the proof.}

\begin{lemma}\label{lem:upper-bound}
Let $T$ be any quadtree. Then 
  $\cost(M)\le {\emph{\bvalue}}_T(A,B)$ for any $M\in \calM_T(A,B)$.\end{lemma}
\begin{proof}
Given an $M\in \calM_T(A,B)$ and a pair $(a,b)\in M$, we write
  $v$ and $w$ to denote the leaves of $a$ and $b$ and use 
  $v=u_1,u_2,\ldots,u_k=w$ to denote the path from $v$ to $w$ in $T$.  
By triangle inequality,
\begin{align*}
\|a-b\|_1&\le \Ex_{\cc_i\sim C_{u_i}}
\Big[\|a-\cc_1\|_1+\|\cc_1-\cc_2\|_1+\cdots+ \|\cc_{k-1}-\cc_{k}\|_1+\|\cc_k-b\|_1\Big]\\[-0.5ex]
&=\avgd_{u_1,u_2}+\cdots+\avgd_{u_{k-1},u_k},
\end{align*}
where the equation follows from the fact 
  the label of every node at depth $h-1$ is $(1,2,\ldots,d)$ and thus,
  all points at a leaf 
  must be identical.
Summing up these inequalities over all $(a,b)\in M$ gives exactly $\bvalue_T(A,B)$ on the right hand side.
For this, observe that every $M$ in $\calM_T(A,B)$ has the property that, for any edge $(u,v)$ in $T$, the number of 
  $(a,b)\in M$ such that the path between their leaves contains $(u,v)$ is exactly $||A_v|-|B_v||$.
\end{proof}


Now it suffices to upperbound $\bvalue_\bT(A,B)$ by
  $\tilde{O}(\log n)\cdot \EMD(A,B)$ with probability at least $0.9$
  for~a random quadtree $\bT\sim\calT$.
For this purpose  we let $C=A\cup B$ and define an \emph{inspector payment} for 
  any pair of points $a,b\in C$\footnote{While we will always have 
  $a\in A$ and $b\in B$ in this subsection, this more general setting allows 
  us to apply what we prove in this subsection to work on MST later.}
  based on a quadtree.
Given 
  $a,b\in C$, we let
\begin{align} \label{eq:inspect-a}
\pay_{T}(a,b)  \eqdef \sum_{i\in [h]} \ind\big\{\sv_{i }(a) \neq \sv_{i }(b) \big\} 
\cdot \Big(\avgd_{a,i-1}+\avgd_{b,i-1}\Big)
\end{align}
where
$$
\avgd_{a,i-1}\eqdef\Ex_{\cc\sim C_{\sv_{i-1}(a)}} \big[\|a-\cc\|_1\big]\quad\text{and}\quad
\avgd_{b,i-1}\eqdef\Ex_{\cc\sim C_{\sv_{i-1}(b)}} \big[\|b-\cc\|_1\big].
$$
Intuitively $\pay_T(a,b)$ pays for the average distance between $a$ (or $b$) and points in
  $C_{\sv_i(a)}$ (or $C_{\sv_i(b)}$)
  along its root-to-leaf path but the payment only starts \emph{at} the least-common
  ancestor of leaves of $a$ and $b$.
Note that 
  $\pay_T(a,b)=0$ trivially if $a=b$.

We show that for any matching $M$ between $A$ and $B$,
  the total inspector payment from $(a,b)\in M$ 
  is enough to cover $\bvalue_T(A,B)$: 
\begin{lemma}\label{lem:split-criteria}
Let $T$ be any quadtree and $M$ be any matching between $A$ and $B$. Then we have
\begin{align}
\emph{\bvalue}_T(A,B)\leq 2\sum_{(a,b)\in M} \pay_T(a,b).
\label{eq:payment-bound}
\end{align}
\end{lemma}
\begin{proof}
Using the definition of $\bvalue_T(A,B)$, it suffices to show that
$$
\sum_{(u,v) \in E_T} \big| |A_v| - |B_v| \big| \cdot \avgd_{u,v} 
\le 2\sum_{(a,b)\in M} \pay_T(a,b).
$$
By triangle inequality (and $\avgd_{a,h}=0$ because every point in $C_{\sv_h(a)}$ is identical to $a$)
\begin{align*}
2\cdot \pay_{T}(a,b)
&\ge \sum_{i\in [h]} \ind\big\{ \sv_{i }(a) \neq \sv_{i }(b) \big\}
 \cdot \Big(\avgd_{a,i-1}+\avgd_{a,i }+\avgd_{b,i-1}+\avgd_{b,i}\Big) \\[0.5ex]
 &\ge \sum_{i\in [ h ]} \ind\big\{ \sv_{i }(a) \neq \sv_{i }(b) \big\}
  \cdot \Big(\avgd_{\sv_{i-1}(a),\sv_{i }(a)}+\avgd_{\sv_{i-1}(b),\sv_{i }(b)}\Big),
\end{align*}
i.e., $2\cdot \pay_T(a,b)$ is enough to cover $\avgd_{u,v}$ for every edge $(u,v)$
  along the path between the leaf of $u$ and the leaf of $v$.
The lemma then follows from the following claim: For every edge $(u,v)$ in $T$,
  $||A_v|-|B_v||$ is at most the number of points $a\in A_v$ such that its
  matched point in $M $ is not in $B_v$ plus the number of points $b\in B_v$ such that
  its matched point in $M $ is not~in~$A_v$.
This follows from the simple fact that every $(a,b)\in M $ with $a\in A_v$ and 
  $b\in B_v$ would~get cancelled in $|A_v|-|B_v|$.
This finishes the proof of the lemma.
\end{proof}



By Lemma~\ref{lem:split-criteria} the goal now is to upperbound the total inspector payment by $\tilde{O}(\log n)\cdot \EMD(A,B)$ with probability at least $0.9$ over a randomly picked quadtree $\bT$. We consider a slight modification of 
  the payment scheme given in (\ref{eq:inspect-a}) which we define next; the purpose  is that the latter will~be easier to bound in expectation, and most often exactly equal to (\ref{eq:inspect-a}). 
  
  Specifically, given any $(a,b)$ with $a,b\in C$ and $i_0 \in [0:h-1]$, we let
\begin{align}
{\pay}^*_{i_0, T}(a,b) &\eqdef \sum_{i> i_0}^{h} \ind\big\{ \sv_{i }(a) \neq \sv_{i}(b) \big\} 
\cdot \Big(\avgd^*_{a,i-1}+\avgd^*_{b,i-1}\Big),
\label{eq:def-wtpay}
\end{align}
where
\begin{align*}
\avgd^*_{a,i}\eqdef 
\Ex_{\cc\sim C^*_{a,i}}\big[\|a-\cc\|_1\big]\quad\text{and}\quad
\avgd^*_{b,i}\eqdef 
\Ex_{\cc\sim C^*_{b ,i}}\big[\|b-\cc\|_1\big]
\end{align*}
and $C^*_{a,i}$ contains all points in $C_{\sv_i(a)}$ that is not too far away from $a$:
$$
C^*_{a,i}\eqdef \left\{ c \in C_{\sv_i(a)} : \|a - c\|_1 \leq \frac{10 d \log n}{2^i} \right\}.
$$
The set $\smash{C^*_{b,i}}$ is defined similarly.
Roughly speaking, points in $C$ that share the same node~at depth $i$ 
  are expected to have distance around $d/2^i$ (given they have agreed 
  on  $2^i-1$ random coordinates sampled so far); this is why we refer  to points in $\smash{C^*_{a,i}}$ as those that are not too far away from $a$.

The following is the crucial lemma for upperbounding the total expected
  payment according to an optimal matching $M^*$.
We delay its proof to Section \ref{maintechlemma} and first use it to prove
Lemma \ref{quadtreelemmaforemd}.
\begin{lemma}\label{lem:main-lemma}
For any $(a,b)$ with $a,b\in C$, $a\ne b$ and $i_0\in [0:h-1]$ that satisfies
\begin{equation}\label{blabla1}
i_0\le h_{a,b}\eqdef \left\lfloor \log_2\left( \frac{d}{\|a-b\|_1}\right) \right\rfloor ,
\end{equation}
  we have 
\begin{align*}
& \Ex_{\bT\sim\calT}\Big[ {\pay}^*_{i_0,\bT}(a,b)\Big] \leq \left( \tilde{O}(\log n) + O(\log \log n) \left(h_{a,b} - i_0 \right) \right) \cdot \| a - b\|_1.\end{align*}
\end{lemma}
\begin{proofof}{Lemma \ref{quadtreelemmaforemd} assuming Lemma~\ref{lem:main-lemma}}
Let $M^*$ be an optimal matching between $A$ and $B$ that achieves $\EMD(A,B)$.
Let $\bT\sim \calT$. Then we have from Lemma \ref{lem:split-criteria} that
\begin{equation}\label{hehe33}
\bvalue_\bT(A,B)\le  2\sum_{\substack{(a,b)\in M^*\\ a\ne b}} \pay_{\bT}(a,b)
\end{equation}
given that $\pay_{\bT}(a,b)=0$  when $a=b$.
Below we focus on the subset $M'$ of $M^*$ with $(a,b)\in M^*$ and $a\ne b$. 
For each  $(a,b)\in M'$, let 
\[ 0\le \ell_{a, b} \eqdef \max \big\{0, h_{a,b} - 2\lceil \log_2 n \rceil \big\}\le h_{a,b}. \]
We show that with probability at least $1-o(1)$ over the draw of $\bT$,
  every $(a, b) \in M'$ satisfies
\begin{align}
\pay_\bT(a,b) =  {\pay}^*_{\ell_{a,b},\bT}(a,b).
\label{eq:often-same}
\end{align}
Combining (\ref{hehe33}) and (\ref{eq:often-same}), we have that with probability at least $1 - o(1)$ over the draw of $\bT$,
\begin{align}\label{hehe44}
\bvalue_\bT(A,B)  
\le 2\sum_{(a,b)\in M'}  {\pay}^*_{\ell_{a,b},\bT}(a,b).
\end{align}
By applying Lemma~\ref{lem:main-lemma} to every $(a,b) \in M'$ with $i_0 = \ell_{a,b}$, as well as Markov's inequality, we have that with probability at least $0.99$ over $\bT$,
  the right hand side of (\ref{hehe44}) is at most  
\begin{align*}
 \tilde{O}(\log n) \sum_{\substack{(a,b)\in M'}} \|a - b\|_1 = \tilde{O}(\log n) \cdot\EMD(A,B).
\end{align*}
By a union bound, 
$\bvalue_\bT(A,B) \leq \tilde{O}(\log n) \cdot \EMD(A,B)$ with probability at least $.99 - o(1)\ge 0.9$. 

It suffices to define an event that implies (\ref{eq:often-same}) and then bound its probability.
The first part of the event requires that for every pair $(a,b)\in M'$,
  $\sv_{i}(a)=\sv_{i}(b)$ for every $i:1\le i\le \ell_{a,b}$.
The second part requires that for any two distinct points $x,y\in A\cup B$
  (not necessarily as a pair in $M^*$ and not even necessarily in the same set),
  we have $\sv_{i}(x)\ne \sv_{i}(y)$ for all $i$ with 
\begin{equation}\label{hehe66}
2^i\ge \frac{10d \log n}{\|x-y\|_1}.
\end{equation}
By the definition of $\smash{ {\pay}^*_{\ell_{a,b},\bT}(a,b)}$ in (\ref{eq:def-wtpay}), the first part of the event makes sure that 
  we don't miss any term in the sum;
  the second part of the event makes sure that every $C^*_{a,i}$ is 
  exactly the same as $C_{\sv_i(a)}$ so that
  $\avg_{a,i}^*=\avg_{a,i}$ (and the same holds for $b$) .
It follows that this event implies (\ref{eq:often-same}).

Finally we show that the event occurs with probability at least $1-o(1)$. 
First, for every $(a,b)\in M'$, if $\ell_{a,b}=0$ then the first part of the event trivially holds.
If $\ell_{a,b}>0$ then $\ell_{a,b}=h_{a,b}-2\lceil \log n\rceil$.
The probability of $\sv_{i }(a)\ne \sv_{i }(b)$
  for some   $i:1\le i \le \ell_{a, b}$ is at most
\begin{align*}
1 - \left( 1 - \frac{\|a-b\|_1}{d}\right)^{2^{\ell_{a,b}}-1} \leq 2^{\ell_{a,b}} \cdot \frac{\|a - b\|_1}{d} \leq \frac{1}{n^2}.
\end{align*}
Hence, by a union bound over the at most $n$ pairs $(a,b)\in M'$, the first part of the
  event holds with probability at least $1-o(1)$.
Furthermore, for any two distinct points $x, y \in A \cup B$,
  let   
\[ \ell^* =  \left\lfloor \log_2\left(\frac{10d\log n}{\|x-y\|_1}\right) \right\rfloor .\]
Then $\sv_{i }(x)= \sv_{i }(y)$ for some $i$ that satisfies (\ref{hehe66}) would
  imply  
  $\sv_{\ell^* }(x)=\sv_{\ell^*}(y)$ and $\ell^* \le \log d$ (since $\sv_{h}(x)\ne \sv_{h}(y)$   given $x\ne y$).
The event above happens with probability 
\begin{align*}
\left(1 - \frac{\|x-y\|_1}{d} \right)^{2^{\ell^*-1}} \leq \exp(-5 \log n) = \frac{1}{n^5}.
\end{align*}
Via a union bound over at most $(2n)^2$ many pairs of $x,y$, 
  we have that the second part of the event also happens with probability at least
  $1-o(1)$.
This finishes the proof of the lemma.
\end{proofof}

\subsection{Approximation of $\MST$ using Quadtrees} 
We will follow a similar strategy as we took in the previous subsection for  EMD. 
Given a quadtree $T$ of depth $h=\log_2 2d$, 
  we define similarly $\sv_0(x),\ldots,\sv_h(x)$ as the root-to-leaf path 
  of $x\in \{0,1\}^d$, and write $S_v$ for each node $v$ at depth $i$
  to denote the set of $x\in \{0,1\}^d$ with $\sv_i(x)=v$.

Let $X\subseteq \{0,1\}^d$ be a set of $n$ points.
We define $X_v$ for each node $v$ in $T$ as $X\cap S_v$,
  and write $L_i$ for each depth $i$ to denote
  the set of nodes $v$ at depth $i$ such that $X_v\ne \emptyset$ and will refer to them
  as nonempty nodes. 
  
  We give the definition of depth-greedy spanning trees. 
\begin{definition}\label{def333}
Let $T$ be a quadtree, and $X \subset \{0,1\}^d$. For any DFS walk of the quadtree $T$ starting at the root, let $\sigma \colon [n] \to X$ denote the order of points in $X$ encountered during the walk, so that $\sv_{h}(\sigma(i))$ appears before $\sv_{h}(\sigma(i+1))$ for every $i \in [n-1]$. A depth-greedy spanning tree $G$ obtained from a DFS walk is given by the edges $\{ (\sigma(i), \sigma(i+1)) \}_{i \in [n-1]}$.
The class of \emph{depth-greedy spanning trees}, denoted by $\calG_T(X)$, is the set of all spanning trees $G$ of $X$ obtained from a DFS walks down the quadtree $T$. 
For any spanning tree $G$, we write
$$
\cost(G)=\sum_{(a,b)\in E(G)} \|a-b\|_1
$$ 
to denote the cost of a tree $G$ (with $n-1$ edges) spanning points in $X$.  
Recall $\MST(X)$ is defined as the minimum of $\cost(G)$ over all spanning trees $G$ of $X$.
\end{definition}

Similar to the previous subsection, for each edge $(u,v)\in E_T$, we write
$$
\avgd_{u,v}\eqdef \Ex_{\substack{\cc\sim X_u\\ \cc'\sim X_v}}\big[ \|\cc-\cc'\|_1\big].
$$
when $X_v\ne\emptyset$, and $\avgd_{u,v}=0$ when $X_v=\emptyset$.
Recall  $\pi(v)$ denotes the parent node of $v$ in $T$.
We are now ready to define the value of $X$ in a quadtree $T$ and then
  state the main lemma:

\begin{definition}\label{lem:MST_Estimator} Let $T$ be a quadtree. The value of $X$ in $T$ is defined as 
	\[\emph{\bvalue} _T(X)\eqdef \sum_{i\in [h]}\indi\big\{|L_i|>1\big\}\cdot 
	\sum_{v\in L_i}\emph{\avgd}_{\pi(v),v}.\]
\end{definition}

The main lemma for $\MST$ shows that the value of $X$ for a random quadtree $\bT \sim \calT$ and the cost of any depth-greedy spanning tree $G \in \calG_{T}(X)$  are $\tilde{O}(\log n)$-approximations of $\MST(X)$.

\begin{lemma}[Quadtree lemma for $\MST$] \label{thm:offlineMST}Let $X\subseteq \{0,1\}^d$ be a set of size $n$, and let $\bT\sim\calT$. Then with probability at least $0.99$, for any $G \in \calG_{\bT}(X)$, we have that 
	\begin{align*}
		\frac{\MST(X)}{2} \leq \frac{\cost(G)}{2} \leq \emph{\bvalue}_{\bT}(X)\leq \tilde{O}(\log n) \cdot \MST(X).
	\end{align*}
\end{lemma}

We start with the lower bound:
\begin{lemma}\label{lem:MST-lb} Let $T$ be any quadtree and any depth-greedy spanning tree $G \in \calG_{T}(X)$. Then $\emph{\bvalue}_T(X)\ge \cost(G) / 2 \geq\MST (X)/2$.
\end{lemma}
\begin{proof}
Let $w$ be the least common ancestor of leaves $\sv_h(x)$, $x\in X$, and let 
  $T^*$ denote the subtree rooted at $w$ that consists of paths from $w$ to $\sv_h(x)$, $x\in X$.
Using $T^*$ we can equivalently write 
$$
\bvalue_T(X)=\sum_{(u,v)\in E_{T^*}} \avgd_{u,v}.
$$ 
For each node $v\in T^*$ (note that $X_v\ne \emptyset$), we define $\rho_v$ to be the center-of-mass of points in $X_v$:
$$
\rho_v\eqdef \frac{1}{|X_v|}\sum_{x\in X_v} x.
$$
By triangle inequality we have $\|\rho_u-\rho_v\|_1\le \avgd_{u,v}$ for every $(u,v)\in E_{T^*}$ and thus,  
$$
\sum_{(u,v)\in E_{T^*}} \|\rho_u-\rho_v\|_1\le \bvalue_T(X).
$$

We finish the proof by showing that any depth-greedy spanning tree $G$ of $X$ satisfies
$$\cost(G)\le 2\sum_{(u,v)\in E_{T^*}}\|\rho_u-\rho_v\|_1.$$
To this end we take a DFS walk of $T^*$ from its root $w$ and let $\sigma \colon [n] \to X$ be the order of
  points in $X$ under which
  $\sv_h(\sigma(1)),\ldots,\sv_h(\sigma(n))$ appear in the walk.
Then we set $G$ to be the spanning tree $\{ (\sigma(i), \sigma(i+1))\}_{i \in [n-1]}$.
For each $i\in [n-1]$, letting $u_1,\ldots,u_r$ be the part of DFS walk from $u_1=\sv_h(\sigma(i))$ to
  $u_r=\sv_h(\sigma(i+1))$, we have from triangle inequality that
$$
\|\sigma(i)-\sigma(i+1)\|_1=\|\rho_{u_1}-\rho_{u_r}\|_1\le \|\rho_{u_1}-\rho_{u_2}\|_1+\cdots+
\|\rho_{u_{r-1}}-\rho_{u_r}\|_1.
$$
The lemma follows from the fact that a DFS walk visits each edge twice.
\end{proof}

Now it suffices to upper bound $\bvalue_\bT(X)$ by
$\tilde{O}(\log n)\cdot \MST(X)$ with probability at least $0.9$
for~a random quadtree $\bT\sim\calT$.
For this purpose, we use the same inspector payment defined in the last subsection
  (the only change is that the set $C$ is now called $X$ which is a set and has size 
   $n$ instead of $2n$).
Recall that for any two points $x,y\in X$, we define
\begin{align}
	\pay_{T}(x,y)  \eqdef \sum_{i\in [h]} \ind\big\{\sv_{i }(x) \neq \sv_{i }(y) \big\} 
	\cdot \Big(\avgd_{x,i-1}+\avgd_{y,i-1}\Big),	
	\label{eq:inspect-a-mst}
\end{align}
where 
$$
\avgd_{x,i-1}\eqdef\Ex_{\cc\sim X_{\sv_{i-1}(x )}} \big[\|x-\cc\|_1\big]\quad\text{and}\quad
\avgd_{y,i-1}\eqdef\Ex_{\cc\sim X_{\sv_{i-1}(y)}} \big[\|y-\cc\|_1\big].
$$

Next we show that the total payment 
  from any spanning tree $G$ is enough to cover $\bvalue_T(X)$.

\begin{lemma}\label{lem:split-criteria-mst}
Let $T$ be any quadtree and $G$ be any spanning tree of $X$. Then we have
\begin{align}
	\emph{\bvalue}_T(X)\leq 2\sum_{(x,y)\in E(G)} \pay_T(x,y). 
	\label{eq:payment-bound-mst}
\end{align}
\end{lemma}
\begin{proof} Let $w$ be the least common ancestor of leaves $\sv_h(x)$, $x\in X$, and let 
	$T^*$ denote the subtree rooted at $w$ that consists of paths from $w$ to $\sv_h(x)$, $x\in X$. 
	It suffices to show that
	$$
	\sum_{(u,v) \in E_{T^*}}  \avgd_{u,v} 
	\le 2\sum_{(x,y)\in E(G)} \pay_T(x,y).	$$
By similar arguments in the proof of Lemma \ref{lem:split-criteria},
 $2\cdot\pay_{T}(x,y)$ is good enough to cover $\avgd_{u,v}$ for every edge along the path between
   the leaf of $x$ and the leaf of $y$. The lemma follows by summing over all $(x,y)\in E(G)$ and noting that the $\avgd_{u,v}$ of each $(u,v)\in E_{T^*}$ is counted at least once.
\end{proof}

To upperbound the total inspector payment from an optimal spanning tree by $\tilde{O}(\log n)\cdot \MST(X)$, we similarly consider the modified payment scheme 
  $\pay^*_{i_0,T}(x,y)$ as in (\ref{eq:def-wtpay}), replacing $C$ by $X$.
The same Lemma~\ref{lem:main-lemma} applies and we use it to prove Lemma \ref{thm:offlineMST}:


\begin{proofof}{Lemma~\ref{thm:offlineMST} assuming Lemma~\ref{lem:main-lemma}}
	The lower bound follows from Lemma~\ref{lem:MST-lb}. 	
	For the upper bound, let $G^*$ be an optimal spanning tree of $X$ and let 
	  $\bT\sim \calT$. By Lemma~\ref{lem:split-criteria-mst} we have
	\begin{align}
	\bvalue_{\bT}(X)\le2 \sum_{(x,y)\in E'(G^*)}\pay_{\bT}(x,y),\label{mst-sum}
	\end{align}
	where $E'(G^*)$ denotes the set of edges $(x,y)$ in $G^*$ with $x\ne y$.
For each $(x,y)\in E'(G^*)$, let 
	\[ 0\le \ell_{x, y} \eqdef \max \big\{0, h_{x,y} - 2\lceil \log_2 n \rceil \big\}\le h_{x,y}. \]
By similar arguments as in the proof of Lemma \ref{quadtreelemmaforemd}, we have with probability at least $1-o(1)$ over the draw of $\bT$ that
	every $(x, y) \in E'(G^*)$ satisfies
	\begin{align}
		\pay_\bT(x,y) =  {\pay}^*_{\ell_{x,y},\bT}(x,y) 
		\label{eq:often-same-mst}
	\end{align}
	Combining (\ref{mst-sum}) and (\ref{eq:often-same-mst}), we have that with probability at least $1 - o(1)$ over the draw of $\bT$,
	\begin{align}\label{hehe44-mst}
		\bvalue_\bT(X)  
		\le 2\sum_{(x,y)\in E'(G^*)}  {\pay}^*_{\ell_{x,y},\bT}(x,y) .
	\end{align}
By applying Lemma~\ref{lem:main-lemma} to every $(x,y) \in E'(G^*)$ with $i_0 = \ell_{x,y}$, as well as Markov's inequality, we have that with probability at least $0.99$ over $\bT$,
the right hand side of (\ref{hehe44-mst}) is at most  
\begin{align*}
	\tilde{O}(\log n) \sum_{\substack{(x,y)\in E'(G^*)}} \|x - y\|_1 = \tilde{O}(\log n) \cdot\MST(X).
\end{align*}
By a union bound, 
$\bvalue_\bT(X) \leq \tilde{O}(\log n) \cdot \MST(X)$ with probability at least $.99 - o(1)\ge 0.9$.
\end{proofof}

\ignore{
As done in the EMD section, we define an event that implies (\ref{eq:often-same-mst}) and then bound its probability.
The first part of the event requires that for every pair $(x,y)\in E(M^*)$,
$\sv_{i}(x)=\sv_{i}(y)$ for every $i:1\le i\le \ell_{x,y}$.
The second part requires that for any two distinct points $x',y'\in X$
(not necessarily as a pair in $E(M^*)$ and not even necessarily in the same set),
we have $\sv_{i}(x')\ne \sv_{i}(y')$ for all $i$ with 
\begin{equation}\label{hehe66-mst}
	2^i\ge \frac{10d \log n}{\|x'-y'\|_1}.
\end{equation}
From the definition of $\smash{ {\pay}^*_{\ell_{x,y},\bT}(x,y)}$ in (\ref{eq:def-wtpay-mst}) the first part of the event makes sure that 
we don't miss any term in the sum;
the second part of the event makes sure that every $C^*_{x,i}$ is 
exactly the same as $X_{\sv_i(x)}$  so that
$\avg_{x,i}^*=\avg_{x,i}$.
It follows that this event implies (\ref{eq:often-same-mst}).

Finally we show that the event occurs with probability at least $1-o(1)$. 
First, for every $(x,y)\in E(M^*)$, if $\ell_{x,y}=0$ then the first part of the event trivially holds.
If $\ell_{x,y}>0$ then $\ell_{x,y}=h_{x,y}-2\lceil \log n\rceil$. The probability of $\sv_{i }(x)\ne \sv_{i }(y)$
for some   $i:1\le i \le \ell_{x, y}$ is at most
\begin{align*}
	1 - \left( 1 - \frac{\|x-y\|_1}{d}\right)^{2^{\ell_{x,y}}-1} \leq 2^{\ell_{x,y}} \cdot \frac{\|x - y\|_1}{d} \leq \frac{1}{n^2}.
\end{align*}
Hence, by a union bound over the at most $n-1$ edges $(x,y)\in E(M^*)$, the first part of the
event holds with probability at least $1-o(1)$.
Furthermore, for any two distinct points $x', y' \in X$,
let   
\[ \ell^* =  \left\lfloor \log_2\left(\frac{10d\log n}{\|x'-y'\|_1}\right) \right\rfloor .\]
Then $\sv_{i }(x')= \sv_{i }(y')$ for some $i$ that satisfies (\ref{hehe66-mst}) would
imply  
$\sv_{\ell^* }(x')=\sv_{\ell^*}(y')$ and $\ell^* \le \log d$ (since $\sv_{h}(x')\ne \sv_{h}(y')$   given $x'\ne y'$).
The event above happens with probability 
\begin{align*}
	\left(1 - \frac{\|x'-y'\|_1}{d} \right)^{2^{\ell^*-1}} \leq \exp(-5 \log n) = \frac{1}{n^5}.
\end{align*}
Via a union bound over at most $n^2$ many pairs of $x',y'$, 
we have that the second part of the event also happens with probability at least
$1-o(1)$.
This finishes the proof of the lemma.
}





\def\bsv{\boldsymbol{\mathsf{v}}}

\section{Proof of Lemma~\ref{lem:main-lemma}}\label{maintechlemma}
Recall $h=\log 2d$ is the depth of a quadtree.
Let $(a,b)\in M^*$ with $a\ne b$ and $i_0\in [0:h-1]$ with 
\begin{equation} \label{habdef}
i_0\le h_{a,b}\eqdef \left\lfloor \log_2\left( \frac{d}{\|a-b\|_1}\right) \right\rfloor.
\end{equation}
Our goal is to bound the expectation of the $a$-part of $\pay^*_{i_0,\bT}(a,b)$:
\begin{align}\label{mainmainbound}
 \sum_{i> i_0}^{h} \ind\big\{ \sv_{i }(a) \neq \sv_{i}(b) \big\} 
\cdot \avgd^*_{a,i-1}
\end{align}
  over the draw of 
  $\bT\sim \calT$; the bound for the $b$ part is analogous. 
In what follows, all expectations~are taken with respect to $\bT\sim \calT$
  so we will skip $\bT$ in subscripts. 
In particular, we write $\bsv_i(a)$ to denote $\sv_{i,\bT}(a)$ just to emphasize
  that it is a random variable that depends on $\bT$.
  
Given that we always have $\avgd_{a,i}^* \leq 10 d \log n  / 2^i$ for every $i$ by definition, 
  (\ref{mainmainbound}) is trivially $O(d \log n )$ and thus, the statement holds trivially
  when $\|a-b\|_1=\Omega(d)$. So we assume in the rest of the proof 
  that $\|a - b\|_1 \leq d/2$. 
To understand (\ref{mainmainbound}) for  $\bT$,
  we examine the sequence of sets $C_{\bsv_{i}(a)}$, where $\bsv_0(a),$ $\ldots,\bsv_h(a)$ is the root-to-leaf path
  of $a$ in $\bT$.
Recall from the definition of $\calT$ that
  to draw $A_{\bsv_{i}(a)}$ and $B_{\bsv_{i}(a)}$, it suffices to consider 
  independent draws of tuples $$\bI_0, \bI_1 , \dots , \bI_{h-2}$$ with
  $\smash{\bI_i \sim [d]^{2^i}}$ uniformly,
  and then use them to define $C_{\bsv_i(a)}$ as follows:
\begin{align*}
C_{ \bsv_i(a)} &= \big\{x \in C : x_j = a_j \text{ for all $j$ that appears
  in $\bI_0,\ldots,\bI_{i-1}$} \big\} 
\end{align*}
for each $i=1,\ldots,h-1$; $C_{\bsv_0(a)}=C$ since $\bsv_0(a)$ is always the root; 
$C_{\bsv_h(a)}$ contains all copies of $a$ in $C$.
Next we let
\[ \bD = \big\{ (i, \ell):\text{$i \in \{0, \dots, h-2\}$ and $\ell\in [2^i]$ such that }  j = (\bI_{i})_{\ell} \text{ satisfies } a_j \neq b_j\big\}\]
be the set of index pairs of sampled coordinates where $a$ and $b$ disagree, and let
\[ (\bi^{(s)}, \bell^{(s)}) \eqdef \left\{ \begin{array}{cc} \min \bD & \bD \neq \emptyset \\[0.45ex] (*, *) & \bD = \emptyset \end{array} \right. ,\]
where the ordering in $\min \bD$ is lexicographic.\footnote{All pairs $(i,\ell)$
  with $i < \bi^{(s)}$, or with $i=\bi^{(s)}$ and $\ell<\bell^{(s)}$,  satisfy $a_j = b_j$ for $j = (\bI_{i })_{\ell }$.} 
Note, in particular, that the node $\smash{\bsv_{\bi^{(s)} }(a) = \bsv_{\bi^{(s)} }(b)}$ is the least common ancestor of $a$ and $b$ in $\bT$ (hence, the ``s'' in $\bi^{(s)}$ and $\bell^{(s)}$ stands for ``split''). The coordinate which is the first to satisfy $a_j \neq b_j$ is specified by the random variable
\[ \bj^{(s)} \eqdef \left\{ \begin{array}{ll}  \left(\bI_{\bi^{(s)}}\right)_{\bell^{(s)}} & \text{if}\ (\bi^{(s)}, \bell^{(s)}) \neq (*,*) \\[0.5ex]
								* & \text{otherwise} \end{array} \right. , \]
where, notice that, $\bj^{(s)} = *$ only if $a$ and $b$ satisfy $\bsv_k(a) = \bsv_k(b)$ for all 
  $k \in \{0, \dots, h-1\}$. A trivial consequence of the above definitions is that
    for any $k\in \{0,\ldots,h-1\}$,
\begin{align}
\ind\big\{ \bsv_k(a) \neq \bsv_k(b) \big\} &= \sum_{i =0}^{k-1} \sum_{\ell =1}^{2^i} \sum_{\substack{j \in [d] \\ a_j \neq b_j}} \ind\left\{ (\bi^{(s)}, \bell^{(s)}, \bj^{(s)}) = (i, \ell, j) \right\}. \label{eq:divide-separation}
\end{align}
Recall the
  choice of $h_{a,b}\in [0:h-1]$ in (\ref{habdef}), we have that the expectation of (\ref{mainmainbound}) is at most 
\begin{align*}
\sum_{k>i_0}^{h_{a,b}} \Ex\Big[\ind\big\{ \bsv_k(a) \neq \bsv_k(b) \big\}\cdot 
\avgd_{a,k-1}^*\Big]
+ O(\log n) \cdot \|a - b\|_1,
\end{align*}
where we used the fact that $\avgd_{a,i}^* \leq 10d \log n /2^k$ always holds.
As a result $O(\log n)\cdot \|a-b\|_1$ as in the last term is enough to cover the sum over $k>h_{a,b}$ skipped 
  in the above expression.
Using (\ref{eq:divide-separation}), we may re-write the first summand above as
\begin{align}
&\sum_{k>i_0}^{h_{a,b}}\Ex\Big[\ind\big\{ \bsv_k(a) \neq \bsv_k(b) \big\}
\cdot \avgd_{a,k-1}^*\Big]  \label{eq:first-summand}\\
&\qquad = \sum_{k>i_0}^{h_{a, b}} \sum_{i =0}^{k-1} \sum_{\ell=1}^{2^i} \sum_{\substack{j \in [d] \\ a_j \neq b_j}} \Prx\left[(\bi^{(s)}, \bell^{(s)}, \bj^{(s)}) = (i , \ell, j) \right] \cdot \Ex\left[ \avgd_{a,k-1}^* \hspace{0.06cm} \big|\hspace{0.06cm} (\bi^{(s)}, \bell^{(s)}, \bj^{(s)})  = (i , \ell, j) \right] \nonumber. 
\end{align}
Notice that for each $i \in \{ 0, \dots, h-2\}$, $\ell \in [2^i]$, and $j \in [d]$ with $a_j \neq b_j$,
\begin{align}
\Prx\left[ (\bi^{(s)}, \bell^{(s)}, \bj^{(s)}) = (i , \ell, j)\right] &= \frac{1}{d} \left(1 - \frac{\|a-b\|_1}{d} \right)^{
2^0+\cdots+2^{i-1} + \ell - 1}, \label{eq:condition-prob}
\end{align}
which is $1/d$ when $i=0,\ell=1$.
Consider for each $k \in \{i_0 , \dots, h_{a, b}-1 \}$ and $j\in [d]$ with $a_j\ne b_j$, 
\[\hspace{0.04cm} Q_{k, j} \eqdef \max_{\substack{i \in \{0, \dots, k\} \\ \ell \in [2^{i}]}} \Ex\left[ \avgd^*_{a,k} \hspace{0.06cm} \big|\hspace{0.06cm} (\bi^{(s)}, \bell^{(s)}, \bj^{(s)}) = (i , \ell, j)\right], \]
so that invoking (\ref{eq:condition-prob}), we may upper bound (\ref{eq:first-summand}) by
\begin{align*}
\sum_{\substack{j \in [d] \\ a_j \neq b_j}} \sum_{k=i_0}^{h_{a, b}-1} Q_{k,j}\cdot \sum_{i = 0}^{k} \sum_{\ell=1}^{2^i} \frac{1}{d} \left(1 - \frac{\|a-b\|_1}{d} \right)^{2^0+\cdots+2^{i-1} + \ell -  1} \leq  \frac{2}{d}  \sum_{\substack{j \in [d] \\ a_j \neq b_j}} \sum_{k=i_0 }^{h_{a,b}-1} Q_{k,j} \cdot 2^{k}.
\end{align*}
Therefore, it remains to show that for every $j \in [d]$ with $a_j \neq b_j$,
\begin{equation}\label{finaleq}
 \sum_{k=i_0 }^{h_{a,b}-1} Q_{k,j} \cdot 2^k = O(d) \cdot \left( \tilde{O}(\log n) + (h_{a,b} - i_0) \cdot \log \log n \right). 
\end{equation}

To prove this, we start with a lemma that shows that if $Q_{k,j}$ is large, then it must be the case that there are many points
  of distance between (roughly) $d/(2^k\log n)$ and $d\log n/2^k$ from $a$ in $C$.

\begin{lemma}\label{lem:growth}
Fix $j \in [d]$ with $a_j \neq b_j$ and $k \in \{2,\ldots, h_{a,b}\}$. Let $L \subseteq C_j$ be the muti-set 
\begin{align*}
C_j\eqdef \big\{ x \in C : x_j = a_j  \big\} \quad\text{and}\quad
 L \eqdef \left\{ x \in C_j : \|a - x\|_1 \leq   \frac{1}{\log n} \cdot \frac{d}{2^k}\right\}.
\end{align*}
Suppose that for some $i = \{ 0, \dots, k\}$ and some $\ell \in [2^i]$, as well as some $\alpha \geq 800$, we have
\begin{align}
\Ex\left[ \hspace{0.04cm}\emph{\avgd}^*_{a,k} \hspace{0.06cm} \big|\hspace{0.06cm} (\bi^{(s)}, \bell^{(s)}, \bj^{(s)}) = (i , \ell, j) \right] \geq \frac{d}{2^k} \cdot \alpha.  \label{eq:haha22}
\end{align}
Then, the set
\[ H \eqdef \left\{ x \in C_j : \|a - x\|_1 \leq {10 \log n }\cdot \frac{d}{2^k} \right\} \quad\text{satisfies} \quad|H|\ge \frac{\exp\left({\alpha}/{8} \right)}{\log n}\cdot |L|  . \]
\end{lemma}

\begin{proof}
Let $\calbE$ be the event of $(\bi^{(s)}, \bell^{(s)}, \bj^{(s)}) = (i , \ell, j)$. For simplicity in nation, let $\bsv = \bsv_k(a)$ and 
$$ \bC = \bC^*_{a,k}\quad\text{where}\quad 
\bC^*_{a,k}= \left\{ c \in \bC_{\bsv_k(a)} : \|a - c\|_1 \leq \frac{10 d \log n}{2^k} \right\}.
$$
 Every $x \in L$ is of distance at most 
${d}/({2^k \log n})$ from $a$. Since $x \in \bC$ whenever $\bsv_k(x) = \bsv$, we have
\begin{align}
\Ex \Big[ |L \setminus \bC|\hspace{0.06cm} \big|\hspace{0.06cm} \calbE \Big] &= \sum_{x \in L} \Prx\Big[ \bsv_k(x)\ne \bsv \hspace{0.06cm} \big|\hspace{0.06cm} \calbE\Big] \leq |L| \cdot (2^{k}-1) \cdot \frac{d / (2^k \log n)}{d - \|a-b\|_1} \leq \frac{ 2|L|}{\log n}, \label{eq:close-stay}
\end{align} 
because there are $2^{k }-1$ coordinates sampled up to (but not including) depth $k$.
The last inequality above also used the assumption that $\|a-b\|_1\le d/2$.
Then, we have
\begin{align}
\Prx\left[ |\bC | \geq \frac{|L|}{10} \hspace{0.06cm} \Big|\hspace{0.06cm} \calbE  \right] &\geq \Prx\left[ | L \cap \bC | \geq \frac{|L|}{10} \hspace{0.06cm} \Big|\hspace{0.06cm} \calbE \right]  
 = 1 -\Prx\left[ |L \setminus \bC| \geq \frac{9|L|}{10} \hspace{0.06cm} \Big|\hspace{0.06cm} \calbE \right] 
 \geq 1 - \frac{20}{9\log n},\label{eq:close-stay-1}
\end{align}
where the last inequality follows from Markov.
Hence (noting $|\bC|\ge 1$ since it always contains $a$)
\begin{align*}
\avgd^*_{a,k} = \frac{1}{|\bC|} \sum_{x \in \bC} \|a - x\|_1 \leq \frac{d \alpha}{2^{k+1}} + \frac{1}{|\bC|} \sum_{x \in \bC} \ind\left\{ \|a - x\|_1 \geq \frac{d\alpha}{2^{k+1}} \right\}\cdot  \|a - x\|_1.
\end{align*}
Thus, we have (by splitting into two cases of $|\bC|\ge |L|/10$ and $|\bC|<|L|/10$ and applying (\ref{eq:close-stay-1}))
\begin{align*}
\Ex\Big[\hspace{0.04cm}\avgd_{a,k}^* \hspace{0.06cm} \big|\hspace{0.06cm} \calbE \Big] &\leq \frac{d\alpha}{2^{k+1}} + \Ex\left[ \dfrac{1}{|\bC|} \sum_{x \in \bC} \ind\left\{ \|a - x\|_1 \geq \frac{d\alpha}{2^{k+1}}\right\} \|a-x\|_1 \hspace{0.06cm} \Big|\hspace{0.06cm} \calbE \right] \nonumber \\[1ex]
	&\leq \frac{d\alpha}{2^{k+1}} + \frac{10}{|L|} \cdot \Ex\left[ \sum_{x \in \bC} \ind\left\{ \|a-x\|_1 \geq \frac{d \alpha}{2^{k+1}} \right\} \|a - x\|_1 \hspace{0.06cm} \Big|\hspace{0.06cm} \calbE \right] 
	 + \frac{20}{9\log n} \cdot \frac{10 d \log n }{2^k},
\end{align*}
where 
  the final term   
  used the fact that $\|a-x\|_1$ is always at most $10d\log n/2^k$ for any $x\in \bC$, by 
  the definition of the latter. 
Combining (\ref{eq:haha22}) and the inequality above (and $\alpha\ge 800$), we have
\begin{align}
\frac{\alpha}{40}\cdot \frac{d}{2^{k+1}}\cdot |L| &\leq \Ex\left[ \sum_{x \in \bC} \ind\left\{ \|a - x\|_1 \geq \frac{d\alpha}{2^{k+1}} \right\}\|a - x\|_1 \hspace{0.06cm} \Big|\hspace{0.06cm} \calbE \right].\label{eq:haha2} 
\end{align}
We finish the proof of the lemma by upperbounding the right-hand side above in terms of the size of $H$. In particular, let $H'$ be the set of points $x\in H$ with 
  $\|a-x\|_1\ge  d\alpha/2^{k+1}$.
Then 
\begin{align}
\Ex\left[ \sum_{x \in \bC} \ind\left\{ \|a - x\|_1 \geq \frac{d\alpha}{2^{k+1}} \right\}\|a - x\|_1 \hspace{0.06cm} \Big|\hspace{0.06cm} \calbE \right] &\leq \sum_{x\in H'}  \Prx\Big[x \in \bC \hspace{0.06cm} \big|\hspace{0.06cm}
\calbE \Big]\cdot  \left(\frac{10d \log n}{2^{k}} \right).  \label{eq:exp-upper-bound}
\end{align}
We consider two cases: $i=k$ (see $\calE$) and $i<k$. For the easier case of $i=k$, 
  in order for $x \in \bC$ to occur conditioned on $\calbE$, 
  we have that all $2^{k}-1$
  coordinates sampled before depth $k$
  avoid separating $x$ and $a$ conditioning on not separating $a$ and $b$.
The probability of each sample is at most
$$
\frac{d-\|a-x\|_1}{d-\|a-b\|_1}=1-\frac{\|a-x\|_1-\|a-b\|_1}{d-\|a-b\|_1}
\le 1-\frac{\|a-x\|_1-\|a-b\|_1}{d}.
$$
Therefore in this case we have
$$
\Pr\Big[x\in \bC\hspace{0.06cm} \big|\hspace{0.06cm}\calbE\Big]
\le \left(1-\frac{\|a-x\|_1-\|a-b\|_1}{d}\right)^{2^{k}-1}.
$$
Notice that by definition of $h_{a,b}$, and the fact that $k \leq h_{a,b}$,  we have
\begin{align}\label{hehe5}
\|x - a\|_1 - \|a - b\|_1 \geq \frac{d}{2^{k+1}}( \alpha - 2 )
\end{align}
which implies that $\Pr[x\in \bC\hspace{0.06cm}|\hspace{0.06cm}\calbE]$ is 
  at most $\exp(-\alpha/8)$ using $\alpha\ge 800$.

Next we deal with the case when $i<k$.
In order for $x\in \bC$ to occur conditioned on $\calbE$, it
needs~to be the case that coordinates sampled before $(i, \ell)$, of which there are $2^{i } + \ell - 2$ many, avoid separating $x$ and $a$
  conditioning on not separating $a$ and $b$; the $(i, \ell)$-th~sample is $j$
   (which better not separate $x$ and $a$; otherwise the probability is trivially $0$); and the remaining $2^{k}-1 - 2^{i } - \ell+1=2^{k}-2^i-\ell$ coordinates do not separate $x$ and $a$ (but there will be conditioning on not separating $a$ and $b$). So 
\begin{align}
\Prx\Big[ x \in \bC \hspace{0.06cm} \big|\hspace{0.06cm} \calbE \Big] &\leq \left(1 - \frac{\|a - x\|_1 - \|a - b\|_1}{d}\right)^{2^{i } + \ell - 2} \left(1 - \frac{\|a - x\|_1}{d} \right)^{2^{k} - 2^{i } - \ell}\nonumber \\ \nonumber
&\le \left(1 - \frac{\|a - x\|_1 - \|a - b\|_1}{d}\right)^{2^{k } - 2}\\
&\le \left( 1 - \frac{\alpha - 2}{2^{k+1}}\right)^{2^{k-1} - 1},
 \label{eq:split-prob}
\end{align}
which is at most $\exp(-\alpha/8)$ using $k\ge 2$ and $\alpha\ge 800$.
Hence, we can combine  (\ref{eq:haha2}) and (\ref{eq:exp-upper-bound})  to get
\begin{align*}
\frac{\alpha}{40}\cdot \frac{d}{2^{k+1}}\cdot |L| &\leq |H| \cdot \exp\left(-\frac{\alpha}{8} \right) \cdot \frac{10d \log n}{2^{k}}.
\end{align*}
Re-arranging the inequality, 
  the lemma follows using $\alpha\ge 800$.
\end{proof}

The next lemma helps upperbound the number of large $Q_{k,j}$'s. 

\begin{lemma}\label{lem:count-bound}
Fix any $j \in [d]$ with $a_j \neq b_j$. When $\alpha \geq 20 \log \log n$, the set
\begin{align*}
G_{j}(\alpha) =  \left\{k \in \{0 , \dots, h_{a, b}\}  :  Q_{k,j} \geq \frac{d}{2^k} \cdot \alpha \right\} \quad\text{satisfies}\quad \big|G_{j}(\alpha)\big| \leq O\left(\left\lceil\frac{16\log(2n)}{\alpha} \right\rceil\cdot \log\log n \right). 
\end{align*}
\end{lemma}

\begin{proof}
Assume for a contradiction that 
\[ \big|G_j(\alpha)\big| \geq \beta\cdot  \big\lceil \log_2(10\log^2 n)\big\rceil  + 2, \quad\text{where}\quad \beta \eqdef \left\lceil\frac{16\log(2n)}{\alpha} \right\rceil.\]
Then there must be $k_1, \dots, k_\beta \in \{ 2, \dots, h_{a,b}\}$ with $k_1 > k_2 > \dots > k_\beta$ and every $t \in [\beta-1]$ satisfies
\[ k_{t} - k_{t+1} \geq \big\lceil \log_2(10 \log^2 n) \big\rceil .\]
This implies that every $t\in [\beta-1]$ satisfies
\[ \frac{10 d \log n}{2^{k_{t}}} \le \frac{d}{2^{k_{t+1}} \log n}.  \]
If we consider for each $t \in [\beta]$, the following two multi-sets
\begin{align*}
L_t = \left\{ x \in C_j : \|a - x\|_1 \leq \frac{d}{2^{k_t} \log n} \right\} \quad\text{and}\quad H_t = \left\{ x \in C_j : \|a - x\|_1 \leq \frac{10 d\log n}{2^{k_t}} \right\},
\end{align*}
they satisfy
\begin{align} 
L_1 \subseteq H_1 \subseteq L_2 \subseteq H_2 \subseteq \dots \subseteq H_{\beta-1} \subseteq L_{\beta} \subseteq H_{\beta},  \label{eq:nesting}
\end{align}
but then invoking Lemma~\ref{lem:growth} (and using $20 \log\log n \geq 800$), we have that every $t \in [\beta]$ satisfy 
\[ |H_t|\ge \frac{|L_t| \exp( {\alpha}/8)}{\log n}.\]
Using $|L_1| \geq 1$ (since it contains $a$) and (\ref{eq:nesting}), we have
\begin{align*}
|H_{\beta}| &\geq \left(\dfrac{\exp( {\alpha }/{8})}{ \log n }\right)^\beta > 2n,
\end{align*}
using $\exp(\alpha \beta/ 8) \geq (2n)^2$, and $(\log n)^{\beta} <2n$, by our choices of $\beta$ and $\alpha$. This is a contradiction, as we have $H_{\beta} \subseteq A \cup B$ and thus, $|H_\beta|\le 2n$.
\end{proof}

Finally we finish the proof of (\ref{finaleq}). Let 
$\alpha_0= 20 \log \log n$.
Use Lemma~\ref{lem:count-bound} we have
\begin{align*}
\sum_{k = i_0 }^{h_{a,b}-1} Q_{k,j} \cdot 2^k &\leq (h_{a, b} - i_0 ) \cdot \alpha_0 d + \sum_{\kappa = 0}^{\lceil \log_2(10 \log n) \rceil} \big|G_j(\alpha_0 2^{\kappa}) \big| \cdot \alpha_0 2^{\kappa + 1} \cdot d \\[-0.5ex]
	&\leq O(d) \cdot \Big( (h_{a, b} - i_0) \cdot \log \log n + \log n \cdot (\log\log n)^3 \Big),
\end{align*}
where the upper limit of $\kappa \leq \lceil \log_2 (10 \log n) \rceil$ comes from the fact that $G_{j}(10 \log n )$ is trivially empty
  since $\avgd^*_{a,k}$ is always at most $10d\log n/2^k$ by definition. This finishes the proof of Lemma~\ref{lem:main-lemma}.



\def\MST{\text{EMD}}
\def\sv{\mathsf{v}} \def\su{\mathsf{u}}
\def\bcalI{\boldsymbol{\mathcal{I}}}
\def\bfeta{\boldsymbol{\eta}}
\def\rr{\mathbf{r}} \def\bone{\mathbf{1}}
\def\vv{\mathbf{v}} \def\LS{\mathsf{LS}}
\def\Exp{\text{Exp}} \def\ALG{\mathsf{ALG}}
\def\cc{\mathbf{c}} \def\bDelta{\mathbf{\Delta}}
\def\bh{\mathbf{h}} \def\bsv{\boldsymbol{\sv}}
\def\uu{\mathbf{u}} \def\calU{\mathcal{U}}
\def\aa{\mathbf{a}} \def\bE{\boldsymbol{E}}
\def\bOmega{\boldsymbol{\Omega}} \def\balpha{\boldsymbol{\alpha}}
\def\bomega{\boldsymbol{\omega}} \def\bA{\mathbf{A}}
\newcommand{\calbQ}{\boldsymbol{\calQ}}



\section{Streaming Preliminaries and Sketching Tools}

\subsection{Exponential Order Statistics}\label{sec:expo}
We review some properties of the order statistics of independent non-identically distributed exponential random variables. 
Let $(\bt_1,\dots,\bt_n)$ be independent exponential random variables where $\bt_i$ has mean $1/\lambda_i$ (equivalently, $\bt_i$ has rate $\lambda_i>0$), abbreviated as $\bt_i\sim\Exp(\lambda_i)$. Recall that $\bt_i$ is given by the cumulative distribution function (cdf) $\Pr[\bt_i\le x] = 1-e^{-\lambda_i x}$. Our algorithm will require an analysis of the distribution of values $(\bt_1,\dots,\bt_n)$. We begin by noting that constant factor scalings of an exponential variable result in another exponential variable.

\begin{fact}[Scaling of exponentials]\label{fact:scale}
	Let $\bt\sim \Exp(\lambda)$ and $\alpha > 0$. Then $\alpha \bt$ is distributed as $\Exp(\lambda/\alpha)$.
\end{fact}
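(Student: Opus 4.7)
The plan is to verify the claim by a direct computation of the cumulative distribution function of $\alpha\bt$, using the defining CDF of an exponential random variable recalled in the paragraph above. There is no real obstacle here; the content is essentially a change-of-variable.

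Concretely, I would fix an arbitrary $x\ge 0$ and observe that since $\alpha>0$, the event $\{\alpha\bt\le x\}$ is identical to the event $\{\bt\le x/\alpha\}$. Applying the CDF of $\bt\sim\Exp(\lambda)$ at the point $x/\alpha$ gives
\[
\Pr[\alpha\bt\le x] \;=\; \Pr[\bt\le x/\alpha] \;=\; 1-e^{-\lambda(x/\alpha)} \;=\; 1-e^{-(\lambda/\alpha)\,x}.
\]
This is precisely the CDF of an $\Exp(\lambda/\alpha)$ random variable evaluated at $x$, which by the uniqueness of distributions from CDFs identifies the law of $\alpha\bt$ as $\Exp(\lambda/\alpha)$. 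For completeness I would note that for $x<0$ both sides equal $0$, since $\alpha\bt\ge 0$ almost surely.
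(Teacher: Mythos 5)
Your proof is correct and is essentially identical to the paper's: both compute the CDF of $\alpha\bt$ via $\Pr[\alpha\bt\le x]=\Pr[\bt\le x/\alpha]=1-e^{-(\lambda/\alpha)x}$ and identify it as the CDF of $\Exp(\lambda/\alpha)$. Your remark about $x<0$ is a harmless extra detail.
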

\begin{proof}
	The cdf of $\alpha \bt$ is given by $ \Pr[\bt < x/\alpha]=1-e^{-\lambda x/\alpha}$, which is the cdf of $\Exp(\lambda/\alpha)$.
\end{proof}

\noindent
\begin{definition}
	Let $\bt = (\bt_1,\dots,\bt_n)$ be independent exponentials. For $k=1,2,\dots,n$, we define the \textit{$k$-th anti-rank} $D_{\bt}(k)\in [n]$ of $\bt$ to be the values $D_{\bt}(k)$ such that $\bt_{D_{\bt}(1)} \leq \bt_{D_{\bt}(2)} \leq \dots \leq \bt_{D_{\bt}(n)}$. 
\end{definition}\noindent
Using the structure of the anti-rank vector, it has been observed \cite{nagaraja2006order} that there is a simple form for describing the distribution of $\bt_{D_{\bt}(k)}$ as a function of $(\lambda_1,\dots,\lambda_n)$ and the anti-rank vector.

\begin{fact}[\cite{nagaraja2006order}]\label{fact:order}
	Let $\bt=(\bt_1,\dots,\bt_n)$ be independently exponentials with 
	  $\bt_i\sim \Exp(\lambda_i)$. Then $D_{\bt}(1)=i$ with probability $\lambda_i / \sum_{j \in [n]} \lambda_j$. Furthermore, the following two sampling procedures produce the same distribution over pairs in $\R^2$: 
	\begin{enumerate}
	\item Sample $\bt=(\bt_1, \dots, \bt_n)$, where $\bt_i \sim \Exp(\lambda_i)$, and output $(\bt_{D_{\bt}(1)}, \bt_{D_{\bt}(2)} - \bt_{D_{\bt}(1)})$.
	\item Sample $\bi_1 \in [n]$ with $\Pr[\bi_1 = i] = \lambda_i / \sum_{j\in [n]} \lambda_j$, $\boldsymbol{E}_1, \boldsymbol{E}_2 \sim \Exp(1)$ independently, and output  
\[ \left(\frac{\bE_1}{\sum_{j\in[n]} \lambda_j}, \hspace{0.06cm}\frac{\bE_2}{\sum_{j\in[n]\setminus \{\bi_1\}} \lambda_j}\right). \]
	\end{enumerate} 
\end{fact}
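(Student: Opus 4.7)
The plan is to split the statement into three claims and handle them in order: (i) the identity of the minimum is a categorical draw with weights proportional to the rates, (ii) the value of the minimum is an exponential with rate $\sum_j \lambda_j$, independent of which index achieved it, and (iii) conditioned on the minimizer, the remaining shifted random variables are again independent exponentials with their original rates. Together (i)--(iii) collapse onto the two-step procedure in item~2 via Fact~\ref{fact:scale}.

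For (i) and (ii) I would compute jointly: for any index $i$ and threshold $x>0$,
\[
\Pr[D_{\bt}(1)=i,\ \bt_i>x] \;=\; \int_x^\infty \lambda_i e^{-\lambda_i s}\!\prod_{j\ne i}\!e^{-\lambda_j s}\,ds \;=\; \frac{\lambda_i}{\sum_j \lambda_j}\,e^{-(\sum_j \lambda_j)x}.
\]
This product factorization shows both that $\Pr[D_{\bt}(1)=i]=\lambda_i/\sum_j \lambda_j$ and that $\bt_{D_{\bt}(1)}\sim \Exp(\sum_j \lambda_j)$ independently of $D_{\bt}(1)$. By Fact~\ref{fact:scale}, $\bt_{D_{\bt}(1)}$ has the same law as $\bE_1/\sum_j \lambda_j$ for $\bE_1\sim\Exp(1)$, matching the first coordinate of the second procedure.

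For (iii) I would invoke the memoryless property of the exponential. Condition on $D_{\bt}(1)=i$ and $\bt_i = t$. For each $j\neq i$, the event $\{\bt_j > t\}$ has occurred, so by memorylessness the residual waiting times $\{\bt_j - t : j \neq i\}$ are independent exponentials with their original rates $\lambda_j$, and crucially they are independent of the pair $(D_{\bt}(1),\bt_{D_{\bt}(1)})$ after conditioning. Now applying step (i)--(ii) to this smaller collection of $n-1$ exponentials, the minimum of the residuals, which equals $\bt_{D_{\bt}(2)}-\bt_{D_{\bt}(1)}$ by definition of the anti-rank, is distributed as $\Exp\bigl(\sum_{j\neq i} \lambda_j\bigr)$. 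Applying Fact~\ref{fact:scale} once more, this equals in distribution $\bE_2/\sum_{j\neq i}\lambda_j$ for an independent $\bE_2\sim \Exp(1)$, yielding the second coordinate of procedure~2.

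The only place that needs care is the independence claim in step~(iii), namely that conditional on $D_{\bt}(1)=\bi_1$ the value $\bt_{D_{\bt}(1)}$ and the residuals $\{\bt_j-\bt_{D_{\bt}(1)}\}_{j\neq \bi_1}$ are mutually independent; I expect this to follow cleanly by writing the joint density of $(\bi_1,\bt_{D_{\bt}(1)},\bt_j-\bt_{D_{\bt}(1)}:j\neq \bi_1)$ from the joint density of $\bt$ via a change of variables and observing that it factors. Once this factorization is in hand, concatenating the two independent draws $\bE_1,\bE_2$ in procedure~2 reproduces the joint law of $(\bt_{D_{\bt}(1)},\bt_{D_{\bt}(2)}-\bt_{D_{\bt}(1)})$ exactly, which is what Fact~\ref{fact:order} asserts.
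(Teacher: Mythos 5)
Your proposal is correct, and it reaches the conclusion by a somewhat different route than the paper. The paper's proof is a single direct computation: it writes down the joint survival probability $\Pr[D_{\bt}(1)=i,\ \bt_{D_{\bt}(1)}\ge r,\ \bt_{D_{\bt}(2)}-\bt_{D_{\bt}(1)}\ge r']$ as an integral over the value of $\bt_i$, factors the integrand using independence and the exponential tail $e^{-\lambda_j(y+r')}$, and reads off the product form of procedure~2 in one step. You instead decompose the statement structurally: first the joint law of $(\arg\min,\min)$, then the memoryless property to argue that, conditioned on the minimizer and its value, the residuals $\{\bt_j-\bt_{D_{\bt}(1)}\}_{j\ne \bi_1}$ are fresh independent exponentials with their original rates, and then you apply the first step again to this reduced collection. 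The one step you flag as needing care --- the conditional independence of the residuals from $(D_{\bt}(1),\bt_{D_{\bt}(1)})$ --- is genuine but routine, and the density-factorization check you propose is exactly what the paper's single integral carries out implicitly (the factor $\exp(-r'\sum_{j\ne i}\lambda_j)$ pulling out of the integral over $y$ \emph{is} that independence). The trade-off: the paper's computation is shorter and fully self-contained for the two-coordinate statement it needs, while your recursive/memorylessness argument is more conceptual and extends immediately to all anti-ranks (the full R\'enyi representation of $\bt_{D_{\bt}(k)}$ as a sum of independent rescaled exponentials), which the paper does not need here. Either write-up would be acceptable.
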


\begin{proof}
This is a simple computation. We have that for any $r, r' \in \R_{\geq 0}$ and $i \in [n]$,
\begin{align*}
\Prx_{\bt}\left[ \begin{array}{c} D_{\bt}(1) = i, \\ \bt_{D_{\bt}(1)} \geq r, \\ \bt_{D_{\bt}(2)} - \bt_{D_{\bt}(1)} \geq r' \end{array}  \right] &= \int_{y:r}^\infty \lambda_i \exp\left(- \lambda_i y \right)\prod_{j \in [n] \setminus \{i\}} \Prx_{\bt_j \sim \Exp(\lambda_j)}\left[ \bt_j - y\geq r' \right] dy \\
	&= \lambda_i \exp\left(-r' \sum_{j \in [n] \setminus \{i\}} \lambda_j \right) \int_{y:r}^{\infty} \exp\left( - y \sum_{j=1}^n \lambda_j \right) dy\\
	&= \dfrac{\lambda_i}{\sum_{j=1}^n \lambda_i} \cdot \exp\left(-r' \sum_{j \in [n] \setminus \{i\}} \lambda_j \right) \cdot \exp\left( - r\sum_{j=1}^{n} \lambda_i \right)  \\
	&= \Prx\left[\bi_1 = i \wedge \dfrac{\bE_1}{\sum_{j\in[n]} \lambda_j} \geq r \wedge \dfrac{\bE_2}{\sum_{j\in[n]\setminus\{\bi_1\}} \lambda_j} \geq r' \right].
\end{align*}
This finishes the proof.
\end{proof}

\begin{lemma}\label{lem:sum_exponentials}
	Fix $n\in \N$ and $x \in \R^n$ to any fixed vector, and consider independent draws $\bt_1,\dots,\bt_n \sim \Exp(1)$. For any $\gamma \in (0,1/2)$, 	
	\[	\Prx_{\bt_1,\ldots,\bt_n}\left[{\sum_{i\in[n]}\frac{|x_i|}{\bt_i} \geq \frac{4\log (n/\gamma)}{\gamma} \|x\|_1 } \right]\leq 2\gamma	.\]	
\end{lemma}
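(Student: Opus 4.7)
The plan is to combine a truncation argument with Markov's inequality, since the naive union bound ``all $\bt_i$ are at least some small threshold'' loses an extra factor of $n$ compared to the $\log(n/\gamma)$ claimed. The heavy-tailed nature of $1/\bt_i$ (whose expectation is infinite) is the main obstacle: the tail $\Pr[1/\bt_i > t] = 1 - e^{-1/t} \approx 1/t$ is only barely non-integrable, so truncation at a polynomial level will only cost us a logarithmic factor in expectation.

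Concretely, I would set the truncation threshold $T := n/\gamma$ and define the truncated variables $\tilde{\bY}_i := |x_i| \cdot \min(1/\bt_i, T)$. The goal is then to show two things with combined failure probability at most $2\gamma$: (i) the truncated sum $\sum_i \tilde{\bY}_i$ is close to $\|x\|_1 \log(n/\gamma)/\gamma$, and (ii) no truncation actually occurs, i.e.\ $\bt_i \geq 1/T$ for all $i$, so that $\sum_i |x_i|/\bt_i = \sum_i \tilde{\bY}_i$.

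For (ii), since $\Pr[\bt_i < 1/T] = 1 - e^{-1/T} \leq 1/T$, a union bound gives that with probability at least $1 - n/T = 1 - \gamma$, every $\bt_i$ lies above $1/T$. For (i), I would compute
\[
  \E[\min(1/\bt_i, T)] = \int_0^T \Pr[1/\bt_i > t]\,dt = \int_0^T (1 - e^{-1/t})\,dt,
\]
split the integral at $t = 1$, and bound the first piece by $1$ and the second piece by $\int_1^T 1/t \, dt = \log T$, yielding $\E[\tilde{\bY}_i] \leq |x_i|(1 + \log(n/\gamma))$. Linearity of expectation and Markov's inequality then give
\[
  \Pr\!\left[\sum_i \tilde{\bY}_i \geq \frac{(1+\log(n/\gamma))\|x\|_1}{\gamma}\right] \leq \gamma.
\]

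A union bound combines (i) and (ii) to fail with probability at most $2\gamma$. On the complementary event, $\sum_i |x_i|/\bt_i = \sum_i \tilde{\bY}_i < (1+\log(n/\gamma))\|x\|_1/\gamma$, which is at most $4\log(n/\gamma)\|x\|_1/\gamma$ whenever $n/\gamma$ is bounded away from $1$ (which follows from $\gamma < 1/2$). The main conceptual obstacle is recognizing that one needs to truncate first and only then apply Markov; the remaining work is the elementary tail-integral computation that turns the $1/t$ tail into the $\log(n/\gamma)$ factor.
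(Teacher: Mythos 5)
Your proof is correct and takes essentially the same truncation-then-Markov approach as the paper: both cap $1/\bt_i$ at $n/\gamma$, show the resulting expectation is $O(\log(n/\gamma))\,\|x\|_1$, and combine Markov with a union bound over the low-probability event that some $\bt_i < \gamma/n$. The only cosmetic difference is that the paper bounds a \emph{conditional} expectation (conditioned on no $1/\bt_i$ exceeding $n/\gamma$) via a dyadic sum over the level sets $\{1/\bt_i \in [2^j, 2^{j+1}]\}$, whereas you truncate the variable to $\min(1/\bt_i, n/\gamma)$ and evaluate its \emph{unconditional} expectation by the tail integral $\int_0^{n/\gamma}(1-e^{-1/t})\,dt \le 1 + \log(n/\gamma)$, which neatly sidesteps the $1/(1-\gamma)$ factor that the conditioning introduces.
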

\begin{proof}
	We want to compute the expectation of $\sum |x_i|/\bt_i$ and apply Markov's inequality; however, the above random variable does not have an expectation (since $1/\bt_i$ may become too large when $\bt_i$ is small). To remedy that, we effectively truncate $1/\bt_i$. Notice that for any $\alpha > 0$, the probability that any $\bt_i \sim \Exp(1)$ is less than $\alpha$ is at most $\alpha$ (simply by inspecting p.d.f of $\Exp(1)$). We apply this in two ways: (i) for any $i \in [n]$, the probability that $1/\bt_i > n/\gamma$ is at most $\gamma / n$, and (ii) for any $i\in [n]$ and $j \in \N$, the probability that $1/\bt_i \in [2^{j}, 2^{j+1}]$ is at most $1/2^{j}$. 
	
	Letting \smash{$\calbQ_i$} be the event that $1/\bt_i> n/\gamma$, and by (i), we may union bound over all $n$ to say $\calbQ_i$ is never satisfied with probability at least $1-\gamma$. 
	So now we compute the expectation and apply Markov's inequality (conditioning on $\neg (\cup_i \calbQ_i)$ so the probability that $1/\bt_i > n/\gamma$ is $0$): 
	\begin{align*}
		\Ex_{\bt_1, \dots, \bt_n}\left[ \sum_{i=1}^n \frac{|x_i|}{\bt_i} \mid \neg (\cup_i \calbQ_i)\right] &\leq \sum_{i=1}^n |x_i|  \sum_{j=0}^{\infty} 2^{j+1} \cdot\Prx_{\bt_i}\left[ \frac{1}{\bt_i} \in [2^{j}, 2^{j+1}] \mid \neg \calbQ_i \right] \\ 
		&\leq \sum_{i=1}^n |x_i| \sum_{j=0}^{\log(n/\gamma)} 2^{j+1} \cdot\Prx_{\bt_i}\left[ \frac{1}{\bt_i} \in [2^{j}, 2^{j+1}] \mid \neg \calbQ_i \right] \\
		&\leq \sum_{i=1}^n |x_i| \sum_{j=0}^{\log(n/\gamma)} 2^{j+1} \cdot \frac{1}{2^j (1-\gamma)} \leq \|x\|_1 \cdot 2\log(n/\gamma) / (1-\gamma).
	\end{align*}
	\ignore{Next we fix $i\in [n]$ and compute $\Ex\left[{\frac{1}{\bt_i}\; | \;\neg \calbQ_i}\right]$. Using the earlier fact that $\pr{\bt_i < \eps} < \eps$, for any $\eps > 0$, it follows that $\pr{2^j \leq \frac{1}{\bt_i} \leq 2^{j+1}}< \frac{1}{2^j}$ for any $j \geq 0$. Thus 
		
		\begin{align*}
			\Ex_{\bt_i}\left[\frac{1}{\bt_i}\mid\neg\calbQ_i\right] &\leq 		\Ex_{\bt_i}\left[{\min\left\{\frac{1}{\bt_i}, \frac{n}{\gamma}\right\}}\right]< \frac{n}{\gamma} \cdot \Prx_{\bt_i}\left[{ \frac{1}{\bt_i} > \frac{n}{\gamma}}\right] +\sum_{i=0}^{\log(n/\gamma)} 2^{i+1} \Prx_{\bt_i}\left[2^i \leq \frac{1}{\bt_i} \leq 2^{i+1}\right] \\
			& \leq 	1 + \sum_{i=1}^{\log(n/\gamma)} 2 \leq 4 \log \left(\frac{n}{\gamma}\right).\\
	\end{align*}}
	Then, by Markov's inequality,
	\[		\Prx_{\bt_1,\ldots,\bt_n}\left[{\sum_{i\in[n]} \frac{|x_i|}{\bt_i} \geq \frac{4\log(n/\gamma)}{\gamma} \cdot \|x\|_1 \; \Big|\;  \bigcup_i \neg \calbQ_i}\right] \leq \gamma	.\]
	The proof follows from a union bound.
\end{proof}

\begin{lemma}\label{lem:gap_exponentials} For $n\in \N$, let $x \in \R^n$ be any fixed vector. Let $\bt_1,\dots,\bt_n \sim \Exp(1)$ be i.i.d. exponentially distributed, and $\gamma > 0$ be smaller than some constant. Then,	letting $i^*=\arg\max \frac{|x_i|}{t_i}$ we have that 
	\begin{align}
	 \frac{|x_{i^*}|}{\bt_{i^*}}\ge \gamma \|x\|_1\qquad\text{and}\qquad  \frac{|x_{i^*}|}{\bt_{i^*}}\ge (1+\gamma)\max_{i\neq i^*}\left\{\frac{|x_i|}{\bt_i}\right\}, \label{Eq:gap_exponentials}	 
	 \end{align}
	
	holds with probability at least $1-4\gamma$.
\end{lemma}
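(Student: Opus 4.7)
The plan is to reduce the claim to Fact~\ref{fact:order} via a rescaling trick. Indices with $x_i=0$ contribute $0$ to $|x_i|/\bt_i$ and can be discarded, so assume $x_i\neq 0$ throughout. By Fact~\ref{fact:scale}, the variables $\mathbf{s}_i := \bt_i/|x_i| \sim \Exp(|x_i|)$ are mutually independent, and the crucial observation is that $i^* = \arg\max_i |x_i|/\bt_i = \arg\min_i \mathbf{s}_i = D_{\mathbf{s}}(1)$, so that $|x_{i^*}|/\bt_{i^*} = 1/\mathbf{s}_{D_{\mathbf{s}}(1)}$ and $\max_{i\neq i^*}|x_i|/\bt_i = 1/\mathbf{s}_{D_{\mathbf{s}}(2)}$.

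Next I would apply Fact~\ref{fact:order} to $\mathbf{s}$ with rates $\lambda_i = |x_i|$ (so $\sum_j \lambda_j = \|x\|_1$), giving the joint distributional identity
\[ \bigl(\mathbf{s}_{D_{\mathbf{s}}(1)},\ \mathbf{s}_{D_{\mathbf{s}}(2)} - \mathbf{s}_{D_{\mathbf{s}}(1)}\bigr) \ \stackrel{d}{=}\ \Bigl(\tfrac{\bE_1}{\|x\|_1},\ \tfrac{\bE_2}{\|x\|_1-|x_{i^*}|}\Bigr), \]
where $\bE_1,\bE_2\sim\Exp(1)$ independently. The first inequality in~\eqref{Eq:gap_exponentials} then translates to $\bE_1\le 1/\gamma$. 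The second, after rewriting as $\mathbf{s}_{D_{\mathbf{s}}(2)}-\mathbf{s}_{D_{\mathbf{s}}(1)} \ge \gamma\, \mathbf{s}_{D_{\mathbf{s}}(1)}$, becomes $\bE_2 \ge \gamma \bE_1 \cdot (\|x\|_1-|x_{i^*}|)/\|x\|_1$; since the trailing fraction is at most one, it is implied by the simpler event $\bE_2\ge \gamma\bE_1$.

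To close, I would bound the two failure probabilities separately and union-bound. Since $\bE_1\sim\Exp(1)$, $\Pr[\bE_1>1/\gamma] = e^{-1/\gamma}\le \gamma$ for every $\gamma\in(0,1)$ (using $\gamma\log(1/\gamma)\le 1/e<1$). Integrating the product exponential density gives $\Pr[\bE_2<\gamma\bE_1] = \int_0^\infty e^{-s}(1-e^{-\gamma s})\,ds = \gamma/(1+\gamma)\le\gamma$. A union bound produces both inequalities in~\eqref{Eq:gap_exponentials} simultaneously with probability at least $1-2\gamma$, comfortably within the claimed $1-4\gamma$. I do not anticipate a serious obstacle; the only point that needs care is remembering to invoke Fact~\ref{fact:order} on the reciprocal variables $\mathbf{s}_i$ so that ``max'' becomes ``min'' and the anti-rank machinery applies directly.
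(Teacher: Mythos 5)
Your proof is correct and follows the same approach as the paper's: both apply Fact~\ref{fact:order} to the rescaled exponentials $\bt_i/|x_i|\sim\Exp(|x_i|)$ to reduce the two events to conditions on $\bE_1$ and $\bE_2$, then union bound. You are slightly more careful in not discarding the $|x_{i^*}|$ term in the denominator and in the constant used for the event $\bE_2\ge\gamma\bE_1$ (the paper uses $3\gamma$ and a loosened bound $\|x\|_1/(\bE_1+\bE_2)$), which is why you land on $1-2\gamma$ rather than the paper's $1-4\gamma$; both of course imply the stated bound.
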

\begin{proof}
	Let $\calbE$ is the first event in  (\ref{Eq:gap_exponentials}), and $\calbE'$ is the second.
	 By Fact \ref{fact:order}, the quantity $|x_{i^*}| /\bt_{i^*} $ is distributed as $\|x\|_1/\boldsymbol{E}_1$, where $\boldsymbol{E}_1$ is an exponential random variable. Using the cdf $1-e^{-x}$ of an exponential, we have that $\Prx[{\boldsymbol{E}_1 > 1/\gamma} ]< e^{-1/\gamma}$, thus $\Prx[{\calbE }] > 1-  e^{-1/\gamma} > 1-\gamma$ for $\gamma$ smaller than a constant. 
	
	We now bound $\Prx[{\calbE'}]$. Again by Fact \ref{fact:order}, we have that $\max_{i\neq i^*}\left\{\frac{|x_i|}{\bt_{i}}\right\}$ is distributed as
	\begin{align*}
		\max_{i\neq i^*}\left\{\frac{|x_i|}{\bt_{i}}\right\}& \sim		\frac{1}{\frac{\boldsymbol{E}_1}{\|x\|_1} + \frac{\boldsymbol{E}_2}{\|x\|_1 - |x_{i^*}|} } \leq \frac{\|x\|_1}{\boldsymbol{E}_1 + \boldsymbol{E}_2}.
	\end{align*}
	Thus, $ \calbE'$ will hold so long as $\boldsymbol{E}_2 \geq 3 \gamma \boldsymbol{E}_1$. The probability that this does not occur occurs is given by 
	\begin{align*}
		\Prx[{\boldsymbol{E}_2 < 3 \gamma \boldsymbol{E}_1}] & \leq \int_{0}^\infty e^{-x} \Prx\left[{\boldsymbol{E}_1 > \frac{x}{3\gamma}}\right]  dx = \int_{0}^\infty e^{-x(1+  \frac{1}{3\gamma})}  dx  = \frac{1}{(1+  \frac{1}{3\gamma})} \leq 3 \gamma.
	\end{align*}
	Thus, we have $\Prx[{\calbE'}] > 1- 3 \gamma$, and so by a union bound the event in (\ref{Eq:gap_exponentials}) holds with probability at least $1-4\gamma$, which completes the proof.
\end{proof}

Given $x \in \R^n$ and an integer $\beta \geq 1$, we write $x_{-\beta} \in \R^n$ to denote the vector given by $x$ where the $\beta$ largest coordinates in magnitude are set to $0$. When $\beta \geq 1$ is not an integer, $x_{-\beta}$ is interpreted as $x_{- \lfloor \beta \rfloor}$. 
We will need the following lemma which bounds the tail $\ell_2$ and $\ell_1$ norm 
  of a vector after its entries are scaled independently by random variables

\begin{lemma}[Generalization of Proposition $1$ of \cite{JW18}]\label{lem:sizebound}
	Fix $n \in \N$ and $c \geq 0$, and let $\calD_1,\ldots,\calD_n$ be a sequence of distributions over $\R$ satisfying 
	\[ \Prx_{\bt_i \sim \calD_i}\big[|\bt_i| \geq y\big] \leq \frac{c}{y}, \quad \text{ for all $i \in [n]$ and all $y\red{>0}$}. \]
	For any fixed vector $x \in \R^n$ and integer $\beta \geq 1$, consider the random vector $\bz \in \R^n$ given by letting
	\[ \bz_i \eqdef \bt_i \cdot x_i, \quad\text{ where $\bt_i \sim \calD_i$ independently for all $i \in [n]$}. \] 
	Then we have 
	$$\|\bz_{-\beta}\|_2 \leq \frac{12c}{\sqrt{\beta}}\cdot  \|x \|_1 \quad\text{and}\quad
	\|\bz_{-\beta}\|_1 \leq 9c \lceil \log_2 n \rceil\cdot \|x\|_1 $$ with probability at least $1-3 e^{-\beta/8}$ over the draws of $\bt_i \sim \calD_i$.
\end{lemma}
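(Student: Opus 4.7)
The plan is to use a layered level-set argument on the coordinates of $\bz$, controlling how many entries exceed each of a geometric sequence of thresholds. Concretely, set thresholds $t_j = A c \|x\|_1 / (2^j \beta)$ for $j = 0, 1, \ldots, J$ where $J = \lceil \log_2(n/\beta) \rceil$ and $A$ is an absolute constant (I expect $A \in [9,12]$ after tuning). For each $j$, define $N_j \eqdef |\{i \in [n] : |\bz_i| \geq t_j\}|$; because the $\bt_i$ are independent, $N_j$ is a sum of $n$ independent Bernoulli random variables.

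The first step is to bound $\E[N_j]$. By the hypothesis $\Prx[|\bt_i| \geq y] \leq c/y$, we have $\Prx[|\bz_i| \geq t_j] \leq c|x_i|/t_j$, so $\E[N_j] \leq c\|x\|_1/t_j = 2^j\beta/A$. Since $N_j$ is a sum of independent Bernoullis with mean a factor $A$ below the target $2^j\beta$, a multiplicative Chernoff bound yields
\[ \Prx\bigl[N_j \geq 2^j \beta\bigr] \leq \exp\bigl(-c' \cdot 2^j \beta\bigr) \]
for some constant $c' > 0$ (depending on $A$). For $j > J$ the bound is trivial since $N_j \leq n < 2^j \beta$. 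Union bounding over $j = 0, 1, \ldots, J$ yields a geometric series in $2^j$, so with probability at least $1 - 3 e^{-\beta/8}$ the event $\bcalI \eqdef \{\forall j \geq 0 \colon N_j < 2^j \beta\}$ holds (after choosing $A$ and $c'$ to force the constant in the exponent to be $1/8$).

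Condition on $\bcalI$. Then for every $j \geq 0$ and every rank $k > 2^j \beta$, the $k$-th largest entry $|\bz_{(k)}|$ satisfies $|\bz_{(k)}| < t_j$, because otherwise at least $k > 2^j\beta$ indices would lie above $t_j$. Partition the indices removed by $(\cdot)_{-\beta}$ into dyadic layers $L_j = \{k : 2^j \beta < k \leq 2^{j+1}\beta\}$. For the $\ell_2$ bound, each layer contributes
\[ \sum_{k \in L_j} |\bz_{(k)}|^2 \leq 2^j \beta \cdot t_j^2 = \frac{A^2 c^2 \|x\|_1^2}{2^j \beta}, \]
so summing the geometric series gives $\|\bz_{-\beta}\|_2^2 \leq (2 A^2 c^2 / \beta) \|x\|_1^2$, yielding the claimed $12 c \|x\|_1/\sqrt{\beta}$. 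For the $\ell_1$ bound, each layer contributes $2^j\beta \cdot t_j = A c \|x\|_1$, and summing over $j = 0, \ldots, J$ with $J \leq \lceil \log_2 n \rceil$ gives the claimed $9 c \lceil \log_2 n \rceil \|x\|_1$.

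The main obstacle I expect is pinning down the constants so that the single failure probability $3 e^{-\beta/8}$ absorbs (i) the geometric sum of per-level Chernoff failure probabilities and (ii) any slack in the $\ell_2$ and $\ell_1$ layer sums, while simultaneously producing the clean prefactors $12$ and $9$. This requires choosing $A$ large enough that $\E[N_j] = 2^j\beta/A$ is well below $2^j\beta$ (so Chernoff gives a useful exponent) but small enough that $\sqrt{2} A \leq 12$ in the $\ell_2$ geometric sum and $A \leq 9$ in the $\ell_1$ telescoping sum. Apart from the constants, the argument is largely mechanical; a minor bookkeeping point is that for very small $\beta$ (say $\beta = 1$) the Chernoff step requires the variant valid for Bernoulli sums with small expected value (e.g.\ the form $\Prx[N \geq k] \leq (e\mu/k)^k$), which remains applicable here.
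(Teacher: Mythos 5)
Your proposal is correct and follows essentially the same route as the paper's proof: a dyadic decomposition of the entries of $\bz$ by magnitude, an expectation bound on each level's population from the tail hypothesis, a Chernoff-plus-union bound over levels, and a geometric summation for the $\ell_2$ and $\ell_1$ tails. The only difference is cosmetic — you count exceedances of dyadic thresholds (bounding order statistics directly), whereas the paper bounds the sizes of dyadic value windows $\bI_j$ and adds a separate event controlling the number of entries above $4c\|x\|_1/\beta$; these are dual formulations of the same argument, and your constant-tuning (e.g.\ $A=8$) goes through.
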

\begin{proof}
Assume without loss of generality that $\beta\ge 8$; otherwise $1-3e^{-\beta/8}<0$ and the statement~is trivial.
Define the following random sets $\bI_j$ for each $j=0,1,\ldots,\lceil \log_2 n\rceil$: 
	\[ \bI_j = \left\{i \in [n] : \frac{c \|x\|_1}{2^{j+1}} \leq |\bz_i| \leq \frac{c \|x\|_1}{2^{j}} \right\}\]
 and notice that, for any $i \in [n]$, we have
 \[ \Prx_{\bt_i \sim \calD_i}\big[i \in \bI_j\big] \leq \Prx_{\bt_i \sim \calD_i}\left[ |\bt_i| \geq \frac{c  \|x\|_1}{2^{j+1}|x_i|} \right] \leq \frac{2^{j+1} |x_i|}{\|x\|_1} \qquad\text{implying} \qquad  \Ex_{\bt_1, \dots, \bt_n}\big[|\bI_j| \big]  \leq 2^{j+1}. \] 
Let $\calbE_1$ denote the event that there exists $j \geq \lceil \log_2(\beta/8) \rceil$ such that $|\bI_j| > 4 \cdot 2^{j+1}$. Then
\begin{align*}
\Prx_{\bt_1, \dots, \bt_n}\left[ \calbE_1 \right] \leq \sum_{j=\lceil \log_2(\beta/8)\rceil}^{\lceil \log_2 n \rceil} \Prx_{\bt_1, \dots, \bt_n}\left[ |\bI_j| > 4 \cdot 2^{j+1}\right] 
 \leq \sum_{j=\lceil \log_2(\beta/8)\rceil}^{\lceil \log_2 n \rceil}\exp ( - 2^{j} ) \leq 2 \exp(-\beta / 8), 
\end{align*}
by a Chernoff bound.
%
%
On the other hand, each $i \in [n]$ satisfies $|\bz_i| \geq 4 c \|x\|_1 / \beta$ with probability at most $\beta |x_i| / (4 \|x\|_1)$ over the draw of $\bt_i \sim \calD_i$. The event $\calbE_2$ that more than $\beta$ indices  $i \in [n]$ satisfy $|\bz_i| \geq 4 c \|x\|_1 / \beta$ happens with probability at most $\exp(-\beta/8)$.
%
%
Whenever $\calbE_1$ and $\calbE_2$ do not occur, (which happens with probability at least $1-3e^{-\beta/8}$), we have 
\begin{align*}
\|\bz_{-\beta}\|_2^2 &\leq \sum_{j=\lceil\log_2(\beta/8)\rceil}^{\lceil\log_2 n\rceil}|\bI_j| \cdot \frac{c^2 \|x\|_1^2}{2^{2j}} 	 + n \cdot\frac{c^2 \|x\|_1^2}{n^2} \leq  
\frac{129}{\beta}\cdot c^2 \|x\|_1\qquad\text{and}  \\[0.6ex]
\|\bz_{-\beta}\|_1 &\leq \sum_{j= \lceil \log_2 (\beta/8)\rceil}^{\lceil \log_2 n \rceil} |\bI_j| \cdot \frac{c\|x\|_1}{2^j} + n \cdot \frac{c \|x\|_1}{n} \leq c \|x\|_1 \left( 8 \lceil \log_2 n \rceil + 1 \right)
\end{align*}
since once $\beta \geq n$, the bound trivially becomes $0$. The lemma follows.
\end{proof}

\ignore{\begin{corollary}\label{cor:sizebound2}
Let $\{\mathcal{D}_i\}$ be a collection of distributions over $\R$ such that there exists a value $\alpha$ such that for all $t\geq 1$ and all $i$ we have $\mathbf{Pr}_{x\sim \mathcal{D}_i}[|x| \geq t] \leq \frac{\alpha}{t}$.
	Let $x \in \R^n$ be any vector, and let $z \in \R^n$ be defined as $z_i = t_i \cdot x_i$, where $t_i \sim \mathcal{D}_i$, and $\{t_i\}$'s are independent.  Then for any integer power of two $\beta  = 2^i \geq 1$, we have $\|z_{-\beta}\|_1 \leq 4 \alpha \log s \|x\|_1$ with probability at least $1-3 e^{-\beta/4}$.
\end{corollary}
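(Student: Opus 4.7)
My plan is to perform a dyadic decomposition of the coordinates of $\bz$ by magnitude relative to the natural scale $c\|x\|_1$, and to use the tail hypothesis to bound the number of coordinates landing in each dyadic level. The assumption $\Pr[|\bt_i|\geq y]\leq c/y$ is essentially a ``weak Cauchy'' bound, so one cannot exploit any additional structure; a direct bucketing by magnitude followed by concentration is the natural route.

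Concretely, I would define, for $j=0,1,\ldots,\lceil \log_2 n\rceil$, the random sets $\bI_j=\{i\in[n]:c\|x\|_1/2^{j+1}\leq |\bz_i|\leq c\|x\|_1/2^j\}$. The tail assumption immediately gives $\Pr[i\in \bI_j]\leq 2^{j+1}|x_i|/\|x\|_1$, so summing over $i$ yields $\mathbb{E}[|\bI_j|]\leq 2^{j+1}$. A multiplicative Chernoff bound then shows $|\bI_j|\leq 4\cdot 2^{j+1}$ for every $j\geq \lceil \log_2(\beta/8)\rceil$ with probability at least $1-2e^{-\beta/8}$, after union-bounding over the $O(\log n)$ levels, since the per-level failure probability decays like $\exp(-2^j)$. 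Separately, I would bound the number of ``heavy'' coordinates with $|\bz_i|\geq 4c\|x\|_1/\beta$ by Chernoff: each index contributes with probability at most $\beta|x_i|/(4\|x\|_1)$, the expectation of the count is at most $\beta/4$, and hence with probability at least $1-e^{-\beta/8}$ there are at most $\beta$ such indices.

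On the intersection of these good events (probability at least $1-3e^{-\beta/8}$), zeroing out the top $\beta$ coordinates of $\bz$ eliminates every index with $|\bz_i|>4c\|x\|_1/\beta$, so only coordinates in buckets $\bI_j$ with $j\geq \lceil \log_2(\beta/8)\rceil$ survive, together with a negligible tail where $|\bz_i|\leq c\|x\|_1/n$. Substituting $|\bI_j|\leq 2^{j+3}$ and the bucket magnitude cap $c\|x\|_1/2^j$ into
$$\|\bz_{-\beta}\|_2^2\leq \sum_{j=\lceil\log_2(\beta/8)\rceil}^{\lceil\log_2 n\rceil}|\bI_j|\cdot\frac{c^2\|x\|_1^2}{2^{2j}}+\frac{c^2\|x\|_1^2}{n},\qquad \|\bz_{-\beta}\|_1\leq \sum_{j=\lceil\log_2(\beta/8)\rceil}^{\lceil\log_2 n\rceil}|\bI_j|\cdot\frac{c\|x\|_1}{2^j}+c\|x\|_1$$
collapses the first sum into a geometric series dominated by its lead term of size $O(c^2\|x\|_1^2/\beta)$ and the second into an arithmetic sum whose terms are constant per level, giving the desired $O(c\|x\|_1/\sqrt{\beta})$ and $O(c\log(n)\|x\|_1)$ bounds, respectively.

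The main obstacle I anticipate is calibrating the truncation threshold $\beta/8$ so that three things balance simultaneously: the union bound over $O(\log n)$ dyadic levels, the multiplicative Chernoff deviations at each level, and the heavy-coordinate count all contribute failure probability $e^{-\Omega(\beta)}$ rather than $e^{-\Omega(\beta)}\cdot\mathrm{poly}(\log n)$. The correct choice is to start bucketing at the level where $\mathbb{E}[|\bI_j|]\gtrsim\beta$, which is exactly what makes the Chernoff bounds meaningful and what makes the $\ell_2$ sum $O(c^2\|x\|_1^2/\beta)$ rather than larger by a logarithmic factor; verifying this alignment is the only nontrivial bookkeeping step.
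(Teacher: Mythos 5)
Your proposal is correct and follows essentially the same route as the paper: the dyadic level sets $\bI_j$, the per-level Chernoff bound starting at $j\geq\lceil\log_2(\beta/8)\rceil$, the separate heavy-coordinate event, and the final geometric/arithmetic summation are exactly the paper's proof of Lemma~\ref{lem:sizebound}, which the corollary then invokes by conditioning on the same two events. The only caveat is bookkeeping of constants: your argument yields roughly $9\alpha\lceil\log_2 n\rceil\|x\|_1$ (eight per level plus the sub-$\|x\|_1/n$ tail) rather than the stated $4\alpha\log s\|x\|_1$, matching the lemma's constants rather than the corollary's looser phrasing.
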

\begin{proof}
	We define the level sets $I_k$ as in Lemma \ref{lem:sizebound}, and similarly condition on $\mathcal{E}_1 \cup \mathcal{E}_2$ which hold together with probability at least $1-3 e^{-\beta/4}$. Next, we note that for all $k \geq \log(s)$, the contribution of $z_i$ for $i \in I_k$ is at most $s \cdot (\alpha \frac{\|x\|_1}{s}) = \alpha \|x||_1$. Given this, the corollary follows noting that
	\[	\|z_{-\beta}\|_1 \leq \sum_{i=\log(\beta/4)}^{\log s - 1}|I_k|\frac{\alpha \|x\|_1}{2^{k}} 	 + \alpha \|x\|_1 \leq  4\sum_{i=\log(\beta/4)}^{\log s}\alpha\|x\|_1	\leq \log s \alpha \|x\|_1  \]
\end{proof}}


\ignore{We now describe how these properties will be useful to a sampling algorithm.
Let $x \in \mathbb{R}^n$ be any vector. We can generate i.i.d. exponential $(\bt_1,\dots,\bt_n)$, each with rate $1$, and construct the random variable $\bz_i = x_i/\bt_i$. Then $z \in \R^n$ is a \textit{scaled} vector, and by Fact \ref{fact:scale}, the variable $|\bz_i|^{-1} = \bt_i/|x_i|$ is exponentially distributed with rate $\lambda_i = |x_i|$. Now let $(D(1),\dots,D(n))$ be the anti-rank vector of the exponentials $(\bt_1/|x_1|, \dots, \bt_n/|x_n|)$. By Fact \ref{fact:min}, we have 
\[\pr{D(1) = i} = \pr{i = \arg \min\{|z_1|^{-1},\dots,|z_n|^{-1}	\}} =  \pr{i = \arg \max\{|z_1|,\dots,|z_n|	\} } = \frac{\lambda_i}{\sum_j \lambda_j} = \frac{|x_i|}{\|x\|_1}\] In other words, the probability that $|z_i| = \arg \max_j \{|z_j|\}$ is precisely $|x_i| / \|x\|_1$, thus it will suffice to return $i\in [n]$ with $|z_i|$ maximum.

Now note $|z_{D(1)}|  \geq| z_{D(2)} |\geq \dots \geq |z_{D(n)}|$, and in this scenario the statement of Fact \ref{fact:order} becomes:
\[	z_{D(k)}=   \big(\sum_{i=1}^k \frac{E_i}{ \sum_{j=i}^{n} |x_{D(j)}|^p }\big)^{-1/p} 	\]
Where $E_i$'s are i.i.d. exponential random variables with mean $1$, and are independent of the anti-rank vector $(D(1),\dots,D(n))$. }


\subsection{Geometric Streaming and Linear Sketching}\label{sec:linearsketching}
All of the streaming algorithms in this paper will be \textit{linear sketches}. We begin by formalizing what a linear sketch is in the context of geometric streaming algorithms. Recall that in the geometric streaming model, we receive a stream of updates $(p_1,\sigma_1),(p_2,\sigma_2),\dots,(p_m,\sigma_m)$, where $p_i \in \{0,1\}^d$ and $\sigma_i \in \{1,-1\}$ indicates either an insertion or deletion of the point $p_i$ from the active dataset $X \subset \{0,1\}^d$. For the case of $\EMD$, we also need to specify whether a given point $p_i$ is being inserted or deleted from $A$ or $B$, in which case the stream consists of updates of the form $(p_i,\sigma_i,\theta_i)$ where $\theta_i \in \{A,B\}$ indicates which of the two sets the update applies to.

We now observe that the above geometric streaming model is just a special case of the standard turnstile streaming model \cite{babcock2002models,muthukrishnan2005data}, 
 which consists of a sequence of insertions and deletions to the coordinates of a high-dimensional vector $f \in \R^{N}$. Specifically, in the standard streaming model, the stream consists of updates $(i_1,\Delta_1),\dots,(i_m,\Delta_m)$, where $i_t \in [N]$ is a coordinate and $\Delta_t \in \{1,-1\}$. Here, the update $(i_t,\Delta_t)$ causes the change $f_{i_t}\leftarrow f_{i_t} + \Delta_t$. Thus, by simply setting $f=f_X \in \R^{2^d}$ to be the indicator vector (with multiplicity) of the multi-set $X$ for the case of $\MST$, or setting $f=f_{A,B} \in \R^{2 \cdot2^d}$ to be the indicator vector (with multiplicity) of the two sets $A$ and $B$ stacked together, then a stream of insertions and deletions of points $p$ in the geometric model coincides to insertions and deletions to the coordinates of $f$. Given this connection, we can now define a linear sketch.

 \begin{definition}\label{def:linearsketch}
 	Given a stream of updates to a vector $f \in \R^N$, a (one-pass) \textit{linear sketch} generates a random matrix $\bS \in \R^{k \times N}$, and stores only the matrix-vector product $\bS f$ and the matrix $\bS$. At the end of the stream, it answers a query based on $\bS f,\bS$. A two-pass \textit{linear sketch} generates random matrix $\bS^1$, and on the first pass stores only $\bS^1 f, \bS^1$. After the first pass over the data, it generates a random matrix $\bS^2$ (possibly depending on $\bS^1 f, \bS^1$), and after the second pass it answers a query based on $\bS^1 f,\bS^1,\bS^2 f, \bS^2$. 
 	
 	The space used by a one-pass (resp. two-pass) linear sketch is the space required to store $\bS f$ (resp. $\bS^1 f,\bS^2 f$.) 	
 	\end{definition}

For simplicity, in the entirety of Sections \ref{sec:EMDSketch} and \ref{sec:MSTSketch}, we will defer consideration of the \textit{bit complexity} required to store entries of the sketches $\bS f$, and instead focus on bounding the dimension of the sketch $\bS f$ --- for instance, in both sections we will often work with real-valued random variables (with unbounded bit complexity). Then in Appendix \ref{app:formal-models}, we will handle the issue of bit-complexity, by demonstrating that we can generating all random variables to $\polylog(n)$-bits of precision. This will result in the space of the algorithm being within a factor of $\polylog(n)$ of the dimension of the sketch. 
 
 Notice in the definition of a linear sketch, the space complexity did not depend on the space required to store the matrix $\bS$, while the output the algorithm is allowed to depend on $\bS$. This coincides with what is known as the \textit{random oracle model} of streaming, or the \textit{public coin} model of communication complexity. Formally:
 
  \begin{definition}\label{def:randoracle}
 	In the random oracle streaming model, the algorithm is given access to an arbitrarily long string of random bits which do not count against the space complexity. In particular, the space of a one-pass (resp. two-pass) linear sketching algorithm in the random oracle model is just the space required to store $\bS f$ (resp. $\bS^1 f,\bS^2 f$.) 	
 \end{definition}

Working in the random oracle model of streaming is common in the sketching literature, as nearly all lower bounds for streaming are derived from the public coin model of communication complexity (and therefore apply to the random oracle model). Moreover, in applications of sketching to distributed computation, the assumption of a random oracle is often founded. 

On the other hand, we demonstrate that the random oracle model is not required for our algorithms. In particular, we show how the assumption of a random oracle can be removed in Appendix \ref{app:formal-models}, albeit at the cost of an additive $d\cdot \polylog n$ in the space of the algorithm. Note that even to read a single update in the stream requires $\Omega(d)$-bits of space, so any algorithm requires $\Omega(d)$ \textit{working space} (see remark below). Thus, this additive $d \cdot \polylog n$ in the space resulting from derandomization is comparable, up to $\log$ factors, to the space required to even read an update. 

Since the issues of bit complexity and derandomization are handled in Appendix \ref{app:formal-models}, for the remainder of Sections \ref{sec:EMDSketch} and \ref{sec:MSTSketch} we assume the random oracle model, and analyze space in terms of the dimension of the sketch




\begin{remark}
	When discussing the space complexity of streaming algorithms, there are two separate notions: \textit{working space} and \textit{intrinsic space}. Oftentimes these two notations of space are the same (up to constant factors) for streaming algorithms, however for our purposes it will be useful to distinguish them, since a $d$-dimensional point itself requires $\Omega(d)$-space to specify. The \textit{working space} of a streaming algorithm $\mathcal{A}$ is the space required to store an update $(i_t,\Delta_t)$ in the stream and process it. The \textit{intrinsic space} is the space which the algorithm must store \textit{between} updates. The intrinsic space coincides with the size of a message in the area of communication complexity, where two parties hold a fraction of the input, and must exchange messages to approximation a function of the entire input. The intrinsic space also coincides with the definition of space given in Definition \ref{def:linearsketch}. 
In this paper, we focus on intrinsic space, which we will hereafter just refer to as the \textit{space} of the algorithm.
\end{remark}

\subsection{Count-Sketch, $\ell_1$-sketch, and $\ell_1$-sampling}\label{sec:usefulsketches}

We now introduce several useful sketches from the streaming literature. We remark that all the sketches below are already derandomized (their do not require the random oracle model), and the space complexity in bits is stated within the theorems. 

\begin{theorem}[Count-Sketch~\cite{charikar2002finding}]\label{thm:countsketch}	
Fix $n \in \N$ and $\eps \in (0,1)$.
There is a $O(\log n/\eps^2)$-bits of space linear sketch that, given any input vector  $x\in \R^n$,
  outputs $\widehat{\bx}\in \R^n$ such that 
 $
  \| \widehat{\bx} - x\|_{\infty} \leq \eps \| x_{-1/\eps^2}\|_2 
$  
with probability at least $1-1/\poly(n)$.

\ignore{
given only $Sx \in \R^k$ and $S$, one can recovery a vector $y \in \R^n$ such that with probability $1-1/\poly(n)$, we have
	\[	\|x - y\|_\infty \leq \eps \|x_{-1/\eps^2}\|_2	\]
	and for a vector $x$ and value $k$, $x_{- k}$ denotes the result of setting equal to zero the $k$ largest coordinates (in absolute value) of $x$ equal to zero. Moreover, the matrix $S$ can be stored in $O(k \log )$ bits of space. }
\end{theorem}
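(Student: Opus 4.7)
The plan is to recover the standard Count--Sketch construction of Charikar, Chen, and Farach-Colton. Set $B = \Theta(1/\eps^2)$ buckets and $R = \Theta(\log n)$ repetitions, and sample, independently for each $r \in [R]$, a pairwise-independent hash function $\bh_r \colon [n] \to [B]$ and a pairwise-independent sign function $\bs_r \colon [n] \to \{-1,+1\}$. The sketch maintains the $R \times B$ array of counters $\bC[r][b] = \sum_{i : \bh_r(i) = b} \bs_r(i)\, x_i$; this is a linear function of $x$, so turnstile updates are handled trivially. To recover coordinate $i$, output $\widehat{\bx}_i = \mathrm{median}_{r \in [R]}\, \bs_r(i)\, \bC[r][\bh_r(i)]$.

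For the per-row analysis, fix $i \in [n]$ and a single $r$, and split the coordinates into the set $H$ of the $1/\eps^2$ largest-magnitude entries and its complement $L$, so that $\|x_L\|_2 = \|x_{-1/\eps^2}\|_2$. Then
\[ \bs_r(i)\, \bC[r][\bh_r(i)] - x_i \;=\; \sum_{j \in H \setminus \{i\}} \bs_r(i)\bs_r(j)\, x_j\, \mathbf{1}[\bh_r(j) = \bh_r(i)] \;+\; \sum_{j \in L \setminus \{i\}} \bs_r(i)\bs_r(j)\, x_j\, \mathbf{1}[\bh_r(j) = \bh_r(i)]. \]
By pairwise independence of $\bh_r$, each $j \neq i$ collides with $i$ with probability $1/B$. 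Choosing the constant in $B$ large enough, the expected number of heavy collisions is bounded by a small constant, so by Markov's inequality no heavy coordinate collides with $i$ with probability at least (say) $9/10$. Conditioned on $\bh_r$, the light contribution has mean zero (over the signs $\bs_r$) and variance $\sum_{j \in L \setminus \{i\}} x_j^2\, \mathbf{1}[\bh_r(j) = \bh_r(i)]$, whose expectation over $\bh_r$ is at most $\|x_L\|_2^2 / B \le \eps^2 \|x_{-1/\eps^2}\|_2^2$. Chebyshev's inequality then bounds the light contribution by $O(\eps)\, \|x_{-1/\eps^2}\|_2$ with, say, probability $9/10$, so in each row the estimator deviates from $x_i$ by at most $\eps\, \|x_{-1/\eps^2}\|_2$ with probability at least some constant $\rho > 1/2$.

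Since the rows are independent, a Chernoff bound shows that a majority of the $R = \Theta(\log n)$ row-estimates are good, and hence the median is within $\eps\, \|x_{-1/\eps^2}\|_2$ of $x_i$, with probability $1 - 1/\poly(n)$. A union bound over the $n$ choices of $i$ gives $\|\widehat{\bx} - x\|_\infty \le \eps\, \|x_{-1/\eps^2}\|_2$ with the same probability. The sketch stores $R \cdot B = O(\log n / \eps^2)$ counters; entries remain bounded by $\poly(n)$ under the standard assumption that the input stream has polynomially bounded updates, giving the stated space bound (with finer bit-complexity issues deferred to Appendix~\ref{app:formal-models}, as the excerpt promises). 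The only genuinely non-routine step is the decomposition into $H$ and $L$, which is precisely what upgrades the bound from $\eps \|x\|_2$ to the tail norm $\eps \|x_{-1/\eps^2}\|_2$.
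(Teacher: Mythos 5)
Theorem~\ref{thm:countsketch} is stated in the paper as a black-box citation to Charikar, Chen, and Farach-Colton, with no proof given, so there is nothing internal to compare against; your reconstruction is the standard Count-Sketch argument (pairwise-independent buckets and signs, heavy/light split against the top $1/\eps^2$ coordinates, per-row Markov plus Chebyshev, median over $\Theta(\log n)$ rows, union bound over coordinates) and it is correct. The one caveat, which you already flag, is that the advertised ``$O(\log n/\eps^2)$ bits'' really counts $O(\log n/\eps^2)$ counters (each of $O(\log n)$ bits) plus the seeds for the hash functions; this looseness is in the paper's statement, not in your proof, and is consistent with how the paper defers bit-complexity to its appendix.
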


\begin{theorem}[$\ell_1$-sketch~\cite{I06,kane2010exact}] \label{lem:cauchy}
Fix $n \in \N$ and $\eps, \delta \in (0, 1)$, and $s = O(\log(1/\delta)/\eps^2)$. Let 
$\bOmega$ be an $s \times n$ matrix with independent Cauchy random variables $\bOmega_{i,j}\sim\calC$.
For any $x\in \R^n$, let
$$
 {\bfeta}=\frac{\median\{ (\bOmega x)_j: j\in [s]\}}{\median(|\calC|)}.
$$
Then we have $|\bfeta-\|x\|_1|\le \eps\|x\|_1$ with probability at least $1-\delta$
  over the randomness of $\bOmega$. Moreover, the matrix $\bOmega$ can be generated with limited independence so that it can be stored in $O(\log(1/\delta) \log(n) /\eps^2)$-bits of space, and so that the above guarantees still hold.
%
\ignore{	
	Fix $\eps,\delta \geq $. Let $x \in \R^n$ be any vector, and let $\CC \in \R^{k \times n}$ be a matrix of i.i.d. Cauchy random variables, where $k = \Theta(\eps^{-2} \log \delta^{-1})$. Let $y = \CC x$. Then with probability $1-\delta$, we have $\text{median}_{i \in [k]} \frac{|y_i|}{\texttt{median}(|\mathcal{D}_1|)}  = (1 \pm \eps)\|x\|_1$. 	}
\end{theorem}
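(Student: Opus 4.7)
The plan is to exploit the $1$-stability of the standard Cauchy distribution together with a concentration bound on the sample median. I will read $\bfeta$ as using $\median\{|(\bOmega x)_j|:j\in[s]\}$ in the numerator (a symmetric Cauchy has median $0$, so this is the only sensible interpretation). The first step is to invoke $1$-stability: for each fixed row $j\in[s]$, the linear combination $(\bOmega x)_j=\sum_{i\in[n]}\bOmega_{j,i}\,x_i$ is distributed as $\|x\|_1\cdot \boldsymbol{\xi}_j$ with $\boldsymbol{\xi}_j\sim\calC$, and independence of $\bOmega$'s entries across rows implies that $\boldsymbol{\xi}_1,\dots,\boldsymbol{\xi}_s$ are i.i.d. Hence $|(\bOmega x)_j|/\|x\|_1$ is an i.i.d.\ sample from $|\calC|$, and the numerator of $\bfeta$ equals $\|x\|_1$ times the sample median of $s$ i.i.d.\ copies of $|\calC|$.

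The second step is to show that this sample median concentrates multiplicatively around $M:=\median(|\calC|)$. Let $F$ denote the CDF of $|\calC|$; since $F$ is continuously differentiable at $M$ with strictly positive density $F'(M)>0$ (indeed $M=1$ and $F'(1)=1/\pi$), there exists a universal constant $c_0>0$ so that, for $\eps$ below a constant,
$$
\Prx\bigl[|\calC|\le (1-\eps)M\bigr]\le \tfrac{1}{2}-c_0\eps \quad\text{and}\quad \Prx\bigl[|\calC|\ge (1+\eps)M\bigr]\le \tfrac{1}{2}-c_0\eps.
$$
The empirical median of $s$ samples falls outside $[(1-\eps)M,(1+\eps)M]$ only if at least $s/2$ of the samples lie in one of these two tail events, each of which has probability bounded away from $1/2$ by $c_0\eps$; a standard multiplicative Chernoff bound then shows the median deviates by more than a $(1\pm\eps)$ factor with probability at most $2\exp(-\Omega(\eps^2 s))$. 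Choosing $s=\Theta(\log(1/\delta)/\eps^2)$ with a large enough constant makes this at most $\delta$, and dividing by $M=\median(|\calC|)$ yields $|\bfeta-\|x\|_1|\le \eps\|x\|_1$.

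For the limited-independence claim, I would appeal to the standard derandomization of $p$-stable sketches: generate each entry as $\bOmega_{j,i}=\tan(\pi(\bU_{j,i}-1/2))$ where $\bU_{j,i}$ is uniform on $[0,1]$ truncated to $O(\log n)$ bits, and use only $O(\log n)$-wise independence within each row (and full independence across the $s$ rows). This preserves the distribution of the median statistic up to $1/\poly(n)$ total variation, yielding total seed length $O(\log(1/\delta)\log(n)/\eps^2)$ bits.

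The main obstacle is this last derandomization step. Because $\calC$ has no finite first or second moment, the usual moment-method arguments for showing that limited independence fools linear forms $\sum_i\bOmega_{j,i}\,x_i$ break down. The workaround, following Indyk and subsequent works (e.g.\ Kane--Nelson--Woodruff), is to truncate each $\bOmega_{j,i}$ at a $\poly(n)$ scale—arguing separately that with high probability no sample exceeds this threshold—and then apply a low-independence fooling argument to the resulting bounded variables, at the (negligible) cost of an additive $1/\poly(n)$ error in the distribution of the median.
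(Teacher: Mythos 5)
The paper does not prove this theorem at all --- it is imported as a black-box tool from \cite{I06,kane2010exact} --- so there is no internal proof to compare against. Your outline is the standard and correct argument for the fully independent case: $1$-stability gives that each $|(\bOmega x)_j|/\|x\|_1$ is an i.i.d.\ draw from $|\calC|$ (and you are right that the numerator must be read as a median of absolute values, since the median of signed Cauchys is $0$); the constants you quote are right ($\median(|\calC|)=1$, density $1/\pi$ there); and the Chernoff argument on the two tail events $\{|\calC|\le(1-\eps)M\}$, $\{|\calC|\ge(1+\eps)M\}$, each bounded away from $1/2$ by $\Theta(\eps)$, gives failure probability $2\exp(-\Omega(\eps^2 s))$, so $s=\Theta(\log(1/\delta)/\eps^2)$ suffices. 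The only thin spot is the limited-independence claim: you correctly identify that the absence of moments is the obstacle, but the workaround you describe (truncate at $\poly(n)$ and run a moment-method fooling argument on the bounded variables) is not quite how \cite{kane2010exact} actually establishes that $O(\log n)$-wise independence per row suffices --- their argument goes through FT-mollification of the relevant indicator/threshold functions rather than a direct moment bound, and Indyk's original route instead pays an extra $\log$ factor via Nisan's generator. Since the paper itself simply cites this derandomization, deferring to those references there is consistent with the paper's treatment, but if you wanted a self-contained proof that last step would need substantially more work than your sketch suggests.
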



\begin{theorem}[Perfect $\ell_1$-sampling \cite{JW18}]\label{thm:ell-1-sa}
For $m \in \N$, let $c > 1$ be an arbitrarily large constant and $t = O(\log^2 m)$. There exists a distribution $\calS_a^{1}(m, t)$ supported on pairs $(\bS, \Alg_{\bS}^{1})$ where $\bS$ is an $t \times m$ matrix and $\Alg_{\bS}^{1}$ is an algorithm which receives as input a vector $y \in \R^t$ and outputs a failure symbol $\bot$ with probability at most $1/3$, otherwise it returns an index $j \in [m]$. For any $x \in \R^m$ and $j \in [m]$,
\begin{align*}
\left| \Prx_{(\bS, \Alg_\bS^{1}) \sim \calS_{a}^1(m, t)}\left[ \Alg_{\bS}^{1}(\bS x) =j \mid \Alg_{\bS}^1(\bS x) \neq \bot \right] - \frac{|x_j|}{\|x\|_1} \right| \leq \frac{1}{m^c}.
\end{align*}
Lastly, the matrix $\bS$ be generated with limited independence so that it can be stored in $O(\log^2 n(\log \log n)^2 )$ bits of space, and so that the above guarantees still hold.
\end{theorem}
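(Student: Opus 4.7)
The plan is to reduce perfect $\ell_1$-sampling to approximate max-coordinate recovery through the exponential scaling of Fact~\ref{fact:order}. Using the random oracle, draw i.i.d.\ $\bt_1,\ldots,\bt_m \sim \Exp(1)$, conceptually form $\bz_i = x_i/\bt_i$, and let $\bi^* := \arg\max_i |\bz_i|$; by Fact~\ref{fact:order} the distribution of $\bi^*$ is exactly $\Pr[\bi^* = i] = |x_i|/\|x\|_1$, so it suffices to sketch $x$ in a way that lets us (with sufficient confidence) recover $\bi^*$ or abort with $\bot$. Crucially, since $\bz = \mathrm{diag}(1/\bt_i)\,x$ and the $\bt_i$'s are frozen before the stream, every linear sketch of $\bz$ pulls back to a linear sketch of $x$ of the same dimension, so Count-Sketch and $\ell_1$-sketches of $\bz$ can be simulated from stream updates to $x$.

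The sketch $\bS x$ bundles three components, each of dimension $O(\log^2 m)$: a Count-Sketch of $\bz$ (Theorem~\ref{thm:countsketch}) with $\eps$ chosen so that $1/\eps^2 = \Theta(\log m)$, producing an estimate $\widehat{\bz}$; an $\ell_1$-sketch of $x$ (Theorem~\ref{lem:cauchy}) producing $\widehat{F}_1 = (1\pm\eps)\|x\|_1$; and an $\ell_1$-sketch of $\bz$ for a tail-size certificate. The reconstruction algorithm sets $\widehat{\bi} := \arg\max_i |\widehat{\bz}_i|$ and returns $\widehat{\bi}$ only when a \emph{gap certificate} holds---namely, $|\widehat{\bz}_{\widehat{\bi}}|$ exceeds $\tau\widehat{F}_1$ for a constant threshold $\tau$ and is separated from every other $|\widehat{\bz}_i|$ by more than twice the estimated Count-Sketch error---otherwise it outputs $\bot$.

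For the analysis, the tail bound on $\bz$ comes from applying Lemma~\ref{lem:sizebound} with $c = 1$ (since $\Pr[1/\bt_i \ge y] \le 1/y$) and $\beta = 1/\eps^2$, yielding $\|\bz_{-1/\eps^2}\|_2 \le O(\eps\|x\|_1)$ and hence Count-Sketch additive error $O(\eps^2\|x\|_1)$. Lemma~\ref{lem:gap_exponentials} with a suitable $\gamma$ then guarantees both $|\bz_{\bi^*}| \ge \gamma\|x\|_1$ and a $(1+\gamma)$-multiplicative gap over the runner-up. Whenever the sketch and gap events hold and the certificate fires, the true gap exceeds the Count-Sketch error and $\widehat{\bi} = \bi^*$ deterministically, so the output is either $\bi^*$ or $\bot$---never a wrong index. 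Since the certificate depends only on magnitudes in a way symmetric in the identity of $\bi^*$, the conditional distribution of a non-$\bot$ output matches $|x_i|/\|x\|_1$ up to the total probability of the bad events.

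The main obstacle is achieving an additive slack of $m^{-c}$ while keeping $t = O(\log^2 m)$. The key is that the certificate is calibrated to the \emph{sketched} gap rather than to the ideal exponential gap of Lemma~\ref{lem:gap_exponentials}: when the certificate fires, $\widehat{\bi} = \bi^*$ holds deterministically whenever the Count-Sketch and $\ell_1$-sketches are accurate, so only the sketch failure probabilities---not the gap-lemma failure, which would otherwise force $\gamma = m^{-c}$ and Count-Sketch width $m^{\Omega(c)}$---contribute to the conditional bias. Both sketch failures can be boosted from $m^{-\Omega(1)}$ to $m^{-c}$ by an additional $O(c)$ factor in the number of hash repetitions, absorbed into the constant in $t = O(\log^2 m)$. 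Finally, the limited-independence claim follows by drawing the $\bt_i$'s from a Nisan-style PRG of seed length $O(\log^2 n(\log\log n)^2)$ and replacing the hashes of Count-Sketch and the Cauchy matrix of Theorem~\ref{lem:cauchy} with their $k$-wise counterparts; the analysis uses only constantly many moments and top-coordinate comparisons, both of which are preserved under such fooling.
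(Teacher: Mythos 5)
This theorem is cited from [JW18] without proof in the present paper, so I am comparing your proposal against that reference's argument. Your high-level skeleton is the right one: exponential rescaling $\bz_i = x_i/\bt_i$ with the argmax $\bi^*$ distributed as $|x_i|/\|x\|_1$ via Fact~\ref{fact:order}, Count-Sketch recovery of the argmax with tail control from Lemma~\ref{lem:sizebound}, an $\ell_1$-sketch to calibrate, and a gap certificate that either outputs the certified argmax or aborts. You also correctly diagnose the central obstacle: one cannot afford to set $\gamma = m^{-c}$ in Lemma~\ref{lem:gap_exponentials}, so the $m^{-c}$ bias cannot come from a union bound against gap failure.

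Your resolution of that obstacle, however, contains a genuine gap. The claim that the certificate's firing probability is ``symmetric in the identity of $\bi^*$'' is false for the raw vector $x$. By Fact~\ref{fact:order}, conditioned on $\bi^* = D(1) = i$, the top value $|\bz_{D(1)}| = \|x\|_1/\bE_1$ has a universal distribution, but the runner-up is $|\bz_{D(2)}| = 1/\bigl(\bE_1/\|x\|_1 + \bE_2/(\|x\|_1 - |x_i|)\bigr)$, so the gap ratio
$|\bz_{D(1)}|/|\bz_{D(2)}| = 1 + \bE_2\|x\|_1/\bigl(\bE_1(\|x\|_1 - |x_i|)\bigr)$
depends on $i$ through the factor $\|x\|_1 - |x_i|$. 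Hence $q_i := \Pr[\text{certificate fires} \mid \bi^* = i]$ is not uniform: heavy coordinates (say $|x_i| = \Theta(\|x\|_1)$) enjoy a constant-factor larger gap and are certified strictly more often, even if every sketch succeeds. The conditional output law is then proportional to $|x_i| q_i$ rather than $|x_i|$, and the non-uniformity of $q_i$ produces a bias that can be a constant, not $m^{-c}$. The missing ingredient from [JW18] is the duplication step: virtually duplicate each coordinate $m^{c'}$ times so every index carries mass at most $m^{-c'}\|x\|_1$; then $\|x\|_1 - |x_j| = (1 \pm m^{-c'})\|x\|_1$ for every duplicated index $j$, the gap (and therefore the certificate) distribution becomes independent of the conditioning on $\bi^*$ up to $O(m^{-c'})$, and mapping the sampled duplicated index back to $[m]$ yields the perfect $m^{-c}$ guarantee. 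This costs no extra sketch dimension since Count-Sketch width depends on $\eps$ and not the universe size, and the depth grows only by the constant factor $1+c'$. Without this step, your argument gives a constant-factor relative guarantee in the vein of earlier $\ell_1$-samplers, not perfect sampling; your derandomization sketch would also have to be run over the duplicated universe.
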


\section{Linear Sketches for EMD}\label{sec:EMDSketch}

We give linear sketches for EMD in this section and prove the following theorems:

\begin{theorem}\label{thm:one-pass}
 	For $d,n \in \N$, there is a  $O(1/\eps)\cdot \polylog(n,d)$-space linear sketching algorithm that, given multi-sets $A,B \subset \{0,1\}^d$ with $|A| = |B| = n$, outputs a number $\bfeta$ such that
 	\begin{equation}\label{mainmainmain}
\EMD(A,B) \le {\bfeta}\le \tilde{O}(\log n)\cdot \EMD(A,B) +\eps nd
\end{equation}
	with probability at least $2/3$. 
 \end{theorem}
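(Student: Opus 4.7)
The plan is to combine the classical randomized quadtree embedding of Hamming $\EMD$ into $\ell_1$ with a per-level linear sketch for $\ell_1$-norm estimation. First I sample a uniformly random permutation $\sigma$ of $[d]$, and for each level $\ell \in \{0,1,\dots,L\}$ with $L = \lceil \log d \rceil$, I define a partition $\mathcal{P}_\ell$ of $\{0,1\}^d$ whose cells are the equivalence classes of agreement on the last $d - 2^\ell$ coordinates of $\sigma$, so that cells at level $\ell$ have Hamming diameter at most $2^\ell$. For each $c \in \mathcal{P}_\ell$ I set $f^{(\ell)}_c = |A \cap c| - |B \cap c|$, set $w_\ell = 2^\ell$, and let $V = \frac{1}{2}\sum_{\ell=0}^L w_\ell \|f^{(\ell)}\|_1$; this equals the EMD of $A$ and $B$ under the tree metric induced by the hierarchy. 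Standard quadtree-embedding arguments then yield (i) the deterministic bound $V \ge \EMD(A,B)$, since the tree distance dominates Hamming distance on every pair, and (ii) $\mathbb{E}_\sigma[V] \le \tilde{O}(\log n) \cdot \EMD(A,B)$, which by Markov's inequality (and taking the median over $O(1)$ independent trees if needed) holds with probability at least $2/3$.

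Since every stream update to $A$ or $B$ changes $f^{(\ell)}$ by $\pm 1$ in exactly one cell per level, the map from the input stream to $(f^{(\ell)})_\ell$ is linear, so I can maintain a linear sketch in one pass in the turnstile model. For each level $\ell$ I will maintain an $\ell_1$-sketch of $f^{(\ell)}$ (Theorem~\ref{lem:cauchy}) with a carefully chosen multiplicative precision $\eps_\ell$, and output $\bfeta = \frac{1}{2}\sum_\ell w_\ell \cdot \hat{\mu}_\ell$ together with a small upward correction that ensures $\bfeta \ge V \ge \EMD(A,B)$. The per-level multiplicative errors will aggregate as $\bfeta \le (1 + O(1))\,V + E_{\mathrm{add}}$ for an additive $E_{\mathrm{add}}$ that I will control to at most $\eps n d$; combining with the $\tilde{O}(\log n)$ upper bound on $V$ then yields $\bfeta \le \tilde{O}(\log n) \cdot \EMD(A,B) + \eps n d$.

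The main obstacle will be fitting into $O(1/\eps) \cdot \polylog(n, d)$ space rather than the $\polylog(n,d)/\eps^2$ one gets from a uniform allocation $\eps_\ell = \Theta(\eps)$. The key is to exploit the additive $\eps n d$ slack, which is large compared to the trivial upper bound $V \le O(n d)$ (using $\|f^{(\ell)}\|_1 \le 2n$ and $\sum_\ell w_\ell = O(d)$), and therefore represents only constant-factor accuracy on the worst-case $V$. Concretely, for any level $\ell$ whose estimated contribution $w_\ell \hat{\mu}_\ell$ falls below the per-level budget $\eps n d / L$, I replace it by that budget (paying only in the additive slack) and use a constant-precision $\ell_1$-sketch of size $O(\log n)$; for the remaining few shallow levels, I use the exponential-scaling primitives of Lemmas~\ref{lem:sum_exponentials}, \ref{lem:gap_exponentials}, and \ref{lem:sizebound} in conjunction with Count-Sketch (Theorem~\ref{thm:countsketch}) to pick up the $O(1/\eps)$ cells that are heavy relative to the additive budget, keeping the total sketch dimension at $O(1/\eps) \cdot \polylog(n,d)$. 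A secondary difficulty will be preserving the deterministic lower bound $\bfeta \ge \EMD(A,B)$ under sketch error, which I handle by always rounding estimates upward and absorbing the resulting constant-factor overcount into the $\tilde{O}(\log n)$ factor.
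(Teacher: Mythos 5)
There is a genuine gap, and it is the central one: your estimator $V=\tfrac12\sum_\ell w_\ell\|f^{(\ell)}\|_1$ with \emph{fixed} level weights $w_\ell$ does not achieve the claimed $\tilde O(\log n)$ distortion, and asserting both (i) $V\ge \EMD(A,B)$ and (ii) $\Ex_\sigma[V]\le \tilde O(\log n)\cdot\EMD(A,B)$ as "standard quadtree-embedding arguments" assumes away the paper's main contribution. For the lower bound (i) you must take $w_\ell$ at least the diameter of a level-$\ell$ cell, which for the coordinate-sampling hierarchy is $\Theta((d/2^i)\log n)$ with high probability; the matching upper bound one can prove for the fixed-weight sum is $\sum_i \Delta_i\, d/2^i\le \tilde O(\log n)\cdot\EMD(A,B)$ (the second half of (\ref{quadtreeinequality})). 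Chaining these loses an extra $\log$, giving $\tilde O(\log^2 n)$ (or the classical $O(\log d)$-type bounds), not $\tilde O(\log n)$. The paper gets $\tilde O(\log n)$ only by replacing the fixed weight with the \emph{data-dependent} weight $\Ex_{\cc\sim C_{\pi(v)},\cc'\sim C_v}[\|\cc-\cc'\|_1]$, i.e., by estimating $\calI_i=\Delta_i\cdot\Ex_{\vv\sim\calV_i}\big[\Ex_{\cc,\cc'}[\|\cc-\cc'\|_1]\big]$. Estimating this in one pass is the actual difficulty: one must importance-sample a node $\vv$ with probability proportional to $\big||A_\vv|-|B_\vv|\big|$ \emph{and simultaneously} estimate the average distance between the points landing in $C_\vv$ and $C_{\pi(\vv)}$ (via random characters $\chi_S$, $S\sim\calS_i$); this is what the exponential-variable precision sampling, the events $\calE_1$--$\calE_4$, and the three coupled sketches $\LS_1,\LS_2,\LS_3$ are for. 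It is also exactly where the $O(1/\eps)$ space and the $+\eps nd$ additive error come from: when $\Delta_i$ is below $\eps n/\log^3 n$ the ratio estimates for $p_{u^*,v^*,S}$ cannot be controlled and the algorithm outputs $0$, incurring additive error $\eps nd/2^i$. If a per-level $\ell_1$-sketch of $f^{(\ell)}$ sufficed, the theorem would need neither the $1/\eps$ factor nor the additive term --- a sign that your target quantity is not the one the theorem is about.

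Two secondary issues. First, your hierarchy is inverted: if level-$\ell$ cells are equivalence classes of agreement on the last $d-2^\ell$ coordinates of a single permutation, then two points at Hamming distance $k$ lie in a common level-$\ell$ cell only if all $k$ differing coordinates fall in a prefix of length $2^\ell$, which has probability roughly $(2^\ell/d)^k$; so any pair with $k\ge 2$ separates only near the root and receives tree distance $\Omega(d)$. The correct construction partitions by agreement on a random coordinate sample that \emph{grows} with depth, so that the separation probability at depth $i$ is about $\min(1,2^ik/d)$ while the diameter shrinks like $\tilde O(d/2^i)$. Second, the space-saving plan ("use the exponential-scaling primitives with Count-Sketch to pick up the $O(1/\eps)$ heavy cells") is not connected to any concrete estimator for your fixed-weight $V$ and, as noted, would be unnecessary for it; in the paper those tools are used for a different purpose (recovering the sampled nodes $u^*,v^*$ and the counts $|C_{u^*}|,|C_{u^*,S}|,|C_{v^*}|,|C_{v^*,S}|$ needed for $p_{u^*,v^*,S}$).
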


\begin{theorem}\label{thm:two-pass}
 	For $d,n \in \N$, there is a 2-round $\polylog(n, d)$-space linear sketching algorithm that, given multi-sets $A,B \subset \{0,1\}^d$ with $|A| = |B| = n$, outputs a number $\bfeta$ such that
 	\[	\EMD(A,B) \leq  \bfeta \leq \tilde{O}(\log n)\cdot \EMD(A,B) 	\]
 	with probability at least $2/3$. 
 \end{theorem}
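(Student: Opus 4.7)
The plan is to use the first pass to obtain an $\tilde{O}(\log n)$-approximation of $\EMD(A,B)$ that carries a small additive error, and then use the second pass to eliminate that additive error by running a refined sketch whose scale has been calibrated from the first-pass output.

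In the first pass we invoke Theorem \ref{thm:one-pass} with $\eps_1 = 1/\polylog(n,d)$, producing an estimate $\bfeta_1$ with $\EMD(A,B) \leq \bfeta_1 \leq \tilde{O}(\log n)\cdot\EMD(A,B) + \eps_1 nd$ in $\polylog(n,d)$ space. If $\bfeta_1 \geq C\eps_1 nd$ for a sufficiently large constant $C$, then the additive term is at most $\bfeta_1/C$, and a short rearrangement gives $\bfeta_1 \leq 2\tilde{O}(\log n)\cdot\EMD(A,B)$; in that case we output $\bfeta_1$ directly, and no second pass is needed. Otherwise $\bfeta_1 < C\eps_1 nd$, which certifies the upper bound $U := \bfeta_1 \leq nd/\polylog(n,d)$ on $\EMD(A,B)$.

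In the second pass we use $U$ to tune a sketch whose worst-case additive error is $U/\log n$ instead of $\eps_1 nd$. A natural choice is to rebuild the quadtree-based $\ell_1$-sketch that underlies Theorem \ref{thm:one-pass}, but truncate its fine levels so that cells of diameter below $U/(n\log n)$ are dropped: the total mass at scales finer than this contributes at most $n\cdot U/(n\log n) = U/\log n$ to EMD, which is absorbable into the $\tilde{O}(\log n)$ approximation factor whenever $\EMD(A,B) \geq U/\polylog(n,d)$. Only $O(\log(nd))$ quadtree levels survive above the threshold, and at each we run a constant-precision $\ell_1$-sketch from Theorem \ref{lem:cauchy} in $\polylog(n,d)$ bits; summing the per-level contributions with the standard quadtree weights produces $\bfeta_2$ lying between $\EMD(A,B)$ and $\tilde{O}(\log n)\cdot\EMD(A,B) + U/\log n$ with high probability, and the pass-2 sketch has total size $\polylog(n,d)$.

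The main obstacle is the remaining regime where the true $\EMD(A,B)$ is substantially smaller than the upper bound $U$ returned by pass 1, because there the $U/\log n$ truncation error would swamp the signal. To handle it, we attach to pass 1 two auxiliary sketches on $f_A-f_B$: an exact $\ell_1$-collision check via Theorem \ref{lem:cauchy} (which detects $\EMD(A,B) = 0$ since $A = B$ as multi-sets iff $\|f_A - f_B\|_1 = 0$), and a small number of $\ell_1$-samplers from Theorem \ref{thm:ell-1-sa} that locate heavy surplus points of $A \triangle B$. In pass 2 these witnesses either (i) directly certify $\EMD(A,B) = 0$, or (ii) let us restrict the truncated quadtree sketch to the relevant portion of $\{0,1\}^d$, reducing the effective additive error from $U/\log n$ to $O(\EMD(A,B))$. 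A union bound over the $O(1)$ failure events across the two passes then yields the claimed $\tilde{O}(\log n)$-approximation with probability at least $2/3$, using $\polylog(n,d)$ space in each pass.
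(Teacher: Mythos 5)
Your proposal has a genuine gap, and it stems from a misdiagnosis of where the additive error $\eps nd$ in Theorem \ref{thm:one-pass} comes from. That error is not caused by fine quadtree levels whose cell diameters are below some spatial threshold; it arises because, at \emph{every} depth $i$, the one-pass sketch must output $0$ whenever its estimate of $\Delta_i = \sum_{v\in V_i}||A_v|-|B_v||$ falls below roughly $\eps n/\log^3 n$ (see the reporting procedure for Lemma \ref{EMDlemma1}), and a level with such small $\Delta_i$ can still contribute up to about $\eps nd/2^i$ to the estimator --- a quantity dominated by the \emph{coarsest} levels, not the finest. So your pass-2 plan of "truncating cells of diameter below $U/(n\log n)$" does not touch the actual source of the additive error. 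More importantly, the regime you yourself flag as "the main obstacle" --- $\EMD(A,B)$ much smaller than the pass-1 bound $U$ --- is precisely the entire difficulty of the theorem, and your proposed remedy ("a small number of $\ell_1$-samplers that locate heavy surplus points" which "reduce the effective additive error to $O(\EMD(A,B))$") is an assertion, not an argument: a constant number of samples from $A\triangle B$ cannot restrict the quadtree to a region that certifies anything about the total transport cost, and no mechanism is given by which the error would scale down to $O(\EMD)$.

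The paper removes the additive error by a different and more direct use of the second round (Lemma \ref{MainEMDlemma22222}): in round one it runs the perfect $\ell_1$-sampler of Theorem \ref{thm:ell-1-sa} on the vector $(|A_v|-|B_v|)_v$ to draw a node $\vv$ essentially exactly from the distribution $\calV$, and in round two it simply queries the four exact counts $|C_{u}|,|C_{u,S}|,|C_{\vv}|,|C_{\vv,S}|$ for $u=\pi(\vv)$ needed to evaluate $p_{u,\vv,S}$. Because $\ell_1$-sampling returns an index with probability proportional to $|x_j|/\|x\|_1$ with no threshold on $\|x\|_1$, this estimator is (nearly) unbiased for $\E_{\vv\sim\calV}[p_{\pi(\vv),\vv,S}]$ no matter how small $\Delta_i$ is, which is exactly what eliminates the $\eps nd$ term. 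If you want to salvage a two-pass argument, this "sample in pass one, evaluate exactly in pass two" structure is the idea you are missing.
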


\subsection{Preparation}

Fix $d\in \N$, and fix a quadtree $T$ of depth $h=\lceil \log_2 d\rceil$.
For each $i \in [0:h]$, we let $V_i$ be the set of nodes of $T$ at depth $i$; for a node $v \in V_i$ for $i > 0$, we let $\pi(v)$ be the parent node of $v$ in $T$ (which is a node at depth $i-1$).
For each $i\in [0:h]$, $T$ induces a map $\sv_i:\{0,1\}^d\rightarrow V_i$:
  $\sv_0(a),\sv_1(a),\ldots,\sv_h(a)$ is the path of each point $a\in \{0,1\}^d$ going down the quadtree $T$.

Fix $n\in \N$. Let $A,B\subseteq \{0,1\}^d$ be the pair of input sets of size $n$ each. 
We use $A_v$ (or $B_v$) for each $v\in V_i$ 
  to denote the set of $a\in A$ (or $b\in B$) with $\sv_i(a)=v$ (or $\sv_i(b)=v$),
  with $C_v=A_v\cup B_v$.
Let 
$$
\Delta_i=\sum_{v\in V_i} \big||A_v|-|B_v|\big|.
$$
When $\Delta_i>0$ (which is the main case we will work on),
  we write $\calV_i$ to denote the distribution over $V_i$ where $v\in V_i$
  is sampled with probability 
$||A_v|-|B_v||/\Delta_i$.
For each $i\in [h]$,
  we let $\calS_i$ denote the distribution over subsets of $[d]$
  which includes each coordinate independently with probability 
  $$\alpha_i\eqdef\frac{2^i}{d\log^2 n}.$$
Given $S \subset [d]$, the character $\chi_S \colon \{0, 1\}^d \to \{-1,1\}$ is the function  $\chi_S(x) = (-1)^{\sum_{k \in S} x_k}$.

We use the quadtree lemma for EMD (Lemma \ref{quadtreelemmaforemd}) in Section \ref{sec:quadtreeemd} to get the
  following lemma that is key to our linear sketches:



\begin{lemma}\label{mainsketchestimate}
At least $90\%$ of quadtrees $T$
  (as drawn from the distribution $\calT$) satisfy 
\begin{equation}\label{quadtreeinequality}
\EMD(A,B) \le \sum_{i\in [h]} \calI_i\le \tilde{O}(\log n)\cdot  \EMD(A,B)\quad\ \text{and}\ \quad
\sum_{i\in [h]} \frac{\Delta_i d}{2^i}\le \tilde{O}(\log n)\cdot \EMD(A,B) 
\end{equation}
where $\calI_i$ for each $i\in [h]$ is defined as $\calI_i=0$ when $\Delta_i=0$ and 
$$
\calI_i\eqdef\frac{2\Delta_i}{\alpha_i}\cdot \Ex_{\vv\sim \calV_i,\bS\sim\calS_i}
\big[p_{\pi(\vv),\vv,\bS}\big]
$$
when $\Delta_i>0$, where for each $v\in V_i$ and $u=\pi(v)$,
$$
p_{u,v,S}\eqdef  \Prx_{\cc_u\sim C_u,\cc_v\sim C_v}\big[\chi_S(\cc_u)\ne \chi_S(\cc_v)\big]\in [0,1].
$$
\end{lemma}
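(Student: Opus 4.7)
The plan is to derive this lemma from Lemma~\ref{quadtreelemmaforemd} (the quadtree approximation lemma for EMD), translating its tree-metric estimator into the character-based quantity $\calI_i$ via Fourier analysis over the random set $\bS\sim\calS_i$. I condition on the $\ge 90\%$-probability event over $T\sim\calT$ on which the conclusions of Lemma~\ref{quadtreelemmaforemd} hold: namely, the quadtree sandwich $\EMD(A,B)\lesssim \sum_i \tfrac{d\Delta_i}{2^i}\lesssim \tilde O(\log n)\cdot \EMD(A,B)$, which directly yields the second inequality of the present lemma, together with the structural guarantee that every cell at level $i$ has Hamming diameter at most $O(d/2^i)$.

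The central calculation is a Fourier expansion of $p_{u,v,\bS}$. Since $\chi_{\bS}(\cc_u)\chi_{\bS}(\cc_v)=\chi_{\bS}(\cc_u\oplus \cc_v)$ and the coordinates of $\bS\sim\calS_i$ are independent Bernoullis of parameter $\alpha_i$,
\[
\E_{\bS}\bigl[\chi_{\bS}(\cc_u\oplus \cc_v)\bigr] = (1-2\alpha_i)^{\mathrm{Ham}(\cc_u,\cc_v)},\qquad \E_{\bS}[p_{u,v,\bS}] = \tfrac{1}{2}\Bigl(1 - \E_{\cc_u,\cc_v}\bigl[(1-2\alpha_i)^{\mathrm{Ham}(\cc_u,\cc_v)}\bigr]\Bigr).
\]
The cell-diameter bound combined with $\alpha_i=2^i/(d\log^2 n)$ gives $\alpha_i\cdot \mathrm{Ham}(\cc_u,\cc_v)=O(1/\log^2 n)=o(1)$, so a first-order Taylor estimate yields $\E_{\bS}[p_{u,v,\bS}] = (1\pm o(1))\,\alpha_i\cdot \E_{\cc_u,\cc_v}[\mathrm{Ham}(\cc_u,\cc_v)]$, and hence
\[
\calI_i = (2\pm o(1))\cdot \Delta_i\cdot \mu_i, \qquad \mu_i \eqdef \E_{\vv\sim\calV_i,\,\cc_u\sim C_{\pi(\vv)},\,\cc_v\sim C_{\vv}}\bigl[\mathrm{Ham}(\cc_u,\cc_v)\bigr].
\]

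For the upper bound $\sum_i \calI_i\le \tilde O(\log n)\EMD(A,B)$, I bound $\mu_i\le O(d/2^i)$ using the cell-diameter control, which gives $\calI_i\le O(d\Delta_i/2^i)$, and then sum over $i$ invoking the quadtree sandwich. For the lower bound $\EMD(A,B)\le \sum_i \calI_i$, the plan is to interpret $\sum_i 2\Delta_i\mu_i$ as the expected Hamming cost of a random transportation plan for $A\to B$ built by cascading imbalances through $T$ level by level: at each imbalanced cell $\vv$ at level $i$, the $\bigl||A_{\vv}|-|B_{\vv}|\bigr|$ excess units of mass are routed between a random $\cc_v\sim C_{\vv}$ and a random $\cc_u\sim C_{\pi(\vv)}$ in the parent. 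Since $\EMD(A,B)$ is the infimum over valid couplings of $A$ and $B$, the expected cost of this random coupling upper bounds $\EMD(A,B)$.

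The main technical obstacle I anticipate is formalizing the random-routing interpretation rigorously: one must verify that the cascaded procedure yields a bona fide coupling of $A$ onto $B$ (rather than merely shuffling mass within the mixed sets $C_u=A_u\cup B_u$) and absorb any constant-factor slack from the Fourier linearization. I expect this to be handleable by an inductive argument along the quadtree that leverages the tree-metric-dominates-Hamming property implicit in Lemma~\ref{quadtreelemmaforemd}.
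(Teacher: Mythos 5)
Your central Fourier computation is sound and is essentially a sharper version of the paper's key claim: the paper proves $\tfrac{\alpha_i}{2}\|a-b\|_1\le \Pr_{\bS}[\chi_{\bS}(a)\ne\chi_{\bS}(b)]\le \alpha_i\|a-b\|_1$ by bounding $\Pr[|\bS\cap D|\ge 1]$ and $\Pr[|\bS\cap D|=1]$, whereas you get the exact identity $\E_{\bS}[p_{u,v,\bS}]=\tfrac12(1-\E[(1-2\alpha_i)^{\mathrm{Ham}}])$ and linearize; both need the same cell-diameter control (which the paper states as $O(d\log n/2^i)$, not $O(d/2^i)$, but either way $\alpha_i\cdot\mathrm{Ham}=o(1)$ so the Taylor step is fine). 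Your conclusion $\calI_i=(2\pm o(1))\Delta_i\mu_i$ matches the paper's $\calI_i^*\le\calI_i\le 2\calI_i^*$.

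The genuine gap is in your lower bound. You propose to build a cascaded random transportation plan and verify it is a bona fide coupling of $A$ onto $B$, and you explicitly flag this as an unresolved obstacle — so as written the hardest step of your proof is not carried out. But this construction is unnecessary: the quantity $\sum_i\Delta_i\mu_i=\sum_i\Delta_i\,\E_{\vv\sim\calV_i}[\avgd_{\pi(\vv),\vv}]$ is \emph{exactly} the quadtree estimator $\bvalue_T(A,B)=\sum_{(u,v)\in E_T}\bigl||A_v|-|B_v|\bigr|\cdot\avgd_{u,v}$ of Lemma~\ref{quadtreelemmaforemd}, regrouped by level, and the left inequality of that lemma already asserts $\EMD(A,B)\le\bvalue_T(A,B)$. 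Combined with $\calI_i\ge(2-o(1))\Delta_i\mu_i\ge\Delta_i\mu_i$ this gives $\EMD(A,B)\le\sum_i\calI_i$ with no new coupling argument. You should either make this identification and cite the lemma, or actually execute the coupling construction (which amounts to re-proving that half of Lemma~\ref{quadtreelemmaforemd}). A secondary, minor point: the second inequality $\sum_i\Delta_i d/2^i\le\tilde O(\log n)\EMD(A,B)$ does not follow "directly" from the sandwich on $\bvalue_T$ (which involves $\avgd_{u,v}$, not $d/2^i$); the paper derives it by a separate (though similar) argument, and your write-up should acknowledge that this inequality needs its own justification since you also rely on it for your upper bound on $\sum_i\calI_i$.
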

\begin{proof}
We show that $98\%$ of quadtrees satisfy the first part of (\ref{quadtreeinequality})
  and $98\%$ of quadtrees satisfy the second part of (\ref{quadtreeinequality}) and the lemma follows.
The second part follows from an analysis similar to (\ref{eq:grow-dist}) (although (\ref{eq:grow-dist}) only gives $O(\log d)$ on the right hand~side instead of the $O(\log n)$ we need, one can improve it to $\min(\log n,\log d)$; see footnote \ref{usefulfootnote}). 
We focus on the first part of (\ref{quadtreeinequality}).

Recall from Lemma \ref{quadtreelemmaforemd} that at least $99\%$ of quadtrees $T$ satisfy
$$
 \EMD(A,B)\le  \bvalue_T(A,B)=\sum_{(u,v) \in E_T} \big| |A_v| - |B_v| \big| \cdot 
 {\avgd}_{u,v}\le \tilde{O}(\log n)\cdot \EMD(A,B).
$$
We can write $\bvalue_T(A,B)$ as the sum of $\calI_i^*$ over $i\in [h]$: $\calI_i^*=0$ when $\Delta_i=0$ and
$$
\calI_i^*\eqdef \Delta_i\cdot \Ex_{\vv\sim\calV_i}\left [\avgd_{\pi(\vv),\vv}\right ]
=\Delta_i\cdot \Ex_{\vv\sim\calV_i}\left [\Ex_{\cc\sim C_{\pi(\vv)},\cc'\sim C_{\vv}}\left[\|\cc-\cc'\|_1\right] \right ]. 
$$
when $\Delta_i>0$. Also recall from the proof of Lemma \ref{quadtreelemmaforemd} that in at least $(1-o(1))$-fraction of $T$,
  every two points $c,c'\in C_{\pi(v)}$ for a node $v$ at level $i$ have $\|a-b\|_1\le O(d\log n/2^i)$.
When this happens, we have from the following claim that $\calI^*\le \calI\le 2\calI^*$.

\begin{claim}\label{charlemma}
Let $a,b\in \{0,1\}^d$ with $\|a-b\|_1=O(d\log n/2^i)$.
Then we have
$$
\frac{\alpha_i \|a-b\|_1}{2}\le  \Pr_{\bS\sim \calS_i}\left[\chi_\bS(a)\ne \chi_\bS(b)\right]\le \alpha_i \|a-b\|_1.
$$
\end{claim}
\begin{proof}
Let $D$ be the set of indices $i\in [d]$ with $a_i\ne b_i$.
The second part follows from $$\Pr_{\bS\sim \calS_i}\big[\chi_\bS(a)\ne \chi_\bS(b)\big]
\le \Pr_{\bS\sim\calS_i}\big[\bS\cap D\ne \emptyset\big]\le \alpha_i |D|=\alpha_i\|a-b\|_1.$$
The first part follows from
$$\Pr_{\bS\sim \calS_i}\big[\chi_\bS(a)\ne \chi_\bS(b)\big]
\ge \Pr_{\bS\sim\calS_i}\big[|\bS\cap D|=1\big]
=|D| \cdot \alpha_i\cdot (1-\alpha_i)^{|D|}
\ge  \frac{\alpha_i|D|}{2}.$$
This finishes the proof of the claim.
\end{proof}

It follows that the first part of (\ref{quadtreeinequality}) holds for $99\%-o(1)$ fraction of quadtrees $T$.
\end{proof}

To prove Theorem \ref{thm:one-pass} it suffices to prove the following lemma:

\begin{lemma}\label{EMDlemma1}
Fix a quadtree $T$, $i\in [h]$, and $\eps\in (0,1)$.
Then there is a $O(1/\eps)\cdot \polylog (n,d)$-space linear sketch
  that, given any $(A,B)$ of size $n$ each, outputs a number $\bfeta_i$ such that
\begin{equation}\label{temp1}
\calI_i- \frac{\eps nd}{2^i} \le \bfeta_i\le 2\calI_i
 +\frac{\Delta_id}{2^i} 
\end{equation}
with probability at least $2/3$.
\end{lemma}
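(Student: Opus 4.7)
The plan is to build $\bfeta_i$ from three parallel linear sketches, one for each source of randomness in the definition $\calI_i=(2\Delta_i/\alpha_i)\cdot \Ex_{\vv\sim\calV_i,\bS\sim\calS_i}[p_{\pi(\vv),\vv,\bS}]$: a perfect $\ell_1$-sampler to draw $\vv\sim\calV_i$, an $\ell_1$-sketch to approximate $\Delta_i$, and an empirical average over $k$ precomputed characters $\bS_1,\ldots,\bS_k\sim\calS_i$ to approximate the inner probability. The key algebraic step is that because $\cc_u$ and $\cc_v$ are drawn independently, $p_{u,v,S}=(1-\mu_u(S)\mu_v(S))/2$, where $\mu_w(S):=\widehat{C_w}(S)/|C_w|$ and $\widehat{C_w}(S):=\sum_{c\in C_w}\chi_S(c)$; so once $\vv$ and the $\bS_j$'s are fixed, I only need to estimate the $O(k)$ scalars $\widehat{C_w}(\bS_j)$ and $|C_w|$ at $w\in\{\vv,\pi(\vv)\}$, all of which are linear functions of $f_{A,B}$.

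Concretely, on the stream I would maintain (i) the perfect $\ell_1$-sampler of Theorem~\ref{thm:ell-1-sa} applied to $(|A_v|-|B_v|)_{v\in V_i}$, which returns $\vv\sim\calV_i$; (ii) the Cauchy $\ell_1$-sketch of Theorem~\ref{lem:cauchy} on the same vector, which returns $\widehat{\Delta_i}=(1\pm\eps)\Delta_i$; and (iii) for each $j\in[k]$ with $k=\polylog(n,d)/\eps$ and $\bS_j\sim\calS_i$ drawn once from the random oracle, a Count-Sketch (Theorem~\ref{thm:countsketch}) of each of the four linear vectors $(\widehat{C_v}(\bS_j))_{v\in V_i}$, $(\widehat{C_v}(\bS_j))_{v\in V_{i-1}}$, $(|C_v|)_{v\in V_i}$, and $(|C_v|)_{v\in V_{i-1}}$. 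At query time I would recover $\vv$, use the Count-Sketches at entries $\vv$ and $\pi(\vv)$ to build $\widehat\mu_\vv(\bS_j)$ and $\widehat\mu_{\pi(\vv)}(\bS_j)$, and output
\[
\bfeta_i \;:=\; \frac{\widehat{\Delta_i}}{\alpha_i\, k}\sum_{j=1}^{k}\left(1-\widehat\mu_{\pi(\vv)}(\bS_j)\,\widehat\mu_\vv(\bS_j)\right).
\]

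The analysis has three layers. Layer one: assuming Count-Sketch recovers the relevant entries exactly, the $\ell_1$-sample $\vv$ reproduces the $\calV_i$-weighted sum in expectation, and averaging over $k$ independent $\bS_j$ concentrates around the conditional mean $\Ex_\bS[p_{\pi(\vv),\vv,\bS}]$ via Hoeffding on $[0,1]$-valued variables. Layer two: since $\|(|C_v|)_v\|_2,\|(\widehat{C_v}(S))_v\|_2\le 2n$, Count-Sketch with precision parameter $\eps''=\eps/\polylog(n,d)$ returns entrywise additive error $O(\eps''n)$, which propagates to bounded error on each $\widehat\mu_w$ and hence on each summand. Layer three: match the accumulated errors to the additive slacks $\eps nd/2^i$ (lower bound) and $\Delta_i d/2^i$ (upper bound). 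The main obstacle is layer two when the sampled $\vv$ has small $|C_\vv|$, in which case the $O(\eps''n)$ error on $\widehat{C_\vv}(\bS_j)$ can dominate $|C_\vv|$ itself and ruin $\widehat\mu_\vv$. I would handle this using Claim~\ref{charlemma} together with the typical-quadtree bound $\|a-b\|_1=O(d\log n/2^i)$ for $a,b\in C_{\pi(v)}$ established inside the proof of Lemma~\ref{mainsketchestimate}, which gives the uniform bound $q_v:=\Ex_\bS[p_{\pi(v),v,\bS}]\le \alpha_i\cdot\avgd_{\pi(v),v}=O(1/\log n)$; this cancels enough of the $1/\alpha_i=d\log^2 n/2^i$ prefactor that the total contribution to $\calI_i$ from all nodes $v$ where $\widehat\mu$ is unreliable is $O(\Delta_i d/2^i)$, which is exactly absorbed into the slack on the upper bound, while the lower-bound slack $\eps nd/2^i$ absorbs the Hoeffding tail and the $\ell_1$-estimation error on $\widehat{\Delta_i}$.
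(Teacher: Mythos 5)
There is a genuine gap, and it sits exactly where the paper's one-pass construction does its real work. Your architecture decouples the sampling of $\vv\sim\calV_i$ (via the black-box perfect $\ell_1$-sampler) from the estimation of the counts $|C_\vv|, \widehat{C_\vv}(\bS_j), |C_{\pi(\vv)}|, \widehat{C_{\pi(\vv)}}(\bS_j)$ (via Count-Sketches of the \emph{unscaled} count vectors). But Count-Sketch on $(|C_v|)_{v\in V_i}$ with precision $\eps''$ only guarantees additive error $\eps''\|x_{-1/\eps''^2}\|_2$, which for a vector of $\ell_1$-mass $2n$ can be as large as $\Theta(\eps''^2 n)$; making this smaller than $|C_\vv|$ when $|C_\vv|=O(1)$ would force $\eps''=O(1/\sqrt{n})$ and hence $\Omega(n)$ space. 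And small $|C_\vv|$ is not a low-probability event under $\calV_i$: since $Q_v\le|C_v|$, the instance with $n$ nodes each having $|C_v|=Q_v=1$ puts \emph{all} of the sampling mass on nodes whose counts are unrecoverable, so $\widehat{\mu}_\vv(\bS_j)$ is garbage with probability $1$. Your proposed fix does not address this: bounding $q_v=\Ex_{\bS}[p_{\pi(v),v,\bS}]=O(1/\log n)$ controls the contribution of such nodes to the \emph{true} quantity $\calI_i$, but the algorithm's \emph{output} at a bad node is an uncontrolled ratio of two Count-Sketch estimates (not even confined to $[0,1]$), so the error in $\bfeta_i$ is not bounded by the true contribution of those nodes. (Two further, smaller issues: the lemma is stated for an arbitrary fixed quadtree, so the $\|a-b\|_1=O(d\log n/2^i)$ property of typical quadtrees from Lemma \ref{mainsketchestimate} is not available here; and a single draw of $\vv$ is only correct in expectation, so you would need $O(\log^{O(1)}n)$ independent repetitions and an average, plus handling of the sampler's $\bot$ output.)

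The paper resolves exactly this obstacle by \emph{coupling} the sampling with the estimation: it realizes the sample $(\uu^*,\vv^*)\sim\calD$ via exponential scaling ($u^*=\argmax_u P_u/\bt_u$, etc., Lemma \ref{firststeplemma}), and then applies Count-Sketch to the \emph{scaled} vectors $(|C_u|/\bt_u)_u$ and $(|C_v|/(\bt_{\pi(v)}\bt_v))_v$. Event $\calE_1$ guarantees $|C_{u^*}|/\bt_{u^*}\ge P_{u^*}/\bt_{u^*}\ge\gamma\Delta\ge\gamma\eps n/\log^3 n$, so the selected coordinate is always heavy in the scaled vector even when $|C_{u^*}|$ itself is tiny, while events $\calE_3,\calE_4$ (via Lemma \ref{lem:sizebound}) bound the tail $\ell_2$-norm so that Count-Sketch recovers it; the unknown scale factors $\bt_{u^*},\bt_{v^*}$ then cancel in the ratios $\widehat{\bS}^{(2)}_{u^*}/\widehat{\bS}^{(1)}_{u^*}$. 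This precision-sampling coupling is the missing idea; without it your scheme is essentially the paper's \emph{two-round} sketch (Lemma \ref{MainEMDlemma22222}), where sampling in round one and querying exact counts in round two is indeed trivial, and which the paper explicitly contrasts with the one-pass case. Your outer scaffolding (the $\ell_1$-sketch for $\Delta_i$, averaging over $O(\polylog n)$ characters $\bS_j$, the identity $p_{u,v,S}=(1-\mu_u\mu_v)/2$, and absorbing additive errors into the $\eps nd/2^i$ and $\Delta_i d/2^i$ slacks with a one-sided shift) does match the paper's outer argument, including the threshold test $\bDelta<2\eps n/\log^3 n$ that you would also need to make the small-$\Delta_i$ case safe.
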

\begin{proofof}{Theorem \ref{thm:one-pass} assuming Lemma \ref{EMDlemma1}}
\hspace{-0.1cm}The linear sketch starts by sampling a quadtree $T$ from $\calT$ and we assume that
 $T$ satisfies (\ref{quadtreeinequality}), which happens with probability at least $0.9$.
For each $i\in [h]$, it repeats independently 
  the linear sketch given in Lemma \ref{EMDlemma1}
  for $O(\log d)$ times, and we use $\bfeta_i$ to denote the median of $O(\log d)$ numbers
  returned by the reporting procedure.
It follows from a Chernoff bound and a union bound that 
  with probability $1-o_d(1)$, $\bfeta_i$ satisfies (\ref{temp1}) for every $i\in [h]$.
The reporting procedure returns $\bfeta=\sum_{i\in [h]}\bfeta_i+\eps nd$.
(\ref{mainmainmain}) follows directly from (\ref{quadtreeinequality}). 
\end{proofof}

For Theorem \ref{thm:two-pass} the following simpler lemma suffices:

\begin{lemma}\label{EMDtwopasslemma1}
Fix a quadtree $T$ and  any $i\in [h]$.
There is a two-round $\polylog(n,d)$-space linear sketch
  that, given any $(A,B)$ of size $n$ each, outputs a number $\bfeta_i$ such that
\begin{equation}\label{temp222}
 \calI_i \le \bfeta_i\le 2\calI_i
 +\frac{\Delta_id}{2^i}  
\end{equation}
with probability at least $2/3$.
\end{lemma}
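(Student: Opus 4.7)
The plan is to use round~$1$ to $\ell_1$-sample $\vv\sim\calV_i$ and estimate $\Delta_i$, then use round~$2$ to evaluate $p_{\pi(\vv),\vv,\bS}$ \emph{exactly} for a fresh $\bS\sim\calS_i$ via a small set of linear counters. Running $t=\Theta(\log^{2} n)$ independent copies of this in parallel, averaging, and rescaling produces the one-sided estimator required by~(\ref{temp222}).

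\textbf{The sketch.} Define $x\in\R^{V_i}$ by $x_v:=|A_v|-|B_v|$; this is linear in the stream vector $f$ and $\|x\|_1=\Delta_i$. In round~$1$ the algorithm maintains, in parallel, a Cauchy-based $\ell_1$-sketch of $x$ (Theorem~\ref{lem:cauchy}) producing $\widehat{\bDelta}_i$ with $|\widehat{\bDelta}_i-\Delta_i|\le\Delta_i/10$, together with $t$ independent perfect $\ell_1$-samplers (Theorem~\ref{thm:ell-1-sa}) applied to $x$, each returning (with constant probability) an index $\vv^{(j)}\in V_i$ whose distribution is within $1/\poly(n)$ TV of $\calV_i$; boosting each sampler $O(\log n)$ times makes every copy succeed with probability $1-1/\poly(n)$. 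If $\Delta_i=0$, every sampler fails and the algorithm outputs $\bfeta_i=0$. In round~$2$, for each copy $j$, set $\uu^{(j)}:=\pi(\vv^{(j)})$ and draw $\bS^{(j)}\sim\calS_i$ using the random oracle. Since $\uu^{(j)},\vv^{(j)},\bS^{(j)}$ are fixed after round~$1$, the six counts
\[ \bn_w^{\pm}\ :=\ \sum_{c\in\{0,1\}^d:\ \sv_{\ell}(c)=w,\ \chi_{\bS^{(j)}}(c)=\pm 1}(f_{c,A}+f_{c,B}),\quad w\in\{\uu^{(j)},\vv^{(j)}\},\ \ell\in\{i-1,i\}, \]
are linear functionals of $f$, each of value at most $n$ so storable in $O(\log n)$ bits. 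From them one computes $p_j:=p_{\uu^{(j)},\vv^{(j)},\bS^{(j)}}=(\bn_{\uu}^{+}\bn_{\vv}^{-}+\bn_{\uu}^{-}\bn_{\vv}^{+})/(|C_{\uu}|\cdot|C_{\vv}|)$ \emph{exactly} (set to $0$ if a denominator vanishes). All $t$ copies together take $\polylog(n,d)$ bits.

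\textbf{Analysis.} Let $\bfeta_i^{(j)}:=(2\widehat{\bDelta}_i/\alpha_i)\cdot p_j$ and $\bar{\bfeta}_i:=t^{-1}\sum_j\bfeta_i^{(j)}$. By definition of $\calV_i,\calS_i$, $\mathbb{E}[\bfeta_i^{(j)}\mid\widehat{\bDelta}_i]=(\widehat{\bDelta}_i/\Delta_i)\cdot\calI_i\in[\tfrac{9}{10}\calI_i,\tfrac{11}{10}\calI_i]$ on the good event for $\widehat{\bDelta}_i$. Since $p_j\in[0,1]$, $\mathrm{Var}[\bfeta_i^{(j)}]\le(2\widehat{\bDelta}_i/\alpha_i)^{2}\cdot\mathbb{E}[p_j]=O((\Delta_i d\log^{2} n/2^i)\cdot\calI_i)$. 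Combined with $\calI_i\le 2\Delta_i d\log^{2} n/2^i$ and AM--GM, $\mathrm{Var}[\bar{\bfeta}_i]\le O(\log^{2} n/t)\cdot(\Delta_i d/2^i)\cdot\calI_i\le O(\log^{2} n/t)\cdot\tfrac{1}{2}\bigl((\Delta_i d/2^i)^{2}+\calI_i^{2}\bigr)$. Choosing $t=\Theta(\log^{2} n)$ large enough and applying Chebyshev yields $|\bar{\bfeta}_i-\calI_i|\le\tfrac{1}{4}\calI_i+\tfrac{1}{9}(\Delta_i d/2^i)$ with probability $\ge 2/3$. Finally output
\[ \bfeta_i\ :=\ \tfrac{3}{2}\,\bar{\bfeta}_i+\tfrac{1}{2}\cdot(\widehat{\bDelta}_i d/2^i); \]
substituting the error bounds and $\widehat{\bDelta}_i\in[(9/10)\Delta_i,(11/10)\Delta_i]$ shows both $\bfeta_i\ge\calI_i$ and $\bfeta_i\le 2\calI_i+\Delta_i d/2^i$ on the good event, proving~(\ref{temp222}).

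\textbf{Main obstacle.} The delicate feature is the \emph{one-sided} lower bound $\calI_i\le\bfeta_i$, rather than a two-sided $(1\pm\eps)$ estimate; this is what forces both the $3/2$ rescaling and the $\tfrac{1}{2}(\widehat{\bDelta}_i d/2^i)$ additive buffer in the final output. Conveniently, it is also what makes the target achievable in $\polylog(n,d)$ space: the additive slack $\Delta_i d/2^i$ absorbs a $\log^{2} n$-factor in the variance, so only $\polylog(n)$ parallel samplers are needed. Two secondary issues are the bias introduced by conditioning the $\ell_1$-samplers on non-failure (negligible by the $1/\poly(n)$ TV guarantee of Theorem~\ref{thm:ell-1-sa}) and the degenerate case $\Delta_i=0$, which is trivial.
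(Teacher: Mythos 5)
Your construction is, at its core, the paper's own: round one runs the Cauchy $\ell_1$-sketch for $\Delta_i$ together with perfect $\ell_1$-samplers on $(|A_v|-|B_v|)_{v}$ to draw $\vv\sim\calV_i$; round two queries the (now fixed) linear counters needed to evaluate $p_{\pi(\vv),\vv,\bS}$ exactly; and the one-sided guarantee is bought by inflating the multiplier and adding a buffer proportional to $\Delta_i d/2^i$ (the paper outputs $\frac{3\bDelta}{\alpha_i}(\mathrm{avg}_j\{\bfeta_j\}+\frac{1}{8\log^2 n})$, which is the same device as your $\frac32\bar{\bfeta}_i+\frac12\widehat{\bDelta}_i d/2^i$). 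Your Chebyshev/variance analysis with $\Theta(\log^2 n)$ joint samples of $(\vv,\bS)$ is a legitimate and somewhat leaner alternative to the paper's Chernoff-plus-median scheme with $O(\log^6 n)$ choices of $\bS$, and the arithmetic in your final substitution checks out.

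There is, however, one genuine gap: the space accounting. You apply Theorem \ref{thm:ell-1-sa} and Theorem \ref{lem:cauchy} directly to the vector $x\in\R^{V_i}$, but $V_i$ is the set of quadtree nodes at depth $i$, whose cardinality can be as large as $2^{\Omega(d)}$. The sketch dimension of the perfect $\ell_1$-sampler is $t=O(\log^2 m)$ in the ambient dimension $m$, so over $V_i$ this is $O(d^2)$, and similarly the Cauchy sketch's bit cost scales with $\log|V_i|=\Omega(d)$; neither is $\polylog(n,d)$ as the lemma requires. This is precisely why the paper inserts a universe-reduction step before sketching: hash the occupied nodes of $V_{i-1}$ and $V_i$ into $[m]$ with $m=n^3$ (collision-free on the at most $2n$ occupied nodes except with probability $O(n^2/m)$), and run all samplers and sketches over the reduced universe, where $\log m=O(\log n)$. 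With that standard but necessary modification inserted before your round one, your argument goes through and matches the paper's proof.
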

\begin{proofof}{of Theorem \ref{thm:two-pass} using Lemma \ref{EMDtwopasslemma1}}
The proof is exactly the same as that of Theorem \ref{thm:one-pass}, except that we don't need
  to add $\eps nd$ in the estimate and no longer need the additive error.
\end{proofof}


\subsection{Universe Reduction} 

Now we focus on the proof of Lemma \ref{EMDlemma1}.
Fix a quadtree $T$ and a depth $i\in [h]$. 
The first step is to perform a routine universe reduction.

Let $m=n^3$. Let $H $ denote the following fixed bipartite graph between 
  $U=[m]$ and $V=[m]\times [m]$: $(u,v)\in H $ if and only if $u=v_1$.
Similarly we write $\pi(v)$ to denote the $u\in U$ with $(u,v)\in H$.
A \emph{universe reduction} between depths $i-1$ and $i$ of $T$
  consists~of~two functions 
  $h_{i-1}:[V_{i-1}]\rightarrow [m]$ and $h_i:[V_i ]\rightarrow [m]$,
  which can be used to induce a map $\sv$ from $\{0,1\}^d$ to $V$: For each  point $a\in \{0,1\}^d$,    $\sv(a)=v=(v_1,v_2)$ if $h_{i-1}(\sv_{i-1}(a))=v_1$
  and $h_i(\sv_{i}(a))=v_2$.
We also use $\sv$ to map points to $U$ using $\su:\{0,1\}^d\rightarrow U$: For each $a\in \{0,1\}^d$, 
  $\su(a)=\sv(a)_1$.
 
Given  the input pair of points $(A,B)$, we similarly define $A_v,B_v$ and $C_v$ for each node $v\in V$ of $H$ using $\sv$ and define $A_u,B_u$ and $C_u$ for each node $u\in U$ using $\su$ (e.g., $C_u$ is the set of $a\in A\cup B$ such that 
  $\su(a)=u$).
Let $\Delta=\sum_{v\in V}||A_v|-|B_v||$.
Similarly, whenever $\Delta>0$, we define the distribution $\calV$ supported over $V$, where
  each $v\in V$ is sampled with probability $||A_v|-|B_v||/\Delta$.
We also define 
$$p_{u,v,S}:=\Pr_{\cc_u\sim C_u,\cc_v \sim C_v} \big[\chi_S(\cc_u)\ne \chi_S(\cc_v)\big]\in [0,1]$$
for each $v\in V$ and $u=\pi(v)$. 
Lemma \ref{EMDlemma1} follows directly from the following lemma, 
which gives a linear sketch for approximating the expectation of $p(\pi(\vv),\vv,S)$
  with $\vv\sim\calV$ for a given $S\subseteq [d]$.

\begin{lemma}\label{MainEMDlemma2}Fix  $\sv:\{0,1\}^d\rightarrow V$, $S\subseteq [d]$
  and  $\eps\in (0,1)$. 
Then there~is~a~$O(1/\eps)\cdot 
  \poly(\log n,\log d )$ space linear sketch that, on any input $(A,B)$ of size $n$, outputs a number ${\bfeta}\in \mathbb{R}$
  with the following property: Let $\tau=1/\log^3 n$. 
Whenever $\Delta\ge \eps n/\log^3 n$, we have $$
 {\bfeta} =\Ex_{\vv\sim \calV}\big[p_{\pi(\vv),\vv,S}\big]\pm \tau 
$$ with probability at least $2/3$.
\end{lemma}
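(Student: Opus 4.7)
The plan is to reduce the target expectation to an easily-estimated bounded quantity and compute it by an empirical mean over $\ell_1$-samples. The key observation is that $p_{u,v,S}=\frac{1}{2}(1-\mu_u\mu_v)$, where $\mu_w\eqdef \Ex_{\cc\sim C_w}[\chi_S(\cc)]=\phi(C_w)/|C_w|$ and $\phi(C_w)\eqdef \sum_{c\in C_w}\chi_S(c)$; this follows from a two-line calculation using $\chi_S\in\{\pm 1\}$. So it suffices to approximate $T\eqdef \Ex_{\vv\sim\calV}[\mu_{\pi(\vv)}\mu_\vv]$ to additive error $2\tau$, and since $|\mu_{\pi(\vv)}\mu_\vv|\le 1$, Hoeffding's inequality guarantees that the empirical average over $K=O(\tau^{-2}\log n)=\polylog(n)$ independent samples $\vv_j\sim\calV$ is within $\tau$ of $T$ with failure probability $o(1)$.

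To produce each $\vv_j$ I would run an independent copy of the $\ell_1$-sampler of Theorem~\ref{thm:ell-1-sa} on the vector $x=(|A_v|-|B_v|)_{v\in V}\in\R^{n^6}$, whose $\ell_1$-norm is $\Delta$ and whose induced sampling distribution is exactly $\calV$; each copy costs $\polylog(n,d)$ space. For the same copy I also need approximations $\widehat{\mu}_{\vv_j}$ and $\widehat{\mu}_{\pi(\vv_j)}$ to additive error $O(\tau)$, which reduces to approximating each of $\phi(C_{\vv_j}), |C_{\vv_j}|, \phi(C_{\pi(\vv_j)}), |C_{\pi(\vv_j)}|$ to relative error $O(\tau)$. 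I would obtain these by maintaining, in parallel with the sampler, four Count-Sketches (Theorem~\ref{thm:countsketch}) of the vectors $(\phi(C_v))_v, (|C_v|)_v, (\phi(C_u))_u, (|C_u|)_u$, scaled by the exponential variables $\bt_v\sim\Exp(1)$ that drive the $\ell_1$-sampler. Lemma~\ref{lem:sizebound} (applied with $c=1$ since $\Prx[1/\bt_v\ge y]\le 1/y$) bounds the tail $\ell_2$-norm of each scaled vector by $O(\|y\|_1/\sqrt{\beta})=O(n/\sqrt{\beta})$, while Lemma~\ref{lem:gap_exponentials} gives $\bt_{\vv_j}\le O(|x_{\vv_j}|\log n/\Delta)$ with high probability. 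Combining, the Count-Sketch query error at $\vv_j$ is $O(\eps'^2\cdot n\cdot |x_{\vv_j}|/\Delta)\cdot \polylog(n)$; using $|C_{\vv_j}|\ge |x_{\vv_j}|$ and the hypothesis $\Delta\ge \eps n/\log^3 n$, choosing Count-Sketch width $\beta=1/\eps'^2=O(\polylog(n)/\eps)$ drives the relative error below $\tau$ and matches the required $O(1/\eps)\polylog(n,d)$ space. The final estimator is $\bfeta=\frac{1}{2}-\frac{1}{2K}\sum_j \widehat{\mu}_{\pi(\vv_j)}\widehat{\mu}_{\vv_j}$.

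The main obstacle is the $U$-level point query for $u=\pi(\vv_j)$: this index is in general not a heavy hitter of $(|C_u|)_u$ and is not directly produced by an $\ell_1$-sampler, so the scaled-Count-Sketch analysis that works cleanly at the $V$-level does not transfer off the shelf. The fix is to observe that the marginal distribution of $\pi(\vv)$ on $U$ is precisely the $\ell_1$-distribution of $y^U\in\R^U$ with $y^U_u=\sum_{v:\pi(v)=u}|x_v|$, and that a natural coupling, such as setting $\bt^U_u\eqdef \min_{v:\pi(v)=u}\bt_v/|x_v|$ (so that $\bt^U_u\sim\Exp(y^U_u)$ marginally and the winning $U$-index coincides with $\pi(\vv)$), inherits the small-at-the-winner property from the $V$-level exponentials. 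With this coupled scaling, the $U$-level Count-Sketches can be analyzed as direct analogues of the $V$-level ones and stay within the $O(1/\eps)\polylog(n,d)$ budget. A final union bound over the $K=\polylog(n)$ copies together with Hoeffding concentration yields the claimed $2/3$ success probability.
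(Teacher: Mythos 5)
Your reduction $p_{u,v,S}=\tfrac12(1-\mu_u\mu_v)$ with $\mu_w=\Ex_{\cc\sim C_w}[\chi_S(\cc)]$ is correct, the Hoeffding step over $K=O(\tau^{-2}\log n)$ samples is fine, and the broad plan --- sample $\vv\sim\calV$ by exponential $\ell_1$-sampling and then point-query Count-Sketches of exponentially rescaled versions of $(|C_v|)_v$, $(\phi(C_v))_v$, etc., using that the sampled index is made heavy by the very scaling that selected it --- is essentially the route the paper takes. Note, though, that Theorem~\ref{thm:ell-1-sa} is a black box; to ``scale by the exponentials that drive the sampler'' you necessarily have to open it up and re-implement the sampler, which is what the paper does explicitly.

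The genuine gap is your fix for the $U$-level point query. You define the coupled variables $\bt^U_u\eqdef \min_{v:\pi(v)=u}\bt_v/|x_v|$ so that the $U$-winner is $\pi(\vv)$ and $\bt^U_u\sim\Exp(P_u)$. But $\bt^U_u$ depends on $|x_v|=\bigl||A_v|-|B_v|\bigr|$, i.e.\ on the \emph{data}. A linear sketch must commit to its (randomized) sketching matrix before seeing the input --- Definition~\ref{def:linearsketch} allows $\bS$ to be random but not input-dependent --- so you cannot maintain a Count-Sketch of a vector whose per-coordinate scaling factors are $1/\bt^U_u$: you do not know $\bt^U_u$ until the end of the stream, and the product $\bS f$ cannot be retroactively re-scaled coordinate-by-coordinate without having stored the full vector. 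Thus the $U$-level Count-Sketch you propose is not implementable as a one-pass linear sketch, and the argument breaks precisely at the step you flagged as the ``main obstacle.''

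The paper sidesteps this by sampling top-down rather than bottom-up. It draws \emph{fresh, data-independent} exponentials $\bt_u\sim\Exp(1)$ for $u\in U$ and $\bt_v\sim\Exp(1)$ for $v\in V$, sets $u^*=\argmax_u P_u/\bt_u$ and then $v^*=\argmax_{v:\pi(v)=u^*}Q_v/\bt_v$ (Lemma~\ref{firststeplemma}, which shows $(\bu^*,\bv^*)\sim\calD$), and scales the $V$-level sketches by $1/(\bt_{\pi(v)}\bt_v)$ and the $U$-level ones by $1/\bt_u$ --- all of which are fixed before the stream. Events $\calE_1$--$\calE_4$ (Figure~\ref{mainlinearsketch22}, proved to hold with high probability in Lemma~\ref{EMDproblemma}) then guarantee that $u^*$ and $v^*$ are heavy under these pre-committed scalings, so the three linear sketches in Lemmas~\ref{FinalEMDlemma1}--\ref{FinalEMDlemma3} recover $u^*$, $v^*$, and an estimate of $p_{u^*,v^*,S}$. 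If you replace your coupled $\bt^U_u$ by independent $\bt_u\sim\Exp(1)$ and select $u^*=\argmax_u P_u/\bt_u$ (sampling $u$ first, then $v$ conditioned on $\pi(v)=u^*$), your argument aligns with the paper's and the gap closes.
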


\begin{figure}[t]
	\begin{framed}
		\begin{flushleft}		
			\noindent {\bf Linear Sketch:}
			   
				\begin{enumerate}
				\item Run the $\ell_1$-sketch from Theorem \ref{lem:cauchy}
		on $(|A_v|-|B_v|:v\in V)$ (with $\eps=\delta=0.01$). 
				\item
				Draw independently $O(\log^6 n)$ many $\bS_{1},\ldots$ from $\calS_i$.  
For each $\bS_{j}$, use $\sv$ to run the linear sketch in Lemma \ref{MainEMDlemma2} 
  (with $\tau=1/\log^3 n$) independently $O(\log n)$ times.
			\end{enumerate}
\noindent {\bf Reporting Procedure:}
\begin{enumerate}
\item Run the reporting procedure of the $\ell_1$-sketch of Theorem \ref{lem:cauchy}
  to obtain $\bDelta$.
  
\item For each of the $O(\log^6 n)$ many $j$,
  run the reporting procedure to obtain a number for each
    of the $O(\log n)$ independent runs and let $\bfeta_j$ denote
  their median. 
  
\item 
If $\bDelta<2\eps n/\log^3 n$, output $0$; otherwise
  output \begin{equation}\label{finaloutput}
  \bfeta=\frac{3\bDelta}{\alpha_i}\cdot \left(\avg_j\{\bfeta_j\}+\frac{1}{8\log^2 n}\right).\end{equation}
\end{enumerate}
		\end{flushleft}\vskip -0.14in
	\end{framed}\vspace{-0.2cm}\caption{Linear sketch for Lemma \ref{EMDlemma1}}\label{mainlinearsketch1}
\end{figure}

\begin{proofof}{Lemma \ref{EMDlemma1} assuming Lemma \ref{MainEMDlemma2}}
Given a quadtree $T$, $i\in [h]$ and $\eps\in (0,1)$,
we start by sampling uniformly at random $h_{i-1}$ and $h_i$ and let $\sv$ denote 
  the map they induce.
We may assume that there is no collision with respect to $(A,B)$:
  Every two nodes $u,u'\in V_{i-1}$ with nonempty $C_u,C_{u'}$ in $T$
    satisfy $h_{i-1}(u)\ne h_{i-1}(u')$ and 
    every $v,v'\in V_i$ with nonempty $C_v,C_{v'}$ 
    satisfy $h_i(v)\ne h_i(v')$,
    and note the this event is violated with  probability at most $2n^2/m$.
When this happens, $\Delta=\Delta_i$ and $\calI_i$ can be defined equivalently using
  $\sv$ and $(A,B)$ as follows: $\calI_i=0$ if $\Delta=0$; otherwise ($\Delta>0$), we have 
$$
\calI_i=\frac{2\Delta}{\alpha_i}\cdot \Ex_{\vv\sim \calV,\bS\sim \calS_i}\big[p_{\pi(\vv),\vv,\bS}\big].
$$
To estimate $\calI_i$,  we use the linear sketch  described~in Figure \ref{mainlinearsketch1}, 
  which uses
  $O(1/\eps)\cdot \poly(\log n,\log d)$ space.
For its correctness, we note that the following events happen with probability at least $0.9$:
\begin{flushleft}\begin{enumerate}
\item $\bDelta=(1\pm 0.01) \Delta$, which happens with probability at least $0.99$;
\item If $\Delta\ge \eps n/\log^3 n$, then for every set $\bS_j$, we have
$$
{\bfeta}_{ j}=\Ex_{\vv\sim \calV} 
  \big[p_{\pi(\vv),\vv,\bS_{j}}\big]\pm \frac{1}{\log^3n},
$$
which by a Chernoff bound and a union bound on $j$ happens with probability $1-o_n(1)$;
\item When $\Delta>0$, we have that the 
$$
\avg_{j} \Big[ \bE_{\vv\sim \calV} 
  \big[p_{\pi(\vv),\vv,\bS_{j}}\big]\Big]= 
 \bE_{\vv\sim \calV,\bS\sim \calS_i} \big[p_{\pi(\vv),\vv,\bS}\big]\pm \frac{1}{\log^3 n},$$  which happens with probability at least $0.99$.
 When the last two items hold, we have
 \begin{equation}\label{blublublu2}
 \avg_j \{\bfeta_j\}=\bE_{\vv\sim \calV_i,\bS\sim \calS_i} 
  \big[p_{\pi(\vv),\vv,\bS }\big]\pm \frac{1}{8\log^2n}.
 \end{equation}
\end{enumerate}\end{flushleft}
We finish the proof by showing that, whenever events above occur,
  $\bfeta$ in (\ref{finaloutput}) satisfies
  \begin{equation}\label{blublublu}
\calI_i - \frac{\eps nd}{2^i} \le \bfeta \le 2\calI_i
 +\frac{\Delta d}{2^i}.
\end{equation}
When $\bDelta<2\eps n/\log^3 n$ 
  the inequality above is trivial because $\bfeta=0$ and the LHS above is negative
  (using $\Delta\le 2\bDelta$).
So we focus on the case when $\bDelta\ge 2\eps n/\log^3 n$. In this case we have
  $\Delta\ge \eps n/\log^3 n$ and thus, (\ref{blublublu2}) holds.  
The lower bound of (\ref{blublublu}) is trivial given (\ref{blublublu2}) and $3\bDelta>\Delta$.
For the upper bound,
$$
\bfeta\le \frac{4\Delta}{\alpha_i}\cdot \left(\bE_{\vv\sim \calV_i,\bS\sim \calS_i} 
  \big[p_{\pi(\vv),\vv,\bS }\big]+ \frac{1}{4\log^2n}\right)
  \le \frac{4\Delta}{\alpha_i}\cdot \bE_{\vv\sim \calV_i,\bS\sim \calS_i} 
  \big[p_{\pi(\vv),\vv,\bS }\big]+ \frac{\Delta d}{2^i}=2\calI_i+\frac{\Delta d}{2^i}.
$$ 
This finishes the proof of the lemma.
\end{proofof}

For the two-round linear sketch, Lemma \ref{EMDtwopasslemma1} follows from the following lemma:

\begin{lemma}\label{MainEMDlemma22222}Fix  $\sv:\{0,1\}^d\rightarrow V$ and $S\subseteq [d]$. Let $\tau=1/\log^3 n$.
Then there~is~a~$\polylog(n)$-space two-round linear sketch that, on any input $(A,B)$ of size $n$, outputs a number ${\bfeta}\in \mathbb{R}$ such that
 $$
 {\bfeta} =\bE_{\vv\sim \calV}\big[p_{\pi(\vv),\vv,S}\big]\pm \tau 
$$ 
with probability at least $2/3$.
\end{lemma}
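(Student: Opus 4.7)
The plan is to leverage the two passes to decouple the sampling of $\vv\sim \calV$ (which requires knowing the coordinates of $(|A_v|-|B_v|)_{v\in V}$) from the estimation of $p_{\pi(\vv),\vv,S}$ (which requires knowing the character sums of $C_{\pi(\vv)}$ and $C_{\vv}$). In the first pass I only need to draw enough samples from $\calV$, and in the second pass, once those samples are fixed, each $p_{\pi(\vv^{(j)}),\vv^{(j)},S}$ can be computed \emph{exactly} using simple linear counters rather than estimated via an auxiliary character hash (as in the one-pass case).

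In round one, I would run $k=\Theta(\log^6 n)$ independent copies of the perfect $\ell_1$-sampler of Theorem~\ref{thm:ell-1-sa}, applied to the vector $y\in \R^V$ with $y_v\eqdef |A_v|-|B_v|$. This $y$ is a linear function of the stream, since each $(p,\sigma,\theta)$-update modifies $y_{\sv(p)}$ by $\pm\sigma$ depending on $\theta\in\{A,B\}$. Each sampler returns $\bot$ with probability at most $1/3$, and otherwise outputs a coordinate distributed as $\calV$ up to an additive error of $1/m^c$ in total variation. A Chernoff bound gives that at least $k'=\Omega(k)$ of them succeed with probability $1-o(1)$, yielding (approximately) independent samples $\vv^{(1)},\dots,\vv^{(k')}$ from $\calV$. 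If $\Delta=0$ the statement is vacuous and I simply output $\bfeta=0$.

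In round two, treating $\vv^{(j)}$ and $u^{(j)}\eqdef\pi(\vv^{(j)})$ as fixed parameters delivered by round one, I maintain four exact linear counters per $j$: the sizes $N^{(j)}_u=|C_{u^{(j)}}|$, $N^{(j)}_v=|C_{\vv^{(j)}}|$ and the character sums $M^{(j)}_u=\sum_{c\in C_{u^{(j)}}}\chi_S(c)$, $M^{(j)}_v=\sum_{c\in C_{\vv^{(j)}}}\chi_S(c)$, each of which is a linear function of the stream once the target node and the fixed set $S$ are known. Since $\chi_S(\cdot)\in\{\pm 1\}$ and $\cc_u,\cc_v$ are drawn independently, we have the identity
$$p_{u,v,S}=\frac{1-\Ex_{\cc_u,\cc_v}\big[\chi_S(\cc_u)\chi_S(\cc_v)\big]}{2}=\frac{1-\mu_u\mu_v}{2},\qquad\text{where }\mu_w\eqdef\Ex_{\cc\sim C_w}[\chi_S(\cc)].$$
I therefore set $\hat p^{(j)}=\tfrac{1}{2}\big(1-(M^{(j)}_u/N^{(j)}_u)(M^{(j)}_v/N^{(j)}_v)\big)$ and output $\bfeta=(1/k')\sum_{j=1}^{k'}\hat p^{(j)}$.

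For correctness, the $\hat p^{(j)}$ are i.i.d.\ copies of $p_{\pi(\vv),\vv,S}$ with $\vv\sim \calV$ (up to negligible sampling bias) and lie in $[0,1]$, so Hoeffding's inequality with $k=\Theta(1/\tau^2)=\Theta(\log^6 n)$ yields the desired additive-$\tau$ guarantee with probability at least $2/3$. The total space is $\polylog(n)$ for the round-one samplers plus $O(k\log n)$ bits for the round-two counters, which is $\polylog(n)$ overall. The only mildly delicate point is that the $\ell_1$-sampler's distributional guarantee is conditional on non-failure and holds only up to $1/m^c$ error per sample, so one must union-bound the cumulative total-variation distortion over the $k$ samples; with $m=n^3$ and $c$ an arbitrarily large constant this is negligible compared to $\tau=1/\log^3 n$. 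The main conceptual point is simply that the second pass lets us replace the random character test of the one-pass lemma (which costs $O(1/\eps)$ extra factors) with an exact moment computation, so no additional accuracy/error tradeoff arises.
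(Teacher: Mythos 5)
Your proposal is correct and matches the paper's own argument: round one uses the perfect $\ell_1$-sampler of Theorem~\ref{thm:ell-1-sa} on the vector $(|A_v|-|B_v|)_{v\in V}$ to draw (approximately) $\calV$-distributed nodes, and round two computes each $p_{\pi(\vv),\vv,S}$ exactly from $O(1)$ linear counters per sample, with the final estimate obtained by averaging $\Theta(1/\tau^2)$ repetitions. The only (immaterial) difference is that you compute $p_{u,v,S}$ via the correlation identity $(1-\mu_u\mu_v)/2$ rather than from the four counts $|C_{u,S}|,|C_u|,|C_{v,S}|,|C_v|$ as the paper does.
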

\begin{proofof}{Lemma \ref{EMDtwopasslemma1} assuming Lemma \ref{MainEMDlemma22222}}
The first part of sampling $h_{i-1}$ and $h_i$ is the same as the proof of Lemma \ref{EMDlemma1}.
Similar to the linear sketch of Lemma \ref{EMDlemma1}, we use the $\ell_1$-sketch of Theorem  \ref{lem:cauchy} to obtain an estimate $\bDelta$ of $\Delta$.
We also sample $O(\log^6 n)$ many $\bS_1,\ldots$ from $\calS_i$ and for each $\bS_j$,
  repeat the two-round linear sketch of Lemma \ref{MainEMDlemma22222} $O(\log n)$ times to obtain
  the average as an estimate $\bfeta_j$ of $\bE_{\vv\sim \calV} [p_{\pi(\vv),\vv,\bS_j} ]$.
The total space used is $\polylog(n)$ and similar to the analysis of Lemma \ref{EMDlemma1},
  we have that with probability at least $0.9$, 
  $\bDelta=(1\pm 0.01)\Delta$ and $$
  \avg_j\{\bfeta_j\}=\bE_{\bv\sim\calV,\bS\sim \calS_i}\left[p_{\pi(\bv),\bv,\bS}\right]\pm 
   \frac{1}{8\log^2 n} .$$
Finally, returning $$\frac{3\bDelta}{\alpha_i}\cdot \left(\avg_j\{\bfeta_j\}+\frac{1}{8\log^2 n}\right)$$ satisfies the condition of Lemma \ref{EMDtwopasslemma1}.  \end{proofof}

\ignore{
\begin{figure}[t]
	\begin{framed}
		\begin{flushleft}		
			\noindent {\bf Linear Sketch:}
			   
				\begin{enumerate}
				\item Run the $\ell_1$-sketch from Theorem \ref{lem:cauchy}
		on $(|A_v|-|B_v|:v\in V)$ (with $\eps=\delta=0.01$). 
				\item
				Draw independently $O(\log^6 n)$ many $\bS_{1},\ldots$ from $\calS_i$.  
For each $\bS_{j}$, use $\sv$ to run the linear sketch in Lemma \ref{MainEMDlemma2} 
  (with $\tau=1/\log^3 n$) independently $O(\log n)$ times.
			\end{enumerate}
\noindent {\bf Reporting Procedure:}
\begin{enumerate}
\item Run the reporting procedure of the $\ell_1$-sketch of Theorem \ref{lem:cauchy}
  to obtain $\bDelta$.
  
\item For each of the $O(\log^6 n)$ many $j$,
  run the reporting procedure to obtain a number for each
    of the $O(\log n)$ independent runs and let $\bfeta_j$ denote
  their median. 
  
\item 
If $\bDelta<2\eps n/\log^3 n$, output $0$; otherwise
  output \begin{equation}\label{finaloutput}
  \bfeta=\frac{3\bDelta}{\alpha_i}\cdot \left(\avg_j\{\bfeta_j\}+\frac{1}{8\log^2 n}\right).\end{equation}
\end{enumerate}
		\end{flushleft}\vskip -0.14in
	\end{framed}\vspace{-0.2cm}\caption{First round of the two-round linear sketch for Lemma \ref{MainEMDlemma22222}}\label{mainlinearsketch1}
\end{figure}}

\subsection{Two-Round Linear Sketch of Lemma \ref{MainEMDlemma22222}}

We first give the easier two-round linear sketch of Lemma \ref{MainEMDlemma22222} as a warmup.
In the first round, we apply the $\ell_1$-sampling linear sketch of Theorem \ref{thm:ell-1-sa} on the vector indexed by
  $v\in V$, where the entry indexed by $v$ is $|A_v|-|B_v|$.
The space used by this linear sketch is $O(\log^2 n)$.
Let $\vv$ be the output of the linear sketch. Then either $\vv$ is the failure symbol $\perp$, which happens with
  probability at most $1/3$, or the distribution of $\vv\in V$ has distance $1/n$ from $\calV$ in total variation.

Assuming $\vv=v\in V$ from the first round and letting $u=\pi(v)$, the second round computes
$$
p_{u,v,S}=
\frac{|C_{u,S}|}{|C_u|}+\frac{|C_{v,S}|}{|C_v|}-\frac{|C_{u,S}|}{|C_u|}\cdot \frac{|C_{v,S}|}{|C_v|},
$$
where we write $C_{u,S}$ to denote the set of points $a\in C_u$ with $\chi_S(a)=1$; we define $C_{v,S}$ similarly.
This can be done trivially by asking for these four numbers directly, with $O(\log n)$ space.

In summary, this two-round linear sketch uses $O(\log^2 n)$-space and achieves the following:
It either fails, with probability no more than $1/3$, or returns $p_{\pi(\bv),\bv,S}$ with
  $\bv$ drawn from a distribution that is $(1/n)$-close to $\calV$ in total variation distance.
Lemma \ref{MainEMDlemma22222} follows by repeating the two-round linear sketch independently and taking the average
  at the end.

\subsection{Overview of the Linear Sketch of Lemma \ref{MainEMDlemma2}}

We start with an overview of the linear sketch of Lemma \ref{MainEMDlemma2}.
Fix $\sv:\{0,1\}^d\rightarrow V$, a set $S\subseteq [d]$, and parameters $\eps,\tau\in (0,1)$. Let $(A,B)$ be the input pair of $n$ points each and we assume without loss of generality that 
  $\Delta\ge \eps n/\log^3 n$
  (for the case when $\Delta<\eps n\log^3 n$ we only need to make sure that the reporting
  procedure returns a number and the same space upper bound applies, which will be trivial given
  the description of the linear sketch later). 
  
We start with some notation.
Let $Q$ be the $m^2$-dimensional nonnegative vector 
  indexed by $v\in V$ with $Q_v=||A_v|-|B_v||$ (so $\Delta=\sum_{v\in V} Q_v$). Let  $P$ be~the $m$-dimensional nonnegative vector indexed by $u\in U$
  with $P_u=\sum_{v:\pi(v)=u} Q_v$ (so $\Delta=\sum_{u\in U} P_u$ as well).
We use $\calU$ to denote the distribution over $U$ where each $u\in U$ is sampled
  with probability $P_u/\Delta$; recall that $\calV$ is the distribution where each
  $v\in V$ is sampled with probability $Q_v/\Delta$.
Ideally, $ p_{\pi(\vv),\vv,S}$ with $\vv\sim \calV$  
  can be sampled using the following distribution $\calD$ supported on edges of $H$:
$(\uu,\vv)\sim \calD$ is sampled by first drawing 
  $\uu\sim \calU$ and then drawing a $\vv\sim \calV$ conditioning on $\pi(\vv)=\uu$ (i.e. 
  each node  $v\in V$ with $\pi(v)=u$ is sampled with probability $Q_v/P_u$), and
  finally setting $\bp=p_{\bu,\bv,S}$.

Our linear sketch for Lemma \ref{MainEMDlemma2} starts 
  by sampling independently two sequences of numbers $\bt_u\sim \Exp(1)$ and $\bt_v\sim\Exp(1)$   for each node $u\in U$ and $v\in V$.
The presentation of our linear sketch and its analysis will proceed in the following 
  three steps:
\begin{flushleft}\begin{enumerate}
\item Given $(A,B)$ and a tuple of positive numbers $(t_u,t_v:u\in U,v\in V)$,
  we define an edge $(u^*,v^*)$ of $H$. 
%
 We prove in Lemma \ref{firststeplemma} that,   
 $(\bu^*,\bv^*)$ defined using $(\bt_u,\bt_v:u\in U,v\in V)$ 
 when $\bt_u,\bt_v\sim \Exp(1)$ independently
 has the same distribution as $\calD$ described above.
\item Next we define an event $\calE$ on a tuple of positive numbers $(t_u,t_v:u\in U,v\in V)$
  with respect to 
  $(A,B)$,
  and show that $(\bt_u,\bt_v:u\in U,v\in V)$ when $\bt_u,\bt_v\sim \Exp(1)$ independently
   satisfies the event $\calE$ with respect to $(A,B)$ with high probability (Lemma \ref{EMDproblemma}).
\item Finally, given $\sv$, $S$, a tuple of positive numbers $(t_u,t_v:u\in U,v\in V)$ and $\eps,\tau\in (0,1)$, we give a
  low-space linear sketch  
  (Lemma \ref{MainEMDlemma3}) that, given any input pair $(A,B)$ of size $n$ each, 
  returns a number $ \bfeta \in [0,1]$ and satisfies the following property:
Whenever $\Delta\ge \eps n/\log^3 n$ and $(t_u,t_v:u\in U,v\in V)$ satisfies $\calE$ with respect to $(A,B)$,
  we have $ {\bfeta}=p_{u^*,v^*,S}\pm \tau$  with probability at least $2/3$.
\end{enumerate}\end{flushleft}
Lemma \ref{MainEMDlemma2} follows (by repeating $O(1/\tau^2)$ many rounds: in
  each round we sample fresh $(\bt_u,\bt_v)$ and then run the linear sketch to 
  obtain a number, and output the average at the end);
  see the proof of Lemma \ref{MainEMDlemma2} at the end of this subsection.

For the first step, given a tuple of positive numbers $(t_u,t_v:u\in U,v\in V)$,
  we define $u^*\in U$ as~the node  $u\in U$ that maximizes $P_u/t_u$ (breaking ties 
  by taking the smallest such $u$).
Given $\Delta >0$,
  we must have $P_{u^*}>0$. 
Next we define $v^*\in V$ as the $v\in V$ among those with $\pi(v)=u$ that maximizes
  $Q_v/t_v$ (again breaking ties by taking the smallest such $v$).
Similarly we have $Q_{v^*}>0$.

The following lemma shows that $(\uu^*,\vv^*)$ obtained from $\bt_u,\bt_v\sim \Exp(1)$ is
  distributed as $\calD$.

\begin{figure}[t!]
	\begin{framed}
		\begin{flushleft}		
			Let $\gamma$ and $\beta$ be the following two parameters:
			\begin{equation}\label{choicesgamma}
			\gamma\eqdef \frac{\tau}{\log n}=\frac{1}{\log^4 n}\quad \text{and}\quad \beta\eqdef \left\lceil\frac{\log^5 n}{\eps\tau\gamma^3}\right\rceil=\frac{\poly(\log n)}{\eps}.
    		\end{equation}
			We say a tuple of positive numbers 
			$(t_u,t_v:u\in U,v\in V)$ satisfy event $\calE$ with respect to $(A,B)$
			if the following four events $\calE_1,\calE_2,\calE_3$ and $\calE_4$ hold: 
			 
	 \begin{itemize}
		 \item Event $\calE_1$: The numbers $(t_u:u\in U)$ satisfy 
		   $$
		   \sum_{u\in U} \frac{P_u}{t_u} \le \frac{4\log (n/\gamma)}{\gamma}\cdot 
		     \Delta.
		   $$
		 Moreover, $u^*$ satisfies 
		 $$
		 \frac{P_{u^*}}{t_{u^*}}\ge \gamma\Delta\quad \text{and}\quad
		 \frac{P_{u^*}}{t_{u^*}}\ge (1+\gamma)\cdot \max_{\substack{ u\ne u^*}} \left\{\frac{P_u}{t_u}\right\}.
		 $$
		 \item Event $\calE_2$: The numbers $(t_u,t_v:u\in U,v\in V)$ satisfy 
		   $$
		   \sum_{v\in V} \frac{Q_v}{t_{\pi(v)}t_v}\le \frac{4\log (n/\gamma)}{\gamma}
		   \cdot \sum_{u\in U} \frac{P_u}{t_u}.
		   $$
		  Moreover, $v^*$ satisfies  		    $$
		    \frac{Q_{v^*}}{t_{v^*}}\ge \gamma P_{u^*}\quad\text{and}\quad
		    \frac{Q_{v^*}}{t_{v^*}}\ge (1+\gamma)\cdot \max_{v:\pi(v)=u^*} \left\{\frac{Q_v}{t_v}\right\}.
		    $$
		 \item Event $\calE_3$:  The numbers $(t_u:u\in U)$ satisfy that
		 $$
		 \sum_{u\in U} \frac{|C_u|}{t_u}\le \frac{4\log (n/\gamma)}{\gamma}\cdot n
		 $$
		 and the
		 $\ell_2$-norm of $(|C_u|/t_u: u\in U)_{-\beta}$
		   is at most $12n/\sqrt{\beta}$.		   
		 \item Event $\calE_4$: 
		   The $\ell_2$-norm of $(|C_v|/(t_{\pi(v)}t_v):v\in V)_{-\beta}$ is at most
		   $$
		  \frac{12}{\sqrt{\beta}}\cdot \sum_{u\in U}\frac{|C_u|}{t_u}.
		   $$
		  
		 \end{itemize} \end{flushleft}
	\end{framed}\vspace{-0.2cm}\caption{Event $\calE$ on $(t_u,t_v:u\in U,v\in V)$
	with respect to $(A,B)$}\label{mainlinearsketch22}
\end{figure}

\begin{lemma}\label{firststeplemma}
Let $(A,B)$ be a pair of $n$ points. 
With $\bt_u\sim \Exp(1)$ and $\bt_v\sim \Exp(1)$ independently for each $u\in U$ and $v\in V$,
  we have that the distribution of $(\bu^*,\bv^*)$ is the same as $\calD$.
\end{lemma}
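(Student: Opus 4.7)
The plan is to use Fact~\ref{fact:scale} and Fact~\ref{fact:order} twice, first to identify the marginal distribution of $\uu^*$ and then, after conditioning on $\uu^*$, to identify the conditional distribution of $\vv^*$.

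First I would handle the marginal of $\uu^*$. By Fact~\ref{fact:scale}, for each $u\in U$ with $P_u>0$, the variable $\bt_u/P_u$ is distributed as $\Exp(P_u)$, and these are independent across $u$. Since $\Delta>0$, there is at least one $u$ with $P_u>0$, and any $u$ with $P_u=0$ yields $P_u/\bt_u=0$, so such $u$'s can never maximize $P_u/\bt_u$. Hence $\uu^*=\arg\max_{u\in U} P_u/\bt_u = \arg\min_{u:\hspace{0.04cm}P_u>0} \bt_u/P_u$, which by Fact~\ref{fact:order} applied to the exponentials $\{\bt_u/P_u\}_{u:\hspace{0.04cm}P_u>0}$ equals $u$ with probability $P_u/\sum_{u':P_{u'}>0}P_{u'}=P_u/\Delta$. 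Thus the marginal distribution of $\uu^*$ coincides with $\calU$.

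Next I would deal with the conditional distribution of $\vv^*$ given $\uu^*=u$. The crucial point is that the event $\{\uu^*=u\}$ depends only on the independent family $(\bt_u:u\in U)$, whereas $\vv^*$, once $\uu^*$ is determined, is a function of $(\bt_v:v\in V)$ (specifically of those $\bt_v$ with $\pi(v)=\uu^*$). Since the two families of exponentials are independent, conditioning on $\uu^*=u$ leaves $(\bt_v:v\in V)$ unchanged as i.i.d.\ $\Exp(1)$. Applying Fact~\ref{fact:scale} and Fact~\ref{fact:order} once more, this time to $\{\bt_v/Q_v\}_{v:\pi(v)=u,\,Q_v>0}$, gives that $\vv^*$ equals $v$ with probability $Q_v/\sum_{v':\pi(v')=u}Q_{v'}=Q_v/P_u$, which is exactly the conditional law of $\vv\sim\calV$ given $\pi(\vv)=u$.

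Combining the two steps, for every $(u,v)$ with $\pi(v)=u$ we obtain
\[
\Pr[\uu^*=u,\vv^*=v]\;=\;\Pr[\uu^*=u]\cdot \Pr[\vv^*=v\mid \uu^*=u]\;=\;\frac{P_u}{\Delta}\cdot \frac{Q_v}{P_u}\;=\;\frac{Q_v}{\Delta},
\]
which is precisely the probability that $\calD$ assigns to the edge $(u,v)$. I do not expect any substantial obstacle here: the only subtlety is the bookkeeping around $u$'s or $v$'s with $P_u=0$ or $Q_v=0$, which are handled by noting they never achieve the relevant maximum when $\Delta>0$, and ensuring that tie-breaking (taking the smallest index) does not affect the probability computation, which is true since the ties occur with probability zero among the indices that carry positive $P_u$ or $Q_v$.
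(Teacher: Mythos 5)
Your proof is correct and follows exactly the route the paper intends: the paper's own proof is just the one-line citation of Fact~\ref{fact:scale} and Fact~\ref{fact:order}, and your argument is the careful expansion of that citation (marginal of $\uu^*$ via the anti-rank fact, then conditioning using independence of the two families of exponentials). The handling of zero-mass indices and ties is a welcome extra detail that the paper leaves implicit.
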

\begin{proof}
This follows from Fact \ref{fact:scale} and Fact \ref{fact:order}.
\end{proof}

Next we describe the event $\calE$ in Figure \ref{mainlinearsketch22}, where we also introduce two
  new parameters $\gamma$ and $\beta$.

\begin{lemma}\label{EMDproblemma}
 Let $(A,B)$ be a pair of $n$ points.
With $\bt_u\sim \Exp(1)$ and $\bt_v\sim \Exp(1)$ independently~for each $u\in U$ and $v\in V$,
  $\calE$ holds with respect to $(A,B)$  with probability at least $1-O(\gamma)-e^{-\Omega(\beta)}$.
\end{lemma}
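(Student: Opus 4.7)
\medskip

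\textbf{Proof proposal for Lemma \ref{EMDproblemma}.}

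The plan is to establish each of the four sub-events $\calE_1,\calE_2,\calE_3,\calE_4$ separately by directly invoking one of the three prepared tail bounds on random variables scaled by reciprocals of exponentials---Lemma \ref{lem:sum_exponentials}, Lemma \ref{lem:gap_exponentials}, and Lemma \ref{lem:sizebound}---and then finishing by a union bound. The total failure probability will decompose into an $O(\gamma)$ contribution (from the sum/gap lemmas, each of which fails with probability $O(\gamma)$) and an $e^{-\Omega(\beta)}$ contribution (from the two applications of Lemma \ref{lem:sizebound}).

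For $\calE_1$, I would apply Lemma \ref{lem:sum_exponentials} to the fixed nonnegative vector $(P_u:u\in U)\in\R^m$ (noting $\|P\|_1=\Delta$ and $m=n^3=\poly(n)$, so the log factor is $O(\log(n/\gamma))$) using the exponentials $\bt_u\sim \Exp(1)$; this handles the sum bound with failure $\le 2\gamma$. Then apply Lemma \ref{lem:gap_exponentials} to the same vector to obtain simultaneously both the lower bound $P_{\uu^*}/\bt_{\uu^*}\ge \gamma\Delta$ and the multiplicative gap against the second-largest ratio, with failure $\le 4\gamma$. Event $\calE_3$ is handled analogously: Lemma \ref{lem:sum_exponentials} on $(|C_u|:u\in U)$ (with $\|\,|C|\,\|_1=2n$) for the sum bound, and Lemma \ref{lem:sizebound} applied to the same vector---using that $\bt_u\sim \Exp(1)$ satisfies $\Pr[1/\bt_u\ge y]\le 1/y$, so the constant $c=1$---for the tail $\ell_2$-norm statement, costing $3e^{-\beta/8}$.

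The events $\calE_2$ and $\calE_4$ require conditioning on the $\bt_u$'s and exploiting that the $\bt_v$'s are independent of them. For the sum bound in $\calE_2$, condition on $(\bt_u)$ and apply Lemma \ref{lem:sum_exponentials} to the $v$-indexed vector $w_v\eqdef Q_v/\bt_{\pi(v)}$, whose $\ell_1$-norm rearranges as
\[
\|w\|_1=\sum_{v\in V}\frac{Q_v}{\bt_{\pi(v)}}=\sum_{u\in U}\frac{P_u}{\bt_u},
\]
which is exactly the RHS of the desired sum bound; the conditional failure probability is $2\gamma$, hence so is the unconditional one. For the gap part, observe that $\uu^*$ depends only on $(\bt_u)$, so conditional on any outcome of $(\bt_u)$ the exponentials $\{\bt_v:\pi(v)=\uu^*\}$ are still i.i.d.\ $\Exp(1)$; applying Lemma \ref{lem:gap_exponentials} to the vector $(Q_v:\pi(v)=\uu^*)$ of $\ell_1$-mass $P_{\uu^*}$ yields both $Q_{\vv^*}/\bt_{\vv^*}\ge \gamma P_{\uu^*}$ and the gap against the next largest, with failure $4\gamma$. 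For $\calE_4$, condition on $(\bt_u)$ and apply Lemma \ref{lem:sizebound} to the $v$-indexed vector $|C_v|/\bt_{\pi(v)}$ scaled by $1/\bt_v$, again with $c=1$; the conclusion rewrites as $\|z_{-\beta}\|_2\le (12/\sqrt{\beta})\sum_u |C_u|/\bt_u$, failing with probability $3e^{-\beta/8}$.

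A union bound over the four events then gives failure probability at most $O(\gamma)+O(e^{-\beta/8})=O(\gamma)+e^{-\Omega(\beta)}$, as claimed. The main (minor) subtlety throughout is the conditioning argument for $\calE_2$ and $\calE_4$: since the $v^*$ gap is a statement about a random index $\uu^*$, one must check that the independence between the $U$-exponentials and the $V$-exponentials, together with the fact that Lemmas \ref{lem:sum_exponentials}, \ref{lem:gap_exponentials}, and \ref{lem:sizebound} give uniform bounds over the fixed input vector, allows one to apply each lemma conditionally on $(\bt_u)$ and then take expectations. Everything else is just bookkeeping of the $\poly(\log n)$ factors, which are absorbed into the $O(\log(n/\gamma))$ terms since $|U|,|V|=\poly(n)$.
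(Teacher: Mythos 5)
Your proposal is correct and follows essentially the same route as the paper's proof: $\calE_1$ and $\calE_3$ via Lemmas \ref{lem:sum_exponentials}, \ref{lem:gap_exponentials}, and \ref{lem:sizebound} (with $c=1$) applied to $(P_u)$ and $(|C_u|)$, and $\calE_2$, $\calE_4$ via the same lemmas applied to the $v$-indexed vectors conditionally on the randomness of the $\bt_u$'s, using $\sum_v Q_v/\bt_{\pi(v)}=\sum_u P_u/\bt_u$ and $\sum_v |C_v|/\bt_{\pi(v)}=\sum_u |C_u|/\bt_u$, followed by a union bound. If anything, your write-up is more careful than the paper's, since it makes explicit the conditioning argument (in particular that $\uu^*$ is measurable with respect to $(\bt_u)$ alone) that the paper leaves implicit.
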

\begin{proof} The first part of event $\calE_1$ follows from Lemma~\ref{lem:sum_exponentials}, and the second part follows from Lemma~\ref{lem:gap_exponentials}. Similarly, the first part of event $\calE_2$ follows from Lemma~\ref{lem:sum_exponentials} on the randomness of $\bt_v$. 
The first part of $\calE_3$ follows again by Lemma~\ref{lem:sum_exponentials}, and the second part of event $\calE_3$ uses Lemma \ref{lem:sizebound} (with $c=1$).
For event $\calE_4$, one can first uses Lemma \ref{lem:sizebound} (with $c=1$) and randomness of $\bt_v$
  to upperbound the $\ell_2$-norm of $(|C_v|/(\bt_{\pi(v)}\bt_v:v\in V)_{-\beta}$ by
$$
\frac{12}{\sqrt{\beta}}\cdot \sum_{v\in V} \frac{|C_v|}{\bt_{\pi(v)}}
=\frac{12}{\sqrt{\beta}}\cdot \sum_{u\in U} \frac{|C_u|}{\bt_u}. 
$$
Applying a union bound  finishes the proof of the lemma.
\end{proof}

Finally we state the performance guarantee of the main linear sketch:
  
\begin{lemma}\label{MainEMDlemma3}
Fix $\sv:\{0,1\}^d\rightarrow V$, $S\subseteq [d]$, a tuple of positive numbers $(t_u,t_v:u\in U,v\in V)$~and~two parameters $\eps,\tau\in (0,1)$. 
There is a $O(1/\eps)\cdot 
  \poly(\log n,\log d, 1/\tau )$-space linear sketch that, on input $(A,B)$ of size $n$
  each, outputs a number ${\bfeta}\in [0,1]$
  with the following property:
When $\Delta\ge \eps n/\log^3 n$ and 
  $(t_u,t_v)$ satisfies  $\calE$ with respect to $(A,B)$, we have 
  $ {\bfeta}=p_{u^*,v^*,S}\pm \tau$ with probability at least $2/3$.
\end{lemma}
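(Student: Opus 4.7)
The plan is to design a linear sketch that (a) identifies the pair $(u^*,v^*)$, (b) estimates each of the four counts $|C_{u^*}|$, $|C_{u^*,S}|$, $|C_{v^*}|$, $|C_{v^*,S}|$ (where $C_{u,S}$ denotes the set of $c\in C_u$ with $\chi_S(c)=+1$) to within multiplicative $(1\pm O(\tau))$, and (c) outputs $\bfeta := p+q-2pq$ with $p:=|\widehat{C_{u^*,S}}|/|\widehat{C_{u^*}}|$ and $q:=|\widehat{C_{v^*,S}}|/|\widehat{C_{v^*}}|$. Since $p_{u^*,v^*,S}=p+q-2pq$ is an algebraic combination of the four counts, multiplicatively accurate counts immediately yield a $\tau$-additive estimate of $p_{u^*,v^*,S}$.

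I will maintain three groups of linear sketches in parallel, each linear in the stream because $(t_u,t_v)$ are fixed inputs. Group~1 (for finding $u^*$): $K=\poly(\log n)$ independent perfect $\ell_1$-sampling sketches (Theorem~\ref{thm:ell-1-sa}) applied to the vector $y_v:=(|A_v|-|B_v|)/t_{\pi(v)}$, so each sample $\bv_k$ has $\Pr[\bv_k=v\mid \bv_k\neq\bot]\approx |y_v|/\|y\|_1$ and hence $\pi(\bv_k)$ is distributed as $\calU'(u):=(P_u/t_u)/\sum_{u'}(P_{u'}/t_{u'})$; I set $\hat u^*$ to the empirical mode of $\pi(\bv_1),\ldots,\pi(\bv_K)$. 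Group~2 (for finding $v^*$): a CountSketch on $z_v:=(|A_v|-|B_v|)/(t_{\pi(v)}t_v)$ with $\beta$ buckets, from which I enumerate the top $\Theta(\beta)$ heavy hitters of $|z|$ and set $\hat v^*$ to the candidate with $\pi(v)=\hat u^*$ maximizing $|\hat z_v|$. Group~3 (for counts): four CountSketches with $\beta$ buckets on $(|C_v|/(t_{\pi(v)}t_v))_v$, $(|C_{v,S}|/(t_{\pi(v)}t_v))_v$, $(|C_u|/t_u)_u$, $(|C_{u,S}|/t_u)_u$, queried at $\hat v^*$ and $\hat u^*$ to recover the four counts.

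For analysis: in Group~1, $\calE_1$ gives $\calU'(u^*)\geq\Omega(\gamma^2/\log n)$ and the multiplicative gap $1+\gamma$ over every other $u$, so a Chernoff argument (over $K=\poly(\log n)$ valid samples, with failure symbols $\bot$ discarded) shows $\hat u^*=u^*$ with high constant probability. In Group~2, $\calE_1$ and $\calE_2$ combine to give $|z_{v^*}|=Q_{v^*}/(t_{u^*}t_{v^*})\geq\gamma^2\Delta$; the coordinate-wise bound $|z_v|\leq |C_v|/(t_{\pi(v)}t_v)$ together with $\calE_3,\calE_4$ then yields
$$\|z_{-\beta}\|_2 \leq \|(|C_v|/(t_{\pi(v)}t_v))_{-\beta}\|_2 \leq (12/\sqrt\beta)\cdot O(\log n/\gamma)\cdot n,$$
which for the specified $\beta$ is much smaller than $|z_{v^*}|$, so $v^*$ is among the enumerated top-$\beta$ heavy hitters, and the gap in $\calE_2$ singles out $\hat v^*=v^*$ given $\hat u^*=u^*$. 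In Group~3, $\calE_3,\calE_4$ bound the $\ell_2$-tails of the scaled count vectors, and combined with the lower bounds $|C_{u^*}|\geq P_{u^*}\geq\gamma t_{u^*}\Delta$ and $|C_{v^*}|\geq Q_{v^*}\geq\gamma^2 t_{u^*}t_{v^*}\Delta$ together with $\Delta\geq\eps n/\log^3 n$, the CountSketch queries return each count within multiplicative $(1\pm\tau)$. A union bound over the failure probabilities of the three groups pushes the overall success probability above $2/3$. The total space is $O(\beta)\cdot\polylog(n,d)=O(1/\eps)\poly(\log n,\log d,1/\tau)$.

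The main obstacle is Group~1 (mode identification via $\ell_1$-sampling). The mode probability $\calU'(u^*)$ is only polylogarithmically large and the gap $\gamma$ only polylogarithmically small, so the Chernoff margins are tight and the choice of $K$ must be tuned carefully; in addition, perfect $\ell_1$-sampling returns $\bot$ with constant probability, so one must rescale $K$ to guarantee a constant fraction of valid samples and union bound over $\poly(\log n)$ candidate $u$'s that actually appear. A secondary difficulty is verifying in Group~2 that the top-$\beta$ heavy hitters of $|z|$ truly contain $v^*$: this rests on the entrywise domination argument above, which is what forces the particular choice of $\beta$ in Figure~\ref{mainlinearsketch22} and couples the size of the sketch to the assumption $\Delta\geq \eps n/\log^3 n$.
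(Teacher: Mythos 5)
Your decomposition into three coupled linear sketches---identify $u^*$, identify $v^*$ given $u^*$, then estimate the four counts $|C_{u^*}|, |C_{u^*,S}|, |C_{v^*}|, |C_{v^*,S}|$---matches the paper's decomposition into $(\LS_1,\ALG_1)$, $(\LS_2,\ALG_2)$, $(\LS_3,\ALG_3)$, and your Groups~2 and~3 are essentially the paper's constructions. Group~1 is where you genuinely diverge: the paper does not use $\ell_1$-sampling to recover $u^*$. Instead it picks fresh Cauchy variables $\balpha_v$, forms the vector $\bS^{(j)}_u = \sum_{v:\pi(v)=u}\balpha_v(|A_v|-|B_v|)/t_u$, runs CountSketch on it, and repeats $O(\log n/\gamma^2)$ times taking a coordinate-wise median. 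By $1$-stability each $\bS^{(j)}_u$ is distributed as $\balpha_u\cdot P_u/t_u$, so the median is a $(1\pm\gamma/4)$-multiplicative estimate of $P_u/t_u$ uniformly over $u$, and the $(1+\gamma)$-gap in $\calE_1$ then makes $u^*$ the argmax outright. This sidesteps the mode-identification argument you describe as your ``main obstacle'': you would need to tune $K = \tilde O(1/(\gamma^2\calU'(u^*))) = \tilde O(\log n/\gamma^4)$ valid samples and union-bound over the candidate parents that actually appear; this is doable and stays $\polylog(n)$, but the Cauchy-median route delivers a uniform multiplicative estimate of $P_u/t_u$ directly and avoids the $\bot$-discarding bookkeeping entirely. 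Your approach is also slightly costlier in samples because the mode gap $\calU'(u^*)-\calU'(u)$ is only of order $\gamma^3/\log n$ (quadratic loss in $\gamma$ from both the gap and the normalizer in $\calE_1$), whereas the paper only needs relative error $\gamma/4$ per coordinate. One correction in your favor: you write $p_{u^*,v^*,S} = p + q - 2pq$ with $p = |C_{u^*,S}|/|C_{u^*}|$ and $q = |C_{v^*,S}|/|C_{v^*}|$; this is the correct formula for $\Pr[\chi_S(\cc_u)\ne\chi_S(\cc_v)]$, and the paper's $p+q-pq$ (appearing in both the two-round warmup and in $\ALG_3$) is an arithmetic slip. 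It does not change the space bound or the error analysis since both expressions are $1$-Lipschitz algebraic combinations of $p,q\in[0,1]$, but your version is the right one.
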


\begin{proofof}{Lemma \ref{MainEMDlemma2} assuming Lemma \ref{MainEMDlemma3}}
The linear sketch repeats the following $O(1/\tau^2)$ rounds:
\begin{flushleft}\begin{itemize}
\item Draw fresh $\bt_u,\bt_v\sim \Exp(1)$ and repeat the linear sketch of
  Lemma \ref{MainEMDlemma3} independently \\ (setting $\tau$ in Lemma \ref{MainEMDlemma3}
    to be a quarter of the $\tau$ in Lemma \ref{MainEMDlemma2}) for $O(\log (1/\tau))$ times.
\end{itemize}\end{flushleft}
For each round of the linear sketch, the reporting procedure uses the reporting procedure of Lemma \ref{MainEMDlemma3} to get $O(\log (1/\tau))$ numbers in $[0,1]$ and compute their median (in $[0,1]$).
It finally outputs the average of these $O(1/\tau^2)$ many medians.
  
To prove the correctness, it suffices to show that the expectation of $\bfeta$ (the median)
  reported from each round 
  is within $\bE_{\vv\sim \calV}[p_{\pi(\vv),\vv,S}]\pm (\tau/2)$. 
To see this is the case, we compare the expectation of $\bfeta$ with the ideal process of sampling first $(\bu^*,\bv^*)\sim \calD$
  and returning $p_{\bu^*,\bv^*,S}$, whose expectation is exactly 
   $\Ex_{\vv\sim \calV}[p_{\pi(\vv),\vv,S}]$.
First, the linear sketch may fail because $(\bt_u,\bt_v)$ sampled in this round does not satisfy $\calE$,
  which happens with probability $o(\tau)$ and may shift the expectation by no more 
  than $o(\tau)$.
Second, the median may fail to fall inside the $\pm \tau/4$ interval centered at $p_{u^*,v^*,S}$,
  which happens with probability $\tau/8$ (because we repeat $O(\log (1/\tau))$ times in each round), which can shift the expectation by no more than $\tau/8$.
Finally, even if the median is inside the interval $p_{u^*,v^*,S}\pm (\tau/4)$,~the~$\pm \tau/4$ may
  shift the expectation by $\tau/4$.
Summing up these three cases, we conclude that the expectation can be shifted by no more than $\tau/2$.
\end{proofof} 

\subsection{Three Linear Sketches and their Performance Gurantees}

Our linear sketch for Lemma \ref{MainEMDlemma3} consists of three
  linear sketches
  whose goals are to  recover node
   $u^*$, recover node $v^*$, and estimate $p_{u^*,v^*,S}$, respectively.
For each $i\in [3]$, we write $\LS_i$ to denote~the encoding algorithm of the $i$th linear sketch 
  which uses its own randomness $\aa_i$ to generate the linear sketch
  $\LS_i((A,B),\aa_i)$.
Each $\LS_i$ is paired with a decoding algorithm $\ALG_i$ (note that  
  $\LS_i$ and $\ALG_i$ are deterministic) which has
  the following input and output:
\begin{enumerate}
\item $\ALG_1$ takes $\aa_1$ and $\LS_1((A,B),\aa_1)$ as input and outputs a vertex in $U$;
\item $\ALG_2$ takes a node in $U$, $\aa_2$ and $\LS_2((A,B),\aa_2)$ as input and outputs a vertex in $V$;
\item $\ALG_3$ takes an edge in $H$, $\aa_3$ and $\LS_3((A,B),\aa_3)$ as input and outputs a number in $[0,1]$.
\end{enumerate}
We now state their performance guarantees  (where we fix 
  $\sv,S,(t_u,t_v),\eps$ and $\tau$ as in Lemma \ref{MainEMDlemma3}).

\begin{lemma}\label{FinalEMDlemma1} 
There is a $\poly(\log n,1/\gamma)$-space linear sketch $(\LS_1,\ALG_1)$ which outputs a node in~$ U$
  and satisfies the following property:
Whenever $\Delta>0$ and $(t_u,t_v)$ satisfies $\calE_1$ with respect~to $(A,B)$,  $\ALG_1$
  returns $u^*$ with probability at least $1-1/\poly(n)$.
\end{lemma}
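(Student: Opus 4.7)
The plan is to recover $u^*$ by reducing to standard $\ell_1$-heavy-hitter identification on the vector $w=(P_u/t_u)_{u\in U}$. Under event $\calE_1$, we have $P_{u^*}/t_{u^*}\ge \gamma\Delta$ while $\|w\|_1\le (4\log(n/\gamma)/\gamma)\Delta$, so $u^*$ carries an $\Omega(\gamma^2/\log(n/\gamma))$-fraction of $\|w\|_1$, and the gap $P_{u^*}/t_{u^*}\ge (1+\gamma)\max_{u\ne u^*}P_u/t_u$ makes $u^*$ the unique maximizer.

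The main obstacle is that $w$ is not a linear function of the input stream, since $P_u=\sum_{v:\pi(v)=u}\bigl||A_v|-|B_v|\bigr|$ involves absolute values. However, $y_v:=|A_v|-|B_v|$ \emph{is} linear in the updates, and I would linearize $|y_v|$ using $1$-stable (Cauchy) random variables. Specifically, for each dyadic prefix-length $\ell\in[\log_2 m]$, each row $r\in[R]$ with $R=O(\log n)$, and each Cauchy copy $k\in[s]$ with $s=O(\log n/\gamma^2)$, sample a hash $h_\ell^{(r)}:U\to[K]$ with $K=\poly(\log n,1/\gamma)$ and i.i.d.\ Cauchy $\bomega_{\ell,r,k,v}\sim\calC$, and maintain the linear sketches
$$Z_{\ell,r,k,b} \;=\; \sum_{v\,:\,h_\ell^{(r)}(\pi(v))=b}\frac{\bomega_{\ell,r,k,v}\,y_v}{t_{\pi(v)}}, \qquad b \in [K].$$
By $1$-stability, conditioned on $h_\ell^{(r)}$ each $Z_{\ell,r,k,b}\sim\calC\bigl(\sum_{u:h_\ell^{(r)}(u)=b}P_u/t_u\bigr)$, so applying Theorem~\ref{lem:cauchy} across the $k$-copies yields a $(1\pm\gamma/100)$-accurate estimate of every bucket's $\ell_1$-mass with failure probability $1/\poly(n)$.

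To extract $u^*$, the decoding algorithm $\ALG_1$ would run the standard dyadic heavy-hitter search on top of these bucket-mass estimates: starting from the empty prefix, at level $\ell$ retain the length-$\ell$ prefixes whose best-row bucket-estimate exceeds the threshold $\gamma\Delta/2$. Since $\|w\|_1=O(\log(n/\gamma)/\gamma)\Delta$, at most $O(\log(n/\gamma)/\gamma^2)$ prefixes can survive at any level, keeping the number of queries polylogarithmic overall. At level $\log_2 m$, output the surviving $u\in U$ with the largest estimate. The main technical point is tuning $K$ and $R$ so that in at least one of the $R$ rows the non-$u^*$ collision mass inside $u^*$'s bucket is $o(\gamma P_{u^*}/t_{u^*})$; combined with the $(1\pm\gamma/100)$ multiplicative accuracy of the Cauchy estimator, this lets us leverage the $\gamma$-gap in $\calE_1$ to cleanly separate $u^*$ from every other retained candidate at the final level. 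A union bound over the $O(\log n)$ levels and $\poly(\log n,1/\gamma)$ prefix queries yields the claimed $1-1/\poly(n)$ success probability, with total space $\poly(\log n,1/\gamma)$.
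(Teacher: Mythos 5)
Your proposal is correct and rests on the same crucial idea as the paper's proof: since $P_u=\sum_{v:\pi(v)=u}\bigl||A_v|-|B_v|\bigr|$ is not linear in the stream, both arguments linearize it by attaching an independent Cauchy coefficient to each $v$, so that the aggregated coordinate indexed by $u$ is, by $1$-stability, a Cauchy variable of scale $P_u/t_u$. Where you genuinely diverge is the recovery architecture. The paper runs Count-Sketch directly on the $m$-dimensional Cauchy-scaled vector, repeats the whole construction $O(\log n/\gamma^2)$ times, takes a coordinatewise median of the recovered vectors, and outputs the global $\argmax$ over all of $U$ --- permissible because decoding time is not charged in the linear-sketch model, so no dyadic search is needed. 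The price is that the Count-Sketch error must be controlled for a Cauchy-scaled vector, which has no second moment; this is exactly what Lemma \ref{lem:sizebound} supplies, and it is then combined with the gap and lower bound in $\calE_1$ to argue the $\argmax$ is $u^*$. Your per-bucket scheme sidesteps that lemma entirely: each bucket value is itself exactly Cauchy with scale equal to the bucket's $\ell_1$ mass, so Theorem \ref{lem:cauchy} applies off the shelf, at the cost of the standard dyadic bookkeeping. Two details to tighten: the per-candidate estimate should be the \emph{minimum} over the $R$ rows (bucket mass only overestimates a candidate's own mass, and you need the row with least collision), since a ``best-row'' maximum could let light prefixes survive and, worse, corrupt the final $\argmax$; and the threshold $\gamma\Delta/2$ presupposes a constant-factor estimate of $\Delta$, which you should obtain from a separate $\ell_1$-sketch on $(|A_v|-|B_v|:v\in V)$ exactly as the paper does in Figure \ref{mainlinearsketch1}. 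With those fixes your route yields the same $\poly(\log n,1/\gamma)$ space and $1-1/\poly(n)$ success guarantee.
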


\begin{lemma}\label{FinalEMDlemma2}
There is a $\poly(\log n,1/\gamma)$-space linear sketch $(\LS_2,\ALG_2)$ which outputs a node in $V$
  and satisfies the following properties: (1) If $u$ is node given to $\ALG_2$, then the latter
  always returns a node $\vv$ with $\pi(\vv)=u$; (2)
Whenever $\Delta>0$ and $(t_u,t_v)$ satisfies $\calE_1$ and $\calE_2$ with respect to $(A,B)$,  $\ALG_2$
  running on $u^*$, $\aa_2$ and $\LS_2((A,B),\aa_2)$ 
 returns $v^*$ with probability at least $1-\poly(n)$.
\end{lemma}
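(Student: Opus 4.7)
The plan is to mirror the structure of Lemma \ref{FinalEMDlemma1} but one level down: maintain a single Count-Sketch (Theorem \ref{thm:countsketch}) of a signed, scaled vector indexed by $v \in V$, and at decoding time, given $u^*$, return the $v$ with $\pi(v)=u^*$ that has the largest estimated magnitude. The key observation is that since $\calE_2$ provides both a multiplicative gap between $Q_{v^*}/t_{v^*}$ and its siblings under $u^*$ \emph{and} an $\ell_1$-bound on the entire scaled vector, $v^*$ is a sufficiently strong heavy hitter that Count-Sketch recovers it exactly.

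Concretely, define $w \in \R^V$ by $w_v = (|A_v|-|B_v|)/(t_v\,t_{\pi(v)})$; since $|A_v|$ and $|B_v|$ depend linearly on the insertion/deletion stream and the scalings $1/(t_v t_{\pi(v)})$ are fixed by $\aa_2$, the vector $w$ is linear in the input. Let $z_v=|w_v|=Q_v/(t_v t_{\pi(v)})$. The sketch $\LS_2$ stores a Count-Sketch of $w$ with parameter $\eps'$ to be chosen, and to decode on input $u \in U$, $\ALG_2$ enumerates the $m$ nodes $v=(u,w_2)\in V$ with $\pi(v)=u$, queries the Count-Sketch to obtain estimates $\hat{w}_v$, and returns the $v$ maximizing $|\hat{w}_v|$. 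Property (1) is then automatic from the enumeration.

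For property (2), under $\calE_1$ and $\calE_2$ the vector $z$ satisfies $z_{v^*}\ge \gamma P_{u^*}/t_{u^*}\ge \gamma^2\Delta$, the local gap $z_{v^*}\ge (1+\gamma)\max\{z_v:\pi(v)=u^*,\ v\ne v^*\}$, and $\|z\|_1 \le \frac{16\log^2(n/\gamma)}{\gamma^2}\Delta$ (by chaining the $\ell_1$ bounds in $\calE_2$ and $\calE_1$). For any nonnegative vector one has $\|z_{-k}\|_2\le \|z\|_1/\sqrt{k+1}$, so setting $k=1/\eps'^2$ gives
\[
\eps'\,\|w_{-k}\|_2 \;=\; \eps'\,\|z_{-k}\|_2 \;\le\; \eps'^2\,\|z\|_1 \;\le\; \eps'^2\cdot \frac{16\log^2(n/\gamma)}{\gamma^2}\,\Delta.
\]
Choosing $\eps' = \Theta(\gamma^{5/2}/\log(n/\gamma))$ makes the right-hand side smaller than $\gamma\,z_{v^*}/4$. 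The Count-Sketch guarantee $\|\hat{w}-w\|_\infty \le \eps'\|w_{-k}\|_2$ (which holds with probability $1-1/\poly(n)$) then ensures $|\hat{w}_{v^*}|>|\hat{w}_v|$ for every $v\ne v^*$ with $\pi(v)=u^*$, whether $w_v\ne 0$ (use the $(1+\gamma)$-gap) or $w_v=0$ (use $z_{v^*}>2\eps'\|z_{-k}\|_2$). The space is $O(\log n/\eps'^2)=\poly(\log n,1/\gamma)$, as required.

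The main subtlety is that at sketching time the identity of $u^*$ is unknown, so the sketch cannot be tailored to the children of $u^*$; yet the gap condition in $\calE_2$ is purely local to the siblings of $v^*$, so we would need an $\ell_\infty$ guarantee that is fine enough to resolve this local gap. The bridge is that the global $\ell_1$-bound from $\calE_2$ dominates even the tail $\ell_2$-norm (via $\|\cdot\|_1/\sqrt{k}$), and since $z_{v^*}$ constitutes a $\Omega(\gamma^4/\log^2(n/\gamma))$ fraction of $\|z\|_1$, the required per-coordinate precision is achievable in $\poly(\log n,1/\gamma)$ space, without ever appealing to $\calE_3$ or $\calE_4$.
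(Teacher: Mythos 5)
Your proposal is correct and follows essentially the same route as the paper: the paper's $\LS_2$ is exactly a single Count-Sketch of the vector $S_v=(|A_v|-|B_v|)/(t_{\pi(v)}t_v)$ with $\ALG_2$ returning $\argmax_{v:\pi(v)=u}|\widehat{\bS}_v|$, and its correctness argument likewise chains the $\ell_1$-bounds of $\calE_1$ and $\calE_2$ to get $\|S\|_1\le O(\log^2 n/\gamma^2)\Delta$, then uses $z_{v^*}\ge\gamma^2\Delta$ and the $(1+\gamma)$ sibling gap to beat the $\ell_\infty$ error (the paper sets $\eta=O(\gamma^5/\log^2 n)$; your slightly larger $\eps'$ comes from invoking the tail bound $\|z_{-k}\|_2\le\|z\|_1/\sqrt{k}$ explicitly, which is an immaterial refinement).
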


\begin{lemma}\label{FinalEMDlemma3}
There is a $O( \beta\log n)$-space linear sketch $(\LS_3,\ALG_3)$ which always outputs a number in $[0,1]$
  and satisfies the following property:
Whenever $\Delta\ge \eps n/\log^3n$ and $(t_u,t_v)$ satisfies  $\calE_3$ and $\calE_4$ 
  with respect to $(A,B)$,  $\ALG_3$
  running on $(u^*,v^*)$, $\aa_3$ and $\LS_3((A,B),\aa_3)$ 
  returns a number $\bfeta $ that satisfies 
$ {\bfeta}=p_{u^*,v^*,S}\pm \tau$ with probability at least $1-1/\poly(n)$.
\end{lemma}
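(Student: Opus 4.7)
The plan is to build $\LS_3$ from four independent Count-Sketches (Theorem~\ref{thm:countsketch}), each with accuracy parameter $1/\sqrt{\beta}$ and thus using $O(\beta\log n)$ bits, applied to the following four vectors, each of which is a linear function of the stream once the randomness $(t_u,t_v)$ is fixed: (a) $x^{(1)}\in\R^U$ with $x^{(1)}_u=|C_u|/t_u$; (b) $x^{(2)}\in\R^U$ with $x^{(2)}_u=|C_{u,S}|/t_u$; (c) $x^{(3)}\in\R^V$ with $x^{(3)}_v=|C_v|/(t_{\pi(v)}t_v)$; and (d) $x^{(4)}\in\R^V$ with $x^{(4)}_v=|C_{v,S}|/(t_{\pi(v)}t_v)$. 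Here $|C_{u,S}|$ denotes $|\{a\in C_u:\chi_S(a)=1\}|$ and similarly for $|C_{v,S}|$; both are linear in the stream. Given $(u^*,v^*)$ as input, $\ALG_3$ queries the four sketches at $u^*$ and $v^*$ to obtain $\hat{x}^{(j)}_{\cdot}$, forms $\hat{r}_u:=\hat{x}^{(2)}_{u^*}/\hat{x}^{(1)}_{u^*}$ and $\hat{r}_v:=\hat{x}^{(4)}_{v^*}/\hat{x}^{(3)}_{v^*}$ (each clipped to $[0,1]$), and outputs $\bfeta:=\hat{r}_u(1-\hat{r}_v)+(1-\hat{r}_u)\hat{r}_v\in[0,1]$. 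Correctness rests on the identity $p_{u,v,S}=r_u(1-r_v)+(1-r_u)r_v$ for $r_u=|C_{u,S}|/|C_u|$ and $r_v=|C_{v,S}|/|C_v|$, together with the observation that the scaling factors $t_{u^*},t_{v^*}$ cancel in each ratio, so $\hat{r}_u,\hat{r}_v$ target $r_u,r_v$ directly.

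By Theorem~\ref{thm:countsketch} and a union bound, all four queries succeed with additive $\ell_\infty$ error at most $\|x^{(j)}_{-\beta}\|_2/\sqrt{\beta}$, with probability $1-1/\poly(n)$. Event $\calE_3$ gives $\|x^{(1)}_{-\beta}\|_2\le 12n/\sqrt{\beta}$; using $|C_{u,S}|\le|C_u|$ coordinatewise, a standard domination argument transfers the same bound to $\|x^{(2)}_{-\beta}\|_2$ (pick $T$ to be the top-$\beta$ set of $x^{(1)}$ and use $\|x^{(2)}_{-\beta}\|_2\le\|x^{(2)}|_{U\setminus T}\|_2\le\|x^{(1)}|_{U\setminus T}\|_2=\|x^{(1)}_{-\beta}\|_2$). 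Event $\calE_4$ combined with the first part of $\calE_3$ gives $\|x^{(3)}_{-\beta}\|_2\le O(n\log(n/\gamma)/(\gamma\sqrt{\beta}))$, and the same domination carries this bound to $x^{(4)}$. The heavy-hitter parts of $\calE_1$ and $\calE_2$, together with the trivial bounds $P_{u^*}\le|C_{u^*}|$ and $Q_{v^*}\le|C_{v^*}|$, supply the matching lower bounds $x^{(1)}_{u^*}\ge\gamma\Delta\ge\gamma\eps n/\log^3 n$ and $x^{(3)}_{v^*}\ge\gamma P_{u^*}/t_{u^*}\ge\gamma^2\eps n/\log^3 n$. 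The ratio-perturbation identity $\hat{r}_u-r_u=(e_2-r_u e_1)/(x^{(1)}_{u^*}+e_1)$, combined with $|e_1|,|e_2|\le 12n/\beta$ and $r_u\le 1$, yields $|\hat{r}_u-r_u|=O(\log^7 n/(\eps\beta))$ (using $\gamma=1/\log^4 n$), and the analogous computation at $v^*$ gives $|\hat{r}_v-r_v|=O(\log^5 n\cdot\log(n/\gamma)/(\gamma^3\eps\beta))$. The specified $\beta=\poly(\log n)/(\eps\tau\gamma^3)$ makes both of these at most $\tau/4$, so $|\bfeta-p_{u^*,v^*,S}|\le\tau$, as required.

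The main obstacle is that $|C_{u^*,S}|$ (and $|C_{v^*,S}|$) can be arbitrarily small, possibly zero, relative to $|C_{u^*}|$ (and $|C_{v^*}|$), so the Count-Sketch additive error on $x^{(2)}$ or $x^{(4)}$ cannot be translated into a multiplicative error on the numerator. This is circumvented by exploiting that $r_u,r_v\in[0,1]$ are bounded ratios: it suffices to bound the additive numerator error by $O(\tau)$ times the denominator, and the tail bounds of $\calE_3,\calE_4$ together with the lower bounds of $\calE_1,\calE_2$ deliver exactly this, provided $\beta$ is chosen as specified.
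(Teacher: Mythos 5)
Your proposal is correct and follows essentially the same route as the paper: four Count-Sketches on exactly the vectors $|C_u|/t_u$, $|C_{u,S}|/t_u$, $|C_v|/(t_{\pi(v)}t_v)$, $|C_{v,S}|/(t_{\pi(v)}t_v)$, tail bounds from $\calE_3,\calE_4$ (with the same domination argument for the $C_{\cdot,S}$ versions), lower bounds on the queried entries from the heavy-hitter parts of $\calE_1,\calE_2$, and the same ratio-perturbation analysis with the same choice of $\beta$. The only difference is cosmetic: you combine the two ratios as $r_u(1-r_v)+(1-r_u)r_v=r_u+r_v-2r_ur_v$, which is in fact the correct expression for $\Pr[\chi_S(\cc_u)\ne\chi_S(\cc_v)]$ under the paper's definition of $C_{u,S}$, whereas the paper writes $r_u+r_v-r_ur_v$; either way the expression is multilinear in the two ratios, so the error propagation is identical.
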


\begin{proofof}{Lemma \ref{MainEMDlemma3} assuming Lemmas \ref{FinalEMDlemma1},
  \ref{FinalEMDlemma2} and \ref{FinalEMDlemma3}}
The main linear sketch of Lemma \ref{MainEMDlemma3} flips coins $\aa_1,\aa_2,\aa_3$
  and then encodes $(A,B)$ as $(\LS_i((A,B),\aa_i):i\in [3])$.
The reporting procedure runs $\ALG_1$ on $\aa_1$ and $\LS_1((A,B),\aa_1)$ to obtain
  a vertex $\uu\in U$,
  runs $\ALG_2$ on $\uu$, $\aa_2$ and $\LS_2((A,B),\aa_2)$ to obtain
  a vertex $\vv\in V$,
  and runs $\ALG_3$ on $(\uu,\vv)$, $\aa_3$ and $\LS_3((A,B),\aa_3)$ to obtain
  $ {\bfeta}$.
By a union bound on errors of the three lemmas,
  we have $ {\bfeta}=p_{u^*,v^*,S}\pm \tau$ with probability at least $2/3$.
\end{proofof}

\subsection{Linear Sketch of Lemma \ref{FinalEMDlemma1}} 

\begin{figure}[t]
	\begin{framed}
		\begin{flushleft}		
			\textbf{Linear Sketch $\LS_1$:} We repeat the following procedure 
			  for $j=1,\ldots,O(\log n/\gamma^2)$ times:
			\begin{enumerate}
			\item For each $v\in V$, draw an independent Cauchy random variable $\balpha_v$.
			\item Define the following vector $\bS^{(j)}$ indexed by $u\in U$: For each $u\in U$, let $$\bS^{(j)}_u:=\sum_{v:\pi(v)=u} \balpha_v \cdot \frac{|A_v|-|B_v|}{t_u}
.$$
			\item Run the Count-Sketch of Theorem \ref{thm:countsketch} on $\bS^{(j)}$ (with  
			  $\eps=\eta$ and $n=m$).
			\end{enumerate}
			
			\textbf{Reporting Procedure $\ALG_1$:} For each $j$, use the reporting procedure
			  of the Count-Sketch of Theorem \ref{thm:countsketch} to compute a vector $\widehat{\bS}^{(j)}$.
			Let $\widehat{\bS}$ be the vector with 
			$$\widehat{\bS}_u:=\median\left\{\widehat{\bS}^{(j)}_u:j=1,\ldots,O(\log n/\gamma^2)\right\},\quad\text{for each $u\in U$.}$$
			Output $\argmax_{u\in U} \{\widehat{\bS}_u\}$.
		\end{flushleft}
	\end{framed}\vspace{-0.2cm}\caption{The linear sketch $(\LS_1,\ALG_1)$ of Lemma \ref{FinalEMDlemma1}}\label{LSEMDfigure1}
\end{figure}

Let $\eta=O(\gamma^2/\log n)=1/\poly(\log n)$.
We describe  $(\LS_1,\ALG_1)$ in Figure \ref{LSEMDfigure1}, which uses space
  $$O\left(\frac{\log n}{\gamma^2}\right)\cdot O\left(\frac{\log n}{\eta^2}\right)
  =\poly(\log n).$$
We start by showing that $\widehat{\bS}^{(j)}$ is close to $\bS^{(j)}$ in $\ell_\infty$ distance
  for each $j$.
For this purpose, it suffices to upperbound $\smash{\|\bS^{(j)}_{-1/\eta^2}\|_2}$.
Note that each entry of $\bS^{(j)}$ is a sum with Cauchy random variables
  as coefficients.
Using the $1$-stability of the Cauchy distribution, each entry 
  $\smash{\bS^{(j)}_u}$  can be equivalently generated by drawing a
  Cauchy variable $\balpha_u$ and setting
$$
\bS^{(j)}_u= \balpha_u\cdot \sum_{v:\pi(v)=u} \frac{Q_v}{t_u}=\balpha_u\cdot \frac{
P_u}{t_u}.
$$
Applying Lemma \ref{lem:sizebound} and using the fact that $\Pr[\balpha\ge t]\le 1/t$,
  we have from $\calE_1$ that $$\left\|\bS^{(j)}_{-1/\eta^2} \right\|_2\le O(\eta)\cdot  \sum_{u\in U} \frac{P_u}{t_u}\le O\left(\frac{\eta\log n}{\gamma}\right)\cdot \Delta$$
with probability $1-1/\poly(n)$.
Assuming this holds for every $j$, it follows from Theorem \ref{thm:countsketch} that 
$$
\left\|\widehat{\bS}^{(j)}-\bS^{(j)}\right\|\le O\left(\frac{\eta^2 \log n}{\gamma}\right)\cdot \Delta 
$$
with probability $1-1/\poly(n)$.
As a result (and given that we only apply a union bound on $O(\log n)$ many events), 
  with probability at least $1-1/\poly(n)$, we have for every $u\in U$:
$$
\widehat{\bS}_u=\median \left\{\bS_u^{(j)}:j=1,\ldots,O(\log n/\gamma^2)\right\}\pm 
O\left(\frac{\eta^2 \log  n}{\gamma}\right)\cdot \Delta.
$$
On the other hand, for each $u\in U$, we 
  have from Theorem \ref{lem:cauchy} that 
$$
\median \left\{\bS_u^{(j)}:j=1,\ldots,O(\log n/\gamma^2)\right\}=\median(|\calC|)\cdot 
  \left(1\pm \frac{\gamma}{4}\right)\frac{P_u}{t_u}
$$
with probability at least $1-\poly(n)$.
Assuming all these events occur and using $\calE_1$, every $u\ne u^*$ satisfies
  (letting $c_0$ be the constant $\median(|\calC|)$ below)
\begin{align*}
\widehat{\bS}_u&\le c_0\cdot \left(1+\frac{\gamma}{4}\right)\cdot \frac{1}{1+\gamma}\cdot \frac{P_{u^*}}{t_{u^*}}+O\left(
\frac{\eta^2\log n}{\gamma}\right)\cdot \Delta\le 
  c_0 \cdot \left(1-\frac{ \gamma}{2}\right)\cdot \frac{P_{u^*}}{t_{u^*}}
  +O\left(
\frac{\eta^2\log n}{\gamma}\right)\cdot \Delta.\end{align*}
and $u^*$ satisfies
$$
\widehat{\bS}_{u^*} \ge c_0\cdot \left(1-\frac{\gamma}{4}\right)\frac{P_{u^*}}{t_{u^*}}-O\left(
\frac{\eta^2\log n}{\gamma}\right)\cdot \Delta.
$$
We have 
$$
\widehat{\bS}_{u^*}-\widehat{\bS}_{u}\ge c_0\cdot \frac{\gamma}{4}\cdot 
  \frac{P_{u^*}}{t_{u^*}}-O\left(
\frac{\eta^2\log  n}{\gamma}\right)\cdot \Delta>0
$$
using $\calE_1$ and our choice of $\eta=O(\gamma^2/\log n)$.

\subsection{Linear Sketch of Lemma \ref{FinalEMDlemma2}} 

\begin{figure}[t!]
	\begin{framed}
		\begin{flushleft}		
			\textbf{Linear Sketch $\LS_2$:} Let $S$ be the vector with
			$$
			S_v:=\frac{|A_v|-|B_v|}{t_{\pi(u)}t_v}.
			$$
			Run the Count-Sketch of Theorem \ref{thm:countsketch} on $S$ with $\eps=\eta$ and $n=m^2$.\medskip
			
			\textbf{Reporting Procedure $\ALG_2$ (given $u\in U$):} Use the reporting procedure
			  of the Count-Sketch of Theorem \ref{thm:countsketch} to compute a vector $\widehat{\bS}$, and output $\argmax_{v:\pi(v)=u} \{|\widehat{\bS}_v|\}.$
		\end{flushleft}
	\end{framed}\vspace{-0.2cm}\caption{The linear sketch $(\LS_2,\ALG_2)$ of Lemma \ref{FinalEMDlemma2}}\label{LSEMDfigure2}
\end{figure}

Let $\eta=O(\gamma^5/\log^2 n)$.  The pair $(\LS_2,\ALG_2)$ is given in Figure \ref{LSEMDfigure2}.
Assume both $\calE_1$ and $\calE_2$ hold, and assume that the node 
  given to $\ALG_2$ is $u^*$. Note that by $\calE_1$ and $\calE_2$ we have
$$
\|S\|_1=\sum_{v\in V} \frac{Q_v}{t_{\pi(v)}t_v}
\le O\left(\frac{\log n}{\gamma}\right)\cdot \sum_{u\in U}\frac{P_u}{t_u}
\le O\left(\frac{\log^2 n}{\gamma^2}\right)\cdot \Delta.
$$
As a result, we have $$\|\widehat{\bS}-S\|_\infty\le O\left(\frac{\eta\log^2 n}{\gamma^2}\right)\cdot \Delta. $$
Every $v$ with $\pi(v)=u^*$ and $v\ne v^*$ satisfies
\begin{align*}
|\widehat{\bS}_v|\le \frac{Q_v}{t_{u^*}t_v}+ O\left(\frac{\eta\log^2 n}{\gamma^2}\right)\cdot \Delta\quad\text{and}\quad 
|\widehat{\bS}_{v^*}|\ge\frac{Q_{v^*}}{t_{u^*}t_{v^*}}- O\left(\frac{\eta\log^2 n}{\gamma^2}\right)\cdot \Delta.
\end{align*} 
As a result, using $\calE_1$ and $\calE_2$ we have 
$$
|\widehat{\bS}_{v^*}|-|\widehat{\bS}_v|\ge 
  \frac{1}{t_{u^*}}\cdot \left(1-\frac{1}{1+\gamma}\right)\cdot \frac{Q_{v^*}}{t_{v^*}}
  -O\left(\frac{\eta\log^2 n}{\gamma^2}\right)\cdot \Delta
  \ge \Omega(\gamma^3)\cdot \Delta - O\left(\frac{\eta\log^2 n}{\gamma^2}\right)\cdot \Delta>0
$$
given our choice of $\eta=O(\gamma^5/\log^2 n)$.

\subsection{Linear Sketch of Lemma \ref{FinalEMDlemma3}}

\begin{figure}[t!]
	\begin{framed}
		\begin{flushleft}		
			\textbf{Linear Sketch $\LS_3$:} Let $S^{(1)}$ and $S^{(2)}$ the two vectors indexed
			  by $u\in U$ with
			$$
			S_u^{(1)}:=\frac{|C_u|}{t_u}\quad\text{and}\quad S_u^{(2)}=\frac{|C_{u,S}|}{t_u},\quad\text{for each $u\in U$}.
			$$
			Let $S^{(3)}$ and $S^{(4)}$ be the two vectors indexed by $v\in V$ with
			$$
			S_u^{(3)}:=\frac{|C_v|}{t_{\pi(v)}t_v}\quad\text{and}\quad S_u^{(4)}=\frac{|C_{v,S}|}{t_{\pi(v)}t_v},\quad\text{for each $v\in V$}.
			$$
			Run the Count-Sketch of Theorem \ref{thm:countsketch} on each $S^{(i)}$ (with $\eps=1/\sqrt{\beta}$ and $n=m$ or $m^2$).\medskip
			
			\textbf{Reporting Procedure $\ALG_3$ (given an edge $(u,v)$ of $H$):} Use the reporting procedure\\
			  of the Count-Sketch of Theorem \ref{thm:countsketch} to compute vectors $\widehat{\bS}^{(i)}$, $i\in [4]$.
			Output 
			\begin{equation}\label{alala}
			\frac{\widehat{\bS}^{(2)}_{u}}{\widehat{\bS}^{(1)}_{u}}
			+\frac{\widehat{\bS}^{(4)}_{v}}{\widehat{\bS}^{(3)}_{v}}
			-\frac{\widehat{\bS}^{(2)}_{u}}{\widehat{\bS}^{(1)}_{u}}\cdot 
			\frac{\widehat{\bS}^{(4)}_{v}}{\widehat{\bS}^{(3)}_{v}}
			\end{equation}
			after truncating it to $[0,1]$.
		\end{flushleft}
	\end{framed}\vspace{-0.2cm}\caption{The linear sketch $(\LS_3,\ALG_3)$ of Lemma \ref{FinalEMDlemma3}}\label{LSEMDfigure3}
\end{figure}

Given $S$ and $u\in U$, recall that $C_{u,S}$ denotes the set of points $a\in A\cup B$ with $\chi_S(a)=1$, and $C_{v,S}$ is defined similarly.
Then $p_{u,v,S}$ can be expressed as
$$
\frac{|C_{u,S}|}{|C_u|}+\frac{|C_{v,S}|}{|C_v|}-\frac{|C_{u,S}|}{|C_u|}\cdot \frac{|C_{v,S}|}{|C_v|}.
$$
We present the pair $(\LS_3,\ALG_3)$ in Figure \ref{LSEMDfigure3},
  which approximates the four numbers $|C_{u,S}|,|C_u|$~and $|C_{v,S}|,|C_v|$ 
  and uses them to obtain an estimate of $p_{u,v,S}$.
For correctness, it suffices to upperbound
$$
\left|\frac{\widehat{\bS}^{(2)}_{u^*}}{\widehat{\bS}^{(1)}_{u^*}}-
\frac{|C_{u^*,S}|}{|C_{u^*}|}\right|\quad\text{and}\quad
\left|\frac{\widehat{\bS}^{(4)}_{v^*}}{\widehat{\bS}^{(3)}_{v^*}}-
\frac{|C_{v^*,S}|}{|C_{v^*}|}\right|.
$$
We start with the first one. 
Note that by $\calE_3$ we have
$$
\|S^{(2)}_{-\beta}\|_2 \le \|S^{(1)}_{-\beta}\|_2 = O\left(\frac{n}{\sqrt{\beta}}\right).
$$
By Theorem \ref{thm:countsketch}, with probability at least $1-\poly(n)$, we have
$$
\left\|\bS^{(2)}-S^{(2)}\right\|_\infty, \left\|\bS^{(1)}-S^{(1)}\right\|_\infty\le O\left(\frac{  n }{\beta}\right) .
$$
Focusing on $\widehat{\bS}^{(1)}_{u^*}$ first, we claim that
$$
\left|\widehat{\bS}^{(1)}_{u^*}-S^{(1)}_{u^*}\right|\le \frac{\tau}{16}\cdot  S^{(1)}_{u^*}
$$
This follows from our choice of $\beta=\log^5 n/(\tau\eps\gamma^3)$ and that
$$
S^{(1)}_{u^*}\ge  \frac{P_{u^*}}{t_{u^*}}\ge  \gamma \Delta
  \ge \frac{\eps\gamma}{\log^3 n}\cdot n.
$$
Therefore, we have 
$$
\frac{\bS^{(2)}_{u^*}}{\bS^{(1)}_{u^*}}=\frac{S^{(2)}_{u^*}}{\bS^{(1)}_{u^*}}
\pm O\left(\frac{n}{\beta}\cdot \frac{1}{S^{(1)}_{u^*}}\right) 
=\left(1\pm \frac{\tau}{8}\right)\cdot \frac{ S^{(2)}_{u^*}}{ S^{(1)}_{u^*}}
\pm O\left(\frac{n}{\beta}\cdot \frac{\log^3 n}{\eps \gamma n}\right)
= \frac{ S^{(2)}_{u^*}}{ S^{(1)}_{u^*}}\pm \frac{\tau}{6}.
$$
The analysis for the other ratio is similar. First we have from $\calE_3$ and $\calE_4$ that
$$
\|S^{(4)}_{-\beta}\|_2\le \|S^{(3)}_{-\beta}\|_2
\le O\left(\frac{n\log n}{\sqrt{\beta}\gamma}\right).
$$
By Theorem \ref{thm:countsketch}, with probability at least $1-\poly(n)$, we have
$$
\left\|\bS^{(3)}-S^{(3)}\right\|_\infty, \left\|\bS^{(4)}-S^{(4)}\right\|_\infty\le O\left(\frac{  n\log n}{\beta \gamma}\right) .
$$
Focusing on $\widehat{\bS}^{(3)}_{v^*}$ first, we have similarly that
$$
\left|\widehat{\bS}^{(3)}_{v^*}-S^{(3)}_{v^*}\right|\le \frac{\tau}{16}\cdot  S^{(3)}_{u^*}
$$
using our choice of $\beta=\log^5 n/(\tau\eps\gamma^3)$ and that 
$$
S^{(3)}_{v^*}=\frac{ Q_{v^*} }{t_{u^*}t_{v^*}}\ge \gamma^2\Delta\ge \frac{\eps \gamma^2}{\log^3 n}\cdot n  .
$$
Therefore, we have
$$
\frac{\bS^{(4)}_{v^*}}{\bS^{(3)}_{v^*}}=\frac{S^{(4)}_{v^*}}{\bS^{(3)}_{v^*}}
\pm O\left(\frac{n\log n}{\beta\gamma}\cdot \frac{1}{S^{(3)}_{v^*}}\right) 
=\left(1\pm \frac{\tau}{8}\right)\cdot \frac{ S^{(4)}_{v^*}}{ S^{(3)}_{v^*}}
\pm O\left(\frac{n\log n}{\beta\gamma}\cdot \frac{\log^3 n}{\eps \gamma^2 n}\right)
= \frac{ S^{(4)}_{v^*}}{ S^{(3)}_{v^*}}\pm \frac{\tau}{6}.
$$
As a result we have (\ref{alala}) is
$$
\left(\frac{ S^{(2)}_{u^*}}{ S^{(1)}_{u^*}}\pm \frac{\tau}{6}\right)+
\left(\frac{ S^{(4)}_{v^*}}{ S^{(3)}_{v^*}}\pm \frac{\tau}{6}\right)
-\left(\frac{ S^{(2)}_{u^*}}{ S^{(1)}_{u^*}}\pm \frac{\tau}{6}\right)\cdot 
\left(\frac{ S^{(4)}_{v^*}}{ S^{(3)}_{v^*}}\pm \frac{\tau}{6}\right)
=\frac{ S^{(2)}_{u^*}}{ S^{(1)}_{u^*}}+\frac{ S^{(4)}_{v^*}}{ S^{(3)}_{v^*}}-
\frac{ S^{(2)}_{u^*}}{ S^{(1)}_{u^*}}\cdot \frac{ S^{(4)}_{v^*}}{ S^{(3)}_{v^*}}\pm {\tau}.
$$

\def\MST{\mathsf{MST}}

\section{Linear Sketches for MST}\label{sec:MSTSketch}

\newcommand{\sfv}{\mathsf{v}}

Fix $n, d \in \N$, and let $X = \{ x_1, \dots, x_n \} \subset \{0,1\}^d$. To settle on some notation, we let $h \eqdef \lceil \log_2 d \rceil$, and we consider a quadtree $T$ of depth $h+1$, whose levels are indexed by $i \in \{0, \dots, h\}$, starting at the root at depth $i=0$, to the leaves at depth $h$. 
For $i \in \{0, \dots, h\}$, we let $V_i$ be the set of nodes of $T$ at depth $i$. For each $i \in \{0,\dots, h\}$, $T$ induces a map $\sfv_i \colon \{0,1\}^d \to V_i$, such that $\sfv_0(a), \sfv_1(a), \dots, \sfv_h(a)$ is the path of a point $a \in \{0,1\}^d$ down the quadtree $T$. We say $a$ maps to $v \in V_i$ if $\sfv_i(a) = v$. 

For a node $v \in V_i$ for $i > 0$, we let $\pi(v)$ be the parent of $v$ in $T$ (which lies at depth $i-1$). The subset $L_i \subset V_i$ is the set of nodes $v \in V_i$ which are \emph{non-empty with respect to $X$}, i.e., some point $x \in X$ mapped to node $v$; for $v \in L_i$, the set $X_v \subset X$ consists of the set of points of $X$ which map to $v$. For a node $u \in V_{i-1}$, we let $C(u) \subset L_i$ be the set of child nodes of $u$. Our starting point is the fact that with probability at least $0.99$ over the draw of $\bT \sim \calT$, two things occur: the first is that within any node $v \in L_i$, and any two $x, y \in X_{v}$, we have $\|x - y \|_1 \leq O(d\log n / 2^i)$. The second is that we can approximate the cost of the minimum spanning tree by the following sampling experiment. 
\begin{enumerate}
\item For each $u \in L_{i-1}$, we sample a \emph{parent representative} $\bc_u$ from $X_u$ by sampling $\bv' \sim C(u)$, and then sampling $\bc_u \sim X_{\bv'}$.
\item For each $v \in L_i$, we sample a \emph{child representative} $\boldr_v\sim X_v$. 
\end{enumerate}
Then, we have that for every choice of $\bc_u$ and $\boldr_v$, we have
\begin{align}
\MST(X) \leq \sum_{i=1}^h \ind\{ |L_i| > 1\} \sum_{v \in L_i} \| \boldr_v - \bc_{\pi(v)}\|_1 = \sum_{i=1}^h \ind\{ |L_i| > 1 \} \cdot |L_i| \Ex_{\bv \sim L_i} \left[ \| \boldr_{\bv} - \bc_{\pi(\bv)}\|_1 \right] \label{eq:mst-lb-est}
\end{align}
In order to see why the first inequality in (\ref{eq:mst-lb-est}) holds, consider the following way to recursively build a spanning tree from a quadtree $T$: inductively assume that each node $v$ contains a spanning tree of $X_v$, then if a node $v$ is a child of $u$, place an edge between $\boldr_{v}$ and $\bc_{u}$, thereby connecting all spanning trees of children of $u$. On the other hand, the expectation of the above quantity (over the random choices of $\bc_u$ and $\boldr_v$) upper bounds $\MST(X)$, up to a factor of $\tilde{O}(\log n)$. Specifically, by triangle inequality, as well as the distribution of $\bc_u$, we have
\begin{align*}
&\sum_{i=1}^h \ind\{ |L_i| > 1\} \cdot |L_i| \Ex_{\bv \sim L_i}\left[ \Ex_{\boldr_{\bv}, \bc_{\pi(\bv)}}\left[ \| \boldr_{\bv} - \bc_{\pi(\bv)} \|_1 \right] \right] \\
&\qquad \qquad \leq \sum_{i=1}^h \ind\{ |L_i| > 1\} |L_i| \Ex_{\bv \sim L_i} \left[ \Ex_{\boldr_{\pi(\bv)} \sim X_{\pi(\bv)}} \left[ \Ex_{\boldr_{\bv}, \bc_{\pi(\bv)}}\left[ \| \boldr_{\bv} - \boldr_{\pi(\bv)}\|_1 + \| \boldr_{\pi(\bv)} - \bc_{\pi(\bv)}\right]\right]\right] \\
&\qquad \qquad =  2\sum_{i=1}^h \ind\{ |L_i| > 1 \} |L_i| \Ex_{\bv \sim L_i}\left[ \Ex_{\substack{\boldr_{\bv} \sim X_v \\ \boldr_{\pi(\bv)} \sim X_{\pi(\bv)}}}\left[ \| \boldr_{\bv} - \boldr_{\pi(\bv)}\|_1\right]\right].
\end{align*}
Furthermore, we know that this latter quantity satisfies
\[ \sum_{i=1}^h \ind\{ |L_i| > 1 \}\cdot |L_i| \Ex_{\bv \sim L_i} \left[ \Ex_{\substack{\boldr_{\bv} \sim X_{\bv} \\ \boldr_{\pi(\bv)} \sim X_{\pi(\bv)}}}\left[ \| \boldr_{\bv} - \boldr_{\pi(\bv)} \|_1\right] + \frac{d}{2^i} \right]\leq \tilde{O}(\log n) \cdot \MST(X). \]
Our goal is to design a linear sketch which uses $\poly(\log d, \log n)$ space, and for every $i \in \{1, \dots, h \}$, outputs two values $\hat{\ell}_i, \hat{\mu}_i \in \R$, satisfying
\begin{align}
| L_i | &\leq \hat{\ell}_i \leq 1.5 \cdot |L_i | \qquad\text{and}\qquad
\hat{\mu}_i \leq \Ex_{\substack{\bv \sim L_i \\ \boldr_{\bv}, \bc_{\pi(\bv)}}}\left[ \| \boldr_{\bv} - \bc_{\pi(\bv)} \|_1 \right] \leq \hat{\mu}_i + \frac{d}{2 \cdot 2^i}.  \label{eq:values}
\end{align}
If, for every $i \in \{1, \dots, h\}$, the estimates $\hat{\ell}_i$ and $\hat{\mu}_i$ all satisfy (\ref{eq:values}), then its not hard to see that
\[ \sum_{i=1}^h \ind\{ \hat{\ell_i} > 1.5 \} \cdot \hat{\ell}_i \cdot \left( \hat{\mu}_i + \frac{d}{2^i} \right) \]
is an $\tilde{O}(\log n)$-approximation to $\MST(X)$. We note that estimating $|L_i|$ is by now standard in the literature, via an $\ell_0$-sketch, and hence the main difficulty is producing estimates $\hat{\mu}_i$. Furthermore, if indeed every $x, y \in X_u$ at depth $i-1 \in \{ 0, \dots, h-1\}$ satisfies $\|x - y\|_1 \leq O(d \log n / 2^i)$, then it suffices to design a linear sketch using $\poly(\log n, \log d)$ space which can generate (approximate) samples from $\| \boldr_{\bv} - \bc_{\pi(\bv)} \|_1$ for $\bv \sim L_i$ and $\boldr_{\bv} \sim X_{\bv}$ and $\bc_{\pi(\bv)} \sim X_{\bv'}$ for $\bv' \sim C(\pi(\bv))$ (we will formalize this notion soon in Lemma~\ref{lem:sample}). This is true because each sample will be bounded by $O(d \log n / 2^i)$, so that the empirical mean of $\polylog(n)$ such samples will be a desired additive $d / (2 \cdot 2^i)$ approximation. 

We now specify how to generate samples for estimating $\| \boldr_v - \bc_{\pi(\bv)}\|_1$. One may think of sampling $\pi(\bv)$ and $\bc_{\pi(\bv)}$ first, and then drawing $\bv$ and $\boldr_{\bv}$ after. In order to do this, we let $\calP_i$ be the distribution supported on $L_{i-1}$ given by $\bu \sim \calP_i$ being set to $u \in L_{i-1}$ with probability proportional to $|C(u)|$. Furthermore, we consider 
\[ \alpha_i \eqdef \frac{2^i}{d \log^3 n} \in [0, 1],\]
and we denote $\calS_{i}$ the distribution over subsets of $[d]$ which include each coordinate $k \in [d]$ i.i.d with probability $\alpha_i$. For a subset $S \subset [d]$, the character $\chi_S \colon \{0, 1\}^d \to \{-1,1\}$ is the function  $\chi_S(x) = (-1)^{\sum_{k \in S} x_k}$. Then, we will estimate the distance $\| \boldr_{\bv} - \bc_{\bu}\|_1$ by considering the values of $\chi_{\bS}(\boldr_{\bv})$ and $\chi_{\bS}(\boldr_{\bv'})$ for random choice of $\bS \sim \calS_i$. Formally, by setting of $\alpha_i$ and the fact $\| \boldr_{\bv} - \bc_{\bu}\|_1 \leq O(d\log n / 2^i)$,
\begin{align*} 
\Ex_{\substack{\bv \sim L_i \\ \boldr_{\bv} , \bc_{\pi(\bv)}}}\left[ \| \boldr_{\bv} - \bc_{\pi(\bv)}\|_1 \right] &= \Ex_{\bu \sim \calP_i}\left[ \Ex_{\substack{ \bv, \bv' \sim C(\bu) \\ \boldr_{\bv} \sim X_{\bv} \\ \boldr_{\bv'} \sim X_{\bv'}}}\left[ \| \boldr_{\bv} - \boldr_{\bv'}\|_1 \right]\right] \\
&= \Ex_{\bu \sim \calP_i}\left[ \Ex_{\substack{ \bv, \bv' \sim C(\bu) \\ \boldr_{\bv} \sim X_{\bv} \\ \boldr_{\bv'} \sim X_{\bv'}}}\left[ \Prx_{\bS \sim \calS_{i}}\left[ \chi_{\bS}(\boldr_{\bv}) \neq \chi_{\bS}(\boldr_{\bv'}) \right] \right]\right] \cdot \frac{d\log^3 n}{2^i} \pm o\left( \frac{d}{2^i}\right). 
\end{align*}

Specifically, the sketch follows from approximately drawing samples according to the following lemma (Lemma~\ref{lem:sample}). Even though the lemma statement is long, it simply states the fact that we can approximately sample from the distribution needed to estimate the quantity
\begin{align}
\Ex_{\bu \sim \calP_i}\left[ \Ex_{\substack{ \bv,\bv' \sim C(\bu) \\ \boldr_{\bv} \sim X_{\bv} \\ \boldr_{\bv'} \sim X_{\bv'}}}\left[ \Prx_{\bS \sim \calS_{i}}\left[ \chi_{\bS}(\boldr_{\bv}) \neq \chi_{\bS}(\boldr_{\bv'}) \right] \right]\right]  \label{eq:estimation-quantity}
\end{align}
up to additive $o(1/\log^3 n)$ error with $\polylog(n)$ samples. One (minor) issue is that a $\poly(\log n, \log d)$-space sketch cannot specify the vertices $\bv$ or points $\boldr_{\bv}$ because these would require $O(d)$ bits to specify. In order to specify points and vertices we consider a universe reduction step. This is simply a random function $c \colon V_{i-1} \cup V_i \cup \{0,1\}^d \to \{0,1\}^{\polylog(n,d)}$ which is used to rename elements of $L_{i-1}, L_i$ and $X$. In particular, since $|L_{i-1} \cup L_i \cup X|$ is small (at most $\poly(n, \log d)$), the function will be injective on $L_{i-1} \cup L_i \cup X$ with high probability, and thus allows the sketch to refer to nodes implicitly. This is a very minor point, and we will abuse notation by assuming the universe reduction has been applied. 

\begin{lemma}\label{lem:sample}
Consider a fixed quadtree $T$, an index $i \in [h]$, as well as a small parameters $\eps, \delta > 0$. There exists a linear sketch which uses $\poly(\log n, \log d, 1/\eps, \log(1/\delta))$ space and outputs a sample from a distribution $\calD$ supported on $L_{i-1} \times L_i^2 \times X^2 \times \{-1,1\}^2$. The distribution $\calD$ is $\eps$-close in total variation from one where a sample $(\bxi_u, \bxi_{v}, \bxi_{v'}, \bxi_{r_v}, \bxi_{r_{v'}}, \bxi_{\chi_{v}}, \bxi_{\chi_{v'}}) \sim \calD$ satisfies:
\begin{enumerate}
\item The value $\bxi_u$ represents a sample of $\bu$, i.e., $\Prx\left[  \bxi_u = u \right] = |C(u)| / |L_i|$.
\item The value $\bxi_{v}$ represents a sample of $\bv$, i.e., $\Prx\left[ \bxi_v = v \mid \bxi_u = u \right] = 1 / |C(u)| $.
\item The value $\bxi_{r_v}$ represents a sample of $\boldr_{\bv}$, i.e., $\Prx\left[ \bxi_{r_v} = r_v \mid \bxi_{v} = v \right] = 1 / |X_v| $.
\item The value $\bxi_{v'}$ is independent of $\bxi_v$, and represents a sample of $\bv'$, i.e., $\Prx\left[ \bxi_{v'} = v \mid \bxi_u = u \right] = 1 / |C(u)| $.
\item The value $\bxi_{r_{v'}}$ represents a sample of $\boldr_{\bv'}$, i.e., $\Prx\left[ \bxi_{r_{v'}} = r_{v'} \mid \bxi_{v'} = v' \right] = 1 / |X_{v'}| $.
\item The value $\bxi_{\chi_{v}}, \bxi_{\chi_{v'}} \in \{-1,1\}^2$ represent samples of $\chi_{\bS}(\boldr_{\bv})$ and $\chi_{\bS}(\boldr_{\bv'})$, i.e., for any $(b_u, b_v) \in \{-1,1\}^2$, 
\[ \Prx\left[ (\bxi_{\chi_{v}}, \bxi_{\chi_{v'}}) = (b_u,b_v)\mid (\bxi_{r_v}, \bxi_{r_{v'}}) = (r_v, r_{v'}) \right] = \Prx_{\bS \sim \calS_{i}}\left[ (\chi_{\bS}(r_{v}), \chi_{\bS}(r_{v'})) = (b_v, b_{v'})\right] . \]
\end{enumerate}
\end{lemma}

Given Lemma~\ref{lem:sample}, the sketch is straight-forward. We independently consider $t= \polylog(n)$ draws from the distribution $\calD$ with $\eps = 1/\polylog(n)$ and $\delta = 1/\poly(n)$, and we count the number of times a sample $(\bxi_u, \bxi_{v}, \bxi_{v'}, \bxi_{r_v}, \bxi_{r_{v'}}, \bxi_{\chi_v}, \bxi_{\chi_{v'}}) \sim \calD$ satisfies $\bxi_{\chi_v} \neq \bxi_{\chi_{v'}}$. This count, divided by $t$, will be an appropriate estimate to (\ref{eq:estimation-quantity}), and completes the high-level plan of the sketch. 

\subsection{Proof of Lemma~\ref{lem:sample}}

We will prove Lemma~\ref{lem:sample} in steps, corresponding to the six itemized elements needed for the lemma. Hence, a sample $(\bxi_u, \bxi_v, \bxi_{v'}, \bxi_{r_v}, \bxi_{r_{v'}}, \bxi_{\chi_v}, \bxi_{\chi_{v'}}) \sim \calD$ will proceed in six steps (even though all are correlated). In order to handle the correlation, the first step of generating a sample is to perform the following procedure.

\begin{figure}[h!]
	\begin{framed}
		\begin{flushleft}
			\noindent {\bf Sample:} Consider an independent sample $\bt_{u} \sim \Exp(1)$ for each $u \in V_{i-1}$. Furthermore, for each $\kappa \in \{ 0, \dots, \lceil \log_2 n \rceil \}$ and $j = O(\log(n))$, we sample two random subsets $\bD_{\kappa, j}, \bD'_{\kappa, j} \subset V_i $ by including each node $v \in V_i$ i.i.d with probability $1/2^{\kappa}$. We will consider the following three events, where we specify them with respect to a parameter $\gamma > 0$ which will be set to $1/\polylog(n)$.
						
			\begin{itemize}
				\item Event $\calbE_1$: we have 
				\[ \sum_{u \in L_{i-1}} \frac{|C(u)|}{\bt_{u}} \leq \frac{4\log\left(n/\gamma\right)}{\gamma} \cdot |L_i|. \]
				\item Event $\calbE_2$: let $\bu^* \in L_{i-1}$ be the non-empty node of $T$ which maximizes $|C(\bu^*)| / \bt_{c(\bu^*)}$, and let $\bu^{**}$ be the non-empty node of $T$ which has the second largest $|C(\bu^{**})|/\bt_{c(\bu^{**})}$. Then, we have
				\begin{align}
				\frac{|C(\bu^*)|}{\bt_{\bu^*}} \geq \gamma \cdot |L_{i}| \qquad\text{and}\qquad \frac{|C(\bu^*)|}{\bt_{\bu^*}} \geq \left(1 + \gamma\right) \cdot \frac{|C(\bu^{**})|}{\bt_{\bu^{**}}}. \label{eq: E_2_cond}
				\end{align} 
				We note that we will aim to set $\bxi_{u} = \bu^*$.
				\item Event $\calbE_3$: For every $\kappa \in \{ 0, \dots, \lceil \log_2 n \rceil \}$ and every $j$, we have for $\bR \in \{ \bD_{\kappa, j}, \bD'_{\kappa, j}\}$,
				\[ \sum_{\substack{u \in L_{i-1}}} \frac{|\bR \cap C(u)|}{\bt_{u}} \leq \frac{\log^3 n}{2^{\kappa} \cdot \gamma} \sum_{u \in L_{i-1}} \frac{|C(u)|}{\bt_{u}}.  \]
				Furthermore, there exists a maximum $\kappa^* \in \{0, \dots, \lceil \log_2 n \rceil \}$ and $j \geq j'$ (with respect to lexicographic ordering) such that
				\[ \left| \bD_{\kappa^*, j}  \cap \left\{ v : v \in C(\bu^*) \right\} \right| = \left| \bD_{\kappa^*, j'}' \cap \left\{ v' : v' \in C(\bu^*) \right\} \right|= 1,\]
				and this maximum satisfies $2^{\kappa^*} \geq |C(\bu^*)|$. Let $\bv^* \in C(\bu^*)$ be that unique child of $\bu^*$ specified by $\bD_{\kappa^*, j}$ and $\bv^{**} \in C(\bu^*)$ be the unique child of $\bu^*$ specified by $\bD_{\kappa^*, j'}'$. We will aim to set $\bxi_v = \bv^*$ and $\bxi_{v'} = \bv^{**}$.
			\end{itemize}
		We note that the success of the entire sampling procedure is dependent on the collection $(\bt_u)_{u \in V_{i-1}}$, $(\bD_{\kappa,j})_{\kappa,j}$, and $(\bD_{\kappa,j}')_{\kappa,j}$, and the events $\calbE_1,\calbE_2,$ and $\calbE_3$ occurring. We will continually refer to $\bu^*, \bv^*$, and $\bv^{**}$ and $\kappa^*$ as specified.
		\end{flushleft}\vskip -0.14in
	\end{framed}\vspace{-0.2cm}
	\caption{Sampling for the Parent and Children.}\label{fig:sample-exp}
\end{figure}

With the above sampling procedure and specified events, we will first show that the distribution of $\bu^*$, $\bv^*$, and $\bv^{**}$ generated above are appropriate, and that events $\calbE_1$, $\calbE_2$ and $\calbE_3$ occur with high enough probability. Then, we show that there exists a linear sketch which can recover $\bu^*$, $\bv^*$, and $\bv^{**}$; and finally, we show how to sample the representative elements.

\begin{lemma}[Parent and Child Distribution]
Let $u \in L_{i-1}$ and $v, v' \in C(u)$. Then, as per Figure~\ref{fig:sample-exp}, we have
\begin{align*}
\Prx_{(\bt_{u})_u}\left[ \bu^* = u \right] = \frac{|C(u)|}{|L_i|}, \qquad\text{and}\qquad \Prx_{\substack{(\bD_{\kappa, j})_{\kappa,j}\\ (\bD_{\kappa, j}')_{\kappa,j}}}\left[ \bv^* = v, \bv^{**} = v' \mid \bu^* = u\right] = \frac{1}{|C(u)|^2}.
\end{align*}
\end{lemma}

\begin{proof} The first item follows from a simple calculation using the CDF of exponential random variables
	\begin{align*}
		\Prx_{(\bt_m)_m}\left[ \bu ^*=u\right]&=	\Prx_{(\bt_m)_m}\left[ \frac{|C(u )|}{\bt_{u}}\ge \frac{|C(u' )|}{\bt_{u' }}\;\; \forall u'\in L_{i-1}\right]=\Ex _{\bt_{u}}\left[ \prod_{u'\in L_{i-1}}\Prx_{\bt_{u'}}\left[ \frac {|C(u)|}{\bt_{u}}\ge \frac {|C(u')|}{\bt_{u'}}\right]\right]\\
		&=\Ex_{\bt_{u}}\left[ \exp\left( -\sum_{u'\neq u}\frac{|C(u')|}{|C(u)|}\cdot \bt_{u}  \right) \right]= \int_0^\infty \exp (-t)\exp\left(-\frac{t}{|C(u)|} \sum_{u'\neq u}|C(u')|\right)dt\\
		&=\frac{|C(u)|}{|L_i|}.  
	\end{align*}
The second item follows from the observation that conditioned on the event that some unique vertex $v\in  C(\bu^*)$ was been picked to $\bD_{\kappa^*, j}$ (or equivalently, $\bD_{\kappa^*, j'}'$), its distribution is uniform over $\{v : v\in C(\bu^*)\}$, due to the fact that each $v\in V_{i}$ is picked independently with the same probability.
\end{proof}

\begin{lemma}\label{lem:GoodEventsLemma}
The events $\calbE_1, \calbE_2$, and $\calbE_3$ occur with probability at least $1 - 8\gamma - O(1/n)$ over the draw of $(\bt_u)_{u \in V_{i-1}}$, $(\bD_{\kappa, j})_{\kappa, j}$, and $(\bD_{\kappa, j}')_{\kappa, j}$.
\end{lemma}

We henceforth consider a fixed setting of $(t_{u})_{u \in V_{i-1}}$, as well as sets $(D_{\kappa, j})_{\kappa, j}$ and $(D_{\kappa, j}')_{\kappa,j}$ which satisfy events $\calbE_1, \calbE_2$ and $\calbE_3$; this also specifies the nodes $u^*$, $v^*$ and $v^{**}$. We remove the boldfaced notation to indicate these are no longer random variables. The next lemmas show how to algorithmically recover the values of $u^*, v^*$, and $v^{**}$ assuming that $\calbE_1, \calbE_2$ and $\calbE_3$ are satisfied, and how to sample their representative points. 

\begin{lemma}[Parent Recovery Lemma]\label{lem:parent-recovery}
There exists a linear sketch which uses $\polylog(n)$ space which, given access to the collection $(t_{u})_{u \in V_{i-1}}$ satisfying $\calbE_1 \wedge \calbE_2$, outputs an element $\bxi_{u} \in V_{i-1}$ which is equal to $u^*$ with high probability over internal randomness of the sketch.
\end{lemma}

\begin{lemma}[Child Recovery Lemma]\label{lem:child-recovery}
There exists a linear sketch which uses $\polylog(n)$ space, which given access to the collection $(t_u)_{u \in V_{i-1}}$ and $(D_{\kappa,j})_{\kappa,j}$ satisfying $\calbE_1 \wedge \calbE_2 \wedge \calbE_3$, outputs an element $\bxi_{v} \in V_{i}$ which is equal to $v^*$ with high probability over internal randomness of the sketch. 
\end{lemma}

\begin{lemma}[Child Representative Sampling Lemma]\label{lem:child-sampling}
There exists a linear sketch which uses $\polylog(n)$ space, which given access to the collection $(t_u)_{u \in V_{i-1}}$ and $(D_{\kappa, j})_{\kappa, j}$, outputs an element $\bxi_{r_v}$ or ``fail''. Whenever events $\calbE_1 \wedge \calbE_2 \wedge \calbE_3$ are satisfied, the sketch does not output ``fail'' with high probability, and the output $\bxi_{r_v} \in X$ is uniformly distributed among $X_{v^*}$. 
\end{lemma}

Lemma~\ref{lem:child-recovery} and Lemma~\ref{lem:child-sampling} may be equivalently stated for $v^{**}$ and $\bxi_{r_{v}'}$. It will be clear from the proof of Lemma~\ref{lem:child-recovery} and Lemma~\ref{lem:child-sampling} that we may apply these same linear sketches to recover $\bxi_{\chi_v}$ and $\bxi_{\chi_{v'}}$ which are distributed as $\chi_{S}$ applied to $\bxi_{r_v}$ and $\bxi_{r_v'}$ for any $S \subset [d]$. 


\subsection{Proof of Lemma~\ref{lem:GoodEventsLemma}}
The proof follows from computing the probabilities of events $\calbE_1,\calbE_2,\calbE_3$ and using a union bound. In order to compute the probability of event $\calbE_1$ we will use  lemma~\ref{lem:sum_exponentials}, and use Lemma~\ref{lem:gap_exponentials} to compute the probability of event $\calbE_2$.

\newcommand{\boldE}{\boldsymbol{E}}

%

\begin{lemma} 
For any $\gamma>0$, event $\calbE_3$ holds with probability at least $1-O(\gamma/\log n) - 1/ \poly(n)$.
\end{lemma}

\begin{proof} 
Fix the set of exponential random variables $(t_u)_{u\in V_{i-1}}$ and let $u^*$ be the node that maximize $|C(u)|/t_{u}$ over all $u\in L_{i-1}$. 
For the first part of event $\calbE_3$, consider any $\kappa \in \{ 0 , \dots, \lceil \log_2 n \rceil \}$, and let $\bR \subset V_i$ be a set which includes each $v \in V_i$ i.i.d with probability $1/2^{\kappa}$. Notice that 
\[ \Ex_{\bR}\left[ \sum_{u \in L_{i-1}} \frac{|C(u) \cap \bR|}{t_u}\right] = \frac{1}{2^{\kappa}} \sum_{u \in L_{i-1}} \frac{|C(u)|}{t_u}. \]
Hence, the probability that this sum exceeds $\log^3 n / \gamma$ times the expectation is at most $\gamma / \log^3 n$, and we may therefore union bound over all $O(\log^2 n)$ possible draws of sets $\bR$ for all $\kappa$ and $j$. 

For the second part of event $\calbE_3$, notice that for $\kappa$ satisfying $2^{\kappa} \geq |C(u^*)| \geq 2^{\kappa - 1}$, with constant probability over the draw of $\bR \subset V_i$ generated by including each $v \in V_i$ i.i.d with probability $1/2^{\kappa}$, $|\bR \cap C(u^*)| = 1$. Hence, by repeating for $j = O(\log n)$ times, $\bR$ intersects a unique element of $C(u^*)$ at least twice (thereby setting $j$ and $j'$), and therefore the maximum $\kappa^*$ for which this property holds furthermore satisfies $2^{\kappa^*} \geq |C(u^*)|$. 
\end{proof}


\newcommand{\cD}{\calD}
\newcommand{\exx}[2]{\mathop{{\bf E}}_{#1}\left[ #2 \right]}
\newcommand{\prb}[2]{\mathop{{\bf Pr}}_{#1}\left[ #2 \right]}

\subsection{Proof of Lemma~\ref{lem:parent-recovery}}
\newcommand{\vals}{\Gamma}

We refer to $\vals \in \Z_{\geq 0}^{|L_i|}$ as the vector such that $\vals_v = |X_v|$ for each $v \in L_i$. We will refer to Figure~\ref{fig:parent-recovery}. Note that $\vals$ is a linear function of the input, and that the sketch is indeed a linear function of $\vals$ which stores $k \cdot l = \polylog(n)$ many $\ell_p$-sketches of accuracy $(1\pm O(1/\log n))$ for $p = \Omega(1/\log^2 n)$ which succeeds with high probability, and thus the entire sketch uses $\polylog(n)$ space. 


\begin{figure}[h!]
	\begin{framed}
		\begin{flushleft}
			\noindent {\bf Linear Sketch of Lemma~\ref{lem:parent-recovery}:} We receive as input a sequence of numbers $(t_u)_{u \in V_{i-1}}$. We set parameters $\eps, \eps_0 \in (0, 1)$ satisfying $\eps_0 < \eps < \gamma / 10$, as well as $p = \eps / (2\log n)$. (Recall the setting of $\gamma = 1/\polylog(n)$). 
			
			\begin{itemize}
				\item We will instantiate a Count-Sketch data structure with $l = O(\log n)$ independent hash functions into $k = O(\log(n/\gamma) / (\eps_0 \gamma^3))$ buckets. In other words, for every $j \in [l]$, we independently sample a hash function $\bh_j \colon L_{i-1} \to [k]$, and we will maintain some memory corresponding to each hash function $j \in [l]$ and each bucket $q \in [k]$.
				\item For each $j \in [l]$ and $q \in [k]$, we will maintain a $(1\pm \eps_0)$-accuracy $\ell_p$-sketch of the vector $\smash{\vals \in \Z_{\geq 0}^{|L_i|}}$, where we rescale the entry $\vals_v$ by $\smash{1/t_{\pi(v)}^{1/p}}$ if $\bh_j(\pi(v)) = q$, and $0$ otherwise. Specifically, the $\ell_p$-sketch proceeds by maintaining $t_0 = O(\log n / (\eps_0 p)^4)$ copies of the following linear function of $\vals$:
				\begin{itemize}
				 \item For each $t \in [t_0]$, we generate a sequence of independent $p$-stable random variables $(\balpha^{(t,j,q)}_v)_{v \in L_i}$ and we maintain 
				 \[ \bA^{(t)}_{j,q} = \sum_{\substack{u \in L_{i-1} \\ \bh_j(u) = q}} \sum_{\substack{v \in L_i \\ \pi(v) = u}} \frac{\balpha_v^{(t,j,q)} \cdot \vals_v}{t_{u}^{1/p}}. \]
				\end{itemize}
			\end{itemize}
			
\ignore{			We fix parameters $\eps,\eps_0 \in (0,1)$ satisfying $\eps_0 < \eps < \gamma/10$ (recall the definitions of $\calbE_1$ and $\calbE_2$), as well as $p =\Theta(\eps/\log n)$ and . We receive as input a set of exponen $t = \Theta(\frac{\log n}{\eps_0^4 p^4})$, $k = \Theta(\frac{\eps_0}{\log (n/\gamma)\gamma^3})$, and $\ell = O(\log n)$ \\
			\begin{itemize}
				\item	Draw hash functions $h_{i}: L_{i-1} \to [k]$ for $i=1,2,\dots,\ell$.
				\item 	Generate $p$-stable random variables $\{\alpha_v^{(i,j)}\}_{v \in L_{i}}$ for each $i \in [\ell],j \in [t]$. Generate exponentials $\{\bt_u\}_{u \in L_{i-1}}$. 
				\item Initalized tables $\bA^{(1)},\bA^{(2)},\dots,\bA^{(t)}\in \R^{\ell \times k}$ with zero in all entries.
				\item For each $i \in [\ell],j \in [t], q \in [k]$, compute the linear function of $\vals$ given by: 
				\[	\bA_{i,q}^{(j)} = \sum_{\substack{u \in L_{i-1} \\ h_i(u) = q}} \sum_{\substack{v \in L_{i} \\ \pi(v) = u}} \frac{1}{\bt_u^{1/p}} \alpha_v^{(i,j)} \vals_v	\] 	
					
			\end{itemize} }
		\noindent {\bf Reporting Procedure:} Given the $k \cdot \ell$ many $\ell_p$-sketches, corresponding to each hash function and bucket pair, we produce an output $u \in L_{i-1}$ (which we will show will be $u^*$ with high probability when events $\calbE_1$ and $\calbE_2$ are satisfied).
				\begin{itemize}
				\item For each $j \in [l]$ and $q \in [k]$, we run the reporting procedure for the $\ell_p$-sketch corresponding to the $j$-th hash function and $q$-th bucket. Let $(\bB_{j, q})$ be the reported $\ell_p$-sketches.
				\item For each $u \in L_{i-1}$, we consider the collection of $l$ values $(\bB_{j, \bh_j(u)} : j \in [l])$ corresponding to reported $\ell_p$-sketches of where $u$ hashed into. We set $\tilde{\bB}_u$ be the median of the values $(\bB_{j, \bh_j(u)})$, and we report $\argmax_{u \in L_{i-1}} \tilde{\bB}_u$.
				
			\end{itemize}
		\end{flushleft}\vskip -0.14in
	\end{framed}\vspace{-0.2cm}
\caption{Parent Recovery Linear Sketch for Lemma~\ref{lem:parent-recovery}.}\label{fig:parent-recovery}
\end{figure}

\begin{proposition}\label{prop:bucketbound}
Consider any fixed $j \in [l]$ and a fixed hash function $h_j \colon L_{i-1} \to [k]$, and any value of $q \in [k]$. Then, with probability at least $1 - 1/\poly(n)$ over the randomness in the $\ell_p$-sketches, 
\[ \bB_{j, q} = (1 \pm \eps_0) \Big( (1\pm \eps) \sum_{\substack{u \in L_{i-1} \\ h_j(u) = q}} \frac{|C(u)|}{t_u}\Big)^{1/p}. \]
%
	\end{proposition}
\begin{proof}
For $j \in [l]$ and $q \in [k]$, the vector whose $\ell_p$ norm we are estimating is $\tilde{\Gamma}_{j,q} \in \R^{|L_i|}$ where
\[ (\tilde{\Gamma}_{j, q})_v \eqdef \left\{\begin{array}{cc} \Gamma_{v}/t_{\pi(v)}^{1/p} & h_j(\pi(v)) = q \\
		 0 & \text{o.w} \end{array} \right.   \]
Therefore, by the correctness of the $(1\pm \eps_0)$-accuracy $\ell_p$-sketches, we have
\begin{align*} 
\bB_{j,q} &= (1\pm \eps_0) \| \tilde{\Gamma}_{j,q}\|_p = (1 \pm \eps_0) \Big( \sum_{\substack{u \in L_{i-1} \\ h_j(u) = q}} \frac{1}{t_u}\sum_{v \in C(u)} \Gamma_v^p \Big)^{1/p}
\end{align*}
with probability at least $1 - 1/\poly(n)$. The claim then follows from the fact that $\Gamma_v \in \{ 0 \} \cup [1, \dots, n]$, which means $\Gamma_v^p \in \{0 \} \cup [1, 1 + \eps]$, since $p = \eps / (2\log n)$.
%
	
\end{proof}

\begin{lemma}\label{lem:bucketbound2}
	Consider any setting of $(t_u)_{u \in V_{i-1}}$ where $\calbE_1$ and $\calbE_2$ hold, and let $K = |C(u^*)| / t_{u^*}$. 
For any $j \in [l]$, with probability at least $1-1/\poly(n)$ over the randomness of the hash functions $\bh_j$, 
\[ \bB_{j, \bh_j(u^*)}  \geq (1 - \eps_0)(1- \eps)^{1/p} K^{1/p}.\]
Moreover, for any $u \neq u^*$ and $j \in [l]$, with probability at least $9/10 - 1/\poly(n)$ over the draw of $\bh_j$, 
\[ \bB_{j, \bh_{j}(u)} < (1 - \eps_0) (1-\eps)^{1/p} K^{1/p}.  \]
	\end{lemma}

\begin{proof}
By Proposition~\ref{prop:bucketbound}, for any $u \in L_{i-1}$, we have with probability $1-1/\poly(n)$,
\begin{align}
	\bB_{j, \bh_{j}(u)} &= (1 \pm \eps_0) \Big((1 \pm \eps) \cdot \frac{|C(u)|}{t_{u}} + (1\pm\eps) \sum_{\substack{u \in L_{i-1} \setminus \{u\} \\ \bh_j(u) = q}} \frac{|C(u)|}{t_u} \Big)^{1/p}. \label{eq:haha}
\end{align}
When $u = u^*$, we obtain the first claim. 
For the second claim, notice that for any $u \neq u^*$, using $\calbE_2$, we have that $|C(u)|/t_u < (1-\gamma) K$, so plugging into (\ref{eq:haha}) we have the upper bound
\begin{align} 
\bB_{j, \bh_j(u)} \leq (1 + \eps_0) (1+\eps)^{1/p}\Big((1-\gamma) K + \sum_{\substack{u' \in L_{i-1}\setminus \{ u\} \\ h_j(u') = q}} \frac{|C(u')|}{t_{u'}} \Big)^{1/p}.\label{eq:hahah}
\end{align}
%
Now note that by event $\calbE_1$ and $\calbE_2$, and the fact that $\bh_j(u)$ is a uniform hash function on to a universe of size $k$, we have
\[ \Ex_{\bh_j}\Big[\sum_{\substack{u' \in L_{i-1} \setminus \{u\} \\ h_i(u') = q}}   \frac{|C(u')|}{t_{u'}}\Big] \leq \frac{1}{k} \sum_{u' \in L_{i-1} }   \frac{|C(u')|}{t_{u'}} \leq \frac{4 \log(n/\gamma)}{k \gamma^2} \cdot K < \frac{1}{10}\cdot \eps_0 \gamma K.\]
By applying Markov's inequality, with probability $9/10$, $\sum_{u' \in L_{i-1} \setminus \{ u\}} |C(u')| / t_{u'} < \eps_0 \gamma K$. Hence, plugging back into (\ref{eq:hahah}),
\begin{align*}
\bB_{j, \bh_j(u)} < (1 + \eps_0)(1+\eps)^{1/p}( (1-\gamma) K + \eps_0 \gamma K)^{1/p} < (1+\eps_0)(1+\eps)^{1/p}(1 - \gamma + \eps_0 \gamma)^{1/p} K^{1/p},
\end{align*}
which implies our desired bound once $\gamma/10 > \eps > \eps_0$. 
%
%
\end{proof}

We are now ready to prove Lemma \ref{lem:parent-recovery}

\begin{proof}[Proof of Lemma \ref{lem:parent-recovery}] 
By \ref{lem:bucketbound} and a Chernoff bound, for any $u \neq u^*$ we have $\tilde{\bB}_u  <(1-\eps_0)  (1 - \eps)^{1/p}K^{1/p}$ with probability at least $1-\exp(-\Omega(\ell)) > 1-1/\poly(n)$, and we can then union bound so that this holds for all $u \neq u^*$. Secondly, again by  \ref{lem:bucketbound} and a Chernoff bound, we have that with probability at least $1-1/\poly(n)$, $\tilde{\bB}_{u^*}  >(1-\eps_0)  (1 - \eps)^{1/p}K^{1/p}$. It follows that $u^* = \argmax_{u} \tilde{\bB}_u$ with high probability, which completes the proof.
\end{proof}


\subsection{Proofs of Lemma~\ref{lem:child-recovery} and Lemma~\ref{lem:child-sampling}}

We will prove Lemma~\ref{lem:child-recovery} and Lemma~\ref{lem:child-sampling} as corollaries of the following sketching lemma, which we prove next.

\begin{lemma}\label{lem:recovery}
There exists a linear sketch using $\poly(\log n)$ space which takes as input a sequence $(t_u)_{u \in V_{i-1}}$ satisfying events $\calbE_1$ and $\calbE_2$, a set $D = D_{\kappa,j} \subset V_{i}$ for some $\kappa, j$ satisfying $\calbE_3$, and a set $P \subset \{0,1\}^d$. Given $u^*$, the sketch recovers a vector $\smash{\bz \in \R^{|P|}_{\geq 0}}$ satisfying the following conditions with high probability:
\begin{itemize}
\item If $v \in C(u^*)\cap D$, every $a \in X_{v} \cap P$ has $\bz_{a} \geq 1/(3t_{u^*})^{1/p}$. 
\item If $v \notin C(u^*) \cap D$, then every $a \in X_{v} \cap P$ has $\bz_a \leq 1/(3 \cdot 2^{\kappa} / |C(u^*)| \cdot t_{u^*})^{1/p}$. 
\end{itemize} 
\end{lemma}

From Lemma~\ref{lem:recovery}, we argue that Lemma~\ref{lem:child-recovery} and Lemma~\ref{lem:child-sampling} are straight-forward. To see why Lemma~\ref{lem:child-recovery} follows, instantiate the above lemma with $P = \{0, 1\}^d$, and use the sketch to recover a vector \smash{$z^{(\kappa, j)} \in \R_{\geq 0}^{|X_{u^*}|}$} for every $\kappa, j$. Whenever $2^\kappa \geq |C(u^*)|$, there is a simple test to determine whether any $v \in C(u^*)$ happens to lie in $D_{\kappa, j}$: check whether \smash{$z^{(\kappa, j)}_{a} \geq 1/(3\bt_{u^*})^{1/p}$} for some $a \in \{0,1\}^d$ whose $i$-th node in $T$ is $v$. Even though the algorithm does not know $|C(u^*)|$ (and hence cannot check that $2^{\kappa} \geq |C(u^*)|$), it may start with the largest $\kappa = \lceil \log_2 n \rceil$ and work its way down, while recovering the sets $C(u^*) \cap D_{\kappa, j}$. By $\calbE_3$, the maximum setting of $\kappa$ and $j,j'$ where $C(u^*) \cap D_{\kappa,j}$ and $C(u^*) \cap D_{\kappa, j'}$ contain a unique element also satisfies $2^{\kappa} \geq |C(u^*)|$. Hence, the maximum value of $\kappa$ for which we find $C(u^*) \cap D_{\kappa, j}$ and $C(u^*) \cap D_{\kappa, j'}$ contain unique elements will recover $v^{*}$ and $v^{**}$.

The proof of Lemma~\ref{lem:child-sampling} follows similarly, except we change the setting of $P$. Namely, for every $\eta \in \{0,\dots, \lceil \log_2 n \rceil \}$, we instantiate the linear sketch of Lemma~\ref{lem:recovery} with $\bP_{\eta} \subset \{0,1\}^d$ being a random set generated by including each point i.i.d with probability $1/2^{\eta}$. Then, we recover the vector \smash{$\bz^{(\eta)} \in \R^{|\bP_{\eta}|}_{\geq 0}$} (obtained from Lemma~\ref{lem:recovery} with $P = \bP_{\eta}$), and we check the first time that $v \in C(u^*) \cap D$ contains a unique $a \in X_{v} \cap \bP_{\eta}$ where \smash{$\bz^{(\eta)}_a \geq 1/(3t_{u^*})^{1/p}$}. When this occurs, the point $a$ is uniformly distributed among all $X_{v}$, and hence we use $a$ as the child representative point $\boldr_v$. Furthermore, it is simple to recover the evaluation of any Boolean function $f \colon \{0,1\}^{d} \to \{-1,1\}$ (in particular, $\chi_{\bS}$) at $\boldr_v$ by utilizing Lemma~\ref{lem:recovery}: we instantiate the lemma again while letting $P = \bP_{\eta} \cap f^{-1}(1)$, and we check whether we can recover the same point $a \in \bP_{\eta}$. If so, then $f(a) = 1$, otherwise, $f(a) = -1$.  

\ignore{
The proof of Lemma~\ref{lem:child-recovery} follows similarly as Lemma~\ref{lem:parent-recovery}, albeit with some minor changes. Recall that the goal is two-fold: 1) identify the maximum $\kappa^*$ and $j$ where $|\bD_{\kappa^*, j} \cap C(\bu^*)| = 1$, and 2) recover $\bD_{\kappa^*, j} \cap C(\bu^*)$. We will do this by describing a linear sketch which takes as input the sequence $(\bt_{u})_{u \in L_{i-1}}$, as well as a set $\bD \subset L_i$ which was generated with sampling probability $\kappa$. The sketch stores a linear function of $\vals$. Given $\bu^*$, the recovery procedure outputs a set $\bS \subset C(\bu^*)$. We will show that $\bS \supset \bD \cap C(\bu^*)$ when $\calbE_1, \calbE_2,$ and $\calbE_3$ are satisfied. Furthermore, if $|C(\bu^*)| \leq 2^{\kappa}$, $\bS = \bD \cap C(\bu^*)$. Hence, in order to find the maximum $\kappa^*$ and $j$ where $|\bD_{\kappa, j} \cap C(\bu^*)| = 1$, we start with the maximum $\kappa$ and $j$ and work our way down, stopping the first time we encounter a set $|\bS| = 1$. By $\calbE_3$, the first time we have $|\bD_{\kappa, j} \cap C(\bu^*)| = 1$, we will have $|C(\bu^*)| \leq 2^{\kappa}$ and we exactly decode $\bD_{\kappa, j} \cap C(\bu^*)$ which gives the desired $\bv$. We similarly repeat the procedure for $\bv'$.
}

We now prove Lemma~\ref{lem:recovery}, where we specify the sketch in Figure~\ref{fig:child-recovery}.




\begin{figure}[h!]
	\begin{framed}
		\begin{flushleft}
			\noindent {\bf Linear Sketch for Lemma~\ref{lem:recovery}:} We receive as input a sequence of numbers $(t_u)_{u \in V_{i-1}}$, a set $D \subset V_i$, and a set $P \subset \{0,1\}^d$. We set parameters $\eps = \eps_0  = 1/2$ as well as $p = \eps / (2\log n)$. 
			
			\begin{itemize}
				\item We instantiate a Count-Sketch data structure with $l = O(\log n)$ independent hash functions into $k = O(\log^3 n\log(n/\gamma) / (\gamma^3))$ buckets. 
				\item For each $j \in [l]$ and $q \in [k]$, we maintain a $(1\pm \eps_0)$-accuracy $\ell_p$-sketch of the vector $\smash{\tilde{\vals}_{j,q} \in \Z_{\geq 0}^{|L_i|}}$, where for every $v \in L_i$
				\[ (\tilde{\vals}_{j,q})_v = \left\{ \begin{array}{cc} \ind\{ v \in D\} \cdot \frac{|X_v \cap P|}{t_{u}^{1/p}} & \bh_j(v) = q  \vspace{0.2cm} \\
				0 & \text{o.w.} \end{array} \right. .\]
			\end{itemize}
			
\ignore{			We fix parameters $\eps,\eps_0 \in (0,1)$ satisfying $\eps_0 < \eps < \gamma/10$ (recall the definitions of $\calbE_1$ and $\calbE_2$), as well as $p =\Theta(\eps/\log n)$ and . We receive as input a set of exponen $t = \Theta(\frac{\log n}{\eps_0^4 p^4})$, $k = \Theta(\frac{\eps_0}{\log (n/\gamma)\gamma^3})$, and $\ell = O(\log n)$ \\
			\begin{itemize}
				\item	Draw hash functions $h_{i}: L_{i-1} \to [k]$ for $i=1,2,\dots,\ell$.
				\item 	Generate $p$-stable random variables $\{\alpha_v^{(i,j)}\}_{v \in L_{i}}$ for each $i \in [\ell],j \in [t]$. Generate exponentials $\{\bt_u\}_{u \in L_{i-1}}$. 
				\item Initalized tables $\bA^{(1)},\bA^{(2)},\dots,\bA^{(t)}\in \R^{\ell \times k}$ with zero in all entries.
				\item For each $i \in [\ell],j \in [t], q \in [k]$, compute the linear function of $\vals$ given by: 
				\[	\bA_{i,q}^{(j)} = \sum_{\substack{u \in L_{i-1} \\ h_i(u) = q}} \sum_{\substack{v \in L_{i} \\ \pi(v) = u}} \frac{1}{\bt_u^{1/p}} \alpha_v^{(i,j)} \vals_v	\] 	
					
			\end{itemize} }
		\noindent {\bf Reporting Procedure:} Given the $k \cdot \ell$ many $\ell_p$-sketches, corresponding to each hash function and bucket pair, as well as the identity of $u^*$, we produce a vector \smash{$\bz \in \R^{|P|}_{\geq 0}$}. 
				\begin{itemize}
				\item For each $j \in [l]$ and $q \in [k]$, we run the reporting procedure for the $\ell_p$-sketch corresponding to the $j$-th hash function and $q$-th bucket. Let $(\bB_{j, q})$ be the reported $\ell_p$-sketches.
				\item For each $v \in C(u^*)$ and $a \in X_v \cap P$, we let $\bz_a$ be set to the minimum value of $(\bB_{j, \bh_j(v)} : j \in [l])$.
				
			\end{itemize}
		\end{flushleft}\vskip -0.14in
	\end{framed}\vspace{-0.2cm}
\caption{Child Recovery Linear Sketch for Lemma~\ref{lem:child-recovery}.}\label{fig:child-recovery}
\end{figure}

\begin{proposition}\label{prop:bucketbound-2}
Consider any fixed $j \in [l]$ and a fixed hash function $h_j \colon L_{i} \to [k]$, any value of $q \in [k]$. Then, with probability at least $1 - 1/\poly(n)$ over the randomness in the $\ell_p$-sketches, 
\[ \bB_{j, q} = (1 \pm \eps_0) \Big( (1\pm \eps) \sum_{u \in L_{i-1}} \frac{|D \cap \{ v \in C(u) : h_j(v) = q, X_v \cap P \neq \emptyset\} |}{t_u}\Big)^{1/p}. \]
%
	\end{proposition}
\begin{proof}
The proof mirrors that of Proposition~\ref{prop:bucketbound}. By the correctness of the $(1\pm \eps_0)$-accuracy $\ell_p$-sketches, we have
\begin{align*} 
\bB_{j,q} &= (1\pm \eps_0) \| \tilde{\Gamma}_{j,q}\|_p = (1 \pm \eps_0) \Big( \sum_{\substack{u \in L_{i-1}}} \frac{1}{t_u}\sum_{\substack{v \in C(u) \\ h_j(v) = q}} \ind\{ v \in D \} \cdot |X_v \cap P|^p \Big)^{1/p}
\end{align*}
with probability at least $1 - 1/\poly(n)$. The claim then follows from the fact that $|X_v \cap P| \in \{ 0 \} \cup [1, \dots, n]$, which means $|X_v \cap P|^p \in \{0 \} \cup [1, 1 + \eps]$, since $p = \eps / (2\log n)$.
%
	
\end{proof}

\begin{lemma}\label{lem:bucketbound}
	Consider any setting of $(t_u)_{u \in L_{i-1}}$ where $\calbE_1, \calbE_2$ are satisfied, and let $D$ be some $D_{\kappa, j}$ where $\calbE_3$ is satisfied. 
For any $j \in [l]$ and any $v \in C(u^*) \cap D$, with probability at least $1-1/\poly(n)$ over the randomness of the hash functions $\bh_j$ and $\ell_p$-sketches, 
\[ \bB_{j, \bh_j(v)}  \geq 1/(3t_{u^*})^{1/p}.\]
Moreover, for any $v \in C(u^*) \setminus D$ and $j \in [l]$, with probability at least $1/2$ over the draw of $\bh_j$ and $\ell_p$-sketches, 
\[ \bB_{j, \bh_{j}(v)} < 1/(3 \cdot 2^{\kappa} / |C(u^*)| \cdot t_{u^*})^{1/p}. \] 
	\end{lemma}

\begin{proof}
By Proposition~\ref{prop:bucketbound}, for any $v \in C(u^*) \cap D$, we have with probability $1-1/\poly(n)$,
\begin{align*}
	\bB_{j, \bh_{j}(v)} &\geq (1 \pm \eps_0) \Big((1 \pm \eps) \cdot \frac{1}{t_{u^*}} + (1\pm\eps) \sum_{u \in L_{i-1} \setminus \{u^*\}} \frac{|D \cap \{ v' \in C(u) : \bh_j(v') = q, X_{v'} \cap P \neq \emptyset\}}{t_u} \Big)^{1/p} \\
	&\geq \frac{(1 - \eps_0)(1 - \eps)^{1/p}}{t_{u^*}^{1/p}} \geq \frac{1}{2} \cdot \frac{1}{(2t_{u^*})^{1/p}} \geq \frac{1}{(3t_{u^*})^{1/p}}. 
\end{align*}
For the second claim, notice that for any $v \in C(u^*) \setminus D$, we have
\begin{align*} 
\bB_{j, \bh_j(v)} \leq (1 + \eps_0) (1+\eps)^{1/p}\Big(\sum_{\substack{u \in L_{i-1}}} \frac{|D \cap \{ v' \in C(u) : \bh_j(v') = \bh_j(v), X_{v'} \cap P \neq \emptyset\}}{t_{u}} \Big)^{1/p}.
\end{align*}
%
We now show that the right-hand side of the above inequality is small with probability at least $1/2$. Using the fact that $\calbE_1, \calbE_2$, and $\calbE_3$ are satisfied,
\begin{align*}
&\Ex_{\bh_j}\Big[\sum_{\substack{u \in L_{i-1}}}\frac{|D \cap \{ v' \in C(u) : \bh_j(v') = \bh_j(v), X_{v'} \cap P \neq\emptyset\}}{\bt_{u}}\Big] \\
&\qquad\qquad\qquad\leq \sum_{u \in L_{i-1}}   \frac{|C(u) \cap D|}{k\cdot t_{u}} \leq \frac{\log^3 n}{k \cdot 2^{\kappa} \gamma} \sum_{u \in L_{i-1}} \frac{|C(u)|}{t_u} \\
&\qquad\qquad\qquad\leq \frac{4\log^3 n \log(n/\gamma)}{k \cdot 2^{\kappa} \cdot \gamma^2} \cdot |L_i| \leq \frac{4 \log^3 n \log(n/\gamma)}{k \cdot 2^{\kappa} \cdot \gamma^3} \cdot \frac{|C(u^*)|}{t_{u^*}} \\
&\qquad\qquad\qquad\leq \frac{1}{30 \cdot 2^{\kappa} / |C(u^*)| \cdot t_{u^*}}.
\end{align*}
Hence, with probability at least $2/3 - 1/\poly(n) \geq 1/2$ over the choice of hash functions and $\ell_p$-sketch, $\bB_{j, \bh_j(v)} \leq (3/2) (3/2)^{1/p} / (10 \cdot 2^{\kappa}/ |C(u^*)| \cdot t_{u^*})^{1/p} \leq 1/(3 \cdot 2^{\kappa} / |C(u^*)| \cdot t_{u^*})^{1/p}$. 
\end{proof}




\section{Lower Bound for Streaming MST}

Our main theorem is the following:

\begin{theorem} \label{thm:main_LB}
Let $n, d \in \N$ satisfy $n \leq 2^{3d/4}$  and $\ell \in \N$. A randomized $\ell$-bit streaming algorithm which estimates the MST of $n$ points in $\{0,1\}^d$ up to multiplicative factor $\alpha > 1$ with probability at least $2/3$ must satisfy
\[ \ell \geq \frac{d}{\alpha} - \log \alpha. \]
In particular, if $\ell$ is a constant independent of $n$ and $d$, then $\alpha = \tilde{\Omega}(\log n)$.
\end{theorem}

We first state a theorem of \cite{AIK08} giving a lower bound for communication protocols for $\EMD$ over $\{0,1\}^d$. Fix a $d\ge 1$ and $1\le \alpha \le d$.
Let $C \subset \F_2^d$ be a linear code of dimension at least $d / 4$ and weight at least $c d$, for a constant $c \geq 0$. Furthermore, the set $C^{\perp}$ is the orthogonal subspace (i.e., the set of vectors $z \in \{0,1\}^d$ satisfying $\langle z, x \rangle = 0$ for all $x \in C$).
Let $\calT_\eps$ denote the distribution over $\{0,1\}^d$ 
  such that each bit of $\bz\sim \calT_\eps$ is set to $1$ with probability $\eps$ independently.
We define a probability distribution $\calD$  supported on
  $(Z,A,B)$ where $Z\in \{0,1\}$ and $A,B$ are subsets 
   of size $n:=|C^\perp|\leq 2^{3d/4}$. 

Let $\eps:=1/(200\alpha)$.
To draw $(\bZ,\bA,\bB)\sim \calD $, 
\begin{enumerate}
\item We first draw $\bZ\sim \{0,1\}$ uniformly at random.
\item If $\bZ= 1$, 
we sample $\bx \sim \{0,1\}^d$ and $\bz\sim \calT_\eps$, and set $\by=\bx+\bz$,
\[
\bA = \left\{ \bx + c \in \{0,1\}^d : c \in C^{\perp} \right\} 
\quad\text{and}\quad 
\bB = \left\{ \by + c \in \{0,1\}^d : c \in C^{\perp} \right\}.
\]
\item  If $\bZ=0$, we sample $\bx,\by \sim \{0,1\}^d$ independently and set
\[
\bA = \left\{ \bx + c \in \{0,1\}^d : c \in C^{\perp} \right\} \quad\text{and}\quad \bB = \left\{ \by + c \in \{0,1\}^d: c \in C^{\perp} \right\} .
\]
\end{enumerate}

The main lower bound of \cite{AIK08} for $\EMD$ states the following.
\begin{theorem}[Proposition 4.1 of \cite{AIK08}]\label{thm:aik}
Let $\Pi$ be any deterministic 1-bit protocol between two parties which receive a subset of $\{0,1\}^d$ as input.
%
 Then, \[
\Prx_{(\bZ,\bA,\bB)\sim \calD }\big[ \Pi(\bA,\bB)=\bZ\big]\le \frac{1}{2} + 2^{-\Omega(d/\alpha)}.
\]
\end{theorem}

Given the above theorem, we now show that a too-good-to-be-true protocol for Euclidean MST on $\{0,1\}^d$ would imply a protocol contradicting Theorem~\ref{thm:aik} above.  
Given a $k\ge 2$, we define the following distribution $\calD^{(k)}$
  supported on $(Z,A_1,\ldots,A_k)$ where $Z\in \{0,1\}$ and $A_1,\ldots,A_k$ are subsets
  of $\{0,1\}^d$ of size $n$:
\begin{enumerate}
\item We first draw $\bZ\sim \{0,1\}$ uniformly at random.
\item If $\bZ=1$, we draw $\bx\sim \{0,1\}^d$ and $\bz_2,\ldots,\bz_k\sim \calT_\eps$
  independently.
Set $\bx_1=\bx$ and $\bx_{\ell}=\bx_{\ell-1}+\bz_{\ell}$ for each $\ell=2,\ldots,k$, and
\[ \bA_{\ell} = \left\{ \bx_{\ell} + c \in \{0,1\}^d : c \in C^{\perp}  \right\}. \]
\item 
If $\bZ=0$, we draw $\bx_1, \dots, \bx_k \sim \{0,1\}^d$ independently and set
\[ \bA_{\ell} = \left\{ \bx_{\ell} + c \in \{0,1\}^d : c \in C^{\perp} \right\}.\]
\end{enumerate}

\begin{observation}\label{obs:dist-same}
When $k=2$, $\calD^{(k)}$ is exactly $\calD$.
For any $\ell<r\in [k]$, the marginal distribution of $(\bZ,\bA_\ell,\ldots,\bA_{r})$
  as $(\bZ,\bA_1,\ldots,\bA_k)\sim \calD^{(k)}$ is the same as $\calD^{(r-\ell+1)}$.
\end{observation}

\begin{observation}\label{obs:gap-instances}
For any $k \geq 2$, the following holds with probability at least $1-o(1)$ over the draw of a sample $(\bZ, \bA_1, \dots, \bA_k) \sim \calD^{(k)}$: the set $\bA = \bA_1 \cup \dots \cup \bA_k$ has a minimum spanning tree of cost $O(\eps d n k + nd)$ if $\bZ = 1$, and minimum spanning tree of cost at least $\Omega(dnk)$ if $\bZ = 0$.
\end{observation}

In our main technical lemma which follows, we consider $k$-party, one-way protocols. There are $k$ players (labeled $1,\dots, k$), and player $\ell$ is allowed to send a single message to player $\ell+1$. In particular, the protocol proceeds by the first player communicating to the second player, the second to third, and so on, until the $k$-th player, who produces an output. 

\begin{lemma}[Player Reduction Lemma]\label{lem:player-reduction}
Let $\ell$ be a positive integer and 
  $k=2^b+1$ for some $b\ge 0$.
Let $\Pi$ be any deterministic $\ell$-bit, $k$-party one-way protocol, in which each
  party receives a subset of $\{0,1\}^d$.
If
\[
\Prx_{(\bZ,\bA_1,\ldots,\bA_k)\sim \calD^{(k)} }\big[ \Pi(\bA_1,\ldots,\bA_k)=\bZ\big]\ge \frac{1}{2} + \delta
\]
for some $\delta>0$, 
then there is a deterministic $\ell$-bit, $2$-party one-way protocol $\Pi'$ such that
$$
\Prx_{(\bZ,\bA_1, \bA_2)\sim \calD }\big[ \Pi(\bA_1,\bA_2 )=\bZ\big]\ge \frac{1}{2} + \frac{2\delta}{2^b}>\frac{1}{2}+\frac{2\delta}{k}.
$$
\end{lemma}

We use this lemma to prove the main lower bound:

\begin{proofof}{Theorem~\ref{thm:main_LB}}
Let $k= \Theta(\alpha)$ such that $k=2^b+1$ for some $b$ and $(\bZ,\bA_1,\ldots,\bA_k)\sim \calD^{(k)}$. As per Observation~\ref{obs:gap-instances}, if $\Pi^*$ is a randomized $k$-party, $\ell$-bit one-way protocol for $\alpha$-approximating MST, then 
 one can derive from $\Pi^*$ a deterministic $k$-party, $\ell$-bit one-way protocol $\Pi$ 
  such that 
$$
\Prx_{(\bZ,\bA_1,\ldots, \bA_k)\sim \calD^{(k)} }\big[ \Pi(\bA_1,\ldots,\bA_k )=\bZ\big]\ge \frac{2}{3}-o(1).
$$
Applying Lemma~\ref{lem:player-reduction}, we obtain a deterministic two-party, $\ell$-bit one-way protocol $\Pi'$
  such that 
$$
\Prx_{(\bZ,\bA_1, \bA_2)\sim \calD }\big[ \Pi'(\bA_1, \bA_2 )=\bZ\big]\ge \frac{1}{2}+\Omega\left(\frac{1}{k}\right).
$$
We now reduce the communication to 1 bit, at the cost of lowering the advantage by a factor of $O(2^{\ell})$: the players guess a transcript of $\ell$ bits, and player $1$ sends one bit to verify that their part of the guessed transcript agrees with $\Pi'$ (also in Lemma 4.1 of \cite{AIK08}). In particular, we get a deterministic two-party, one-bit one way protocol $\Pi''$ such that 
$$
\Prx_{(\bZ,\bA_1, \bA_2)\sim \calD }\big[ \Pi''(\bA_1, \bA_2 )=\bZ\big]\ge \frac{1}{2}+\Omega\left(\frac{1}{k2^\ell}\right)
 \ge \frac{1}{2}+\Omega\left(\frac{1}{\alpha 2^\ell}\right).
$$
The lower bound $\ell\ge (d/\alpha)-\log \alpha $ follows from Theorem~\ref{thm:aik} and Observation~\ref{obs:dist-same}, since $\Omega(1/(\alpha 2^{\ell})) \leq 2^{-\Omega(d/\alpha)}$.
 In particular, this implies that $\ell=\omega(1)$ when $\alpha=c\log n/\log\log n$ for some sufficiently small constant $c>0$. 
\end{proofof}

Lemma~\ref{lem:player-reduction} is the corollary of the following lemma, which we prove next.

\begin{lemma}
Let $\ell$ be a positive integer and 
  $k=2k'+1$ for some positive integer $k'$.
Let $\Pi$ be any deterministic $\ell$-bit, $k$-party one-way protocol, in which each
  party receives a subset of $\{0,1\}^d$.
If
\[
\Prx_{(\bZ,\bA_1,\ldots,\bA_k)\sim \calD^{(k)} }\big[ \Pi(\bA_1,\ldots,\bA_k)=\bZ\big]\ge \frac{1}{2} + \delta
\]
for some $\delta>0$, 
then there is a deterministic $\ell$-bit, $(k'+1)$-party one-way protocol $\Pi'$ such that
$$
\Prx_{(\bZ,\bA_1, \ldots,\bA_{k'+1})\sim \calD^{(k'+1)} }\big[ \Pi'(\bA_1,\ldots,\bA_{k'+1} )=\bZ\big]\ge \frac{1}{2} + \frac{ \delta}{2}.
$$
\end{lemma}

\def\frakA{\frak{A}}

\begin{proof}
Let $\frakA$ denote the collection of sets $\{ C^\perp(x) : x\in \{0,1\}^d \}$.
It is easy to verify that $(\bZ,\bA_{k'+1})$ for $(\bZ,\bA_1,\ldots,\bA_k)\sim \calD^{(k)}$
  is  distributed equivalently as $\bZ\sim \{0,1\}$ and $\bA_{k'+1}\sim \frakA$
  uniformly and independently.
We introduce some notation. Given $A\in \frakA$, $z\in \{0,1\}$ and $M\in \{0,1\}^\ell$,
  we write $p(A,z,M)$ to denote the probability of (below is the message of the 
  $k'$th player)
$$
\Pi_{k'}(\bA_1,\ldots,\bA_{k'})=M
$$
when $(\bZ,\bA_1,\ldots,\bA_k)\sim \calD^{(k)}$ conditioning on $\bZ=z$ and $\bA_{k'+1}=A$.
Note that for any $A \in \frakA$ and $z \in \{0,1\}$, $\sum_{M} p(A,z,M)=1$.
We break up the analysis into two cases, corresponding to the value of the quantity
$$
\xi:= \frac{1}{ 4|\frakA|}\sum_{A\in \frakA}\sum_{M\in \{0,1\}^\ell}
\big| p(A,0,M)-p(A,1,M)\big|.
$$
The first case occurs when $\xi\ge \delta/2$; the deterministic protocol $\Pi'$ is obtained
  from $\Pi$ as follows:
\begin{flushleft}\begin{enumerate}
\item Given inputs $A_1,\ldots,A_{k'},A_{k'+1}$, the first 
  $k'$ players simulate $\Pi$; let $M_{k'}$ denote the $\ell$-bit message 
  sent by player $k'$ to player $k'+1$.
\item After receiving $M_{k'}$, player $k'+1$ returns $1$ if $p(A_{k'+1},1,M_{k'})\ge 
  p(A_{k'+1},0,M_{k'})$ and returns $0$ otherwise.
\end{enumerate}\end{flushleft}
The probability that $\Pi'$ succeeds can be expressed as
\begin{align*}
&\frac{1}{ 2|\frakA|} \sum_{A\in \frakA} \max\big\{p(A,0,M),p(A,1,M)\big\}\\
&\hspace{1cm}=\frac{1}{2 |\frakA|} \sum_{A\in \frakA} \frac{p(A,0,M)+p(A,1,M) +\big|p(A,0,M)-p(A,1,M)\big|}{2}\\&\hspace{1cm}=
\frac{1}{2}+ {\xi} \ge \frac{1}{2}+\frac{\delta}{2}.
\end{align*}

For the second case when $\xi<\delta/2$, we use $\Pi$ to obtain the following 
  randomized protocol $\Pi^*$:
\begin{flushleft}\begin{enumerate}
\item Let $A_{k'+1},\ldots,A_{k}$ be the inputs of the $k'+1$ players.
The first player draws a message $\bM$ from the distribution $p(A_{k'+1},0,\cdot)$, i.e., it draws $\bA_{1}, \bA_{2}, \dots, \bA_{k'} \sim \frakA$, and simulates the protocol to generate the message $\bM$ sent by the $k'$-th player.
\item The first player simulates $\Pi$ as player $k'+1$, using $\bM$ as the message from
  player $k'$; the rest of the $k'$ players simulate $\Pi$ as players $k'+2,\ldots,k$.
\end{enumerate}\end{flushleft}
Then one can bound the probability that $\Pi^*$ succeeds by 
  the probability that $\Pi$ succeeds minus 
$$
\frac{1}{2|\frakA|} \sum_{A\in \frakA} \dtv\big(p(A,0,\cdot)-p(A,1,\cdot)\big)
= {\xi} \le \frac{\delta}{2}.
$$
As a result, there exists a deterministic protocol $\Pi'$ that succeeds with
  probability at least $(1/2)+\delta-(\delta/2)\ge (1/2)+(\delta/2)$.
\end{proof}

\section{Sketching $\ell_p$ norms for $p$ near $0$}

In this section, we compute the dependence on $p$ for Indyk's $\ell_p$-sketch using $p$-stable random variables. The reason for this section is that we will later utilize this sketch with $p \to 0$ as $n \to \infty$. We first define stable random variables.
\begin{definition}\label{def:stable}
	A distribution $\mathcal{D}$ is said to be $p$-stable if whenever $\bx_1,\dots,\bx_n \sim \mathcal{D}$ are drawn independently and $a \in \R^n$ is a fixed vector, we have
	\[	\sum_{i=1}^n a_i \bx_i \mathop{=}^{d} \|a\|_p \bx	\]
	where $\bx \sim \mathcal{D}$ is again drawn from the same distribution, and the ``$\displaystyle\mathop{=}^d$'' symbol above denotes distributional equality. 
\end{definition}
We refer to \cite{nolan2009stable} for a thorough discussion on $p$-stable random variables, and to \cite{I06} for their use in streaming/sketching computation. Our proof of correctness will utilize the following standard methods for generating $p$-stable random variables.

\newcommand{\btheta}{\boldsymbol{\theta}}

\begin{proposition}[\cite{chambers1976method}]\label{prop:genstable}
	Fix any $p \in (0,1)$. Then a draw from a $p$-stable distribution $\bx \sim \cD_p$ can be generated as follows: (i) generate $\btheta \sim [-\frac{\pi}{2},\frac{\pi}{2}]$, (ii) generate $\boldr \sim [0,1]$, and set 
		\begin{align}
		\bx = \frac{\sin(p \btheta)}{\cos^{1/p}(\btheta)} \cdot \left(\frac{\cos(\btheta(1-p))}{\ln(1/\boldr)}\right)^{\frac{1-p}{p}}. \label{eq:gen-p-stable}
		\end{align}
\end{proposition}

\begin{remark}
By Proposition \ref{prop:genstable}, in order to generate a draw of $|\bx|$, for $\bx \sim \cD_p$, we may generate $\theta \sim [-\frac{\pi}{2},\frac{\pi}{2}]$ and $r \sim [0,1]$, and consider the magnitude of (\ref{eq:gen-p-stable}). 
%
Notice, furthermore, that $\cos(\theta(1-p)) = \cos(-\theta(1-p))$, so (\ref{eq:gen-p-stable}) is symmetric around the origin in $\theta$. This means we may generate $|\bx|$ by drawing $\boldr \sim [0,1]$ and $\btheta \sim [0, \pi/2]$, and outputting $g(\boldr, \btheta)$,
for $g \colon (0, 1) \times (0,\pi/2) \to \R$ be
\begin{align}
g(r, \theta) \eqdef \frac{\sin (p\theta)}{\cos^{1/p}(\theta)} \left(\frac{\cos(\theta(1-p))}{\ln(1/r)} \right)^{\frac{1-p}{p}}. \label{eq:g-def}
\end{align}
One can check, via simple calculus, that the function $g \colon (0, 1) \times (0, \pi/2) \to \R$ is continuous and monotone increasing in both parameters. 
\end{remark}

\ignore{We now prove some facts about $g(r,\theta)$

\begin{proposition}\label{prop:monotoneg}
	Define the function $g(r,\theta)$ as above, with the domain $(0,1) \times (0,\frac{\pi}{2})$. Then $g$ is monotone increasing in both parameters.
\end{proposition}
\begin{proof}
	First, fix any $\theta \in (0,\pi/2)$. Then since $\ln(1/r)$ is decreasing as $r$ increases in $(0,1)$ the value $(\frac{1}{\ln(1/r)})^{(1-p)/p}$ is monotone decreasing in $r$, as needed. For the second part, notice that $\sin(p \theta)$ is monotone increasing in $\theta$. Thus, it suffices to show that the function
	
	\[	\frac{\cos(\theta(1-p))^{1-p}}{\cos(\theta)}	\]
	is monotone increasing.  We have
	
	\[\frac{\partial}{\partial \theta} \frac{\cos ^{1-p}(\theta (1-p))}{\cos (\theta )}=\tan (\theta ) \sec (\theta ) \cos ^{1-p}(\theta (1-p))\]
	
	Note that all three of $\tan (\theta ), \sec (\theta ) ,$ and $\cos(\theta(1-p))$ are positive for $\theta \in (0,\pi/2)$, which completes the claim and the proof of the Proposition
\end{proof}}

\begin{claim}\label{cl:median}
There is some setting $t^* \in [1/10, 9/10]$ satisfying
	\[g\left(t^*,\frac{\pi t^*}{2} \right) \eqdef \median(|\cD_p|)| = \sup\left\{ z \in \R : \Prx_{\bx \sim \cD_p}\left[ |\bx| \leq z\right]\leq 1/2 \right\}. 	\]
\end{claim}
\begin{proof}
	Consider the CDF $F(z) = \prb{\bx \sim \cD_p}{|\bx| \leq z}$ as well as the function $F(g(t, \pi t /2))$. Since $g(t, \pi t/2)$ is continuous and monotone increasing in both parameters and $F(z)$ is continuous and monotone increasing, the function $F(g(t, \pi t /2))$ is continuous and monotone increasing. Furthermore, since $g(\boldr, \btheta)$ generates a draw from $|\bx|$, where $\bx \sim \cD_p$ when $\boldr \sim [0,1]$ and $\btheta \sim [0, \pi /2]$, for any $t\in[0,1]$ we have that $t^2 \leq F(g(t, \pi t/2)) \leq 1-(1-t)^2$. In other words, $F(g(1/10, \pi/20)) < 1/2 < F(g(9/10, 9\pi/20))$, so the result follows from the intermediate value theorem.  
\end{proof}

\begin{lemma}\label{lemma:stablemedianbound}
	Fix any $p \in (0,1)$ and $\eps \in(0, 1)$, and let $R =\median(|\cD_p|)|$. Then there exists a small constant $c \in (0,1)$ independent of $p,\eps$ such that
	
	\[	\prb{X \sim \cD_p}{R \leq |X| \leq (1+  \eps)R} \geq c p^2 \eps^2  \qquad \text{and}\qquad \prb{X \sim \cD_p}{(1-\eps)R \leq |X| \leq R} \geq c p^2 \eps^2  \]
\end{lemma}

\begin{proof}
Let $t^* \in (1/10, 9/10)$ be the parameter satisfying $R = g(t^*, \pi t^*/2)$ as per Claim~\ref{cl:median}. Then, by the definition of $g(r, \theta)$ for generating a draw from $|\calD_p|$, we have
\begin{align*}
\Prx_{\bx\sim \calD_p}\left[ R \leq |\bx| \leq (1+\eps) R \right] &\geq \Prx_{\substack{\btheta \sim [0, \pi/2] \\ \boldr \sim [0,1]}}\left[ \ln R \leq \ln g(\boldr, \btheta) \leq \ln R + \frac{\eps}{2} \right] \\
	&\geq \Prx_{ \boldr \sim [0,1]}\left[ t^* \leq \boldr \leq t^* + \frac{\eps}{2 D} \right] \Prx_{\btheta \sim [0, \pi/2]}\left[ \pi t / 2 \leq  \btheta \leq \pi t/2 + \frac{\pi\eps}{2D}\right] \gsim \frac{\eps^2}{D^2}.
\end{align*}
where $D = \max\left\{ \frac{d}{dt}\left[ \ln g(t, \pi t/ 2) \right](\ell) : \ell \in [1/100, 99/100] \right\}$, and we used the fact $g(r, \theta)$ is continuous and increasing in both variables. Hence, it remains to upper bound $D$, which we do next.
\begin{align*}
\frac{d}{dt} \left[ \ln g(t, \pi t /2) \right] (\ell) &= \frac{p \pi\cdot \cos(p\pi \ell/2)}{2\cdot \sin(p \pi \ell/2)} + \frac{\pi \cdot \sin(\pi \ell / 2)}{2p \cdot \cos(\pi\ell/2)} - \frac{(1-p)^2 \pi \cdot \sin(\pi \ell / 2 (1-p))}{2p \cdot \cos(\pi\ell/2(1-p))} + \frac{1-p}{p \ell \ln(1/\ell)},
\end{align*}
which is $O(1/p)$ for any $\ell \in [1/100, 99/100]$. The bound for the probability $|\bx|$ lies between $(1-\eps)R$ and $R$ follows similarly. 
\end{proof}

\ignore{
\begin{proof}
	Define the function $g(r,\theta)$ as above, and recall by Corollary \ref{cor:genstable} that we can generate $X \sim |\cD_p|$ by picking $r \sim [0,1],\theta \sim [0,\pi/2]$ and setting $X = g(r,\theta)$. Moreover, we know by the prior discussion that $R = g(t,\pi t/2)$ for some $t \in (\frac{1}{10},\frac{9}{10})$. We will proceed by bounding the partial derivatives of $\ln(g(r,\theta))$ by a value $\Omega(1/p)$ for any $r,\theta$ bounded away from the edges of the domain box $(0,1) \times (0,\pi/2)$. We have

	\[\frac{\partial}{\partial \theta}	\ln(g(r,\theta))  = \frac{\cos(p \theta)}{\sin (p \theta)} - \frac{(1-p)^2}{p} \frac{\sin(\theta(1-p))}{\cos(\theta(1-p))} + \frac{1}{p} \frac{\sin(\theta)}{\cos(\theta)}	\]
	
	For $\theta$ in the interval $(1/100,\pi/2 - 1/100)$, the above derivative is bounded in magnitude by $C_1 p^{-1}$, for some constant $C_1$. For the second partial derivative, we have
	\[\frac{\partial}{\partial r}	\ln(g(r,\theta)) =\frac{\partial}{\partial r} -(1-p)/p \ln(\ln(1/r))= \frac{1-p}{p}  \frac{1}{r \log(1/r)}  \]
	Again, for $r$ in the interval $[1/100,99/100]$, the above is bounded by $C_1 p^{-1}$. It follows that if $I_1 \times I_2$ is the product of any two intervals inside with $I_1 \subset (\frac{1}{100},\frac{99}{100}) $ and $I_2 \subset  (1/100,\pi/2-1/100)$ Then we have
	
	\[	\max_{(r_1,\theta_1),(r_2,\theta_2) \in I_1 \times I_2} \left|\ln(g(r_1,\theta_1)) - \ln(g(r_2,\theta_2)) \right| \leq \frac{C_1( |I_1| + |I_2|)}{p}	\]
	
	It follows that for intervals $I_1 \subset (\frac{1}{100},\frac{99}{100}) $ and $I_2 \subset  (1/100,\pi/2-1/100)$  with $|I_1| + |I_2| < c \eps  p $ for a sufficiently small constant $c$, it follows that 
	
	\[	\max_{(r_1,\theta_1),(r_2,\theta_2) \in I_1 \times I_2} \left|\frac{g(r_1,\theta_1)}{ g(r_2,\theta_2)} \right| \leq \exp\left(\frac{C_1( |I_1| + |I_2|)}{p} \right) \leq 1+ \eps	\]
	We apply the above fact with $I_1 = [t - \gamma, t + \gamma]$ and $I_2 = (\pi t /2 - \gamma, \pi t/2 + \gamma)$, where $\gamma = c p \eps /4$. Let $I_1^+ = [t,t+ \gamma], I_1^- = [t - \gamma, t]$ and $I_2^+ =[\pi t/2 , \pi t/2 + \gamma]$ and $I_2^- = [\pi t/2 - \gamma, \pi t/2]$. By the monotonicty of $g$ (Proposition \ref{prop:monotoneg}), and the above paragraphs, it follows that 
	$$R \leq g(r,\theta) \leq (1+\eps)R$$
	For all $(r,\theta) \in I_1^+ \times I_2^+$, and moreover 
	$$(1-\eps)R \leq g(r,\theta) \leq R$$
	For all $(r,\theta) \in I_1^- \times I_2^-$.
	Moreover, the area of both $I_1^+ \times I_2^+$ and $I_1^- \times I_2^-$ is $\Omega(p^2 \eps^2)$, from which it follows that the probability that $(r,\theta)$ is in $I_1^+ \times I_2^+$ is at least $\Omega(p^2 \eps^2)$, and similarly for $I_1^- \times I_2^-$, which completes the proof of the Lemma.

\end{proof}}

We are now ready to prove the correctness of the Indyk $p$-stable sketch. 
\begin{theorem}[Indyk's $p$-Stable Sketch]\label{thm:indykPrelims}
	Consider any $p,\eps,\delta \in (0,1)$, and set $t = O(\log(1/\delta) / (p\eps)^4)$ Let $\bx_1,\dots,\bx_t \sim \cD_p$ independently. Then, with probability at least $1-\delta$, 
	\[1-\eps\leq	\median_{i \in [t]} \left\{	\frac{|\bx_i|}{\median(|\cD_p|)|}	\right\}  \leq 1+\eps.		\]
\end{theorem}
\begin{proof}
	Set $R = \median(|\cD_p|)$. By Lemma \ref{lemma:stablemedianbound}, 
there exists a constant $c>0$ such that for each trial $\bx_i \sim \cD_p$, the probability that $|\bx_i| > (1+\eps )R$ is at most $1/2 - c p^2 \eps^2$, and the probability that $|\bx_i| < (1-\eps)R$ is at most $1/2 - c p^2 \eps^2$.	
	By Chernoff bounds, the number of trials $i \in [t]$ satisfying $|\bx_i| > (1+\eps) R$ is at most $t/2$ with probability at least $1 - \exp(-(\eps p)^4 t /3)$, and similarly, the number of trials $i \in [t]$ satisfying $|\bx_i| < (1-\eps) R$ is at most $t/2$ with probability at least $1 - \exp(-(\eps p)^4 t / 3)$. 
%
When both of these events occur simultaneously, with probability at least $1- 2 e^{- \eps^4 p^4  t/3} > 1-\delta$, the median falls within the range $(1-\eps)R$ and $(1+\eps)R$, and we obtain our desired bound. 
%
%
\end{proof}

\section*{Acknowledgments }
We would like to thank David Woodruff, Ilya Razenshteyn, and Aleksandar Nikolov for illuminating discussions relating to this work. 

\appendix
\ignore{
\section{Analysis of $\EMD$ with Tree Embeddings}\label{app:compute-emd-embedding}

We sketch how the natural analysis of $\ComputeEMD(A,B,d)$ yields a $O(\min\{ \log s, \log d\} \log s)$-approximation. The analysis proceeds via the method of randomized tree embeddings, and immediately gives a (randomized) embedding $\boldf \colon (\{0,1\}^d)^s \to \ell_1$ satisfying
\[ \EMD(A, B) \leq \| \boldf(A) - \boldf(B) \|_1 \leq O(\min\{\log s, \log d\} \log s) \EMD(A,B), \] 
with probability $0.9$ over the draw of $\boldf$.
Specifically, let $A, B \subset \{0,1\}^d$ be two multi-sets of size $s$, and let $M^* \subset A \times B$ be the matching such that
\[ \EMD(A, B) = \sum_{(a, b) \in M^*} \|a - b\|_1. \] 
Consider the execution tree $\bT_0$ in described the beginning of Section~\ref{sec:quadtree-cost}, where we execute the algorithm for at most $O(d)$ rounds of recursion. We assign weights to the edges, where an edge connecting a node at depth $i$ and $i+1$ is given weight $d/(i+1)^2$, which defines a tree metric $(A \cup B, d_{\bT_0})$ given by the sum of weights over the paths connecting two points in the tree. 

The following claim is a simple observation, which follows from a greedy construction of the matching over a tree. See Figure~\ref{fig:proof-by-picture}.

\begin{claim}[Greedy Bottom-Up Approach is Optimal for a Tree]\label{cl:greedy-tree}
Let $\bM \subset A \times B$ be the matching that the execution tree $\bT_0$ outputs, then
\[  \sum_{(a,b)\in \bM} d_{\bT_0}(a, b) \leq \sum_{(a, b) \in M^*} d_{\bT_0}(a, b). \]
\end{claim}

\begin{figure}[h]
\centering
\includegraphics[width=0.5\linewidth]{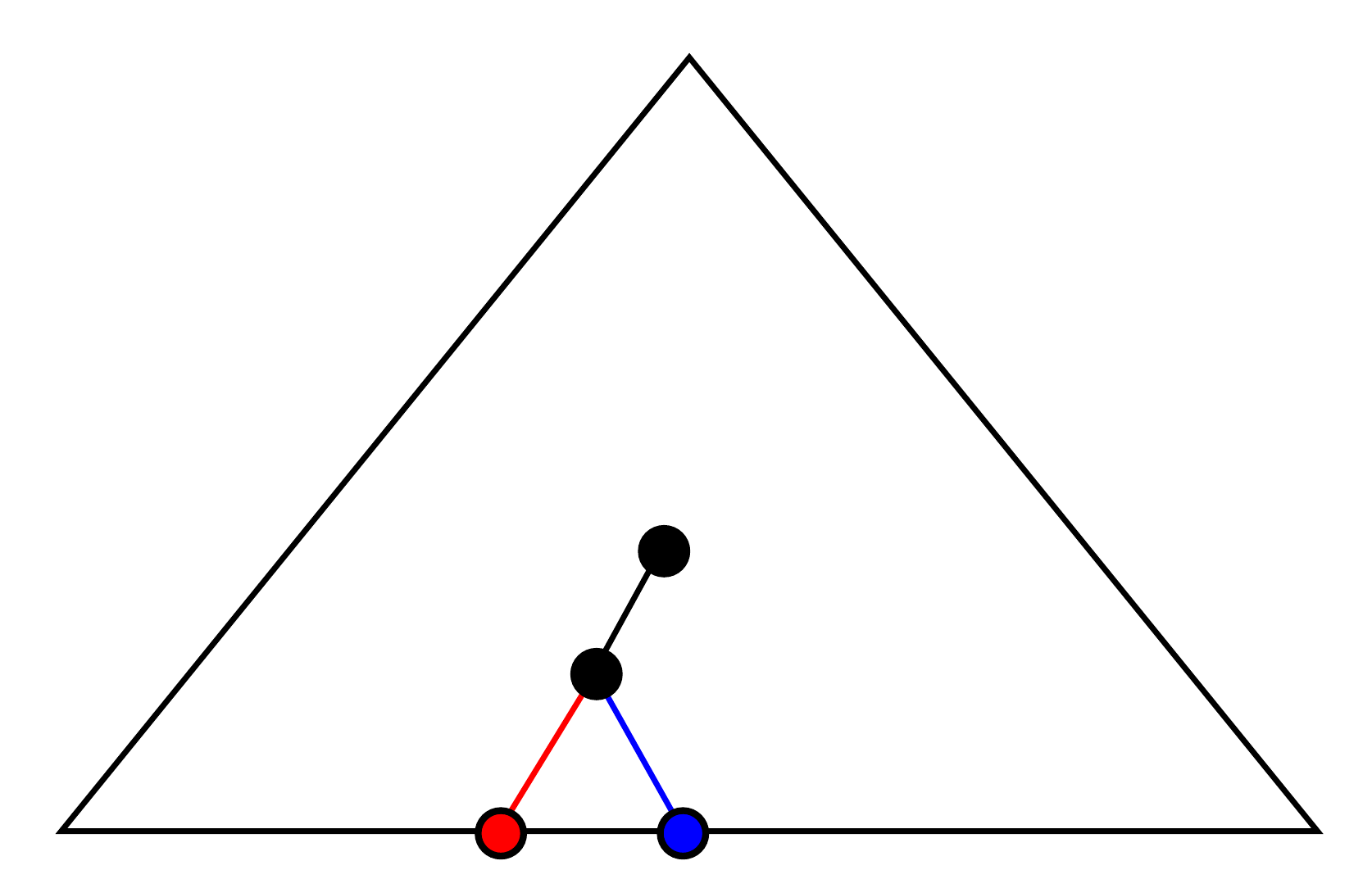}
\caption{Proof sketch of Claim~\ref{cl:greedy-tree}. The matching $\bM$ satisfies that for any node $v$ in the tree, the pairs $(a,b) \in \bM$ within the subtree rooted at $v$ forms a maximal matching of nodes in $A$ and $B$ within the subtree rooted at $v$. In order to see why this is optimal for a tree with positive edge weights, suppose the red point is in $A$ and the blue point in $B$. These meet at the (lower) black node $v$, but if they both remain unmatched at the upper-black node $u$, then both must pay the black edge.}\label{fig:proof-by-picture}
\end{figure}

\begin{lemma}\label{lem:distort}
With probability $0.9$ over the draw of $\bT_0$, 
\[ \sum_{(a,b) \in M^*} d_{\bT_0}(a, b) \leq O(\min\{\log s, \log d\}) \cdot\EMD(A, B),\]
and every $a \in A$ and $b \in B$ satisfies $d_{\bT_0}(a,b) \geq \Omega\left(\|a-b\|_1 / \log s\right)$.
\end{lemma}

\begin{proof}
For $(a, b) \in A \times B$ with $a\neq b$, let 
\[ i_{\min}(a, b) = \left\lfloor \dfrac{d}{\|a-b\|_1 \cdot s^3} \right\rfloor \qquad i_{\max}(a,b) = \max\left\{ \left\lceil \frac{10d \log s}{\|a-b\|_1} \right\rceil, d \right\}. \]
Then, we consider the random variable 
\[ \bD(a,b) \eqdef 2\sum_{i=i_{\min}(a,b)}^{i_{\max}(a,b)}  \ind\left\{(a,b) \text{ first split in $\bT_0$ at depth $i$} \right\} \sum_{j \geq i} \frac{d}{(i+1)^2} \cdot , \]
and notice that this is equal to $d_{\bT_0}(a, b)$ whenever $(a,b)$ are first split between depth $i_{\min}(a,b)$ and $i_{\max}(a,b)$. Then, we have
\begin{align*}
\Ex_{\bT_0}\left[\bD(a,b) \right] &\leq 2 \sum_{i=i_{\min}(a,b)}^{i_{\max}(a,b)} \Prx_{\bT_0}\left[ \text{$(a,b)$ first split at depth $i$}\right] \sum_{j \geq i}\frac{d}{(j+1)^2} \lsim \sum_{i=i_{\min}(a,b)}^{i_{\max}(a,b)} \frac{\|a-b\|_1}{d} \cdot \frac{d}{i+1} \\
&= \|a-b\|_1\cdot O\left(\log\left(\frac{i_{\max}(a,b)}{i_{\min}(a,b) + 1}\right) \right) = \|a-b\|_1 \cdot O\left( \min\{ \log s, \log d\}\right).
\end{align*}
Furthermore, the probability that there exists some $(a,b) \in A \times B$ such that $(a,b)$ are not split between levels $i_{\min}(a,b)$ and $i_{\max}(a,b)$ is at most
\begin{align*}
&\sum_{a \in A} \sum_{\substack{b\in B \\ a\neq b}} \Prx_{\bT_0}\left[ \text{$(a,b)$ first split outside depths $i_{\min}(a,b)$ and $i_{\max}(a,b)$}\right] \\
&\qquad\qquad\leq \sum_{a \in A} \sum_{\substack{b\in B \\ a\neq b}} \left(\frac{i_{\min}(a,b) \cdot \|a-b\|_1}{d s^3} + \left( 1 - \frac{\|a-b\|_1}{d}\right)^{i_{\max}(a,b)}\right) \leq \frac{2}{s}.
\end{align*}
Hence, with probability $1 - 2/s$, every $a \in A$ and $b \in B$, with $a \neq b$, satisfies $d_{\bT_0}(a,b) \gsim d / i_{\max}(a,b) = \Omega(\|a-b\|_1 / \log s)$, and 
\begin{align*}
\sum_{(a,b) \in M^*} d_{\bT_0}(a,b)= \sum_{(a,b) \in M^*} \bD(a,b).
\end{align*}
By Markov's inequality, 
\begin{align*}
\sum_{(a, b) \in M^*} \bD(a,b) \leq 100 \cdot O(\min\{ \log s, \log d\}) \sum_{(a,b) \in M^*} \|a-b\|_1 = O(\min\{ \log s, \log d\}) \EMD(A,B),
\end{align*}
with probability $99/100$, so that with probability $99/100 - 2/s$, we obtain the desired lemma. 
\end{proof}

In order to see that Claim~\ref{cl:greedy-tree} and Lemma~\ref{lem:distort}, notice that with probability $0.9$, we have the following string of inequalities:
\begin{align*}
\sum_{(a,b) \in \bM} \|a-b\|_1 &\lsim \log s \sum_{(a,b) \in \bM} d_{\bT_0}(a,b) \leq \log s \sum_{(a,b) \in M^*} d_{\bT_0}(a,b) \\
	&\lsim O(\min\{\log s, \log d\} \log s) \EMD(A,B),
\end{align*}
where the first and last inequality follow from Lemma~\ref{lem:distort}, and the middle inequality is Claim~\ref{cl:greedy-tree}.

\section{Tightness of $\ComputeEMD$}\label{app:tightness}

We show that we cannot improve the approximation factor of $\ComputeEMD$ beyond $O(\log s)$. 
\begin{lemma}
Fix $s, d \in \N$. There exists a distribution over inputs $\bA, \bB \subset \{0,1\}^d$ of size $s$ such that with probability at least $0.9$ over the draw of $\bA, \bB$, and an execution of $\ComputeEMD(\bA, \bB)$ which outputs the matching $\bM$,
\[ \sum_{(a, b) \in \bM} \|a - b\|_1 \geq \Omega(\log s)\cdot \EMD(\bA, \bB). \]
\end{lemma}

For $s, d \in \N$ and $\alpha \in (0, 1/2)$, we let $\calD_{s,d}(\alpha)$ be the distribution over pairs of subsets $(\bA, \bB)$ with $\bA, \bB \subset \{0,1\}^d$ of $|\bA| = |\bB| = s$ given by the following procedure (think of $\alpha$ as being set to $1/\log s$):
\begin{itemize}
\item For $t = 1, \dots, s$, we generate the pair $(\ba^{(t)}, \bb^{(t)})$ where $\ba^{(t)}, \bb^{(t)} \in \{0,1\}^d$ are sampled by letting, for each $i \in [d]$,
\begin{align}
\ba^{(t)}_i \sim \{0, 1\} \qquad \text{and}\qquad \bb^{(t)}_i \sim \left\{ \begin{array}{cl} \ba_i^{(t)} & \text{w.p } 1 - \alpha \\
	1 - \ba_i^{(t)} & \text{w.p } \alpha \end{array} \right. .  \label{eq:sampling-pairs}
\end{align}
\item We let $\bA = \left\{ \ba^{(1)}, \dots, \ba^{(s)} \right\}$ and $\bB = \left\{ \bb^{(1)}, \dots, \bb^{(s)} \right\}$.
\end{itemize}
Notice that for $\alpha \in (0, 1/2)$, we have
\begin{align*}
\Ex_{(\bA, \bB) \sim \calD_{s, d}(\alpha)}\left[ \EMD(\bA, \bB) \right] &\leq \sum_{t=1}^s \Ex_{(\ba^{(t)}, \bb^{(t)})}\left[ \left\| \ba^{(t)} - \bb^{(t)}\right\|_1 \right] = s d \alpha,
\end{align*}
and by Markov's inequality, $\EMD(\bA, \bB) \leq 100sd \alpha$ with probability at least $0.99$. 

On the other hand, let $\bT$ be the (random) binary tree of depth $h$ naturally produced by the execution of $\ComputeEMD(\bA, \bB)$, and let $\bM$ be the matching between $\bA$ and $\bB$ that $\ComputeEMD(\bA, \bB)$ outputs. 
Fix $t \in [s]$, and consider the probability, over the randomness in $\bA, \bB$ and the execution of $\ComputeEMD(\bA, \bB)$ that $\ba^{(t)}$ is \emph{not} matched to $\bb^{(t)}$. Notice that this occurs whenever the following event $\calbE_t$ occurs: there exists a depth $j \in \{0,\dots, h-1\}$ such that
\begin{itemize}
\item At depth $j$, the two points $\ba^{(t)}$ and $\bb^{(t)}$ are split in the recursion, which occurs whenever a node $\bv$ of $\bT$ at depth $j$, corresponding to an execution of $\ComputeEMD(\bA^{(\bv)}, \bB^{(\bv)})$ with $\ba^{(t)} \in \bA^{(\bv)}$ and $\bb^{(t)} \in \bB^{(\bv)}$ samples a coordinate $\bi \sim [d]$ where $\ba^{(t)}_{\bi} = 1$ and $\bb^{(t)}_{\bi} = 0$.
\item Furthermore, considering the subsets which are split
\begin{align*}
\bA^{(\bv)}_0 &= \left\{ a \in \bA^{(\bv)} : a_{\bi} = 0 \right\} \qquad \qquad \bA^{(\bv)}_{1} = \left\{ a \in \bA^{(\bv)} : a_{\bi} = 1 \right\} \\
\bB^{(\bv)}_{0} &= \left\{ b \in \bB^{(\bv)} : b_{\bi} = 0 \right\}\qquad\qquad \bB^{(\bv)}_1 = \left\{ b \in \bB^{(\bv)} : b_{\bi}= 1 \right\}  ,
\end{align*}
we happen to have $|\bB^{(\bv)}_{1}| \geq |\bA^{(\bv)}_{1}|$.
\end{itemize}
In order to see why this forces $\bM(\ba^{(t)}) \neq \bb^{(t)}$, notice that $\ba^{(t)} \in \bA^{(\bv)}_1$ by definition that $\bi$ satisfies $\ba^{(t)}_{\bi} = 1$, yet $\bb^{(t)}_{\bi} = 0$, so that $\bb^{(t)} \notin \bB_1^{(\bv)}$. Notice that since $\ComputeEMD(\bA^{(\bv)}_1, \bB^{(\bv)}_1)$ returns a maximal matching $\bM^{(\bv, 1)}$ between $\bA^{(\bv)}_1$ and $\bB^{(\bv)}_1$. Since $|\bB^{(\bv)}_1 | \geq |\bA_1^{(\bv)}|$, $\ba^{(t)}$ participates in the matching $\bM^{(\bv, 1)}$, and hence is not matched with $\bb^{(t)}$. In order to lower bound this probability, consider the following sampling process:
\begin{enumerate}
\item We first sample the pair of points $(\ba^{(t)}, \bb^{(t)})$ according to (\ref{eq:sampling-pairs}).
\item We then sample the tree $\bT$.
\item We sample $\ell \in [s-1]$ pairs of points $(\ba^{(\ell)}, \bb^{(\ell)})$ similarly to (\ref{eq:sampling-pairs}).
\end{enumerate}
Consider a fixed $(\ba^{(t)}, \bb^{(t)})$, as well as a fixed sequence of coordinates which are sampled $\bi_1, \dots, \bi_j \in [d]$ such that 
\[ \ba^{(t)}_{\bi_k} = \bb^{(t)}_{\bi_k} \qquad \text{for all $k < j$, and } \qquad \ba^{(t)}_{\bi_j} = 1, \bb^{(t)}_{\bi_j} = 0.\]
We have then
\begin{align*}
\Prx_{(\ba^{(\ell)} \bb^{(\ell)})}\left[ \bb^{(\ell)} \in \bB^{(\bv)}_1 \wedge \ba^{(\ell)} \notin \bA^{(\bv)}_1 \right] &= \frac{1}{2^j} \left( 1 - (1-\alpha)^j \right),
\end{align*}
and since $\bb^{(\ell)}$ and $\ba^{(\ell)}$ are symmetric, 
\[ \Prx_{(\ba^{(\ell)}, \bb^{(\ell)})}\left[ \ba^{(\ell)} \in \bA_{1}^{(\bv)} \wedge \bb^{(\ell)} \notin \bB_1^{(\bv)} \right] = \frac{1}{2^j} \left( 1 - (1-\alpha)^j \right). \]
We note that event $\calbE_t$ occurs if $(\ba^{(t)}, \bb^{(t)})$ are split at depth $j$ at a coordinate $\bi$ with $\ba^{(t)}_{\bi} = 1$, and when there exists a unique pair $(\ba^{(\ell)}, \bb^{(\ell)})$ which satisfy $\bb^{(\ell)} \in \bB_1^{(\bv)}$ and $\ba^{(\ell)} \notin \bA_1^{(\bv)}$. Hence,
\begin{align}
&\Prx\left[ \calbE_t \right] \label{eq:prob-takeover} \\
&\geq \Ex_{(\ba^{(t)}, \bb^{(t)})}\left[\left( 1 - \frac{\|\ba^{(t)} - \bb^{(t)}\|_1}{d} \right)^{j-1} \frac{\|\ba^{(t)} - \bb^{(t)}\|_1}{d} \left(\frac{s-1}{2^j} (1 - (1-\alpha)^j) \right) \left( 1 - \frac{1}{2^{j-1}}\left( 1 - (1-\alpha)^j\right) \right)^{s -2}\right]. \nonumber
\end{align}
If we consider $j = \lfloor \log_2 s\rfloor$ and $\alpha = \frac{1}{\log_2 s}$, we have 
\[ \frac{s-1}{2^j} \left(1 - (1-\alpha)^j \right) \left(1 - \frac{1}{2^{j-1}}\left(1 - (1-\alpha)^j \right) \right)^{s-2} = \Omega(1), \]
so the probability in (\ref{eq:prob-takeover}) is at least 
\begin{align*}
\Omega(1) \cdot \Ex_{(\ba^{(t)}, \bb^{(t)})}\left[ \left(1 - \frac{\|\ba^{(t)} - \bb^{(t)}\|_1}{d} \right)^{j-1} \dfrac{\|\ba^{(t)} - \bb^{(t)}\|_1}{d} \right] = \Omega(1),
\end{align*}
since $\|\ba^{(t)} - \bb^{(t)}\|_1$ is distributed as $\Bin(d, \alpha)$, and $\alpha = \frac{1}{\log_2 s}$, and $j = \lfloor \log_2 s \rfloor$. If we let 
\[ \bm = \min_{t_1 \neq t_2} \| \ba^{(t_1)} - \bb^{(t_2)}\|_1,\]
then, we have that $\bM$ is a matching of cost at least $\bm$ times the number of pairs $t \in [s]$ where $\calbE_t$ occurs, and each $\calbE_t$ occurs with constant probability. By Markov's inequality, with probability $0.99$ over the draw of $\bA, \bB$ and $\ComputeEMD(\bA, \bB)$, there are $\Omega(s)$ indices $t \in [s]$ where $\calbE_t$ occurs. Furthermore, since for any $t_1 \neq t_2$, $\ba^{(t_1)}$ and $\bb^{(t_2)}$ are distributed as uniformly random points, we have that with a Chernoff bound $\bm \geq \Omega(d)$ with probability at least $0.99$ whenever $d \geq c_0 \log s$ (for a large fixed constant $c_0$). 

Putting everything together with a union bound, we have that with probability at least $0.97$ over the draw of $\bA, \bB$ and $\ComputeEMD(\bA, \bB)$, $\EMD(\bA, \bB) \leq 100 s d / \log_2 s$, yet the matching output has cost at least $\Omega(sd)$.


\section{Lower Bound for Quadtree Approximation via Tree Embeddings}\label{app:QuadtreeLB}
In this section, we show that the analysis in Section \ref{app:compute-emd-embedding} is tight. This, along with Theorem \ref{thm:quad-tree-alg}, demonstrates that the approximation given by evaluating the matching produced by Quadtree is strictly better in the original metric, rather than the tree metric, demonstrating the necessity of considering the former. Specifically, let $\bT_0$ be the execution tree described the beginning of Section~\ref{sec:quadtree-cost}, where we execute the algorithm for at most $2d$ rounds of recursion. We assign weights to the edges, where an edge connecting a node at depth $i$ and $i+1$ is given weight $d/(i+1)^2$. This defines a tree metric $(A \cup B, d_{\bT_0})$ given by the sum of weights over the paths connecting two points in the tree. 
We remark that the following analysis applies in a straightforward way to the depth $O(\log d)$ \emph{compressed} quadtree also described in  Section \ref{app:compute-emd-embedding}, which is the same tree considered in \cite{AIK08}. 

Let $\bM_{T}(A,B) \subset A \times B$ be any the greedy bottom-up matching. By Claim \ref{cl:greedy-tree}, we know that $\bM_{T}$ is an optimal cost matching in the tree metric on $T_0$. Fix $s,d$ larger than some constant, so that $d = s^{\Theta(1)}$ are polynomial related, and let $d^{.01}<\alpha < d^{.99}$ be a parameter which decides the cost of $\EMD(A,B)$. 
We will define two distributions, $\mathcal{D}_1$, and $\mathcal{D}_2$, over pairs of multisets $A,B \subset \{0,1\}^d$ of size $s$, such that 
\[ \Prx_{(A,B) \sim \mathcal{D}_1,\bT_0}[ \cost( \bM_{\bT_0}(A,B))	< \frac{c}{\log s} \cdot \EMD(A,B)	] > 99/100\] and 
\[	\Prx_{(A,B) \sim \mathcal{D}_2, \bT_0}[ \cost( \bM_{\bT_0}(A,B))	> c \log s \cdot  \EMD(A,B)	] > 99/100 	\]
and finally
\begin{equation*}
\begin{split}
&	\Prx_{(A,B) \sim \mathcal{D}_1 } \left[ \EMD(A,B) = (1 \pm 1/3)\alpha s \right] > 1-1/s\\
&	\Prx_{(A,B) \sim \mathcal{D}_2 } \left[ \EMD(A,B)= \alpha s \right] > 1-1/s\\
\end{split}
\end{equation*}

for some fixed constants $c,C'$. This will demonstrate that the cost of the matching given by the embedding into the tree metric is a $\Omega(\log^2 s)$ approximation.
\paragraph{The First Distribution.} We first describe $\mathcal{D}_1$. To draw $(A,B) \sim \mathcal{D}_1$, we set $d' =  \alpha$, and pick $2s$ uniformly random points $a_1',a_2',\dots,a_s',b_1',b_2',\dots,b_s' \sim \{0,1\}^{d'}$, and set $A = \{a_1,\dots,a_s\},B= \{b_1,\dots,b_s\}$, where $a_i$ is $a_i'$ padded with $0$'s, and similarly for $b_i$.

\paragraph{The Second Distribution.}  We now describe the second distribution $\mathcal{D}_2$.  To draw $(A,B) \sim \mathcal{D}_2$, we set $a_1,\dots,a_s \sim \{0,1\}^d$ uniformly at random. Then, for each $i \in [s]$, we set $b_i$ to be a uniformly random point at distance $\alpha$ from $a_i$. In other words, given $a_i$, the point $b_i$ is obtained by selecting a random subset of $\alpha$ coordinates in $[d]$, and having $b_i$ disagree with $a_i$ on these coordinates (and agree elsewhere).

\begin{proposition}\label{prop:distribtion1}
	There exists a fixed constant $c>0$ such that
	\[ \Prx_{(A,B )\sim \mathcal{D}_1, \bT_0}\left[ \cost( \bM_{\bT_0}(A,B))	< \frac{c}{\log s} \EMD(A,B) \right]> 99/100\]
	And moreover, 
	\[\frac{\alpha s }{3} <\EMD(A,B) \leq \frac{2\alpha s}{3}\]
	with probability at least $1-1/s$ over the draw of $(A,B )\sim \mathcal{D}_1$.
	\end{proposition}
\begin{proof}
Notice that for all $a,b$, we have $d'/3\leq d_{\bT_0}(a,b) \leq 2d'/3$ with probability $1-2^{-\Omega(d)}$, thus by a union bound this holds for all $s^2 = \poly(d)$ pairs with probability at least $1-1/s^2$. In this case, $\alpha s /3 <\EMD(A,B) \leq (2/3)\alpha s$ with probability $1-1/s$.

Now consider any fixed point $a \in A$. We compute the expected cost of matching $a$ in  $\bM_{\bT_0}(A,B)$. Specifically, let $\cost(a) = d_{\bT_0}(a,b_a)$, where $(a,b_a) \in\bM_{\bT_0}$. To do this, we must first define how the matching $\bM_{\bT_0}(A,B)$ breaks ties. To do this, in the algorithm of Figure \ref{fig:compute-emd}, we randomly choose the maximal partial matching between the unmatched points in a recursive call to $\ComputeEMD$.

Now let $X = A \cup B$, and begin generating the randomness needed for the branch which handles the entire set of points in $X$. Namely, consider following down the entire set of points $X$ down $\bT_0$, until the first vertex $r$ which splits $X$ up into two non-empty subsets in its children. Namely, $A_r \cup B_r = X$, but at $r$ an index $\bi \in [d']$ is sampled which splits $X$. First note that we expect the depth of $r$ to be $\ell= d/\alpha$, since $d'/d = 1/\sqrt{d}$. Moreover, with probability $199/200$ we have $d/(2000\alpha)<\ell<2000 d/\alpha$. Call this event $\mathcal{E}_0$, and condition on this now, which does not fix the affect randomness used in any of the subtree of $r$. Conditioned on $\mathcal{E}_0$, we have that  $\cost(a)$ conditioned on $\mathcal{E}_0$ is at most $\sum_{i = \ell}^{2d} \frac{d}{i^2} = \Theta(d/\ell) = \Theta(\alpha)$, since all points are matched below depth $\ell$.

Now consider the level $\ell_1 = \ell  + (1/2) \log s (d/\alpha)$. Notice the the vertex $v$ which contains the point $a$ at depth $\ell_1$ as a descendant of $r$ in expectation samples between $(1/2)(1-\frac{1}{1000}) \log s$ and $(1/2)(1+\frac{1}{1000})\log s$ \textbf{unique} coordinates $\bi \in [\alpha]$ on the $(1/2) \log s (d/\alpha)$ steps between $r$ and $v$ (we have not fixed the randomness used to draw the coordinate $a$ yet). By independence across samples, by Chernoff bounds we have that $v$ sampled between  $(1/3)\log s$ and $(2/3)\log s$ \textbf{unique} coordinates $\bi \in [\alpha]$ with probability $1-s^{-c}$ for some fixed constant $c = \Omega(1)$. Let $\mathcal{E}_{a}$ be the event that this occurs. Say that it samples exactly $\gamma$ unique coordinates.  Now $\ex{|A_v \cup B_v|} = (2s)/2^\gamma \geq s^{1/3}$, where the randomness is taken over the choice of random points in $A,B$. By Chernoff bounds applied to both the size of $|A_v|,|B_v|$, we have 
\[	\pr{||A_v| - |B_v|| > c_1 \log(s)\sqrt{s/2^\gamma} \geq} < 1/s^c 	\]
for a sufficiently large constant $c_1$. Call the event that the above does not occur $\mathcal{E}_1$, and condition on it now. Note that conditioned on $\mathcal{E}_1$, only a $c_1 \log s \sqrt{2^\gamma/s} < s^{-1/7}$ fraction of the points in $v$ remain unmatched. Since the distribution of the path a point $x \in A_v \cup B_v$ takes in the subtree of $v$ is identical to every other $x' \in A_v \cup B_v$ even conditioned on $\mathcal{E}_0,\mathcal{E}_1,\mathcal{E}_a$, it follows that 

\begin{equation}
\begin{split}
	\ex{\cost(a) \; | \; \mathcal{E}_0,\mathcal{E}_1,\mathcal{E}_a} & \leq  O(\frac{d}{\ell_1})  + s^{-1/7}O( \frac{d}{\ell})\\
	& \leq O(\frac{\alpha}{\log s}) \\ 
\end{split}
\end{equation}

Thus 

\begin{equation}
\begin{split}
	\ex{\cost(a) \; | \; \mathcal{E}_0} & \leq  O(\frac{\alpha}{\log s})   + s^{-c} O(\alpha)\\
	& \leq  O(\frac{\alpha}{\log s}) 
\end{split}
\end{equation}
Then by Markov's inequality, using that $\alpha s /3 <\EMD(A,B) \leq (2/3)\alpha s$ with probability $1-1/s$, we have

\[	\pr{\sum_{a \in A}\cost(a) > \frac{c}{\log s} \EMD(A,B) \; | \; \mathcal{E}_0} \leq 10^{-4}	\]
By a union bound:
\begin{equation}
\begin{split}
\Prx_{(A,B )\sim \mathcal{D}_1, \bT_0}\left[ \cost( \bM_{\bT_0}(A,B))	< \frac{c}{\log s} \EMD(A,B) \right]&> 1 - 10^{-4} - 1/200 - O(1/s) \\
	&> 99/100 \\
\end{split}
\end{equation}
which completes the proof.

\end{proof}

\begin{proposition}\label{prop:distribtion2}
	There exists a fixed constant $c>0$ such that
	\[ \Prx_{(A,B )\sim \mathcal{D}_2, \bT_0}\left[\cost( \bM_{\bT_0}(A,B))	< c \log s \cdot  \EMD(A,B) \right]> 99/100\]
	And moreover, 
	\[ \EMD(A,B) = \alpha \cdot s\]
	with probability at least $1-1/s$ over the draw of $(A,B )\sim \mathcal{D}_2$.
\end{proposition}
\begin{proof}Set $\beta = 20 \log s$. We first claim that the event $\mathcal{E}_0$, which states that every non-empty vertex $v \in \bT_0$ at depth $\beta$ satisfies either $A_v \cup B_v = \{a_i, b_i\}$, $A_v \cup B_v = \{a_i\}$, or $A_v \cup B_v = \{b_i\}$ for some $i \in [s]$, occurs with probability $1-2/s^2$. To see this, note that first for each $i \neq j$, $d(a_i,b_j),d(a_i,a_j),d(b_i,b_j)$ are each at least $d/3$ with probability $1-2^{-\Omega(d)}$, and we can the union bound over all $s^2$ pairs so that this holds for all $i \neq j$ with probability $1-1/s^2$. Given this, for any $a_i$ to collide at depth $\beta$ with either $a_j$ or $b_j$ when $i \neq j$, we would need to avoid the set of $d/3$ points where they do not agree. The probability that this occurs is $(2/3)^{\beta} < 1/s^4$, and we can union bound over all $s^2$ pairs again for this to hold with probability $1-1/s^2$. We condition on $\mathcal{E}_0$ now. 
First note that the probability that $a_i$ is split from $b_i$ at or before level $\beta$ is at most $\frac{3\beta}{d}$. Thus the expected number of $i \in [s]$ for which this occurs is $\frac{3s \beta}{d}$, and is at most $\frac{30^4 s\beta}{d}$ with probability $1-10^{-4}$. Call this latter event $\mathcal{E}_1$, and let $S \subset [s]$ be the set of points $i$ for which $a_i,b_i$ are together, with no other points, in their node at depth $\beta$. Conditioned on $\mathcal{E}_0$ and $\mathcal{E}_1$, we have $|S| > s - \frac{30^4 s\beta}{d}$

Now for $j=\log(\beta),\dots,\log(d/\alpha)$, let $S_j \subset S$ be the set of $i \in [s]$ for which $a_i$ and $b_i$ split before level $2^j$. We have that $\ex{|S_j|} > \frac{s 2^j \alpha}{10d}$. Notice that since each branch of the quadtree is independent, and all $(a_i,b_i)$ are together in their own node at depth $\beta$ when $i \in S$, we can apply Chernoff bounds to obtain $|S_j| > \frac{s 2^j \alpha}{20d}$ with probability at least $1-1/s^2$ for every $j > \log(d/\alpha) - (1/10)\log s$, and we can then union bound over the set of such $j$. Note that for each $i \in S_j$, the point $a_i$ pays a cost of at least $\Theta(\frac{d}{2^j})$, thus the total cost of all points in $S_j$ is at least  $\Omega(s \alpha)$, and summing over all $ \log(d/\alpha) - (1/10)\log s<j < \log(d/\alpha) $, we obtain $\cost( \bM_{\bT_0}(A,B)) = \Omega(\log s  \alpha)$. 

	We finally show that $ \EMD(A,B)= \alpha \cdot  s$ with probability at least $1-1/(2s)$. To see this, note that conditioned on $\mathcal{E}_0$, the optimal matching is $(a_i,b_i)$. Since $d(a_i,b_i) = \alpha$, this completes the proof. \end{proof}


\section{Sampling with Meta-data}\label{app:meta}
In this section, We demonstrate how the tools developed in Section \ref{sec:sketch} can be easily applied to obtain a linear sketching algorithm for the problem of \textit{Sampling with Meta-Data}, which we now define.
\begin{definition}\label{def:sampmeta}
	Fix any constant $c>1$, and fix $k,n$ with $k \leq n^c$, and let $\eps ,\delta >0$. Fix any $x \in \R^n$, and meta-data vectors $\lambda_1,\dots,\lambda_n \in \R^k$, such that the coordinates of $x$ and all $\lambda_i$ are can be stored in $O(\log n)$ bits.  In the sampling with meta-data problem, we must sample $i^* \sim [n]$ from the distribution
	\[	\pr{i^* = i} = (1 \pm \eps)	|x_i|/\|x\|_1  \pm n^{-c}\]
	for all $i \in [n]$. Conditioned on returning $i^* = i \in [n]$, the algorithm must then return an approximation $\hat{\lambda}_{i}$ to $\lambda_i$ with probability $1-\delta$.
	\end{definition}

\begin{remark}
	A natural generalization of  Definition \ref{def:sampmeta} is to $\ell_p$ sampler for $p \in (0,2]$, where we want to sample $i \in [n]$ proportional to $|x_i|^p$. We remark that the below precision sampling framework easily generalizes to solving $\ell_p$ sampling with meta-data, by replacing the scalings $1/t_i$ by $1/t_i^{1/p}$, and appropriately modifying Lemma \ref{lem:precisionsampling} to have error $\eps \|x\|_p$ instead of $\eps \|x\|_1$. One can then apply the $\ell_p$ variant of Lemma \ref{lem:sizebound} from Proposition $1$ of \cite{JW18}, and the remainder of the proof follows similarly as below. 
	\end{remark}

We given a linear sketching algorithm for the problem in Definition \ref{def:sampmeta}. The algorithm is similar, and simpler, to the algorithm from Section \ref{sec:sketch}. Specifically, given $x \in \R^n, \lambda_1,\dots,\lambda_n \in \R^k$, we construct a matrix $\X \in \R^{n \times (k+1)}$, where for each $i \in [n]$ the row $\X_{i,*} = [x_i, \lambda_{i,1}, \lambda_{i,2},\dots,\lambda_{i,k}]$. We then draw a random matrix $\D \in \R^{n \times n}$, where $\D_{i,i} = 1/t_i$ and $\{t_i\}_{i \in [n]}$ are i.i.d. exponential random variables. We then draw a random count-sketch matrix (Theorem \ref{thm:countsketch}) $\S$ and compute the sketch $\S \D \X$, and recover an estimate $\tilde{\ZZ}$ of $\D \X$ from Count-sketch. To sample a coordinate,  we output $i^* = \arg \max_i \tilde{\ZZ}_{i,1}$, and set our estimate $(\hat{\lambda}_{i^*})_j = t_{i^*} \tilde{\ZZ}_{i,j+1}$. This simple algorithm obtains the following guarantee.

\begin{theorem} Fix any $\eps >0$ and constant $c>1$. Then given $x \in \R^n$ and $\lambda_1,\dots,\lambda_n \in \R^k$,  as in Definition \ref{def:sampmeta}, the above linear sketching algorithm samples $i \in [n]$ with probability $(1 \pm \eps)\frac{|x_i|}{\|x\|_1} \pm n^{-c}$. Once $i$ is sampled, it returns a value $\hat{\lambda}_i$ such that $\|\hat{\lambda}_i - \lambda_i\|_1 \leq \eps|x_i| \frac{\sum_{j \in [n] } \|\lambda_j\|_1}{\|x\|_1}$ with probability $1-n^{-c}$. The total space required to store the sketch is $O(\frac{k}{\eps}\log^3(n))$ bits. 
\end{theorem}
\begin{proof}
	Let $\ZZ = \D \X$. We instantiate count-sketch with parameter $\eta = \Theta(\sqrt{\eps / \log(n)})$ with a small enough constant. 
	By Theorem \ref{thm:countsketch}, for every $j \in [n]$ we have $|\tilde{\ZZ}_{i,1} - \ZZ_{i,1} | < 
	\eta \|(\ZZ_{*,1})_{-1/\eta^2}\|_2$ with probability $1-n^{-2c}$. By Lemma \ref{lem:sizebound}, we have $\|(\ZZ_{*,1})_{-1/\eta^2}\|_2 \leq \eta \|\X_{*,1}\|_1 $ with probability $1-n^{-2c}$, and thus  $|\tilde{\ZZ}_{i,1} - \ZZ_{i,1} | < 
	\eta^2 \|\X_{*,1}\|_1$, so by Lemma \ref{lem:precisionsampling}, the coordinate $i^*$ satisfies 
	\[	\pr{i^* = i} = (1 \pm \eps)\frac{|x_i|}{\|x\|_1} \pm n^{-c}	\]
	for all $i \in [n]$, where we used the fact that the $n^{-2c}$ failure probabilities of the above events can be absorbed into the additive $n^{-c}$ error of the sampler.  Now recall by Fact \ref{fact:order} that $\max_{i} \ZZ_{i,1} = \|\X_{*,1}\|_1/E$, where $E$ is an independent exponential random variable. With probability $1-n^{-c}$, using the tails of exponential variables we have $\max_{i} \ZZ_{i,1} >  \|\X_{*,1}\|_1/(10c\log s)$, which we condition on now. Notice that given this, since our error satisfies $|\tilde{\ZZ}_{i,1} - \ZZ_{i,1} | < 
	\eta^2 \|\X_{*,1}\|_1$, we have $\ZZ_{i^*,1}  >  \|\X_{*,1}\|_1/(20c\log s)$, from which it follows by definition that $t_i < \frac{20 c \log s|x_i|}{\|x\|_1}$. 
	
	Now consider any coordinate $j \in [k]$. By the count-sketch error, we have $|\tilde{\ZZ}_{i,j+1} - \ZZ_{i,j+1}| < \eta \|(\ZZ_{*,j+1})_{-1/\eta^2}\|_2$, and moreover by Lemma \ref{lem:sizebound}, we have $\|(\ZZ_{*,1})_{-1/\eta^2}\|_2 \leq \eta \|\X_{*,j+1}\|_1 = \eta \sum_{i \in [n]} |\lambda_{i,j}| $  with probability $1-n^{-2c}$. Thus

	\begin{equation}
	\begin{split}
		|t_i \tilde{\ZZ}_{i,j+1} - t_i\ZZ_{i,j+1}| &<  t_i \eta^2 \sum_{i \in [n]} |\lambda_{i,j}| \\
		 &\leq  \eta^2  \frac{20 c \log s |x_i|\sum_{i \in [n]} |\lambda_{i,j}|}{\|x\|_1}\\
		 	 &\leq  \eps \frac{|x_i|\sum_{i \in [n]} |\lambda_{i,j}|}{\|x\|_1}\\
	\end{split}
	\end{equation}
	Summing over all $j \in [k]$ (after a union bound over each $j \in [k]$) completes the proof of the error. For the space, note that $\S \D \X$ had $O(1/\eta^2 \log n) = O(\log^2  n/\eps)$ rows, each of which has $k+1$ columns. Note  that we can truncating the exponential to be $O(\log n)$ bit discrete values, which introduces an additive $n^{-10c}$ to each coordinate of $\S \D \X$ (using that the values of $\X$ are bounded by $n^{c} = \poly(n)$), which can be absorbed into the additive error from count-sketch without increasing the total error by more than a constant.  Moreover, assuming that the coordinates of $x$ and all $\lambda_i$ are integers bounded by $\poly(n)$ in magnitude, since the entries of $\S$ are contained in $\{0,1,-1\}$, each entry of $\S \D \X$ can be stored in $O(\log n)$ bits of space, which completes the proof. 
\end{proof}
}


\section{Embedding $\ell_p^{d}$ into $\{0,1\}^{d'}$}\label{app:ell_p}

\begin{lemma}\label{lem:stability}
Let $p \in (1, 2]$. There exists a distribution $\calD_p$ supported on $\R$ which exhibits $p$-stability: for any $d \in \N$ and any vector $x \in \R^d$, the random variables $\bY_1$ and $\bY_2$ given by
\begin{align*}
\bY_1 &= \sum_{i=1}^d x_i \bz_i \qquad \text{ for $\bz_1, \dots, \bz_d \sim \calD_p$ independently,} \\
\bY_2 &= \bz \cdot \|x\|_p \qquad\text{ for $\bz \sim \calD_p$,}
\end{align*}
are equal in distribution, and $\Ex_{\bz \sim \calD_p}[|\bz|]$ is at most a constant $C_p$.
\end{lemma}

Consider fixed $R \in \R^{\geq 0}$ which is a power of $2$, and for $p \in [1, 2]$, let $\boldf \colon \R^d \to \{0,1\}$ be the randomized function given by sampling the following random variables
\begin{itemize}
\item $\bz = (\bz_1, \dots, \bz_d)$ where $\bz_1, \dots, \bz_d \sim \calD_p$. In the case $p = 1$, we let $\calD_1$ be the distribution where $\bz_i \sim \calD_1$ is set to $1$ with probability $1/d$ and $0$ otherwise. 
\item $\bh \sim [0, R]$, and
\item a uniformly random function $\bg \colon \Z \to \{0,1\}$.
\end{itemize}
and we evaluate
\begin{align*}
\boldf(x) &= \bg \left( \left\lfloor \frac{\sum_{i=1}^d x_i \bz_i - \bh}{R} \right\rfloor \right) \in \{0,1\}.
\end{align*}
If $x, y \in \R^d$ with $\|x - y\|_p \leq R / t_p$ (for a parameter $t_p$ which we specify later). When $p = 1$, we consider $\|x - y\|_{\infty} \leq R/t_1$. Then
\begin{align*}
\Prx_{\bz, \bh, \bg}\left[ \boldf(x) \neq \boldf(y)\right] &=  \frac{1}{2} \cdot \Prx_{\bz, \bh}\left[ \left\lfloor \frac{\sum_{i=1}^d x_i \bz_i - \bh}{R} \right\rfloor \neq \left\lfloor \frac{\sum_{i=1}^d y_i \bz_i - \bh}{R} \right\rfloor \right].
\end{align*}
Consider a fixed $\bz = (\bz_1, \dots, \bz_d)$ and let
\[ \bw = \sum_{i=1}^{d} (x_i - y_i) \bz_i.\] 
Then, we have
\begin{align*}
\Prx_{\bh \sim [0, R]}\left[  \left\lfloor \frac{\sum_{i=1}^d x_i \bz_i - \bh}{R} \right\rfloor \neq \left\lfloor \frac{\sum_{i=1}^d y_i \bz_i - \bh}{R} \right\rfloor \right] &= \max\left\{ \frac{|\bw|}{R}, 1 \right\}, 
\end{align*}
so that
\begin{align*}
\Prx_{\bz,\bh,\bg}\left[ \boldf(x) \neq \boldf(y)\right] &= \frac{1}{2} \Ex_{\bz}\left[ \max\left\{\frac{|\bw|}{R}, 1 \right\} \right].
\end{align*}
When $p \in (1, 2]$, we notice that by Jensen's inequality and Lemma~\ref{lem:stability},
\begin{align*}
\Ex_{\bz}\left[\max\left\{ \frac{|\bw|}{R}, 1\right\} \right] &\leq \max\left\{ \frac{C_p \|x-y\|_p}{R}, 1 \right\} = \frac{C_p\|x-y\|_p}{R},
\end{align*}
and similarly, $\Ex_{\bz}\left[ \max\{ |\bw| / R, 1 \} \right] \leq \|x-y\|_1 / (dR)$ when $p = 1$. Furthermore, for $p \in (1, 2]$, we lower bound the above quantity by 
\begin{align*}
\Ex_{\bz}\left[\max\left\{ \frac{|\bw|}{R}, 1 \right\} \right] &\geq \frac{\|x-y\|_p}{R} \cdot \Ex_{\bz \sim \calD_p}\left[ \max\left\{ |\bz|, t_p \right\} \right] \geq \frac{\|x-y\|_p}{R} \cdot \frac{C_p}{2},
\end{align*}
since $\Ex_{\bz \sim \calD_p}\left[ \max\{ |\bz|, t_p\}\right] \to C_p$ as $t_p \to \infty$, which means that for some constant setting of high enough $t_p$, we obtain the above inequality. In particular, for $p \in (1, 2]$, we may choose $t_p$ to be a large enough constant depending only on $p$. When $p =1$, 
\[ \Ex_{\bz}\left[ \max\left\{ \frac{|\bw|}{R}, 1\right\}\right] \geq \sum_{i=1}^d \frac{1}{d} \left(1 - \frac{1}{d} \right)^{d-1} \max\left\{\frac{|x_i - y_i|}{R}, 1\right\} = \frac{1}{d} \left( 1- \frac{1}{d}\right)^{d-1} \frac{\|x-y\|_1}{R},\]
by setting $t_1 = 1$.
 
For $p \in (1,2]$, we consider concatenating $m$ independent executions of the above construction and consider that as an embedding of $\bphi \colon \R^d \to \{0,1\}^{m}$, we have that for any $x, y \in \R^d$ with $\|x -y\|_p \leq R/ t_p$,
\begin{align*}
\Ex_{\bphi} \left[ \| \bphi(x) - \bphi(y) \|_1\right] &= m \cdot \Prx_{\boldf}\left[ \boldf(x) \neq \boldf(y)\right]  \\
	&\asymp m \cdot \frac{\|x-y\|_p}{R}, 
\end{align*}
where the notation $\asymp$ suppresses constant factors. Suppose $A \cup B \subset \R^d$ is a subset of at most $2s$ points with aspect ratio 
\[ \Phi \eqdef \dfrac{\mathop{\max}_{\substack{x,y \in A \cup B}} \|x-y\|_p}{\mathop{\min}_{\substack{x,y \in A \cup B \\ x \neq y}} \|x-y\|_p}.  \]
Then, we let $R = t_p \cdot \max_{x, y \in A \cup B} \|x-y\|_p$, and
\[ m = O\left( t_p \Phi \log s\right). \] 
For any fixed setting of $x, y \in A \cup B$, the distance between $\bphi(x)$ and $\bphi(y)$ is the number of independent of executions of $\boldf$ where $\boldf(x) \neq \boldf(y)$. Since executions are independent, we apply a Chernoff bound to say that with probability $1 - 1/s^{100}$ every $x, y \in A \cup B$ has $\|\bphi(x) - \bphi(y)\|_1$ concentrating up to a constant factor around its expectation, and therefore, we apply a union bound over $4s^2$ many pairs of points in $A \cup B$ to conclude $\bphi$ is a constant distortion embedding. The case $p =1$ follows similarly, except that 
\[ \Ex_{\bphi}\left[ \|\bphi(x) - \bphi(y)\|_1\right] \asymp m \cdot \frac{\|x-y\|_1}{dR},\]
and $R = 2\max_{x,y \in A \cup B} \|x-y\|_{\infty}$. In summary, we have the following lemmas.

\begin{lemma}
Fix any $p \in (1, 2]$, $n,d \in \N$, and a parameter $\Phi \in \R^{\geq 0}$. There exists a distribution $\calE_{p}$ over embeddings $\bphi \colon \R^d \to \{0,1\}^{d'}$, where
\[ d' = O(\Phi \log n), \]
such that for any fixed set $X \subset \R^d$ of size at most $n$ and aspect ratio in $\ell_p$ at most $\Phi$, a draw $\bphi \sim \calE_p$ is a constant distortion embedding of $X$ into the hypercube with Hamming distance with probability at least $1 - 1/n^{10}$. 
\end{lemma}

\begin{lemma}
Fix any $n,d \in \N$, and a parameter $r, R \in \R^{\geq 0}$. There exists a distribution $\calE_{1}$ over embeddings $\bphi \colon \R^d \to \{0,1\}^{d'}$, where
\[ d' = O\left(\frac{d R \log n}{r}\right), \]
such that for any fixed set $X \subset \R^d$ of size at most $n$ and $\ell_{\infty}$ distance at most $R$ and $\ell_1$ distance at least $r$, a draw $\bphi \sim \calE_1$ is a constant distortion embedding of $X$ into the hypercube with Hamming distance with probability at least $1 - 1/n^{10}$. 
\end{lemma}

\section{Bit Complexity and Randomness}\label{app:formal-models}

In this section, we handle several details relating to the bit complexity and generation of randomness in the streaming model as raised in Section \ref{sec:linearsketching}.
In Sections \ref{sec:EMDSketch} and \ref{sec:MSTSketch}, we analyzed one-and two pass \textit{linear sketching} algorithms for $\EMD$ and $\MST$. However, our algorithm assumed that we could generate real-valued random variables (e.g. Cauchy or exponential random variables). Of course, a small-space streaming algorithm can only generate such variables to finitely many bits of precision. Similarly, we assumed that all random variables generated at the start of the algorithm could be stored, without affecting the space of the algorithm --- this is known as the random oracle model (Definition \ref{def:randoracle}). In this section, we demonstrate how both of these details can be addressed by generating random variables to finite precision, and using pseduo-random generators to avoid the random oracle assumption.

\begin{lemma}\label{lem:bits}
	The one and two-pass linear sketching algorithms in Sections \ref{sec:EMDSketch} and \ref{sec:MSTSketch} can be implemented by generating each random variable to $\polylog(n)$ bits of precision, so that the space required to store the linear sketches is within a $\polylog(n)$ factor of the dimension of the sketch. 
\end{lemma}
\begin{proof}
	We first handle the issue of bit-complexity. Specifically, we demonstrate that it suffices to generate each random variable used in the algorithms to $\delta = \polylog(n)$ bits of precision. Since every entry of the linear sketches is a sum of at most $n$ such variables, multiplied by integer coefficents of the vectors $f_X$ or $f_{A,B}$ which are bounded in magnitude by $n$, the result will follow. Firstly, we argue that every random variable generated by the algorithms are bounded by $2^{\polylog n}$ in magnitude with high probability. Note that the only unbounded distributions that we use in Sections \ref{sec:EMDSketch} and \ref{sec:MSTSketch} are the exponential distribution, and the $p$-stable distribution for $p \in \{1,\poly(\frac{1}{\log n}\}$. Using the cdf of an exponential, $\poly(n)$ exponentials generated in the sketches will be bounded by $\poly(n)$ with probability $1-\exp(-\poly(n))$. Using tail bounds for $p$-stable variables \cite{nolan2009stable}, the same is true for Cauchy ($1$-stable) random variables. Using the method for generating $p$-stable variables from Proposition \ref{prop:genstable}, we have that a $p$-stable variable is given by:
	
	\[		\bx = \sin(p \btheta) \cdot \left(\frac{\cos(\btheta(1-p))}{\cos^{\frac{1}{1-p}}(\btheta)\ln(1/\boldr)}\right)^{\frac{1-p}{p}}\]
where	$\btheta \sim [-\frac{\pi}{2},\frac{\pi}{2}]$ and $\boldr \sim [0,1]$. Note that $|\theta| - \pi/2 > 1/\poly(n)$ and $r > 1/\poly(n)$ imply that $\frac{\cos(\btheta(1-p))}{\cos^{\frac{1}{1-p}}(\btheta)\ln(1/\boldr)}< \poly(n)$ for $p < 1/2$. It follows that $|\bx| < 2^{\polylog n}$ with probability $1-1/\poly(n)$, and then one can union bound over all the $p$-stable variables generated. 

By the above discussion, it follows that generating each random variable to $\delta$-bits of precision results in an additive $1/2^{\polylog n}$ error in each coordinate of the linear sketch $\bS,\bS f$. Moreover, generating the $p$-stalbes to $\delta$ bits of precision can be accomplished by generating $\boldr, \btheta$ to $\poly(\delta)$-bits (see argument in Appendix $A6$ of \cite{kane2010exact}).
 It suffices to prove that this will not effect the output of the algorithm significantly. This argument is standard using Lipschitzness of the output of the linear sketching algorithm on the sketch (see, Appendix $A6$ of \cite{kane2010exact}). The only step of the algorithm which is not Lipschitz in the sketch $\bS f$ is the generation of a sampled edge in the Quadtree from the sketch in the one-pass algorithms (the two-pass algorithm from Section  \ref{sec:EMDSketch} uses the $L_p$ sampler of \cite{JW18} which already uses bounded bit complexity) --- we argue that the probability that the sample changes after truncating is small. To see this, note that our algorithms in Sections \ref{sec:EMDSketch} and \ref{sec:MSTSketch} are only required to recover whenever the events  $\calbE_1$ and $\calbE_2$ hold. By the proofs of
Lemma \ref{FinalEMDlemma2} (for $\EMD$) and Lemma \ref{lem:bucketbound2} (for $\MST$) there will be a $1/\poly(n)$ sized gap between the biggest and second largest estimated coordinate. As a result, whenever these two events hold, the additive $2^{-\polylog n}$ error will not effect the sample obtained by the algorithm. Moreover, when  $\calbE_1$ and $\calbE_2$ do not hold, recall that our algorithm is allowed to output any sampled edge (this error is absorbed into the variational distance of the algorithm). It follows that the remainder of the analysis of the sampling algorithm is unaffected by the additive $2^{-\polylog n}$ to the coordinates of the sketch, which completes the proof.


\end{proof}

\begin{lemma}
	The one and two-pass linear sketching algorithms in Sections \ref{sec:EMDSketch} and \ref{sec:MSTSketch} can be derandomized with an additive $\tilde{O}(d \cdot \polylog n)$ bits of space.
	\end{lemma}

\begin{proof}
 We first handle the randomness required generation of the Quadtree. By Remark \ref{rem:aik-bdirw}, in each depth $t\in [h]$ of the Quadtree we can sample the same set $(\bi_1,\bi_2,\dots,\bi_{2^t}) \sim [d]$ of coordinates for each vertex at that depth (instead of independently sampling coordinates in every vertex at the same depth). Since $h = \log 2d$, the total number of bits we must sample to define an entire quadtree is $2d \log d$. The algorithm can then generate this randomness and store it, resulting in an additive $O(d \log n)$ bits of space
	
	Now that the Quadtree is generated, since the sketches used in Section \ref{sec:usefulsketches} are already derandomized, it remains only to derandomized the exponential and $p$-stable random variables. Since these variables are independently drawn for separate edges of the tree, by the standard reordering argument of Indyk \cite{indyk2006stable} for linear sketches, we can apply Nisan's PRG \cite{nisan1992pseudorandom} to randomize these random variables with only a $\log(S)$ -blowup in the space complexity of the algorithm, where $S$ is the space complexity of storing the sketch $\bS f$. Since the latter is at most $\poly(d,\log n)$, the blow-up is at most a $\log nd$ factor, which completes the proof. 
\end{proof}

\ignore{
In this section, we define and discuss the two-party communication, streaming, and linear sketching models.

\paragraph{Two-Party Communication.} In the two-party communication problem, there are two parties, Alice and Bob. Alice is given as input a multi-set $A \subset \{0,1\}^d$, and Bob is also given a multi-set $B \subset \{0,1\}^d$, where $|A| = |B| = s$. Their goal is to jointly approximate the value of $\EMD(A,B)$. To do this, Alice and Bob exchange messages in \textit{rounds}, where in each round one player sends exactly one message to the other player. Without loss of generality, we assume that Alice sends a message first. Thus, in a one-round protocol, Alice sends exactly one message $M_1$ to Bob, and then given $M_1$ and his input $B$, Bob must output an approximation $\tilde{R}$ to $\EMD(A,B)$. In a two-round protocol, Alice sends exactly one message $M_1$ to Bob. Given $M_1$ and $B$, Bob decides on a message $M_2$, and sends $M_2$ to Alice. After receiving $M_2$, Alice must then output an approximation $\tilde{R}$ to $\EMD(A,B)$. We work in the \textit{public-coin} model of communication, where Alice and Bob have access to a shared infinite random string. 

A protocol $\mathcal{P}$ for the two-party Earth-Mover Distance approximation problem is the procedure by which Alice and Bob compute the messages and their output. A protocol $\mathcal{P}$ is said to be correct if it achieves a desired approximation with probability at least $2/3$ over the coin flips of $\mathcal{P}$ and the shared random string. For a protocol $\mathcal{P}$, the \textit{communication complexity} of $\mathcal{P}$ is the maximum total length of all exchanged messages. 
\paragraph{The Streaming Model.} In the steaming model, points arrive from $A \cup B$ in a stream. Specifically, at each time step in the stream, a point $p \in \{0,1\}^d$ is inserted into the stream, along with an identifier of whether $p \in A$ or $p \in B$. At the end of the stream, the algorithm must output an approximation to $\EMD(A,B)$.
In the \textit{turnstile} model of streaming, points $p$ can also be \textit{deleted} from the stream (not necessarily having been inserted before), so long as at the end of the stream the sets $A,B$ defined by the stream satisfy $|A| = |B| =s$. This geometric stream can in fact be modeled as a typical data stream, where the updates are coordinate-wise updates to a ``frequency vector'' $f$. Here, we let $f \in \R^{n}$, where $n=2 \cdot 2^d$ be a vector initalized to $0$. At each time step $t$, the vector $f$ received a coordinate-wise update $(i_t,\Delta_t)$, where $i_t \in [n]$ and $\Delta_t$ is a integer, which causes the change $f_{i_t} \leftarrow f_{i_t} + \Delta_t$. To see the equivalence, if we want to insert a point $p$ into $A$ a total of $z$ times, we can make the update $(p,z)$, where $p$ indexes into $[2^d]$ in the natural way. Similarly, to add a point $p$ into $B$ a total of $z$ times, we make the update $(p + 2^d,z)$, and deletions are handled similarly. We make the common assumption that $|\Delta_t| = \poly(s)$ for all $t$, and that the length of the stream is at most $\poly(s)$, so that the coordinates of the vector $f$ can be represented with $\log(s)$ bits of space at any intermediate point in the stream.

The goal of the streaming model is to maintain a small space sketch of the vector $f$, so at the end of the stream an algorithm can produce a good approximation to the earth-mover distance $\EMD(A,B)$. When discussing the space complexity of streaming algorithms, there are two separate notions: \textit{working space} and \textit{intrinsic space}. We remark that generally these two notations of space are the same (up to constant factors) for streaming algorithms, however for our purposes it will be useful to distinguish them. The \textit{working space} of a streaming algorithm $\mathcal{A}$ is the space required to store an update $(i_t,\Delta_t)$ in the stream and process it. The \textit{intrinsic space} is the space which the algorithm must store \textit{between} updates. The intrinsic space coincides with the size of a message which must be passed from one party two another, if each party, for instance, holds some fraction of the data stream. Thus, streaming computation is generally focused on the intrinsic space, which we will hereafter just refer to as the \textit{space} of the algorithm. Notice that the working space must necessarily be sufficient to read the index $i_t$. In the case of EMD, $i_t$ must be represented with $d$ bits of space, meaning that $\Omega(d)$ is a lower bound on the working memory of a streaming algorithm in this model. However, the space complexity of streaming algorithms for EMD may be smaller than this required working space.

In the streaming model, the corresponding notation for the \textit{public coin} model is the \textit{random oracle} model of computation. This simply establishes that the steaming algorithm is given random access to an infinitely long string of random bits, whose size does not count against the space complexity of the algorithm. As show below, linear sketching immedately implies a streaming algorithm with the same space in the random oracle model. To remove this assumption, the usage of pseudo-random generators or limited independence is generally required. In the \textit{one-pass} streaming model, the algorithm only sees the sequence of updates a single time, whereas in the \textit{two-pass} model the algorithm sees the sequence of updates exactly twice, one after the other.


\paragraph{Linear Sketching} We now recall the concept of a \textit{linear-sketch}. Linear sketching results in algorithms both for the streaming and two-party communication models. In this model, the multiset inputs $A,B \subset \{0,1\}^d$ are implicitly encoded by a vector $f_{A,B} \in  \R^{2 \cdot 2^d}$. A linear sketch stores only the value $\S \cdot f_{A,B}$, where $\S$ is a (possibly random) matrix with $k \ll 2^d$ rows. 
The algorithm then outputs an estimate of $\EMD(A,B)$ given only knowledge of $\S f_{A,B}$ and $\S$. The space of a linear sketch is the space required to store $\S f_{A,B}$, since $\S$ is generated with public randomness which is not charged against that algorithm. This coincides with the space of a public coin communication protocol, or the space of a streaming algorithm in the random oracle model.

Given an update $(i_t,\Delta_t)$, one can update the sketch $\S f_{A,B} \leftarrow \S f_{A,B} + \S_{*,i_t} \Delta_t$. This allows a linear sketch to be maintained in a stream. 
 Moreover, since the sketch $Sf$ is linear, the order or sequence of updates in a stream do not matter. Given a linear sketching algorithm for earth-mover distance with sketch size $\S f_{A,B}$ with $k$ rows, this yields a $O(k \log s)$ communication protocol for two-party one-round communication. This follows from the fact that the matrix $\S$ can be generated with shared randomness. Alice can then compute the vector $f_A$ which is induced by her set $A$, and Bob can similarly compute $f_B$, such that $f_{A,B} = f_A + f_B$. Alice then sends $\S f_A$ to Bob, who can compute $\S f_{A,B} = \S f_A + \S f_B$, and therefore solve the communication problem. 

There is a corresponding notion of linear sketching for the two-round communication and two-pass streaming models, which we call a two-round sketching algorithm. A two round sketching algorithm first (with no knowledge of $f_{A,B}$) generates a matrix $\S_1$ from some distribution $\mathcal{D}^1$, and computes $\S_1 f_{A,B}$. Then, given knowledge only of $\S_1$ and $\S_1 f_{A,B}$, it generates a \textit{second} matrix $\S_2$ from a distribution $\mathcal{D}^2(\S_1,\S_1 f_{A,B})$, and computes $\S_2 f_{A,B}$. Finally, given as input only the values $(\S_1,\S_2, \S_1 f_{A,B}, \S_2 f_{A,B})$, it outputs an approximation to $\EMD(A,B)$. The space of the algorithm is the number of bits required to store $\S_1f_{A,B} $ and $\S_2 f_{A,B}$.

  It is easy to see that a two-round linear sketching algorithm results in both a two-pass streaming algorithm and a two-round communication protocol. For the former, on the first pass the algorithm maintains $\S_1 f$ and $\S_1$ using shared randomness. At the end of the first pass, it can generate $\S_2$ based on $\S_1 f_{A,B}$ and $\S_1$, and compute $\S_2 f_{A,B}$ as needed. For a two-round communication protocol, Alice and Bob jointly generate $\S_1$, then Alice sends $\S_1 f_A$ to Bob, who computes himself $\S_1 f_{A,B}$, and then using shared randomness and his knowledge of $\S_1,\S_1 f_{A,B}$ can compute $\S_2$. Bob then sends $\S_1 f_{A,B}, \S_2 f_B$ back to Alice, who can now fully determine $\S_1,\S_2, \S_1 f_{A,B} ,\S_2f_{A,B}$ and output the approximation. 

 \paragraph{Two-Pass Streaming.}
We now demonstrate that this two-round linear sketch of Section \ref{sec:communication} can be applied to obtain a two-pass streaming algorithm in the turnstile (insertion and deletion) model. Here, a stream of at most $\poly(s)$ updates arrives in the stream, where each update either inserts or deletes a point from $A$, or inserts or delets a point from $B$. Noticed that the $t$-th update can be modelled by coordinate-wise updates to $f_{A,B}$ of the form $(i_t,\Delta_t) \in [2 \cdot 2^d] \times \{-M,-M+1,\dots,M\}$, where $M = \poly(s)$, causing the change $(f_{A,B})_{i_t} \leftarrow (f_{A,B})_{i_t} + \Delta_t$. At the end of the stream, we are promised that $f_{A,B}$ is a valid encoding of two multi-sets $A,B \subset \{0,1\}^d$ with $|A| = |B| = s$. A two-pass streaming algorithm is allowed to make two passes over the stream before outputting an estimate.

\begin{corollary}\label{cor:twopass}
		For $d,s \in \N$, there exists a $2$-pass turnstile streaming algorithm which, on a stream vector $f_{A,B}$ encoding multi-sets $A,B \subset \{0,1\}^d$ with $|A| = |B| = s$, the algorithm then computes an approximate $\hat{\mathcal{I}}$ to $\EMD(A,B)$ with
	\[	\EMD(A,B) \leq \hat{\cI} \leq \tilde{O}(\log s) \EMD(A,B) 	\]
	with probability at least $3/4$, and uses $O(d \log d) + \polylog(s, d)$ bits of space. Moreover, the algorithm stores its own randomness (i.e., does not required the random oracle model). 
	\end{corollary}
\begin{proof}
The only step remaining is to derandomize the algorithm (i.e., modify the algorithm so that it can store its randomness in small space). First note that the universe reduction step requires the generation of two hash functions $h_i,h_{i-1}$ mapping a universe of size at most $2^d$ to a universe of size $s$. Moreover, for the proof of Proposition \ref{prop:unireduce}, all that is needed is $2$-wise independence, since the proof only argues about the probability of a collision of a fixed pair and applies a union bound. Since a $2$-wise independent hash function $h:U_1 \to U_2$ can be stored in $O(\log (|U_1| + |U_2|))$ bits of space, this only adds an additive $O(d)$ bits to the space complexity of the algorithm.

Next, note that the linear sketching algorithms of \cite{indyk2006stable} and \cite{JW18} used in the first and second passes are both are already derandomized in $\poly(\log s)$-bits of space. Thus, it will suffice to consider the randomness needed to store the Quadtree $T$. By Remark \ref{rem:aik-bdirw}, in each depth $t\in [h]$ of the Quadtree we can sample the same set $(\bi_1,\bi_2,\dots,\bi_{2^t}) \sim [d]$ of coordinates for each vertex at that depth (instead of independently sampling coordinates in every vertex at the same depth). Since $h = \log 2d$, the total number of bits we must sample to define an entire quadtree is $2d \log d$. Thus, after sampling such a quadtree $T$ with $O(d \log d)$ bits, and storing these bits, the remainder of the algorithm is already stores its randomness. Moreover, because there are at most $\poly(s)$ updates to $f_{A,B}$, the coordinates of each linear sketch can be stored in $O(\log s)$ bits of space, which completes the proof. 
\end{proof}

}

\bibliographystyle{alpha}
\bibliography{waingarten,cluster}

\end{document}